%%%%% FLAG FOR CONFERENCE VERSION %%%%%
\def\confversion{0}
\def\ifconf{\ifnum\confversion=1}
\def\ifnotconf{\ifnum\confversion=0}

\documentclass[11 pt]{article}
%%%%%%%%% options for the file macros.tex

\def\showauthornotes{1}
\def\showkeys{0}
\def\showdraftbox{0}

%%%%%%%%%%%% Packages
\usepackage{xspace,xcolor,enumerate}
\usepackage{amsmath,amssymb}
\usepackage{amsthm}
\usepackage[toc,page]{appendix}
\usepackage{thmtools}
\usepackage{thm-restate}
\usepackage{color,graphicx}
\usepackage{boxedminipage}

\ifnum\showkeys=1
\usepackage[color]{showkeys}
\fi

\definecolor{darkred}{rgb}{0.5,0,0}
\definecolor{darkgreen}{rgb}{0,0.35,0}
\definecolor{darkblue}{rgb}{0,0,0.55}

\usepackage[pdfstartview=FitH,pdfpagemode=UseNone,colorlinks,linkcolor=darkblue,filecolor=darkred,citecolor=darkgreen,urlcolor=darkred,pagebackref]{hyperref}

\usepackage[capitalise,nameinlink]{cleveref}
\usepackage[T1]{fontenc}
\usepackage{mathtools,dsfont,bbm}
\usepackage{mathpazo}
%\usepackage[top=1in, bottom=1in, left=1in, right=1in]{geometry}
% \usepackage[top=1in, bottom=1in, left=1.25in, right=1.25in]{geometry}
% \usepackage{fullpage}

%%%%%%%%%%%%%%% Lengths
% \setlength{\parindent}{0 in} 
\setlength{\parskip}{0.05 in}
\setlength{\parindent}{4 ex} 
% \setlength{\parskip}{0.5 ex}

%%%%%%%%%%%%%%% Author Notes
\ifnum\showauthornotes=1
\newcommand{\Authornote}[2]{{\sf\small\color{red}{[#1: #2]}}}
\newcommand{\Authorcomment}[2]{{\sf \small\color{gray}{[#1: #2]}}}
\newcommand{\Authorfnote}[2]{\footnote{\color{red}{#1: #2}}}
\else
\newcommand{\Authornote}[2]{}
\newcommand{\Authorcomment}[2]{}
\newcommand{\Authorfnote}[2]{}
\fi

%%%%%%%%%%%%%%%% Draftbox
\ifnum\showdraftbox=1

\else

\fi

%%%%%%%%%%%%%% No bullets

%%%%%%%%%%%%%%%%%% Theorem Environments
\newtheorem{theorem}{Theorem}[section]

\newtheorem{definition}[theorem]{Definition}
\newtheorem{lemma}[theorem]{Lemma}
\newtheorem{remark}[theorem]{Remark}

\newtheorem{corollary}[theorem]{Corollary}
\newtheorem{claim}[theorem]{Claim}
\newtheorem{fact}[theorem]{Fact}

\newtheorem{remk}[theorem]{Remark}
\newtheorem{example}[theorem]{Example}

%\newtheorem{algorithm}[theorem]{Algorithm}

%\newenvironment{algorithm}[3]
%        {\noindent\begin{boxedminipage}{\textwidth}\begin{algo}[#1]\ \par
%        {\begin{tabular}{r l}
%        \textbf{Intput} & #2\\
%        \textbf{Output} & #3
%        \end{tabular}\par\enskip}}
%        {\end{algo}\end{boxedminipage}}

%%%%%%%%%%%%%%%%% Proof Environments

\def\FullBox{\hbox{\vrule width 6pt height 6pt depth 0pt}}

\def\qed{\ifmmode\qquad\FullBox\else{\unskip\nobreak\hfil
\penalty50\hskip1em\null\nobreak\hfil\FullBox
\parfillskip=0pt\finalhyphendemerits=0\endgraf}\fi}

\def\qedsketch{\ifmmode\Box\else{\unskip\nobreak\hfil
\penalty50\hskip1em\null\nobreak\hfil$\Box$
\parfillskip=0pt\finalhyphendemerits=0\endgraf}\fi}

%\ifnotconf
%\renewenvironment{proof}{\begin{trivlist} \item {\bf Proof:~~}}
%   {\qed\end{trivlist}}
%\fi

%%%%%%%%%% Symbols and Fonts

\def\to{\rightarrow}
\def\epsilon{\varepsilon}

\def\eps{\epsilon}

\def\phi{\varphi}
\def\cal{\mathcal}

% \newcommand{\defeq}{\stackrel{\mathrm{def}}=}     

% \renewcommand{\mathbb}{\varmathbb}
% \renewcommand{\Bbbk}{\varBbbk}

%%%%%%%%%%%%%%%%%%%%%% Text Macros
\newcommand{\ie}{i.e.,\xspace}

\newcommand{\etal}{et al.\xspace}

%%%%%%%%%%%%%%%%%%%%% Punctuation at the end of a displayed formula
\newcommand{\mper}{\,.}
\newcommand{\mcom}{\,,}

%%%%%%%%%%%%%%%%%%%%%% Number Sets
\newcommand{\R}{{\mathbb R}}

\newcommand{\B}{\{0,1\}\xspace}
\newcommand{\pmone}{\{-1,1\}\xspace}

%%%%%%%%%% Standard Normal Distribution

%%%%%%%%%%%% Fractions
% commands for fractions 
\usepackage{nicefrac}
% poor man's fraction

% similar commands: tfrac,dfrac

%%%%%%%%%%%%%%%%%% Vectors and Reals
\newcommand{\abs}[1]{\ensuremath{\left\lvert #1 \right\rvert}}

%
%------------------------------------ Norms -------------------
\newcommand{\norm}[1]{\ensuremath{\left\lVert #1 \right\rVert}}

% \makeatletter
% \def\norm#1{%fancy norm command
%   \@ifnextchar\bgroup%
%    {\normalpnorm{#1}}%
%    {\defaultnorm{#1}}%
% }
% \def\defaultnorm#1{%
%     \renderNorm{#1}{}
% }
% \def\normalpnorm#1#2{%
%    \@ifnextchar\bgroup%
%    {\banaxnorm{#1}{#2}}%
%    {\renderNorm{#2}{#1}}%\ensuremath{\left\lVert {#2} \right\rVert_{{#1}}}}%
% }

% \def\banaxnorm#1#2#3{%
%     \renderNorm{#3}{#1\to#2 }
%     %\ensuremath{\left\lVert #3 \right\rVert_{{#1}\to{#2}}}%
% }

% \def\renderNorm#1#2{%
%     \@ifnextchar^%
%     {\fixExponent{#1}{#2}}%
%     {\ensuremath{\mathchoice%
%         {\lVert #1 \rVert_{#2}}%
%         {\lVert #1 \rVert_{#2}}%
%         {\lVert #1 \rVert_{#2}}%
%         {\lVert #1 \rVert_{#2}}}%
%     }%
% }
% \def\fixExponent#1#2^#3{%
%     \ensuremath{{\lVert #1 \rVert^{#3}_{#2}}}%
% }

% \def\enorm#1#2{%
%    \@ifnextchar\bgroup%
%    {\norm{L_{#1}}{L_{#2}}}%
%    {\norm{L_{#1}}{#2}}%
% }

% \def\cnorm#1#2{%
%    \@ifnextchar\bgroup%
%    {\norm{\ell_{#1}}{\ell_{#2}}}%
%    {\norm{\ell_{#1}}{#2}}%
% }

% \makeatother

%

%%%%%%%%%%% Vectors

%

%

%

%%%%%%%%%%%%%%%%%%%%%%%% Lasserre Variables

%%%%%%%%%%%%%%%% Sherali-Adams

%%%%%%%%%%%%%%%%%%%%% Random Variables
\newcommand{\Esymb}{\mathbb{E}}
\newcommand{\Psymb}{\mathbb{P}}

\newcommand{\Varsymb}{\mathrm{Var}}
\DeclareMathOperator*{\ExpOp}{\Esymb}

%%%% Variable arguments \Pr, \Ex, \Var, \tildeEx
%%%%  changes its behavior according to number of arguments as follows:
%%%%    - 1: <symbol> [ #1 ]
%%%%    - 2: <symbol>_[#1] [ #2 ] -- this behaves like \max, i.e., subscript is placed
%%%%                underneath the symbol if in displaystyle environment
%%%%    - 4: <symbol>_[#1]  #3 #2 #4 -- here #3 and #4 are the opening and closing
%%%%                braces symbol
\makeatletter
\def\Pr#1{%
    \ProbabilityRender{\Psymb}{#1}%
}

\def\Ex#1{%
    \ProbabilityRender{\Esymb}{#1}%
}

\def\Var#1{%
    \ProbabilityRender{\Varsymb}{#1}%
}

\def\ProbabilityRender#1#2{%fancy probability command
  \@ifnextchar\bgroup%
  {\renderwithdist{#1}{#2}}
   {\singlervrender{#1}{#2}}
}
\def\singlervrender#1#2{%
   \ensuremath{\mathchoice
       {{#1}\left[ #2 \right]}
       {{#1}[ #2 ]}
       {{#1}[ #2 ]}
       {{#1}[ #2 ]}
   }
}
\def\renderwithdist#1#2#3{%
   \@ifnextchar\bgroup
   {\superfancyrender{#1}{#2}{#3}}
   {\ensuremath{\mathchoice
      {\underset{#2}{#1}\left[ #3 \right]}
      {{#1}_{#2}[ #3 ]}
      {{#1}_{#2}[ #3 ]}
      {{#1}_{#2}[ #3 ]}
     }
   }
}
\def\superfancyrender#1#2#3#4#5{
   \ensuremath{\mathchoice
      {\underset{#1}{{#1}}\left#4 #3 \right#5}
      {{#1}_{#2}#4 #3 #5}
      {{#1}_{#2}#4 #3 #5}
      {{#1}_{#2}#4 #3 #5}
   }
}
\makeatother

%\newcommand{\varex}[1]{\E\paren{#1}}
%\newcommand{\varEx}[1]{\E\Paren{#1}}

%%%%%%%%%%%%%%%%%%%%%% General Useful Macros
\newfont{\inhead}{eufm10 scaled\magstep1}

\newcommand{\polylog}{{\mathrm{polylog}}}

%%%%%%%%%% Operators
\DeclareMathOperator\supp{Supp}

%%%%%%%%%% Asymptotic and math symbols borrowed from Laci

\DeclareMathOperator{\tr}{\operatorname {tr}}

\newcommand{\bigoh}{\operatorname{O}}

%%%%%%%%%%%%%%%%%%% Complexity Classes

%%%%%%%%%%%%%%%%%% Parameterized Complexity Classes 

%%%%%%%%%%%%%% SDPs and LPs

% \newcommand{\fractional}{{\sf FRAC}}
% \newcommand{\minopt}{{\sf MINOPT}}
% \newcommand{\greedy}{{\sf Greedy}}

% \newcommand{\SDP}{{\sf SDP}\xspace}
% \newcommand{\LP}{{\sf LP}\xspace}

%%%%%%%%%%%%%%%%%%%% Problems
% use texorpdfstring to avoid problems with hyperref (can use problem
% macros also in headings

                                %% Added by Euiwoong on 2017-10-21.

%%%%%%%%%%%%%%%%%%%%%%%%% Enclosures
\newcommand{\inparen}[1]{\left(#1\right)}             %\inparen{x+y}  is (x+y)
           %\inbrace{x+y}  is {x+y}
\newcommand{\insquare}[1]{\left[#1\right]}             %\insquare{x+y}  is [x+y]
 %\inangle{A}    is <A>

%%%%%%%%%%%%%%%%%%%% Names
\newcommand{\Erdos}{Erd\H{o}s\xspace}
\newcommand{\Renyi}{R\'enyi\xspace}

          %% Added by Euiwoong on 2017-10-22.
          %% Added by Euiwoong on 2017-10-22.

%!TEX root=main-efron-stein.tex

%%%%%%% Additional macros for this paper
\usepackage[linesnumbered, ruled,vlined]{algorithm2e}
\usepackage{tikz,pgf}

\usepackage{caption}
\captionsetup[figure]{font=small,labelfont=small}

\newcommand{\Poincare}{Poincar\'{e}\xspace}

\newcommand{\mat}[1]{\mathbf{#1}}
\newcommand{\HH}{\mathbb{H}}

\newcommand{\resamp}[1]{\widetilde{#1}}
\newcommand{\coef}[2]{\widehat{#1}(#2)}
\newcommand{\dpoly}{d_p}
\newcommand{\icL}{\cL^{-1}}

\newcommand{\RR}{\mathbb{R}}
\newcommand{\NN}{\mathbb{N}}

\newcommand{\EE}{\mathbb{E}}

\newcommand{\sig}{\sigma}
\newcommand{\Sig}{\Sigma}
\newcommand{\al}{\alpha}

\newcommand{\Gam}{\Gamma}
\newcommand{\gam}{\gamma}
\newcommand{\Del}{\Delta}

\newcommand{\grad}{\nabla}

\newcommand{\graphmat}[1]{\mat{M}_{#1}}
\newcommand{\sch}[2]{\norm{#1}_{#2}^{#2}}
\newcommand{\Etr}[1]{\EE \tr\left[#1\right]}
\newcommand{\Esch}[2]{\EE \norm{#1}_{#2}^{#2}}
\newcommand{\var}[1]{\Varsymb[#1]}
\newcommand{\herm}[1]{\overline{#1}}

\newcommand{\T}{\intercal}

\newtheorem{propn}[theorem]{Proposition}

\newcommand{\mA}{{\mat{A}}}
\newcommand{\mB}{{\mat{B}}}
\newcommand{\mC}{{\mat{C}}}
\newcommand{\mD}{{\mat{D}}}

\newcommand{\mF}{{\mat{F}}}
\newcommand{\mG}{{\mat{G}}}
\newcommand{\mH}{{\mat{H}}}
\newcommand{\mI}{{\mat{I}}}

\newcommand{\mK}{{\mat{K}}}

\newcommand{\mM}{{\mat{M}}}
\newcommand{\mN}{{\mat{N}}}

\newcommand{\mR}{{\mat{R}}}

\newcommand{\mU}{{\mat{U}}}
\newcommand{\mV}{{\mat{V}}}

\newcommand{\mX}{{\mat{X}}}

\newcommand{\mDel}{{\mat{\Del}}}
\newcommand{\mPi}{{\mat{\Pi}}}
\newcommand{\mSig}{{\mat{\Sig}}}

\newcommand{\cG}{{\cal{G}}}

\newcommand{\cI}{{\cal{I}}}
\newcommand{\cJ}{{\cal{J}}}
\newcommand{\cK}{{\cal{K}}}
\newcommand{\cL}{{\cal{L}}}

\newcommand{\cP}{{\cal{P}}}

\newcommand{\cR}{{\cal{R}}}
\newcommand{\cS}{{\cal{S}}}

\newcommand{\iid}{i.i.d.\xspace}
%%%%%%%%%%%%%% No bullets

%%%%%%%%%%%%%%% Author Notes

%\usepackage[top=1in, bottom=1in, left=1.25in, right=1.25in]{geometry}
 \usepackage[top=1in, bottom=1in, left=1in, right=1in]{geometry}

\usepackage{blkarray}
\usepackage{bigstrut}
% above is the preamble
\begin{document}

\title{Concentration of polynomial random matrices via Efron-Stein inequalities}
\author{
Goutham Rajendran\thanks{University of Chicago. Supported in part by NSF grants CCF-1816372 and CCF-2008920.  {\tt goutham@uchicago.edu}. }
\and
Madhur Tulsiani \thanks{Toyota Technological Institute at Chicago. Supported by NSF grant CCF-1816372. \tt madhurt@ttic.edu}
}

\setcounter{page}{0}
%\date{\today}
\date{}
\maketitle
%\draftbox
\thispagestyle{empty}

 \vspace{-15 pt}

\abstract{Analyzing concentration of large random matrices is a common task in a wide variety of fields. Given independent random variables, several tools are available to bound the norms of random matrices whose entries are linear in the variables, such as the matrix-Bernstein inequality.
However, for many recent applications, we need to bound the norms of random matrices whose entries are polynomials in the variables.
Such matrices arise naturally in the analysis of spectral algorithms (e.g., Hopkins et al. [STOC 2016], Moitra and Wein [STOC 2019]), and in lower bounds for semidefinite programs based on the Sum-of-Squares (SoS) hierarchy (e.g. Barak et al. [FOCS 2016], Jones et al. [FOCS 2021]).

%The commonly used approach for such norm bounds has been the trace power method, where the trace of a large power of the matrix is estimated using an involved combinatorial analysis of the terms that appear. This method usually involves ingenious counting techniques and requires an intricate analysis of the expressions involved. This has been implemented in the context of graph matrices and (small) tensor networks viewing each entry of the random matrix as corresponding to a local pattern, and considering how several such patterns fit together when considering a power of the matrix. The nonlinearity implies some partial overlaps between such patterns, which makes the analyses somewhat technical.

\medskip

In this work, we present a general framework to obtain such bounds, based on the beautiful matrix Efron-Stein inequalities developed by Paulin, Mackey and Tropp [Annals of Probability 2016]. The Efron-Stein inequality bounds the norm of a random matrix by the norm of another potentially simpler (but still random) matrix. We view the latter matrix as arising by ``differentiating'' the starting matrix.
By recursively differentiating, our framework reduces the main task to bounding the norms of far simpler matrices.
These simpler matrices are in fact deterministic matrices in the case of Rademacher random variables and hence, bounding their norm is a far easier task.
In general for non-Rademacher random variables, the task reduces to the much easier task of scalar concentration. Moreover, in the setting of polynomial matrices, our main result also generalizes the work of Paulin, Mackey and Tropp.
%

% We show how a finite number of recursions of this inequality can reduce the problem to estimating norms of much simpler matrices, which are either deterministic, or can be analyzed by scalar concentration inequalities. While a simple recursion suffices for the case of matrix-valued functions of Rademacher random variables, we also give an extension of their methods to derive a general recursion framework for functions of arbitrary sub-Gaussian random variables.

\medskip

As applications of our basic framework, we recover known bounds in the literature, especially for simple ``tensor networks'' and ``dense graph matrices''. As applications of our general framework, we derive bounds for ``sparse graph matrices''. The sparse graph matrix bounds were obtained only recently by Jones et al.  [FOCS 2021] using a nontrivial application of the trace power method, and was a core component in their work.
We expect this framework will also be helpful for other applications involving concentration phenomena for nonlinear random matrices.
}

\newpage

\ifnotconf
\pagenumbering{roman}
\tableofcontents
\clearpage
\fi

\pagenumbering{arabic}
\setcounter{page}{1}

%%%%%% Each section can be written in a different file.
%%%%%% Tex-master command in each file the document can be produced by compiling any file (from
%%%%%% within an editor).

\section{Introduction}\label{sec: intro}
In  optimization, statistics, and spectral algorithms, we often want to understand the concentration of various random matrices. To do this, we can appeal to the powerful theory of matrix-deviation inequalities~\cite{tropp2015:book}.
%A vast number of applications in spectral algorithms, optimization, and statistics, require understanding the concentration of various random matrices, for which one can appeal to the powerful theory of matrix-deviation inequalities~\cite{tropp2015:book}.
For example, the matrix-Bernstein inequality addresses random matrices of the form
\[
\mat{M} ~=~ x_1 \cdot \mat{C_1} + \cdots + x_n \cdot \mat{C_n}
\]
where $x_1, \ldots, x_n$ are independent scalar random variables, and $\mat{C_1}, \ldots, \mat{C_n}$ are fixed matrices.
A large selection of such inequalities are available when the random matrix (say) $\mat{M}$ is a \emph{linear} function of independent random variables. However, several recent works require us to understand random matrices which are \textit{non-linear} functions, and in particular low-degree polynomial functions, of scalar random variables. This forms the focus of our work.
%While a large selection of such inequalities are available when the random matrix (say) $\mat{M}$ is a \emph{linear} function of independent random variables, several recent applications require understanding random matrices which are non-linear functions, and in particular low-degree polynomial functions, of scalar random variables.

As a motivating example, consider the random matrix $\mat{M} \in \R^{[n]^2 \times [n]^2}$ obtained as
\[
\mat{M} ~=~ \mat{A_1} \otimes \mat{A_1} + \cdots + \mat{A_m} \otimes \mat{A_m} \mcom
\]
where $\mat{A_1}, \ldots, \mat{A_m} \in \R^{[n]\times[n]}$ are independent random matrices, with \iid entries uniformly distributed in $\pmone$.
It is easy to see that the entries of the matrix $\mat{M}$ are degree-2 polynomial functions of the independent random variables describing the entries of $\mat{A_1}, \ldots, \mat{A_m}$. The concentration of such a matrix was analyzed by Hopkins \etal \cite{hopkins2015tensor, hopkins2018statistical}, who use it to design spectral algorithms for a variant of the principal components analysis (PCA). This matrix is a special case of a more general setting that we study in this work.

\paragraph{Matrix-valued polynomial functions.}
In the example above, the entries of the matrices are low-degree polynomials in independent (Rademacher) random variables.
In this work, we consider a general setting where we take an $n$-tuple $Z = (Z_1, \ldots, Z_n)$ of independent and identically distributed random variables\footnote{Our framework also applies when the variables are not necessarily identically distributed, as long as they are independent.} distributed in $\Omega$.
We consider random matrices given by a matrix-valued function $\mat{F}(Z)$ taking values in $\R^{\cI \times \cJ}$ for arbitrary index sets $\cI, \cJ$, where each entry $\mat{F}[I,J](Z)$ is a polynomial in $Z_1, \ldots, Z_n$.
We develop a general framework to analyze concentration of such matrices.
%We wish to study concentration of such matrices.
%To do this, we present a framework that lets us analyze such matrices.
%Concentration of such matrices can be analyzed with our framework.
Our matrix concentration results are simpler to state in the case when $Z_1, \ldots, Z_n$ are independent Rademacher variables uniformly distributed in $\pmone$, but apply for the general case as well.

Special cases of such non-linear random matrices have been used in several applications in spectral
algorithms and lower bounds. We now briefly discuss a few examples below.
Note that while the previous methods used for these examples have been somewhat problem-specific,
the goal of this work is to develop a general method. While our techniques \emph{also} apply for these
examples (providing a proof of concept), understanding these examples is not required to follow our results.
A reader only interested in the techniques for obtaining concentration, may also choose to skip
ahead to the next section directly.

\begin{enumerate}
\item \textbf{Tensor networks.}
Random matrices such as the above were viewed as a special case of ``flattened tensor networks'' by Moitra and Wein~\cite{moitra2019spectral}, who also considered spectral algorithms obtained via somewhat larger tensor networks.
A tensor network is a graph with nodes corresponding to tensors (see the figure below for an example). An edge between two nodes corresponds to shared indices for one of the dimensions and the degree of each node is equal to the order of the corresponding tensor (the number of dimensions).
%A tensor network is a graph with nodes corresponding to tensors, and an edge between two tensors corresponds to shared indices for one of the dimensions, and the degree of each node is equal to the order of the corresponding tensor.
%
Such networks indicate how tensors of different orders can be multiplied to obtain larger ones.
%Such networks are used to indicate how tensors of different orders can be multiplied to obtain larger ones.
%
For example, the first network in the figure below illustrates the network corresponding to simple multiplication $\mA \cdot \mB$ of two matrices $\mat{A} \in \R^{m \times n}$ and $\mat{B} \in \R^{n \times m}$, where the red and blue edges indicate the row and column indices respectively.
Similarly, the second network in the figure below illustrates the network corresponding to the application by Hopkins \etal \cite{hopkins2016fast}, where $\mat{T} \in \R^{n  \times n \times m}$ is a random tensor with \iid entries in $\pmone$.
%Taking $\mat{T} \in \R^{n  \times n \times m}$ to be a random tensor with \iid entries in $\pmone$, the second illustrates the network for the application by Hopkins \etal \cite{hopkins2016fast}.
%
While the latter network yields an order-4 tensor, they obtain a matrix in $\RR^{n^2 \times n ^2}$ by ``flattening'' it, where the row is indicated by the indices in the red edges and the column is indicated by the indices in the blue edges.
%While the network yields an order-4 tensor, they consider a matrix obtained by ``flattening'' it, using the indices indicated by the red edges to index the rows, and ones in blue to index the columns.
%
In the figure, we also indicate the index sets corresponding to each of the edges (though these are often supressed in the diagrams).
Moitra and Wein~\cite{moitra2019spectral} analyzed a larger tensor network, with a graph consisting of 10 nodes, in their algorithm for the continuous multi-reference alignment problem.
\begin{figure}[h]
\label{fig:tensor-network-example}
\begin{center}
\begin{tikzpicture}
\draw (1,2) node {{\large $\mat{A}$}};
\draw (3,2) node {{\large $\mat{B}$}};
\draw (2,2.25) node {\small {$[n]$}};
\draw (0,2.25) node {\small {$[m]$}};
\draw (4,2.25) node {\small {$[m]$}};
\draw (1.25, 2) -- (2.75,2);
\draw[red]  (0.75, 2) -- (-0.5,2);
\draw[blue] (3.25, 2) -- (4.5,2);
%s
\draw (8,2) node {{\large $\mat{T}$}};
\draw (10,2) node {{\large $\mat{T}$}};
\draw (8.25, 2) -- (9.75,2);
\draw[red] (7.75, 1.75) -- (6.5,1);
\draw[blue] (7.75, 2.25) -- (6.5,3);
\draw[red] (10.25, 1.75) -- (11.5,1);
\draw[blue] (10.25, 2.25) -- (11.5,3);
\draw (9,2.25) node {\small {$[m]$}};
\draw (6.6,1.5) node {\small {$[n]$}};
\draw (6.6, 2.5) node {\small {$[n]$}};
\draw (11.4,1.5) node {\small {$[n]$}};
\draw (11.4, 2.5) node {\small {$[n]$}};
\end{tikzpicture}
\end{center}
%
%\caption{Tensor networks for matrix multiplication and the \cite{hopkins2016fast} algorithm.}
\caption{Tensor networks for matrix multiplication and the algorithm in \cite{hopkins2016fast}}
\end{figure}
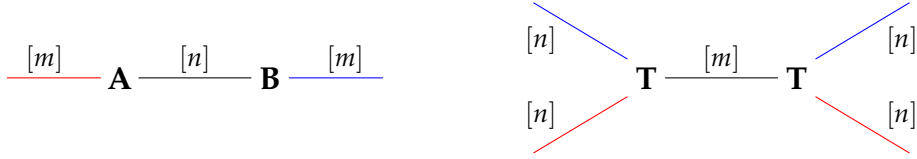

\item \textbf{Graph matrices.}
Another setting of nonlinear concentration arises from the analysis of the so-called ``graph matrices'' ~\cite{medarametla2016bounds, ahn2016graph}. Graph matrices play an important role in lower bounds for average-case problems, against algorithms based on the powerful Sum-of-Squares (SoS) SDP hierarchy running in polynomial time and even sub-exponential time~\cite{meka2015sum, deshpande2015improved, hopkins2015sos, raghavendra2015tight, barak2019nearly, mohanty2020lifting, ghosh2020sum, sparsesos, rajendran2022combinatorial, potechin2022subexponential, Jones:thesis}.

Let $\mat{X}$ be the $\{\pm1\}$-adjacency matrix of a random graph in $\cG_{n,1/2}$ \ie $\mat{X}[i,j]$ is uniform $\pmone$ when $i \neq j$ and 0 when $i=j$.
Graph matrices are random matrices corresponding to the occurences of a small graph pattern called a ``shape''.
%Graph matrices are matrices that can be encoded by "shapes".
A shape $\tau$ is a small, fixed graph with two ordered subsets $U_{\tau}, V_{\tau}$ of vertices. For simplicity, let $\tau$ be a shape of a fixed size, where the vertex set $V(\tau)$ is partitioned into two ordered sets $V(\tau) = U_{\tau} \sqcup V_{\tau}$.
%That is, $\tau$ has no vertices in $V(\tau) \setminus U_{\tau} \setminus V_{\tau}$.
%
For such a shape $\tau$, the corresponding \emph{graph matrix} $\mat{M}_{\tau}$ has rows and columns indexed by $[n]^{|U_{\tau}|}$ and $[n]^{|V_{\tau}|}$ respectively, and we view the row and column indices $I$ and $J$ as defining a (unique in this case) map $\phi: U_{\tau} \sqcup V_{\tau} \to [n]$. The corresponding entry is given by
\[
\mat{M}_{\tau}[I,J]
~=~ \mat{M}_{\tau}[\phi(U_{\tau}),\phi(V_{\tau})]
~=~
\begin{cases}
\prod_{(u,v)\in E(\tau)}\mat{X}[\phi(u), \phi[v]] & \text{if}~\phi~\text{is injective} \\[5 pt]
0 & \text{otherwise}
\end{cases}
\]
In the case of general graph matrices (defined formally in \cref{sec: dense_graph_matrices}), $U_{\tau}, V_{\tau}$ are arbitrary ordered subsets of the vertex set of $\tau$, and we sum over all feasible injective maps $\phi$.
As an example, consider the case shown in \cref{fig: tau}, where $\tau$ is a triangle on three vertices $\{u_1, v_1, v_2\}$ with $U_{\tau} = (u_1)$ and $V_{\tau} = (v_1, v_2)$. Then, the corresponding matrix is given by
%In the case when $\tau$ is a triangle on three vertices $\{u_1, v_1, v_2\}$ with $U_{\tau} = (u_1)$ and $V_{\tau} = (v_1, v_2)$ as shown in \cref{fig: tau}, the corresponding matrix is given by
\[
\mat{M}_{\tau}[i_1,(i_2,i_3)] ~=~ \mat{X}[i_1,i_2] \cdot \mat{X}[i_2,i_3] \cdot \mat{X}[i_3,i_1] \mcom
\]
where $\mat{X}$ automatically enforces injectivity.
%with the injectivity automatically enforced by $\mat{X}$.
%

Graph matrices are closely related to tensor networks (ignoring the injectivity constraint on $\phi$). For instance, the above matrix can be viewed as the flattened tensor network below, where the tensor $\mat{I}$ denotes the ``diagonal'' tensor of order 3 with entries being 1 if all indices are equal and 0 otherwise.
%Graph matrices can also be viewed as being closely related to tensor networks (ignoring the injectivity constraint on $\phi$) and the above matrix can be viewed as the flattened tensor network below, with the tensor $\mat{I}$ denoting the ``diagonal'' tensor of order 3 with entries being 1 if all indices are equal and 0 otherwise.
%
\begin{figure}[h]
\begin{center}
\begin{tikzpicture}
\draw (1,4) node {$u_1$};
\draw (1,4) circle (0.5 cm);
\draw[dashed, blue] (0.35,3.35) rectangle (1.65, 4.65);
\draw (3,5) node {$v_1$};
\draw (3,5) circle (0.5 cm);
\draw (3,3) node {$v_2$};
\draw (3,3) circle (0.5 cm);
\draw[dashed, red] (2.35,2.35) rectangle (3.65, 5.65);
\draw (1.5,4) -- (2.5,5);
\draw (1.5,4) -- (2.5,3);
\draw (3,4.5) -- (3,3.5);
\draw (8,4) node {\large $\mat{I}$};
\draw (11,5.5) node {\large $\mat{I}$};
\draw (11,2.5) node {\large $\mat{I}$};
\draw (9.5,4.75) node {\large $\mat{X}$};
\draw (11,4) node {\large $\mat{X}$};
\draw (9.5,3.25) node {\large $\mat{X}$};
\draw (8.25,4.15) -- (9.25,4.65);
\draw (10.75,5.35) -- (9.75,4.85);
\draw (8.25,3.85) -- (9.25,3.4);
\draw (10.75,2.65) -- (9.75,3.15);
\draw (11,2.75) -- (11,3.75);
\draw (11,5.25) -- (11,4.25);
\draw[blue] (7.75,4) -- (7,4);
\draw[red] (11.15,5.65) -- (11.7,6.2);
\draw[red] (11.15,2.35) -- (11.7,1.8);
\end{tikzpicture}
\end{center}
\caption{The graph $\tau$ and corresponding flattened tensor network}
\label{fig: tau}
\end{figure}
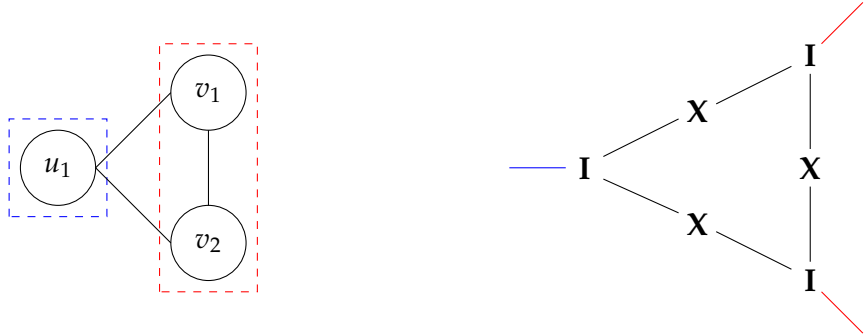

\end{enumerate}

\paragraph{Analyzing concentration}
Recall that our objective is to analyze the concentration of polynomial random matrices.
To motivate our approach, consider first the problem of obtaining concentration bounds on a \emph{scalar} polynomial $f(Z)$ with mean zero. To obtain such bounds, because of Markov's inequality, it suffices to compute moment estimates
\[
\Pr{\abs{f(Z)} \geq \lambda}
~=~ \Pr{\inparen{f(Z)}^{2t} \geq \lambda^{2t}}
~\leq~ \lambda^{-2t} \cdot {\Ex{\inparen{f(Z)}^{2t}}}
\]
While in some cases $\Ex{\inparen{f(Z)}^{2t}}$ can be computed by direct expansion, it often involves an intricate analysis of the structure of terms with degrees growing with $t$, and therefore indirect methods may be more convenient.
One such method is based on hypercontractive inequalities.
In particular for Rademacher variables, the hypercontractive inequality~\cite{ODonnell08} gives that for a polynomial $f$ of degree $d_p$, we have
\[
\Ex{\inparen{f(Z)}^{2t}} ~\leq~ (2t-1)^{d_p \cdot t} \cdot \inparen{\Ex{\inparen{f(Z)}^2}}^t \mper
\]
Thus, for (scalar) polynomial functions, the hypercontractive inequality gives moment estimates using $\inparen{f(Z)}^2$, which is convenient because $\inparen{f(Z)}^2$ is a polynomial of \emph{fixed} degree and therefore is much easier to understand. In fact, it can often be conveniently analyzed using the Fourier coefficients of $f$.

The matrix analog of the above argument involves the Schatten-$2t$ norm $\norm{.}_{2t}$, which is defined for a matrix $\mat{M} $ with non-zero singular values $\sigma_1, \ldots, \sigma_r$ as $\norm{\mat{M}}_{2t}^{2t} ~:=~ \sum_{j \in [r]} \sigma_j^{2t}$.
For a function $\mat{F}$ with $\Ex{\mat{F}(Z)} = 0$, we have the following bound using Schatten norms.
\[
\Pr{\sigma_1(\mat{F}) \geq \lambda}
~\leq~ \lambda^{-2t} \cdot \Esch{\mF}{2t} %\mper
~=~ \lambda^{-2t} \cdot \Etr{\inparen{\mF(Z)\mF(Z)^\T}^{t}}
\]
Known norm bounds for tensor networks~\cite{moitra2019spectral} (which involves Gaussian variables)
and graph matrices~\cite{ahn2016graph, sparsesos}, start with the above inequality, and rely on
direct expansion of the trace.
They analyze terms in the expansion as being formed by $2t$ copies of the network/shape, with
possibly overlapping vertex sets. To analyze such graphs, they both rely on intricate combinatorics,
as well as arguments relying crucially on the problem structure.
%

%Known norm bounds for graph matrices~\cite{ahn2016graph} rely on direct expansion of the trace above, and analyze terms as being formed by $2t$ copies of the shape $\tau$ defining the graph matrix $\graphmat{\tau}$. However, it involves an intricate combinatorial analysis on graphs formed by $2t$ copies of $\tau$, with possibly overlapping vertex sets. A similar analysis is used for the tensor network in~\cite{moitra2019spectral} (which involves Gaussian variables).

In terms of general techniques, while hypercontractive inequalities are also known for matrix-valued functions of Rademacher variables~\cite{BARDW08}, their form involves Schatten-$p$ norms for $p \in [1,2]$ and (to the best of our knowledge) are not known to imply matrix concentration.
%While hypercontractive inequalities are known for matrix-valued functions of Rademacher variables~\cite{BARDW08}, their form involves Schatten-$p$ norms for $p \in [1,2]$ and (to the best of our knowledge) are not known to imply matrix concentration.
%
To get around this, we consider another indirect method based on Efron-Stein inequalities. In the scalar case, Efron-Stein inequalities give us a slight weakening of the above scalar bound. Interestingly, it turns out that this can indeed be generalized to the matrix case.

\paragraph{Efron-Stein inequalities.}
Efron-Stein inequalities bound the global variance of a function of independent random variables, in terms of local variance estimates obtained by changing one variable at a time.
For $i \in [n]$ and tuple $Z = (Z_1, \ldots, Z_n)$, let $Z^{(i)}$ denote the tuple $(Z_1, \ldots, Z_{i-1}, \resamp{Z_i}, Z_{i+1}, \ldots, Z_n)$, where $\resamp{Z_i}$ is an independent copy of $Z_i$.
For a scalar function $f(Z)$, the Efron-Stein inequality states that
\[
\Var{f(Z)} ~=~ \Ex{\inparen{f(Z) - \EE f}^2} ~\leq~ \frac12 \cdot \sum_{i \in [n]} \Ex{\inparen{f(Z) - f\inparen{Z^{(i)}}}^2} ~=~ \Ex{V(Z)}\mcom
\]
where $V(Z) ~:=~ \sum_{i \in [n]} \EE \insquare{\inparen{f(Z) - f\inparen{Z^{(i)}}}^2 | Z}$.
For Rademacher variables, $\Ex{V(Z)}$ is equal to the total influence from boolean Fourier analysis and indeed, the above inequality can also be observed via Fourier analysis. In fact, when $f$ is a polynomial of degree $d_p$, the two sides are within a factor $d_p$.

A moment version of the Efron-Stein inequality was developed by Boucheron \etal~\cite{BBLM05}, who obtain bounds in terms of $V(Z)$ (in fact, in terms of more refined quantities $V_+(Z)$ and $V_{-}(Z)$) which serves as a proxy for the variance. Their results imply that for a function $f$,
\[
\Ex{\inparen{f(Z) - \EE f}^{2t}} ~\le~ (C_0 \cdot t)^t \cdot \Ex{\inparen{V(Z)}^{t}} \mper
\]
A beautiful matrix generalization of the above inequality (\cref{thm: main_efron_stein} below) was obtained by Paulin, Mackey and Tropp~\cite{paulin2016}, via the method of exchangeable pairs (see also~\cite{HT21:poincare} for a different proof).
%Paulin, Mackey and Tropp~\cite{paulin2016} obtained a beautiful matrix generalization of the above inequality, via the method of exchangeable pairs (see also~\cite{HT21:poincare} for a different proof).
Their inequality is stated for Hermitian matrix valued functions $\mH$. But we can also use it for non-Hermitian functions $\mat{F}$, where we simply apply it to the Hermitian dilation $\mH = \begin{bmatrix}
	0 & \mF\\
	\mF^\T & 0
\end{bmatrix}$ instead.
\begin{theorem}[\cite{paulin2016}]\label{thm: main_efron_stein}
Let $\mat{H}(Z)$  be a Hermitian matrix valued function of independent random variables $Z = (Z_1, \ldots, Z_n)$ with $\ExpOp\norm{\mat{H}} < \infty$.
Then, for each natural number $t \ge 1$,
\[
\Etr{(\mat{H} - \EE\mat{H})^{2t}}  ~\le~  (4t-2)^t \cdot \Etr{\mV^t} \mcom
\]
where $\mV(Z)$ is the variance proxy defined as
\[
\mV(Z) := \frac{1}{2} \cdot \sum_{i = 1}^n \Ex{\inparen{\mat{H}(Z) - \mat{H}\inparen{Z^{(i)}}}^2 \mid Z} \mper
\]
\end{theorem}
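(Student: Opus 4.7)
The plan is to adapt the method of exchangeable pairs developed by Chatterjee and extended to matrices by Mackey, Jerrum, Meckes, Morris, and Chatterjee, and then massage the resulting trace moment inequality into Efron-Stein form. First, I would center and assume $\EE \mH = 0$. Next, I would build an exchangeable pair $(Z, Z')$ by choosing $I$ uniformly in $[n]$ and setting $Z' = Z^{(I)}$, yielding an exchangeable pair of Hermitian matrices $(\mH, \mH')$ with $\mH' = \mH(Z')$. Observe that $\EE[(\mH - \mH')^2 \mid Z] = \frac{2}{n}\mV(Z)$, so the conditional kernel of this pair recovers the variance proxy up to the factor $2/n$.

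The main device is the antisymmetry inherent in exchangeable pairs: for any bivariate matrix-valued function $F$,
\[
\EE \trace{(\mH - \mH') F(\mH, \mH')} ~=~ \frac{1}{2} \EE \trace{(\mH - \mH')(F(\mH, \mH') - F(\mH', \mH))} \mper
\]
The crucial step is to recast $\Etr{\mH^{2t}}$ as a discrete derivative of the form $n \cdot \EE \trace{(\mH - \mH') \cdot G(\mH, \mH')}$ for a suitable $G$. A natural route is the telescoping decomposition $\mH(Z) - \mH(\tilde{Z}) = \sum_{i=1}^n (\mH_i - \mH_{i-1})$, where $\tilde{Z}$ is an independent copy of $Z$ and the interpolants $\mH_i$ swap one coordinate at a time. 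Since $\EE\mH = 0$, the identity $\mH = \EE_{\tilde{Z}}[\mH(Z) - \mH(\tilde{Z}) \mid Z]$ converts the global centering into a sum of single-coordinate resamplings, each mirroring the exchangeable pair above.

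After this rewriting, I would apply a matrix Mean Value Trace Inequality, the non-commutative analog of the scalar identity $a^{2t-1} - b^{2t-1} = (a - b) \cdot P(a, b)$ with $P(a,b) \geq 0$, to convert $(\mH^{2t-1} - (\mH')^{2t-1})(\mH - \mH')$ into a trace controlled by $(\mH - \mH')^2$ times a matrix power of $\mH$. Taking conditional expectations in the resampled coordinate then makes $\sum_i \EE[(\mH - \mH^{(i)})^2 \mid Z] = 2\mV(Z)$ emerge naturally as the variance proxy.

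Finally, I would apply Holder's inequality for Schatten norms (or a Lieb-Thirring style trace inequality) to split the resulting mixed trace as
\[
\EE \trace{\mV \cdot \mH^{2t-2}} ~\leq~ \inparen{\Etr{\mV^t}}^{1/t} \cdot \inparen{\Etr{\mH^{2t}}}^{(t-1)/t} \mcom
\]
and rearrange to isolate $\Etr{\mH^{2t}}$. The constant $(4t - 2)^t = (2(2t - 1))^t$ arises because the antisymmetry step contributes a factor of $2$ while each discrete differentiation contributes a factor proportional to $2t - 1$, matching the Holder exponent. The hard part will be establishing the non-commutative Mean Value Trace Inequality itself: the scalar identity is elementary, but for Hermitian matrices one must carefully order and symmetrize products using an integral representation of the matrix power function, so that the trace bound respects non-commutativity while still producing a bilinear form in $(\mH - \mH')$ that can be absorbed into $\mV$.
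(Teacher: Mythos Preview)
Your overall architecture is right: build the exchangeable pair $(Z,Z')$ by resampling a uniformly random coordinate, use antisymmetry, apply a matrix Mean Value Trace Inequality, and finish with H\"older for Schatten norms. But there is a real gap at the step you flag as ``crucial,'' namely recasting $\Etr{\mH^{2t}}$ as $n\cdot \EE\trace{(\mH-\mH')\,G(\mH,\mH')}$. For this pair one has $\EE[\mH-\mH'\mid Z]=\cL(\mH)(Z)$, the Glauber Laplacian applied to $\mH$, and in general $\cL(\mH)\neq \tfrac{1}{n}\mH$; the factor depends on how many variables each monomial touches. Your telescoping fix via an independent copy $\tilde Z$ does not repair this: the increments $\mH_i-\mH_{i-1}$ resample coordinate $i$ \emph{after} coordinates $1,\ldots,i-1$ have already been swapped to $\tilde Z$, so they are not distributed like $\mH(Z)-\mH(Z^{(i)})$ and do not sit inside the same exchangeable-pair identity you set up. The sentence ``each mirroring the exchangeable pair above'' is where the argument breaks.

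What the Paulin--Mackey--Tropp proof actually does at this point is construct a \emph{kernel} $\mK(Z,Z')$: an antisymmetric bivariate function with $\EE[\mK(Z,Z')\mid Z]=\mH(Z)$. Concretely (see \cref{thm: explicit_kernel_for_poly} and \cref{lem: explicit_kernel_for_matrices} in the paper for the polynomial case) one takes $\mK=\cL^{-1}(\mH)(Z)-\cL^{-1}(\mH)(Z')$; more generally PMT realize it as a limit of coupled Markov chains. This is exactly the missing inverse of the Laplacian that your ``multiply by $n$'' step was trying to supply. With the kernel in hand, $\Etr{\mH\cdot \mH^{2t-1}}=\tfrac12\EE\trace{\mK\,(\mH^{2t-1}-(\mH')^{2t-1})}$, and the Mean Value Trace Inequality (\cref{lem: mean_value_trace_inequality}) then produces \emph{two} variance-type objects: the conditional variance $\mU=\EE[(\mH-\mH')^2\mid Z]$ and the \emph{kernel} conditional variance $\mV^{\mK}=\EE[\mK^2\mid Z]$, entering as $s\mU+s^{-1}\mV^{\mK}$. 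Your sketch never introduces $\mV^{\mK}$, but it is essential; the final constant $(4t-2)^t$ emerges only after one optimizes over $s$ and uses that, for this particular kernel, $\mV^{\mK}$ and $\mU$ are both controlled by the same variance proxy $\mV$. Without the kernel and its conditional variance, the H\"older step has nothing correct to act on.
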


\paragraph{A simple bound for Rademacher variables.}
The form of the variance proxy suggests a recursive approach for polynomial functions (say of degree $d_p$) of Rademacher variables. Consider the scalar case again, where we assume without loss of generality that $f$ is multi-linear. In particular, consider the Efron-Stein inequality by Boucheron \etal~\cite{BBLM05}, where the variance proxy can be written as
\begin{align*}
V(Z)
~=~ \frac12 \cdot \sum_{i \in [n]} \Ex{\inparen{f(Z) - f\inparen{Z^{(i)}}}^2 \mid Z}
&~=~ \frac12 \cdot \sum_{i \in [n]} \Ex{(Z_i - \resamp{Z_i})^2 \cdot \inparen{\frac{\partial f(Z)}{\partial Z_i}}^2 \mid Z} \\
&~=~ \sum_{i \in [n]} \inparen{\frac{\partial f(Z)}{\partial Z_i}}^2
~=~ \norm{\mat{f}_1(Z)}_2^2 \mcom
\end{align*}
where $\mat{f}_1(Z)$ is a vector-valued function given by $\mat{f}_1[i](Z) = \frac{\partial f(Z)}{\partial Z_i}$. Thus, to estimate $\ExpOp \inparen{f(Z)}^{2t}$, we just need to estimate $\ExpOp \norm{\mat{f}_1(Z)}_2^{2t}$, where $\mat{f}_1(Z)$ is now a vector valued function. The key observation is that $\mat{f}_1(Z)$ has entries of degree at most $d_p - 1$. This suggests that we can apply this inequality recursively until we end up with constant polynomials, which we fully understand.
We can do a similar computation for matrix-valued functions $\mat{F}(Z)$ using \cref{thm: main_efron_stein}. This yields two matrices $\mat{F}_{0,1}$ and $\mat{F}_{1,0}$ of partial derivatives, where an extra index $i$ is added either to the row or column indices. Iterating this yields the following result, which we state in terms of the partial derivative operators
$\grad_{\alpha}(f) = \inparen{\prod_{i: \alpha_i = 1} \frac{\partial}{\partial Z_i}}(f)$ for $\alpha \in \B^n$ (extended entry-wise to matrices).
\begin{restatable}[Rademacher recursion]{theorem}{basicframework}\label{thm: main_rademacher}
Let $\mat{F}: \pmone^n \to \R^{\cI \times \cJ}$ be a matrix valued polynomial function of degree at most $d_p$. Then, for each natural number $t \ge 1$,
\[
\Esch{\mF - \EE \mF}{2t} ~\le~ \sum_{1 \le a + b \leq \dpoly} (16t\dpoly)^{(a + b) \cdot t} \cdot \sch{\EE\mF_{a, b}}{2t} \mcom
\]
where $\mF_{a,b}$ is a matrix of partial derivatives indexed by the sets $\cI \times \binom{[n]}{a}$ and $\cJ \times \binom{[n]}{b}$ with
\[
\mF_{a, b}[(\cdot, \al), (\cdot, \beta)] = \begin{dcases}
	\grad_{\alpha + \beta}(\mF) & \text{ if $\al\cdot \beta = 0$}\\
	0 & \text{otherwise}
      \end{dcases}
    \]
where $\alpha, \beta \in \B^n$ are indicator vectors of sets in $\binom{[n]}{a}$ and
$\binom{[n]}{b}$ respectively.
\end{restatable}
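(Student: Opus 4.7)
The plan is to iterate \cref{thm: main_efron_stein} in tandem with the Schatten triangle inequality, reducing the polynomial degree by one at each step. Since $Z_i^2 = 1$ on $\{\pm 1\}^n$, I may assume $\mF$ is multilinear, so that $\partial_k^2 \mF = 0$ and standard partial derivatives coincide with the Boolean ones.

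\textbf{Base step.} Form the Hermitian dilation $\mH$ of $\mF$. Multilinearity gives $\mH(Z) - \mH(Z^{(i)}) = (Z_i - \resamp{Z_i}) \partial_i \mH(Z)$ with $\partial_i \mH$ independent of $Z_i$, and $\EE[(Z_i - \resamp{Z_i})^2] = 2$. Hence the variance proxy is $\mV = \sum_i (\partial_i \mH)^2$, which is block-diagonal with blocks $\mF_{0,1}\mF_{0,1}^\T$ and $\mF_{1,0}^\T \mF_{1,0}$ (direct computation on the dilation). Using $\tr(\mH^{2t}) = 2\sch{\mF}{2t}$, \cref{thm: main_efron_stein} yields
\[
\Esch{\mF - \EE\mF}{2t} ~\le~ \tfrac{(4t-2)^t}{2}\bigl(\Esch{\mF_{0,1}}{2t} + \Esch{\mF_{1,0}}{2t}\bigr).
\]

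\textbf{Recursive step.} For each $(a,b)$ with $a+b \ge 1$, split via the Schatten triangle inequality
\[
\Esch{\mF_{a,b}}{2t} ~\le~ 2^{2t-1}\bigl(\sch{\EE\mF_{a,b}}{2t} + \Esch{\mF_{a,b} - \EE\mF_{a,b}}{2t}\bigr),
\]
and apply \cref{thm: main_efron_stein} to $\mF_{a,b}$ (which is still multilinear, of degree $\le d_p - (a+b)$). A direct Gram matrix calculation yields the key identity
\[
(\mF_{a,b})_{0,1}(\mF_{a,b})_{0,1}^\T ~=~ (b+1)\cdot \mF_{a,b+1}\mF_{a,b+1}^\T,
\]
arising from the $b+1$ ways to decompose an unordered $(b+1)$-subset into a $b$-subset plus a new element; the disjointness constraints $\al\cdot\beta = 0$ of both matrices match because $\partial_k^2 \mF = 0$. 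Symmetrically for $(\mF_{a,b})_{1,0}$. This gives the recursion
\[
\Esch{\mF_{a,b} - \EE\mF_{a,b}}{2t} ~\le~ \tfrac{(4t-2)^t}{2}\bigl((b+1)^t \Esch{\mF_{a,b+1}}{2t} + (a+1)^t \Esch{\mF_{a+1,b}}{2t}\bigr).
\]

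\textbf{Telescoping and constants.} Since $\mF$ has degree $\le d_p$, we have $\mF_{a,b} \equiv 0$ whenever $a+b > d_p$, so the deviation terms at the leaves vanish and the recursion tree terminates by depth $d_p$. Each of the $\binom{a+b}{a}$ root-to-leaf paths reaching $(a,b)$ contributes a product $\bigl(\tfrac{(4t-2)^t}{2} \cdot 2^{2t-1}\bigr)^{a+b} \cdot (a!)^t (b!)^t$, where the combinatorial factor arises because the $i$-th row-move contributes $i^t$ and the $j$-th column-move contributes $j^t$. Using $\binom{a+b}{a}(a!)^t(b!)^t = (a+b)!(a!b!)^{t-1} \le d_p^{(a+b)t}$ and bounding the per-level factor by $(16t)^t/4$ absorbs the aggregate into $(16td_p)^{(a+b)t}$ with room to spare.

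\textbf{Main obstacle.} The bulk of the technical care lies in the recursive step: verifying the precise index-level identification $(\mF_{a,b})_{c,d} \leftrightarrow \mF_{a+c,b+d}$ with the correct combinatorial multiplicities (stemming from ordered vs.\ unordered subsets) and the matching $\al \cdot \beta = 0$ disjointness constraints; and in the constant tracking, ensuring that the aggregate factor from summing over all root-to-leaf paths at a fixed $(a,b)$ is cleanly absorbed into $(16td_p)^{(a+b)t}$.
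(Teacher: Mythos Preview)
Your proposal is correct and takes essentially the same approach as the paper: apply the matrix Efron--Stein inequality to the Hermitian dilation of $\mF_{a,b}$, use the key Gram identity $\sum_i (\partial_i \mF_{a,b})(\partial_i \mF_{a,b})^\T = (b+1)\,\mF_{a,b+1}\mF_{a,b+1}^\T$ (the paper's Claim~3.5), split into expectation plus deviation via the Schatten triangle inequality, and recurse until the degree drops to zero. The only cosmetic difference is in the bookkeeping of constants: the paper absorbs $(a{+}1)^t,(b{+}1)^t \le d_p^t$ immediately at each step (packaging the recursion as a single lemma with per-step factor $(16td_p)^t$), whereas you carry the exact factorial products along paths and bound $\binom{a+b}{a}(a!b!)^t$ at the end; both routes land on $(16td_p)^{(a+b)t}$.
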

\begin{remark}
While we state our results in terms of polynomial moment bounds, it is also possible to obtain
exponential tails using these results. This can be done either using an appropriate (known) variant
of \cref{thm: main_efron_stein}, or by using a sufficiently large value of $t$.
These results can also recover (known) matrix Chernoff or Bernstein inequalities when the function $\mF$ is
linear, but of course the much more interesting case is when $\mF$ is a polynomial function.
\end{remark}
%
%

%We prove this in \cref{sec: basic_recursion}.
We cover some applications of the above theorem in \cref{sec:rademacher-applications}.
Similar to the hypercontractive bound for the scalar case, the bound above is in terms of a small
number ($O(d_p^2)$) of matrices that arise from polynomials of fixed degree (not growing with $t$),
but importantly, they are \emph{deterministic} matrices. Because they are deterministic, analyzing
them is considerably easier.
Note that bounds depending on norms of a fixed number of deterministic matrices, arise even in the
study of concentration for \emph{scalar} polynomial functions~\cite{AW15}, and thus it is not
surprising that they are needed to control the much more challenging case of matrix-valued functions.

When we apply this theorem to the case $\mF = \mM_{\tau}$, the graph matrix of a shape $\tau$, we obtain bounds in terms of combinatorial objects known as ``vertex separators'' of the shape $\tau$. This recovers the bounds by Ahn \etal~\cite{ahn2016graph} and perhaps surprisingly (to the authors), this gives an alternative and direct derivation of these combinatorial structures such as vertex separators, compared to the ingenious observations made in Ahn \etal~\cite{ahn2016graph}.
The important takeaway is that these norm bounds can be recovered by our more general technique rather than relying on problem-specific methods.

\paragraph{Extending the framework to general product distributions.}
A key contribution of our work is to show how the above framework can be extended to arbitrary product distributions (with bounded moments).
A motivating example of this is norm bounds for the so-called ``sparse graph matrices''. In sparse graph matrices, the variables $Z_i$ can be thought of as (normalized) edges of a $\cG_{n', p}$ graph, that is, $Z_i = -\sqrt{\frac{1-p}{p}}$ with probability $p$ and $Z_i = \sqrt{\frac{p}{1-p}}$ with probability $1-p$. These variables are standard in $p$-biased Fourier analysis~\cite{o2014analysis} and are chosen to satisfy $\ExpOp Z_i = 0$ and $\ExpOp Z_i^2 = 1$. Sparse graph matrices naturally arise when analyzing average case problems on $\cG_{n, p}$ graphs for $p = o(1)$, as opposed to $\cG_{n, 1/2}$ graphs.

Until recently, little was known about norm bounds for sparse graph matrices. The difficulty stems partly from the fact that when $p = o(1)$, it is important that sparse graph matrix norm bounds have the right dependence on $p$ and not just on $n$. Such norm bounds were obtained recently by Jones \etal~\cite{sparsesos}, via the trace power method which involved a delicate combinatorial counting argument.
On the other hand, we obtain similar norm bounds using our framework but in a more mechanical fashion.
%When $p = o(1)$, it is important that sparse graph matrix norm bounds have the right dependence on $p$ and not just on $n'$.
%A motivating example is norm bounds for the so-called "sparse graph matrices" where the variables $Z_i$ can be thought of as (normalized) edges of a $\cG_{n', p}$ graph, with $Z_i = -\sqrt{\frac{1-p}{p}}$ with probability $p$ and $Z_i = \sqrt{\frac{p}{1-p}}$ with probability $1-p$ so that $\ExpOp Z_i = 0$ and $\ExpOp Z_i^2 = 1$. When $p = o(1)$, it is important that the norm bounds have the right dependence on $p$ and not just on $n'$.
%
%Such bounds were recently obtained via the trace power method, by Jones \etal~\cite{sparsesos}.
%
We can also readily apply our framework in the even more general case of sub-Gaussian random variables and our bounds will depend on the sub-Gaussian norm of the distributions.
%Similarly, in case of general sub-gaussian variables, we aim for norm bounds with the improved dependence on the sub-Gaussian norm.

To extend our framework to general product distributions, we could take inspiration from the Rademacher case and could attempt to simply recursively apply the Efron-Stein inequality. Unfortunately, this idea will fail. The issue can be observed by again considering the scalar case.
%The issue with simply iterating the Efron-Stein inequality can be observed by again considering the scalar case.
%
Assume that $Z_1, \ldots, Z_n$ are \iid with $\ExpOp Z_i = 0$ and $\ExpOp Z_i^2 = 1$ for all $i \in [n]$.
Also assume for simplicity that $f(Z)$ is a multi-linear polynomial of degree $\dpoly$. Analyzing the variance proxy as before, we get
\[
V(Z)
~=~ \frac12 \cdot \sum_{i \in [n]} \Ex{(Z_i - \resamp{Z_i})^2 \cdot \inparen{\frac{\partial f(Z)}{\partial Z_i}}^2 \mid Z}
~=~ \frac12 \sum_{i \in [n]} \Ex{(Z_i - \resamp{Z_i})^2 | Z} \cdot \inparen{\frac{\partial f(Z)}{\partial Z_i}}^2 \mper
\]
In the Rademacher case, we had $\Ex{(Z_i - \resamp{Z_i})^2 | Z} = 2$. This left us with the polynomials corresponding to partial derivatives but which importantly had a strictly lower degree. However, for a general product distribution, we instead have $\Ex{(Z_i - \resamp{Z_i})^2 | Z} = 1+Z_i^2$. This gives back a term $\inparen{Z_i \cdot \frac{\partial f}{\partial Z_i}}^2$ where the polynomial inside the square could have degree possibly still equal to $d_p$.
%In the Rademacher case, we had $\Ex{(Z_i - \resamp{Z_i})^2 | Z} = 2$ which left us with the polynomials corresponding to partial derivatives with a strictly lower degree. However, for a general product distribution, we instead have $\Ex{(Z_i - \resamp{Z_i})^2 | Z} = 1+Z_i^2$, which gives back a term $\inparen{Z_i \cdot \frac{\partial f}{\partial Z_i}}^2$ with the polynomial inside the square, having degree possibly still equal to $d_p$.
%
This means that in the next step of the recursion, we may again have to consider a derivative with respect to $Z_i$ and may again end up with the same polynomial $f$. Therefore, the recursion is stalled! A similar issue occurs for matrices, which is elaborated in \cref{sec: failure_of_basic}. To get around this, we generalize the work of \cite{paulin2016}.

\paragraph{Generalizing \cite{paulin2016} via explicit inner kernels.}
To resolve the above issue, we modify the proof of \cite{paulin2016} and our proof techniques may be of independent interest.

We first recall how the matrix Efron-Stein inequality, \cref{thm: main_efron_stein}, was proved in \cite{paulin2016}. Their basic strategy is to utilize the theory of \textit{exchangeable pairs} \cite{stein1972bound, stein1986approximate, chatterjee2005concentration, chatterjee2006stein}, in particular \textit{kernel Stein pairs}.
%
%In particular, they extend an earlier work of Mackey \etal \cite{mackey2014}, who derived matrix concentration inequalities using \textit{matrix Stein pairs}. Matrix Stein pairs are generalized by Kernel Stein pairs, which are also much more widely applicable.
A kernel Stein pair is an exchangeable pair of random matrices that has a ``kernel'', a bivariate function that ``reproduces'' the matrices in the pair.
More concretely, consider an exchangeable pair of random variables $(Z,Z')$ (which means $(Z',Z)$ has the same distribution). For this exchangeable pair, a bivariate matrix-valued function $\mK(z,z')$ is said to be a kernel for a matrix-valued function $\mF$ if it satisfies
\begin{itemize}
    \item Anti-symmetry: $\mK(z',z) = -\mK(z,z')$ for all inputs $(z,z')$.
    \item Reproducing property: $\Ex{\mK(Z,Z') \mid Z} = \mF(Z)$.
\end{itemize}
If such a kernel $\mK$ exists, then the pair of random variables $(\mF(Z), \mF(Z'))$ is said to be a kernel Stein pair.

Building on ideas from \cite{stein1986approximate, chatterjee2005concentration}, Paulin, Mackey and Tropp~\cite{paulin2016} first show the existence of a kernel, by exhibiting it as a limit of coupled Markov Chains. By studying the evolution of this kernel coupling, they prove analytic properties of the kernel.
Then, using this kernel, they employ the powerful method of exchangeable pairs to evaluate moments of the random matrix, which in turn will imply concentration.

For a Hermitian random matrix $\mX$, they introduce two matrices - the \textit{conditional variance} $\mV_{\mX}$ which measures the squared fluctuations of $\mX$ when resampling a coordinate of $Z$; and the \textit{kernel conditional variance} $\mV^{\mK}$ which measures the squared fluctation of the kernel when resampling a coordinate of $Z$. With these matrices in hand, they bound the Schatten $2t$-norm of $\mX$ by the Schatten $t$-norm of $s\mV_{\mX} + s^{-1}\mV^{\mK}$ for any parameter $s > 0$. Finally, they choose $s$ appropriately to make these two quantities approximately equal, in which case it simplifies to the variance proxy $\mV$, proving \cref{thm: main_efron_stein}.

In our setting, no such choice of $s$ is feasible because for any choice of $s$, either the conditional variance term $s\mV_{\mX}$ will dominate $\mX^2$ or the kernel conditional variance term $s^{-1}\mV^{\mK}$ will dominate $\mX^2$. This will make the main inequality \cref{thm: main_efron_stein} trivial.

To get around this, we will exploit the structure of the matrix we have, i.e. $\mF = \mD\mG\mD$ where $\mD$ is a diagonal matrix that encodes all variables that have already been differentiated on and $\mG$ is a polynomial matrix of the remaining variables. Since $\mD$ is a simple diagonal matrix with low degrees, most of the deviations exhibited by $\mF$ are in fact likely to be exhibited by $\mG$. To capture this intuition, we consider a kernel for only the inner matrix $\mG$ instead of $\mF$ as a whole. We call this an \textit{inner kernel}.

This helps us avoid the root cause of the issue, i.e. differentiating on variables we have already encountered (which correspond to entries in $\mD$).
Therefore, the recursion will not stall!

However, in general, this is not realizable since $\mD$ and the kernel of $\mG$ can interact in unexpected ways. To study this interaction, we construct explicit polynomial kernels (\cref{thm: explicit_kernel_for_poly}) (compared to \cite{paulin2016} who show the existence of the kernel but for all functions).

We study how this explicit inner kernel interacts with $\mD$ (see \cref{lem: props_of_exp_kernel_mat}) and use it to obtain a generalization of the inequalities by \cite{paulin2016} (generalized because setting $\mD = \mI$ will give back their result) stated in \cref{lem: main_pmt_bound}.
%When we apply this inequality, we can lower $s$ moderately so that all the terms are controlled and in particular, $s\mV_{\mX}$ doesn't overwhelm $\mX^2$ as it did earlier.

A subtle issue is that the conditional variance of $\mX$ may still have additional deviations due to the diagonal matrices $\mD$ (which still involve random variables). We control the additional deviations using Jensen's operator trace inequality (for non-commuting averages)~\cite{hansen2003jensen} (stated in \cref{lem: jensen_trace}).
Putting these ideas together lets us obtain a version of the Efron-Stein inequality where the variance proxy only corresponds to the conditional variance of the inner kernel. In the setting of polynomial functions, this inequality generalizes the work of \cite{paulin2016}.

With the modified Efron-Stein inequality from above, we cannot guarantee that the matrices $\mF$ at intermediate steps are of lower degree, but on the other hand, the degree of the inner matrix $\mG$ reduces at each step. Therefore, we can recursively apply this inequality to obtain our final bounds. The final bounds are then stated in terms of norm bounds for the simplified matrices of the form $\mD\mG\mD$ where $\mG$ are deterministic matrices and $\mD$ are diagonal matrices which are still functions of $Z$.
%With the modified version of the Efron-Stein inequality above, we don't necessarily have that the functions $\mH$ at intermediate steps are of lower degree, but that the degree of the inner matrix $\mG$ reduces at each step. The final bounds are then stated in terms of norm bounds for the simplified matrices of the form $\mD\mG\mD$ where $\mG$ are deterministic matrices and $\mD$ are diagonal matrices \emph{which are still functions of} $Z$.
%
While random, these matrices can be easily analyzed via simple scalar concentration tools.
%
%One such scalar concentration tool is Schudy and Sviridenko's polynomial moment bounds~\cite{schudy2011bernstein}, which we use for the case of sparse graph matrices.
%In the case of sparse graph matrices, we use moment bounds for polynomials by Schudy and Sviridenko~\cite{schudy2011bernstein}.

The main theorem is stated in \cref{sec: general_recursion}, in particular \cref{thm: main_general}, with the proof following in \cref{sec: proof_of_general}. While our proof builds on the work by \cite{paulin2016}, the argument here is self-contained.

\paragraph{Applications.}
Our framework is suitable for many nonlinear concentration results obtained in the literature \cite{BarakBHKSZ12, ge2015decomposing, hopkins2015tensor, medarametla2016bounds, ahn2016graph, hopkins2016fast, schramm2017fast, hopkins2018statistical, hopkins2019robust, moitra2019spectral, kivva2020exact, potechin2020machinery, sparsesos, bafna2022polynomial, rajendran2022nonlinear, Jones:thesis}.
%Many nonlinear concentration results that have been derived in the literature \cite{BarakBHKSZ12, ge2015decomposing, hopkins2015tensor, medarametla2016bounds, ahn2016graph, hopkins2016fast, schramm2017fast, hopkins2018statistical, hopkins2019robust, moitra2019spectral, sparsesos} are amenable to our framework.
%
We show a few of these applications in \cref{sec:rademacher-applications} and \cref{sec: sparse_graph_matrices}.
We expect similar future applications to benefit from our framework because the task is mechanically reduced to analyzing considerably simpler matrices.

In \cref{sec: dense_graph_matrices}, we derive norm bounds on dense graph matrices. In earlier works, dense graph matrices have been used extensively in analysis of semidefinite programming hierarchies, especially the Sum-of-Squares (SoS) hierarchy \cite{meka2015sum, deshpande2015improved, hopkins2015sos, raghavendra2015tight, barak2019nearly, mohanty2020lifting, ghosh2020sum, potechin2022subexponential, rajendran2022combinatorial}. For more applications and a detailed treatment of graph matrices, see \cite{ahn2016graph, Jones:thesis}.

In \cref{sec: sparse_graph_matrices}, we derive norm bounds for sparse graph matrices. Sparse graph matrices have been relatively less understood until recently, when \cite{sparsesos} obtained norm bounds for such matrices via the trace power method. They use these bounds to prove SoS lower bounds for the maximum independent set problem on sparse graphs.

\paragraph{Other related work}\label{par: related}
Nonlinear concentration for the case of scalar-valued functions has been the subject of an extensive
body of work. In addition to the results of Schudy and Sviridenko~\cite{schudy2011bernstein} which
we use, strong concentration results have also been obtained (for example) in the results of
Lata{\l}a~\cite{Latala06}, Adamczak and Wolff~\cite{AW15}, and Bobkov, G{\"o}tze, and
Sambale~\cite{BGS19}. In addition to the above results, hypercontractive inequalities can also be
used to obtain concentration inequalities for low-degree (scalar) polynomial
functions~\cite{ODonnell08} with possible sub-optimal exponents.

For the case of matrix-valued functions, while we rely here on the work of Paulin, Mackey, and
Tropp~\cite{paulin2016} for product distributions, later works have also extended these results to
distributions satisfying weaker assumptions. In particular, Aoun, Banna, and Youssef~\cite{ABY20}
obtained matrix concentration for distributions satisfying matrix Poincar{\'e} inequalities,
building on earlier work of Cheng, Hsieh, and Tomamichel~\cite{CHT17, CH19}.
It was later proved by Garg, Kathuria, and Srivastava~\cite{GKS21} that the matrix Poincar{\'e}
inequalities are implied by scalar Poincar{\'e} inequalities.
Independently, matrix concentration based on scalar Poincar{\'e} inequalities was also proved by
Huang and Tropp~\cite{HT21:poincare}. Another work of Huang and Tropp~\cite{HT21:semigroup} also
extablishes matrix concentration inequalities via semigroup methods. While some hypercontractive
inequalities are also known for matrix-valued functions~\cite{BARDW08, AD21:hypercontractivity}, to the best of our knowledge, they do not
imply concentration bounds for matrices with low-degree polynomial entries.

\paragraph{Potential extensions}\label{par: extension}

In this work, we assumed that the input forms a product distribution. In other words, the variables $Z_1, \ldots, Z_n$ are independent. A natural extension is the case when they are not independent. This has important applications for many problems such as when the input is a uniform $d$-regular graph, or when the input is sampled from a distribution with a global constraint, etc. In such cases, the input variables are not independent but it may be possible to use similar ideas to analyze concentration.

More concretely, to study concentration in the non-independent setting, one can use the recent work of Huang and Tropp~\cite{HT21:poincare} on matrix concentration from \Poincare inequalities, together with our framework. For this, we just need to exhibit a Markov process that converges to our desired distribution.

\paragraph{Organization of the paper and bibliographic note.}

We start with preliminaries in \cref{sec: prelims}. In \cref{sec: basic_recursion}, we state and prove the Rademacher recursion. We illustrate some applications of this framework in \cref{sec:rademacher-applications}. In \cref{sec: failure_of_basic}, we explain why similar ideas may not be enough in the general case. We then propose our general framework in \cref{sec: general_recursion} and prove it in \cref{sec: proof_of_general}. We end with an application of the general framework to sparse graph matrices in \cref{sec: sparse_graph_matrices}. An earlier version of this paper also appeared in the dissertation of the first author \cite{rajendran2022nonlinear}.

%!TEX root=main-efron-stein.tex

%%% Local Variables:
%%% mode: latex
%%% TeX-master: "main-efron-stein"
%%% End:

\section{Preliminaries}\label{sec: prelims}
\paragraph{Notation}

We use boldface letters such as $\mI, \mM, \mX\ldots, $ to denote matrices.
%The set of Hermitian matrices in $n$ dimensions will be denoted by $\HH^n$.
Entries of a matrix $\mX \in \R^{\cI \times \cJ}$ will be denoted by $\mX[I,J]$ for $I \in \cI, J \in \cJ$. Let $\HH^n$ denote the set of $n \times n$ real symmetric matrices. The trace of a matrix $\mX \in \HH^n$ equals $\sum_{i \in [n]} \mX[i,i]$ and is denoted by $\tr \mX$.

\subsubsection*{Multi-index notation}

For any pair of vectors $\al, \beta \in \NN^n$ and scalar $c \in \NN$, we define $\al + \beta, \al \cdot \beta, c\al$ entrywise. We also define the orderings $\al \le \beta$ and $\al \unlhd \beta$ where we say $\al \le \beta$ if for each $i$, $\al_i \le \beta_i$, and $\al \unlhd \beta$ if for each $i$, $\al_i$ is either $0$ or $\beta_i$. We denote by $|\al|_0$ the number of nonzero entries of $\al$ and by $|\al|_1$, the sum of entries of $\al$. For a boolean vector $\gam \in \{0, 1\}^n$, we define $1 - \gam$ the vector with all its bits flipped.

%For $\al, \beta \in \NN^n$, define by $\al+ \beta$ the entrywise sum and by $\al\cdot \beta$ the entrywise product, that is, $(\al + \beta)_i = \al_i + \beta_i, (\al\cdot \beta)_i = \al_i\beta_i$. For a scalar $c$, define by $c\al \in \NN^n$ the vector obtained by scaling each entry of $\al$ by $c$.

%For $\gam \in \{0, 1\}^n$, define $1 - \gam \in \{0, 1\}^n$ to be the vector with each entry of $\gam$ flipped. That is, $(1 - \gam)_i = 1 - \gam_i$ for all $i \le n$.

%For $\al, \beta \in \NN^n$, we say $\al \le \beta$ if $\al_i \le \beta_i$ for all $i \le n$.
%And we say $\al \unlhd \beta$ if for all $i \le n$, $\al_i \in \{0, \beta_i\}$.

\subsubsection*{Derivatives}

For variables $Z_1, \ldots, Z_n$ and $\al \in \NN^n$, define the monomial $Z^{\al} := \prod_{i = 1}^n Z_i^{\al_i}$. This forms a standard basis for polynomials.

For $\al \in \NN^n$, we define the linear operator $\grad_{\al}$ that acts on polynomials by defining its action on the elements $Z^{\beta}$ as follows and then extend linearly to all polynomials.
\[\grad_{\al}(Z^{\beta}) = \begin{dcases}
	Z^{\beta - \al} & \text{ if $\al \unlhd \beta$}\\
	0 & \text{ o.w.}
\end{dcases}\]

Informally, for a polynomial $f$ written as a linear combination of the standard basis polynomials $Z^{\beta}$, $\grad_{\al}(f)$ isolates the terms that precisely contain the powers $Z_i^{\al_i}$ for all $i$ such that $\al_i \neq 0$ and then truncates these powers. In other words, it's the coefficient of $Z^{\alpha}$ in $f$. In particular, observe that $\grad_{\al}(f)$ does not depend on $Z_i$ for any $i$ such that $\al_i \neq 0$.

Supose $f$ is multilinear, as we can assume in the Rademacher case when we are working with $Z_i \in \pmone$. For $\al \in \{0, 1\}^n$ with nonzero indices $i_1, \ldots, i_k \in [n]$, we have $\grad_{\al}(f) = \frac{\partial}{\partial Z_{i_1}}\ldots \frac{\partial}{\partial Z_{i_k}}f$. So this linear operator generalizes the partial derivative operator. But note that in general, $\grad$ is not simply the standard partial derivative operator.

%For a polynomial $f$ and $\al \in \{0, 1\}^n$, denote by $\frac{\partial^{|\al|_0}}{\partial^{\al}}f$ the partial derivative of $f$ with respect to $\al$. That is, if $i_1, \ldots, i_k \in [n]$ are the set of indices $i$ such that $\al_i \neq 0$, then
%\[\frac{\partial^{|\al|_0}}{\partial^{\al}}f = \frac{\partial}{\partial Z_{i_1}}\ldots \frac{\partial}{\partial Z_{i_k}}f\]

\subsubsection*{Matrix Analysis}

Linear operators that act on polynomials can also be naturally defined to act on matrices by acting on each entry.

We define $\mI_m$ to be the $m \times m$ identity matrix. We drop the subscript when it's clear.
For matrices $\mF, \mG$, define $\mF \oplus \mG$ to be the matrix $\begin{bmatrix}
	0 & \mF\\
	\mG & 0
\end{bmatrix}$. For a matrix $\mF$, define its Hermitian dilation $\herm{\mF}$ as $\mF \oplus \mF^T$. Denote by $\preceq$ the Loewner order, that is, $\mA \preceq \mB$ for $\mA, \mB \in \HH^n$ if and only if $\mB - \mA$ is positive semi-definite.

\begin{definition}
	For a matrix $\mF$ and an integer $t \ge 0$, define the Schatten $2t$-norm as
	\[\norm{\mF}_{2t}^{2t} = \tr[{(\mF\mF^T)^t}]\]
\end{definition}

\begin{fact}\label{fact: cs}
	For real symmetric matrices $\mX_1, \ldots, \mX_n$, we have
	\begin{align*}
		(\mX_1 + \ldots + \mX_n)^2 \preceq n(\mX_1^2 + \ldots + \mX_n^2)
	\end{align*}
\end{fact}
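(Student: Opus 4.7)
The plan is to prove the Loewner inequality by exhibiting the difference $n(\mX_1^2 + \cdots + \mX_n^2) - (\mX_1 + \cdots + \mX_n)^2$ as a manifestly positive semidefinite matrix, namely a sum of squares of symmetric matrices. This is a matrix analogue of the classical scalar inequality $n\sum_i x_i^2 \geq (\sum_i x_i)^2$, which one proves by expanding $\sum_{i<j}(x_i - x_j)^2 \geq 0$.

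Concretely, I would expand $\sum_{i<j}(\mX_i - \mX_j)^2$. Since each $\mX_i$ is symmetric, $(\mX_i - \mX_j)^2 = \mX_i^2 + \mX_j^2 - \mX_i\mX_j - \mX_j\mX_i$. Summing over the $\binom{n}{2}$ pairs gives $(n-1)\sum_i \mX_i^2 - \sum_{i\neq j} \mX_i \mX_j$. On the other hand, expanding $(\sum_i \mX_i)^2$ yields $\sum_i \mX_i^2 + \sum_{i\neq j} \mX_i \mX_j$. Adding these two identities shows
\[
n \sum_{i=1}^n \mX_i^2 - \Bigl(\sum_{i=1}^n \mX_i\Bigr)^{\!2} \;=\; \sum_{1 \le i < j \le n} (\mX_i - \mX_j)^2.
\]
Each summand on the right is the square of a symmetric matrix and is therefore positive semidefinite, so the sum is PSD as well, which is exactly the claimed inequality.

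The argument is routine, so there is no real obstacle; the only point to be careful about is keeping track of the non-commuting cross terms $\mX_i \mX_j$ and $\mX_j \mX_i$ when expanding, but these combine cleanly because $\mX_i, \mX_j$ are symmetric. As an alternative one-line proof I could instead test the inequality against an arbitrary vector $v$: by Cauchy--Schwarz in $\R^n$,
\[
v^{\T}\Bigl(\sum_i \mX_i\Bigr)^{\!2} v \;=\; \Bigl\|\sum_i \mX_i v\Bigr\|^2 \;\le\; n \sum_i \|\mX_i v\|^2 \;=\; n \cdot v^{\T}\Bigl(\sum_i \mX_i^2\Bigr) v,
\]
using $v^{\T}\mX_i^2 v = \|\mX_i v\|^2$ since $\mX_i$ is symmetric. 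I would present the sum-of-squares version since it gives the PSD certificate directly in Loewner form.
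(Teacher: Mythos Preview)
Your proof is correct. The paper actually states this as a \emph{Fact} without proof, so there is no ``paper's own proof'' to compare against. Both arguments you give --- the sum-of-squares identity $n\sum_i \mX_i^2 - (\sum_i \mX_i)^2 = \sum_{i<j}(\mX_i-\mX_j)^2$ and the quadratic-form test via Cauchy--Schwarz --- are standard and valid; either would be an appropriate justification here.
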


\begin{fact}\label{fact: holder}
	For positive semidefinite matrices $\mX, \mX_1, \ldots, \mX_n$ such that $\mX \preceq \mX_1 + \ldots + \mX_n$ and for any integer $t \ge 1$,
	\begin{align*}
		\tr [\mX^t] \le n^{t - 1}(\tr[\mX_1^t] + \ldots + \tr[\mX_n^t])
	\end{align*}
\end{fact}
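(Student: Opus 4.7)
The plan is to prove this matrix power-mean inequality in two steps: first use Loewner monotonicity of $\mX \mapsto \tr[\mX^t]$ on positive semidefinite matrices to reduce to the case where $\mX = \sum_i \mX_i$, then apply the Schatten-norm triangle inequality together with the scalar power-mean inequality.

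For the first step, set $\mY := \sum_i \mX_i$. Since both $\mX$ and $\mY$ are PSD with $\mX \preceq \mY$, Weyl's monotonicity theorem gives $\lambda_j(\mX) \le \lambda_j(\mY)$ for every eigenvalue (listed in decreasing order). Because $t \ge 1$ and all eigenvalues are non-negative, we have $\lambda_j(\mX)^t \le \lambda_j(\mY)^t$, and summing over $j$ yields $\tr[\mX^t] \le \tr[\mY^t] = \tr[(\sum_i \mX_i)^t]$.

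For the second step, let $\|\cdot\|_t$ denote the Schatten $t$-norm, so that $\|\mZ\|_t^t = \tr[\mZ^t]$ for PSD $\mZ$. The triangle inequality $\|\sum_i \mX_i\|_t \le \sum_i \|\mX_i\|_t$, combined with the scalar power-mean inequality $(a_1 + \cdots + a_n)^t \le n^{t-1}(a_1^t + \cdots + a_n^t)$ applied to $a_i = \|\mX_i\|_t$, gives $\|\sum_i \mX_i\|_t^t \le n^{t-1}\sum_i \|\mX_i\|_t^t$, which is exactly $\tr[(\sum_i \mX_i)^t] \le n^{t-1}\sum_i \tr[\mX_i^t]$. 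Chaining the two bounds yields the claim.

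I do not expect a serious obstacle here; both ingredients are standard. The label \texttt{holder} hints at a direct alternative: expand $\tr[(\sum_i \mX_i)^t] = \sum_{i_1,\ldots,i_t \in [n]} \tr[\mX_{i_1} \cdots \mX_{i_t}]$, bound each summand by $\prod_j (\tr[\mX_{i_j}^t])^{1/t}$ using the matrix \Holder\ inequality with exponents $(t,\ldots,t)$, then apply AM-GM $\prod_j y_j^{1/t} \le \frac{1}{t}\sum_j y_j$ and count how often each index appears across the $t$ slots to recover the same $n^{t-1}$ factor. Either route suffices.
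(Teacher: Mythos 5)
Your proof is correct and uses exactly the same three ingredients in the same chain as the paper's: trace monotonicity under the Loewner order (you derive it from Weyl's monotonicity theorem; the paper cites it from Petz's survey, same fact), the Schatten-$t$ triangle inequality, and the scalar power-mean inequality (which the paper calls H\"older's inequality). The alternative trace-expansion route you sketch at the end is not what the paper does, but your main argument is a match.
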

\begin{proof}
    By H\"{o}lder's inequality, $n^{t - 1}(\tr[\mX_1^t] + \ldots + \tr[\mX_n^t]) \ge (\norm{\mX_1}_t + \ldots + \norm{\mX_n}_t)^t$. By triangle inequality of Schatten norms, this is at least $\norm{\mX_1 + \ldots + \mX_n}_t^t$. Finally, because $\mX_1 + \ldots + \mX_n \succeq \mX\succeq 0$, we can use the monotonicity of trace functions (see \cite[Proposition 1]{petz1994survey}) where we use the increasing function $f(x) = x^t$ on $x \in [0, \infty)$. This proves the result.
\end{proof}

\begin{lemma}[Jensen's operator trace inequality]\cite[Corollary 2.5]{hansen2003jensen}\label{lem: jensen_trace}
	Let $f$ be a convex, continuous function defined on an interval $I$ and suppose that $0 \in I$ and $f(0) \le 0$. Then, for all integers $m, n \ge 1$, for every tuple $\mB_1, \ldots, \mB_n$ of real symmetric $m \times m$ matrices with spectra contained in $I$ and every tuple $\mA_1, \ldots, \mA_n$ of $m \times m$ matrices with $\sum_{i = 1}^n \mA_i^T\mA_i \preceq \mI$, we have
	\[\tr[f(\sum_{i = 1}^n \mA_i^T \mB_i \mA_i)] \le \tr[\sum_{i = 1}^n \mA_i^T f(\mB_i) \mA_i]\]
\end{lemma}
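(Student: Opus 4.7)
The plan is to reduce the inequality to a direct application of scalar Jensen's inequality via a standard stacking argument.

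First I would reduce to the case where the $\mA_i$'s form a column-isometry, \ie $\sum_{i=1}^n \mA_i^T \mA_i = \mI$. If not, introduce an auxiliary pair $\mA_{n+1} \defeq (\mI - \sum_{i=1}^n \mA_i^T \mA_i)^{1/2}$ (well-defined and symmetric since the sum is $\preceq \mI$) and $\mB_{n+1} \defeq 0$. On the left-hand side, the argument $\sum_{i=1}^{n+1} \mA_i^T \mB_i \mA_i$ of $f$ is unchanged. On the right-hand side, we pick up the extra term $\tr[\mA_{n+1}^T f(0) \mA_{n+1}] = f(0)\cdot \tr[\mA_{n+1}^T \mA_{n+1}]$, which is $\le 0$ since $f(0)\le 0$. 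Hence the inequality for the augmented isometric system implies the original, and I may assume $\sum_{i=1}^{N}\mA_i^T \mA_i = \mI$, writing $N = n+1$.

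Next I would lift to a single isometry. Stack the matrices into $\mV \defeq [\mA_1^T,\ldots,\mA_N^T]^T$ of size $Nm \times m$ and $\mB \defeq \diag(\mB_1,\ldots,\mB_N)$ of size $Nm\times Nm$. Then $\mV^T \mV = \mI_m$, and a direct block computation gives
\[
\mV^T \mB \mV \;=\; \sum_{i=1}^N \mA_i^T \mB_i \mA_i, \qquad \tr[\mV^T f(\mB)\mV] \;=\; \sum_{i=1}^N \tr[\mA_i^T f(\mB_i)\mA_i],
\]
using that $f$ applied to a block-diagonal matrix acts block-wise. So the claim reduces to a single-isometry trace-Jensen statement: if $\mV^T \mV = \mI$ and $\mB$ is real symmetric with spectrum in $I$, then $\tr[f(\mV^T \mB \mV)] \le \tr[\mV^T f(\mB) \mV]$.

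Finally, I would prove this eigenvalue by eigenvalue. Let $\mV^T \mB \mV = \sum_k \mu_k p_k p_k^T$ and $\mB = \sum_j \lambda_j u_j u_j^T$ be spectral decompositions, and set $q_k \defeq \mV p_k$. Since $\mV^T\mV = \mI$, the $q_k$'s are orthonormal. Then $\mu_k = q_k^T \mB q_k = \sum_j \lambda_j (q_k^T u_j)^2$ is a convex combination of the $\lambda_j$'s, because the weights $(q_k^T u_j)^2$ sum to $\|q_k\|^2 = 1$. Scalar Jensen then gives $f(\mu_k) \le \sum_j (q_k^T u_j)^2 f(\lambda_j)$, and summing over $k$ yields
\[
\sum_k f(\mu_k) \;\le\; \sum_j f(\lambda_j)\cdot u_j^T \mV\mV^T u_j \;=\; \tr[f(\mB)\mV\mV^T] \;=\; \tr[\mV^T f(\mB)\mV],
\]
which is precisely the desired trace inequality. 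The delicate step is the initial reduction: the assumption $f(0) \le 0$ is essential there, since without it the slack term from $\mA_{n+1}$ could strictly increase the right-hand side. Once the isometry is in hand, the remaining argument is just a pointwise application of scalar Jensen to the joint diagonalization, so I expect the proof to be quite short overall.
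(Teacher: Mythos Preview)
The paper does not prove this lemma: it is quoted as a black box from Hansen and Pedersen~\cite{hansen2003jensen}, so there is no in-paper argument to compare against. Your proof, however, is correct and self-contained. The reduction to an isometry via $\mA_{n+1}=(\mI-\sum_i \mA_i^T\mA_i)^{1/2}$, $\mB_{n+1}=0$ is exactly where the hypotheses $0\in I$ and $f(0)\le 0$ enter, and you identify this correctly. The stacking step is standard and reduces to the single-isometry trace-Jensen inequality, and your eigenvector expansion for that case is the usual Peierls--Bogoliubov-style argument: write each eigenvalue of $\mV^T\mB\mV$ as a convex combination of eigenvalues of $\mB$ (this also shows the spectrum of $\mV^T\mB\mV$ lies in $I$, so $f$ is applicable), apply scalar Jensen, and sum.

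By contrast, the Hansen--Pedersen proof proceeds via their general theory of Jensen's operator inequality for unital $C^*$-algebra maps and operator-convex/convex functions; your argument is more elementary and tailored to the trace case, which is all that is needed here.
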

%

%!TEX root=main-efron-stein.tex

\section{The basic framework for Rademacher random variables} \label{sec: basic_recursion}

%We first illustrate the method in the context of Rademacher random variables.
%
% To illustrate the recursion framework, in this section, we focus on the case of
% Rademacher random variables where each input variable is sampled uniformly from the domain $\Omega = \{-1, 1\}$.
%
Let $Z = (Z_1, \ldots, Z_n)$ be sampled uniformly from $\{-1, 1\}^n$.
% Denote by $Z$ the vector $(Z_1, \ldots, Z_n)$.
We will consider matrix-valued functions $\mF: \pmone^n \to \RR^{\cI \times \cJ}$, with rows and columns indexed by arbitrary sets $\cI, \cJ$ respectively such that for all $I \in \cI, J \in \cJ$,
\[
\mF[I, J] ~=~ f_{I, J}(Z)
\]
where $f_{I, J}$ are polynomials of $Z_1, \ldots, Z_n$.
Since $Z_i \in \{-1, 1\}$, we can assume without loss of generality that $f_{I, J}$ are multilinear.
Let $\dpoly$ be the maximum degree of any $f_{I, J}$ in $\mF$.
In this section, we will give a general framework using which we can obtain bounds on
$\Esch{\mF - \EE\mF}{2t}$ for any integer $t \ge 1$. We restate the theorem for convenience.

%For each $i \le n$, define the random vector
%\[Z^{(i)} ~:=~ (Z_1, \ldots, Z_{i - 1}, \resamp{Z_i}, Z_{i + 1}, \ldots, Z_n)\]
%where $\resamp{Z_i}$ is an independent copy of $Z_i$, that is,
%is independently resampled from $\{-1, 1\}$.
%
%Let $\mX := \mF- \EE\mF$. When the input is $Z$, we denote the matrices as $\mF, \mX$, etc and when the input is $Z^{(i)}$, denote the corresponding matrices as $\mF^{(i)}, \mX^{(i)}$, etc. That is, for $I \in \cI, J \in \cJ$, we have $\mF^{(i)}[I, J] = f_{I, J}(Z^{(i)})$.

%We now define the recursion framework.

%Denote by $\al \cup \beta \in \{0, 1\}^n$ as $(\al \cup \beta)_i = 1$ iff $\al_i = 1$ or $\beta_i = 1$. Similarly, define $\al \cap \beta \in \{0, 1\}^n$ as $(\al \cap \beta)_i = 1$ iff $\al_i = 1$ and $\beta_i = 1$. Note that if $\al \cap \beta = 0$, then $\al + \beta = \al \cup \beta$.

%For integers $a, b \ge 0$, define the matrix $\mF_{a, b}$ with rows indexed by $\cI \times \{0, 1\}^n$ and columns indexed by $\cJ \times \{0, 1\}^n$ such that for all $I \in \cI, J \in \cJ$, we have
%\[\mF_{a, b}[(I, \al), (J, \beta)] = \begin{dcases}
%	\grad_{\al + \beta} f_{I, J} & \text{ if $|\al|_0 = a, |\beta|_0 = b, \al\cdot \beta = 0$}\\
%	0 & \text{o.w.}
%\end{dcases}
%\]

%Analogously, define $\mX_{a, b} = \mF_{a, b} - \EE\mF_{a, b}$.
%We now state our main theorem.

%Recall that $\dpoly$ is the maximum degree of any $f_{I, J}$ in $\mF$.

%\begin{theorem}\label{thm: main_rademacher}
%	For each natural number $t \ge 1$,
%	\[\Esch{\mF - \EE \mF}{2t} \le \sum_{a, b \ge 0, a + b \ge 1}(16t\dpoly)^{(a + b)t}\sch{\EE\mF_{a, b}}{2t}\]
%\end{theorem}

\basicframework*

%\begin{remark}
%Note that while the matrices $\mF_{a,b}$ are stated above as having rows and colmns indexed by $\cI \times \B^n$ and $\cJ \times \B^n$ for convenience, we only need to consider the submatrices with $\abs{\cI} \cdot \binom{n}{a}$ rows and $\abs{\cJ} \cdot \binom{n}{b}$ columns, since all other entries will be zero (when $\abs{\alpha} \neq a$ or $\abs{\beta} \neq b$).
%\end{remark}

\begin{remark}
	To obtain high probability norm bounds from moment estimates, we can set $t = \polylog(n)$ and invoke Markov's inequality. Since we do not attempt to optimize the dependence on the logarithmic factors, we do not attempt to optimize the exponent of $t$ in the main theorem.
\end{remark}

To prove this, we will prove \cref{lem: main_rademacher} and then recursively apply it.
For each $i \le n$, define the random vector
\[Z^{(i)} ~:=~ (Z_1, \ldots, Z_{i - 1}, \resamp{Z_i}, Z_{i + 1}, \ldots, Z_n)\]
where $\resamp{Z_i}$ is an independent copy of $Z_i$, that is,
is independently resampled from $\{-1, 1\}$.

Let $\mX := \mF- \EE\mF$. When the input is $Z$, we denote the matrices as $\mF, \mX$, etc and when the input is $Z^{(i)}$, denote the corresponding matrices as $\mF^{(i)}, \mX^{(i)}$, etc. That is, for $I \in \cI, J \in \cJ$, we have $\mF^{(i)}[I, J] = f_{I, J}(Z^{(i)})$. Define $\mX_{a, b} = \mF_{a, b} - \EE\mF_{a, b}$.

\begin{lemma}\label{lem: main_rademacher}
	For integers $a, b \ge 0$, we have
	\[\Esch{\mX_{a, b}}{2t} \le (16t\dpoly)^t(\Esch{\mX_{a, b + 1}}{2t} + \Esch{\mX_{a + 1, b}}{2t} + \sch{\EE\mF_{a, b + 1}}{2t} +\sch{\EE\mF_{a + 1, b}}{2t})\]
\end{lemma}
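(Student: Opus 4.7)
The plan is to apply the matrix Efron--Stein inequality (\cref{thm: main_efron_stein}) to the Hermitian dilation $\herm{\mX_{a,b}}$ and identify the two blocks of the resulting variance proxy in closed form. Since $\herm{\mX_{a,b}}^{2t}$ is block diagonal with blocks $(\mX_{a,b}\mX_{a,b}^T)^t$ and $(\mX_{a,b}^T\mX_{a,b})^t$, we have $\Etr{\herm{\mX_{a,b}}^{2t}} = 2\Esch{\mX_{a,b}}{2t}$, and likewise the variance proxy $\mV_{a,b}$ is block diagonal, so $\Etr{\mV_{a,b}^t}$ splits as the sum over the two blocks. These blocks equal $\tfrac{1}{2}\sum_i \EE[(\mX_{a,b}-\mX_{a,b}^{(i)})(\mX_{a,b}-\mX_{a,b}^{(i)})^T \mid Z]$ and its transpose-variant.

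To simplify these, note that $\mX_{a,b}-\mX_{a,b}^{(i)} = \mF_{a,b}-\mF_{a,b}^{(i)}$ because $\EE \mF_{a,b}$ is deterministic; multilinearity of each entry of $\mF_{a,b}$ in $Z_1,\ldots,Z_n$ gives $\mF_{a,b}-\mF_{a,b}^{(i)} = (Z_i - \widetilde{Z}_i)\cdot \partial_i \mF_{a,b}$ with $\partial_i \mF_{a,b}$ independent of $Z_i$, and combined with $\EE[(Z_i - \widetilde{Z}_i)^2 \mid Z] = 2$ the two blocks reduce to $\sum_i (\partial_i \mF_{a,b})(\partial_i \mF_{a,b})^T$ and $\sum_i (\partial_i \mF_{a,b})^T(\partial_i \mF_{a,b})$. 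The key combinatorial identity is that whenever $\al \cdot \beta = 0$ and $i \notin \al\cup\beta$, we have $\partial_i \mF_{a,b}[(I,\al),(J,\beta)] = \grad_{\al+\beta+e_i}(\mF[I,J]) = \mF_{a,b+1}[(I,\al),(J,\beta+e_i)]$, while the entry vanishes otherwise. Summing over $i$, each column $(J,\gamma)$ of $\mF_{a,b+1}$ is counted exactly $|\gamma|=b+1$ times, which yields $\sum_i (\partial_i \mF_{a,b})(\partial_i \mF_{a,b})^T = (b+1)\,\mF_{a,b+1}\mF_{a,b+1}^T$, and symmetrically $\sum_i (\partial_i \mF_{a,b})^T(\partial_i \mF_{a,b}) = (a+1)\,\mF_{a+1,b}^T \mF_{a+1,b}$.

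Combining with \cref{thm: main_efron_stein} and using $a+1,b+1 \le \dpoly$ (otherwise the derivative matrices vanish), this yields $\Esch{\mX_{a,b}}{2t} \le \tfrac{1}{2}(4t-2)^t\dpoly^t(\Esch{\mF_{a,b+1}}{2t}+\Esch{\mF_{a+1,b}}{2t})$. To convert $\mF_{c,d}$ bounds into $\mX_{c,d}$ bounds, write $\mF_{c,d} = \mX_{c,d} + \EE\mF_{c,d}$ and apply the Schatten-$2t$ triangle inequality together with $(x+y)^{2t} \le 2^{2t-1}(x^{2t}+y^{2t})$; taking expectations gives $\Esch{\mF_{c,d}}{2t} \le 2^{2t-1}(\Esch{\mX_{c,d}}{2t} + \sch{\EE\mF_{c,d}}{2t})$, while the prefactor simplifies as $\tfrac{1}{2}\cdot 2^{2t-1}(4t-2)^t\dpoly^t = 4^{t-1}(4t-2)^t\dpoly^t \le (16t\dpoly)^t$, matching the advertised bound. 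The only delicate step is the combinatorial identity $\sum_i (\partial_i \mF_{a,b})(\partial_i \mF_{a,b})^T = (b+1)\mF_{a,b+1}\mF_{a,b+1}^T$: one must verify that the support constraint $\al\cdot(\beta+e_i) = 0$ defining the nonzero entries of $\mF_{a,b+1}$ matches precisely the values of $(i,\al,\beta)$ for which $\partial_i \mF_{a,b}[(I,\al),(J,\beta)]$ is nonzero (namely $\al\cdot\beta = 0$ and $i \notin \al\cup\beta$), so that the bijection $(i,(J,\beta)) \leftrightarrow (J,\beta+e_i)$ carries through without any double-counting.
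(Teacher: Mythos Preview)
Your proof is correct and follows essentially the same approach as the paper: apply \cref{thm: main_efron_stein} to the Hermitian dilation $\herm{\mX}_{a,b}$, use multilinearity and $\EE[(Z_i-\resamp{Z_i})^2\mid Z]=2$ to express the block-diagonal variance proxy via $\partial_i \mF_{a,b}$, establish the combinatorial identity $\sum_i (\partial_i \mF_{a,b})(\partial_i \mF_{a,b})^{\T} = (b+1)\mF_{a,b+1}\mF_{a,b+1}^{\T}$ (the paper packages this as \cref{claim: reduction} using auxiliary matrices $\mF_{a,b,i}$), and then split $\mF_{c,d} = \mX_{c,d} + \EE\mF_{c,d}$ with the Schatten triangle inequality to reach the stated constant $(16t\dpoly)^t$. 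Your treatment of the constants and of the bijection $(i,(J,\beta)) \leftrightarrow (J,\beta+e_i)$ is accurate and matches the paper's argument.
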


Using this lemma, we can complete the proof of the main theorem.

\begin{proof}[Proof of \cref{thm: main_rademacher}]
	Observing that $\mX$ is a principal submatrix of $\mX_{0, 0}$ with all other entries being $0$, we can apply \cref{lem: main_rademacher} repeatedly until $\mX_{a, b} = 0$, which will be the case if $a + b > \dpoly$.
\end{proof}

In the rest of this section, we will prove \cref{lem: main_rademacher}. We start with a basic fact. Let $\mat{e}_i \in \{0, 1\}^n$ be the vector with a unique nonzero entry $(\mat{e}_i)_i = 1$.

\begin{propn}\label{propn: basic}
	For a multilinear polynomial $f(Z) = f(Z_1, \ldots, Z_n)$, we have
	\[f(Z) - f(Z^{(i)}) ~=~ (Z_i - \resamp{Z_i})\cdot \grad_{\mat{e}_i}f(Z)\]
\end{propn}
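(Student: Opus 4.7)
The plan is to exploit the multilinearity of $f$ to cleanly isolate the dependence on the single variable $Z_i$. Since $f$ is multilinear, every monomial either contains $Z_i$ to the first power or does not involve $Z_i$ at all. This lets us write
\[
f(Z) ~=~ g(Z_{-i}) + Z_i \cdot h(Z_{-i}) \mcom
\]
where $Z_{-i}$ denotes the tuple of all variables other than $Z_i$, and $g, h$ are polynomials that do not depend on $Z_i$.

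Next I would identify $h(Z_{-i})$ with $\grad_{\mat{e}_i} f(Z)$. By the definition of the operator $\grad_{\mat{e}_i}$ given earlier in the paper, $\grad_{\mat{e}_i}$ extracts the coefficient of $Z_i$ in the standard-basis expansion, which is exactly $h(Z_{-i})$; in particular this expression is independent of $Z_i$, as noted in the Derivatives subsection.

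Finally, I would substitute the tuple $Z^{(i)}$, which agrees with $Z$ in every coordinate except the $i$-th (where $Z_i$ is replaced by the independent copy $\resamp{Z_i}$). Since $g$ and $h$ depend only on $Z_{-i}$, they take the same value on $Z$ and $Z^{(i)}$, and we get
\[
f(Z) - f(Z^{(i)}) ~=~ \bigl(g(Z_{-i}) + Z_i h(Z_{-i})\bigr) - \bigl(g(Z_{-i}) + \resamp{Z_i}\, h(Z_{-i})\bigr) ~=~ (Z_i - \resamp{Z_i})\cdot \grad_{\mat{e}_i} f(Z) \mcom
\]
which is the claimed identity. There is no real obstacle here; the only point to be a little careful about is confirming that the paper's $\grad_{\mat{e}_i}$ really does coincide with the coefficient-extraction operation (equivalently, the usual partial derivative in the multilinear case), which is immediate from its definition on the monomial basis.
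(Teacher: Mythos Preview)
Your proof is correct. The paper itself does not prove this proposition, stating it simply as a ``basic fact'' and moving directly to the proof of \cref{lem: main_rademacher}; your argument via the decomposition $f(Z) = g(Z_{-i}) + Z_i \cdot h(Z_{-i})$ is exactly the intended one-line justification.
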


\begin{proof}[Proof of \cref{lem: main_rademacher}]
	Consider the Hermitian dilation $\herm{\mF}_{a, b} = \mF_{a, b} \oplus \mF_{a, b}^T$. Define $\herm{\mX}_{a, b} = \herm{\mF}_{a, b} - \EE \herm{\mF}_{a, b} = \mX_{a, b} \oplus \mX_{a, b}^T$. By \cref{thm: main_efron_stein} applied to $\herm{\mX}_{a, b}$,
	$\Etr{\herm{\mX}_{a, b}^{2t}} \le (2(2t - 1))^t \Etr{\mV_{a, b}^t}$
	where $\mV_{a, b}$ is the variance proxy
	$\mV_{a, b} = \frac{1}{2} \displaystyle\sum_{i = 1}^n\EE[(\herm{\mX}_{a, b} - \herm{\mX}^{(i)}_{a, b})^2 | Z]$.
    By a simple computation,
	$\Etr{\herm{\mX}_{a, b}^{2t}} = \Etr{(\mX_{a, b}\mX_{a, b}^\T)^t} + \Etr{(\mX_{a, b}^\T\mX_{a, b})^t} = 2 \Esch{\mX_{a, b}}{2t}$, therefore
	\begin{align*}
		\mV_{a, b} %&= \frac{1}{2} \sum_{i = 1}^n \EE[(\herm{\mX}_{a, b} - \herm{\mX}_{a, b}^{(i)})^2|Z]\\
		&= \frac{1}{2}\sum_{i = 1}^n \EE\bigg[\begin{bmatrix}
			(\mX_{a, b} - \mX_{a, b}^{(i)})(\mX_{a, b} - \mX_{a, b}^{(i)})^\T & 0\\
			0 & (\mX_{a, b} - \mX_{a, b}^{(i)})^\T(\mX_{a, b} - \mX_{a, b}^{(i)})
		\end{bmatrix}|Z\bigg]\\
		&= \frac{1}{2} \begin{bmatrix}
			\sum_{i = 1}^n\EE[(\mF_{a, b} - \mF_{a, b}^{(i)})(\mF_{a, b} - \mF_{a, b}^{(i)})^\T|Z] & 0\\
			0 & \sum_{i = 1}^n\EE[(\mF_{a, b} - \mF_{a, b}^{(i)})^\T(\mF_{a, b} - \mF_{a, b}^{(i)})|Z]
		\end{bmatrix}
	\end{align*}

	We will use the following claim that we will prove later.
	\begin{claim}\label{claim: reduction}
		We have the following relations.
		\[\sum_{i = 1}^n\EE[(\mF_{a, b} - \mF_{a, b}^{(i)})(\mF_{a, b} - \mF_{a, b}^{(i)})^\T|Z] = 2(b + 1)\mF_{a, b + 1}\mF_{a, b + 1}^\T\]
		\[\sum_{i = 1}^n\EE[(\mF_{a, b} - \mF_{a, b}^{(i)})^\T(\mF_{a, b} - \mF_{a, b}^{(i)})|Z] = 2(a + 1)\mF_{a + 1, b}^\T\mF_{a + 1, b}\]
	\end{claim}
	This gives $\Etr{\mV_{a, b}^t} = (b + 1)^t\Esch{\mF_{a, b + 1}}{2t} + (a + 1)^t\Esch{\mF_{a + 1, b}}{2t}$. Therefore, we get
	\begin{align*}
		2 \Esch{\mX_{a, b}}{2t} &= \Etr{\herm{\mX}_{a, b}^{2t}}\\
		&\le (2(2t - 1))^t \Etr{\mV_{a, b}^t}\\
		&= (2(2t - 1))^t((b + 1)^t\Esch{\mF_{a, b + 1}}{2t} + (a + 1)^t\Esch{\mF_{a + 1, b}}{2t})\\
		&= (2(2t - 1))^t((b + 1)^t\Esch{\mX_{a, b + 1} + \EE\mF_{a, b + 1}}{2t} + (a + 1)^t\Esch{\mX_{a + 1, b} + \EE\mF_{a + 1, b}}{2t})\\
		&\le (16t)^t((b + 1)^t(\Esch{\mX_{a, b + 1}}{2t} + \sch{\EE\mF_{a, b + 1}}{2t}) + (a + 1)^t(\Esch{\mX_{a + 1, b}}{2t} + \sch{\EE\mF_{a + 1, b}}{2t})\\
		&\le (16t\dpoly)^t(\Esch{\mX_{a, b + 1}}{2t} + \sch{\EE\mF_{a, b + 1}}{2t} + \Esch{\mX_{a + 1, b}}{2t} + \sch{\EE\mF_{a + 1, b}}{2t})
	\end{align*}
\end{proof}

It remains to prove the claim.
\begin{proof}[Proof of~\cref{claim: reduction}]
	We will prove the first equality. The second one is analogous.
	For $I \in \cI, J \in \cJ, \al, \beta \in \{0, 1\}^n$, we have
	\[(\mF_{a, b} - \mF^{(i)}_{a, b})[(I, \al), (J, \beta)] = \begin{dcases}
		\grad_{\al + \beta} (f_{I, J}(Z) - f_{I, J}(Z^{(i)})) & \text{ if $|\al|_0 = a, |\beta|_0 = b, \al\cdot \beta = 0$}\\
		0 & \text{o.w.}
	\end{dcases}\]
	By \cref{propn: basic}, the first expression simplifies to $(Z_i - \resamp{Z_i})\grad_{\mat{e}_i}\grad_{\al + \beta} f_{I, J}(Z)$. Define the matrix $\mF_{a, b, i}$ to be the matrix with the same set of rows and columns as $\mF_{a, b}$ and whose only nonzero entries are given by
	\[\mF_{a, b, i}[(I, \al), (J, \beta + \mat{e}_i)] = \grad_{\mat{e}_i}\grad_{\al + \beta} f_{I, J}(Z) \text{ if $|\al|_0 = a, |\beta|_0 = b, \beta \cdot \mat{e}_i = 0, \al\cdot (\beta + \mat{e}_i) = 0$}\]

	Then, it's easy to see that $\sum_{i = 1}^n \mF_{a, b, i}\mF_{a, b, i}^\T = (b + 1)\mF_{a, b + 1}\mF_{a, b + 1}^T$ and $(\mF_{a, b} - \mF_{a, b}^{(i)})(\mF_{a, b} - \mF_{a, b}^{(i)})^\T = (Z - \resamp{Z_i})^2\mF_{a, b, i}\mF_{a, b, i}^\T$. The latter equality implies
	\[\EE[(\mF_{a, b} - \mF_{a, b}^{(i)})(\mF_{a, b} - \mF_{a, b}^{(i)})^\T|Z] = \EE[(Z_i - \resamp{Z_i})^2\mF_{a, b, i}\mF_{a, b, i}^\T|Z] = 2\mF_{a, b, i}\mF_{a, b, i}^\T\]

	Therefore,
	\[\sum_{i = 1}^n\EE[(\mF_{a, b} - \mF_{a, b}^{(i)})(\mF_{a, b} - \mF_{a, b}^{(i)})^\T|Z] = 2\sum_{i = 1}^n\mF_{a, b, i}\mF_{a, b, i}^\T = 2(b + 1)\mF_{a, b + 1}\mF_{a, b + 1}^\T\]
\end{proof}

%!TEX root=main-efron-stein.tex

%%% Local Variables:
%%% mode: latex
%%% TeX-master: "main-efron-stein"
%%% End:

\section{Applications}\label{sec:rademacher-applications}

To illustrate our framework, we apply it to obtain concentration bounds for nonlinear random matrices that have been considered in the literature before. The first application is a simple tensor network that arose in the analysis of spectral algorithms for a variant of principal components analysis (PCA) \cite{hopkins2015tensor, hopkins2018statistical}.
The second application is to obtain norm bounds on dense graph matrices \cite{medarametla2016bounds, ahn2016graph}. In the second application, the norm bounds are governed by a combinatorial structure called \textit{the minimum vertex separator of a shape}. We will show how this notion arises naturally under our framework, whereas prior works that derived such bounds used the trace power method and required nontrivial combinatorial insights.

\subsection{A simple tensor network}

%!TEX root=main-efron-stein.tex

We consider the following result from \cite{hopkins2015tensor, hopkins2018statistical}. We remark that this result could also be obtained via other standard techniques, but we showcase it as it serves as a simple warm-up to familiarize the reader with our method.

\begin{lemma}[\cite{hopkins2018statistical}, Theorem 6.7.1]
	Let $c \in \{1, 2\}$ and let $d \ge 1$ be an integer. Let $\mA_1, \ldots, \mA_{n^c}$ be i.i.d. random matrices uniformly sampled from $\pmone^{n^d \times n^d}$. Then, with probability $1 - O(n^{-100})$,
	\[\norm{\sum_{k \le n^c} \mA_k \otimes \mA_k- \EE \sum_{k \le n^c} \mA_k \otimes \mA_k} \le C\sqrt{d}n^{(2d + c) / 2} (\log n)^{1/2}\]
	for an absolute constant $C > 0$.
\end{lemma}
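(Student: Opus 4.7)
The plan is to apply the Rademacher recursion \cref{thm: main_rademacher} to $\mF(Z) ~=~ \sum_{k \le n^c} \mA_k \otimes \mA_k$, viewed as a matrix-valued polynomial in the independent Rademacher variables $Z = (Z_{k, i, j})_{k \in [n^c], i, j \in [n^d]}$ with $Z_{k, i, j} := \mA_k[i, j]$. Each entry
\[
\mF[(i_1, i_2), (j_1, j_2)] ~=~ \sum_{k \le n^c} Z_{k, i_1, j_1} \cdot Z_{k, i_2, j_2}
\]
is multilinear of degree $\dpoly = 2$, so the theorem reduces the task to controlling $\sch{\EE \mF_{a, b}}{2t}$ for $1 \le a + b \le 2$.

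The first observation is that the $a + b = 1$ terms drop out: each $\grad_{\mat{e}_v} \mF[I, J]$ is a degree-$1$ polynomial in $Z$ with zero mean, so $\EE \mF_{1, 0} = \EE \mF_{0, 1} = 0$. This leaves only $\EE \mF_{2, 0}$, $\EE \mF_{1, 1}$, $\EE \mF_{0, 2}$, and because the second partial of a multilinear degree-$2$ polynomial is a constant, these matrices are \emph{deterministic}. Direct inspection shows that $\mF_{2, 0}[((i_1, i_2), \alpha), (j_1, j_2)]$ equals $1$ iff $\alpha = \{Z_{k, i_1, j_1}, Z_{k, i_2, j_2}\}$ for some $k$ with $(i_1, j_1) \neq (i_2, j_2)$, and $0$ otherwise; analogous descriptions hold for $\mF_{1, 1}$ and $\mF_{0, 2}$.

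The core technical step is to bound $\norm{\mF_{2, 0}}$ (the bounds for $\mF_{0, 2}$ and $\mF_{1, 1}$ follow by symmetric arguments). Since $\alpha$ determines the index $k$, the rows of $\mF_{2, 0}$ partition into $n^c$ disjoint blocks indexed by $k$, giving
\[
\mF_{2, 0}^{\T} \mF_{2, 0} ~=~ \sum_{k \le n^c} \inparen{\mF_{2, 0}^{(k)}}^{\T} \mF_{2, 0}^{(k)} \mper
\]
A short count shows each summand has diagonal entries of order $n^{2d}$ and off-diagonal entries of order $n^d$, supported only on transposition-type pairs $((j_1, j_2), (j_2, j_1))$ arising from $i_1 = i_2$; so each summand has operator norm $O(n^{2d})$. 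Summing the $n^c$ disjoint blocks yields $\norm{\mF_{2, 0}} = O(n^{(2d + c)/2})$.

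Finally, each of the three derivative matrices has rank at most $n^{2d}$, so $\sch{\EE \mF_{a, b}}{2t} \le n^{2d} \cdot O(n^{(2d + c) t})$, and plugging into \cref{thm: main_rademacher} with $\dpoly = 2$ gives $\EE \norm{\mF - \EE \mF}^{2t} \lesssim (C t)^{2t} \cdot n^{2d} \cdot n^{(2d + c) t}$ for an absolute constant $C$. Choosing $t = \Theta(\log n)$ and applying Markov's inequality produces the claimed high-probability bound, with the $\sqrt{d \log n}$ factor arising from the standard trade-off between the moment order $t$ and the desired polynomial decay rate. The main obstacle is the combinatorial decomposition of $(\mF_{2, 0}^{(k)})^{\T} \mF_{2, 0}^{(k)}$; its near-diagonal structure and the row-disjointness across $k$ reduce the problem to a single-tensor calculation.
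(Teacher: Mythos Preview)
Your proposal follows essentially the same route as the paper: apply the Rademacher recursion, observe that the $a+b=1$ terms vanish because their entries are mean-zero linear forms, and then bound the Schatten norms of the three deterministic matrices $\EE\mF_{2,0}$, $\EE\mF_{1,1}$, $\EE\mF_{0,2}$ before finishing with Markov's inequality and $t = \Theta(\log n)$. The paper computes those Schatten norms by permuting rows and columns until the matrix is recognized as $n^{2d+c}$ stacked copies of an identity block; your operator-norm-times-rank argument via the block decomposition over $k$ is a legitimate alternative and lands on the same leading term $n^{(2d+c)t}$.

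There is, however, a genuine slip at the very end. You assert that the trade-off produces the factor $\sqrt{d\log n}$ from the statement, but it does not. The moment bound you derive is $\Esch{\mF - \EE\mF}{2t} \le (Ct)^{2t} \cdot n^{O(d)} \cdot n^{(2d+c)t}$, where the exponent of $t$ is $2t$ because $\dpoly = 2$; there is no dependence on the problem parameter $d$ inside the $(Ct)^{2t}$ prefactor. Running Markov with $t = \Theta(\log n)$ therefore yields a multiplicative $\log n$, not $\sqrt{d\log n}$. The paper is explicit about this: it only proves the relaxed version with $\sqrt{d}(\log n)^{1/2}$ replaced by $\log n$, and your argument gives exactly the same relaxation. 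If you want the sharper $\sqrt{d\log n}$ factor, additional ideas beyond \cref{thm: main_rademacher} are needed.
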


Using our framework, we will prove a slightly relaxed version of the inequality where $\sqrt{d} (\log n)^{1/2}$ is replaced by $\log n$, while not losing on the dominating term $n^{(2d + c)/2}$.
%This application illustrates how our framework correctly recovers the dominating term $n^{(2d + c)/2}$.
We remark that we have not attempted to optimize these extra factors in front of the dominating term $n^{(2d + c)/2}$, so it's plausible that a more careful analysis can obtain a slightly better bound.

\begin{proof}[Proof of the relaxed bound]
	Let the $i, j$-th entry of $\mA_k$ be $a_{k, i, j}$.
	Let $\mF = \sum_{i \le n^c} \mA_k \otimes \mA_k - \EE \sum_{i \le n^c} \mA_k \otimes \mA_k$ be a random matrix on the variables $a_{k, i, j}$ for $k \le n^c, i, j \le n^d$. So $\EE \mF = 0$ and we are looking for bounds on $\norm{\mF}$. The entries are given by
	\[\mF[(i_1, i_2), (j_1, j_2)] = \begin{dcases}
		\sum_{k \le n^c} a_{k, i_1, j_1}a_{k, i_2, j_2} & \text{ if $(i_1, j_1) \neq (i_2, j_2)$}\\
		0 & \text{ if $(i_1, j_1) = (i_2, j_2)$}
	\end{dcases}\]
	The nonzero entries are homogeneous polynomials of degree $2$. Using \cref{thm: main_rademacher},
	\[\Esch{\mF}{2t} \le (32t)^{2t}(\sch{\EE\mF_{2, 0}}{2t} + \sch{\EE\mF_{1, 1}}{2t} + \sch{\EE\mF_{0, 2}}{2t})\]

	%	\[\Esch{\mF}{2t} \le \sum_{a + b = 2, a, b \ge 0} (32t)^{2t}\sch{\EE\mF_{a, b}}{2t}\]
%	We will bound $\sch{\EE\mF_{a, b}}{2t}$ in these cases.

%	Suppose $a + b = 1$. Due to symmetry, we can consider the case $a = 0, b = 1$. It's easy to see that the only nonzero entries of $\EE\mF_{1, 0}$ will be the entries in row $((i, i), \{(k, i, j)\})$ and column $(j, j)$ for all $i, j\le n^d, k \le n^c$ and all these entries are $1$. The number of nonzero entries is at most $n^{2d + c}$ and so, the matrix has Frobenius norm at most $n^{2d + c}$. This implies $\sch{\EE\mF_{1, 0}}{2t} \le (\sch{\EE\mF_{1, 2}}{2})^t \le n^{t(2d + c)}$.

%	Now, suppose $a + b = 2$. We will consider the cases $a = 2, b = 0$ and $a = 1, b = 1$. The case $a = 0, b = 2$ is symmetric.

	We will consider each of these terms.
	In the following arguments, we restrict attention to indices $i_1, i_2, j_1, j_2$ such that $(i_1, j_1) \neq (i_2, j_2)$.

	\begin{enumerate}
		\item $\EE\mF_{2, 0}$ has nonzero entries in row $((i_1, i_2), \{(k, i_1, j_1), (k, i_2, j_2)\})$ and column $(j_1, j_2)$ and all these entries are $1$.  The Schatten norm does not change when we permute the rows and columns. So, we can group the rows on $k, i_1, i_2$ and within each group, we can sort $j_1, j_2$ in both rows and columns. We get a matrix having $n^{2d + c}$ identity matrices, each of dimensions $n^{2d} \times n^{2d}$, stacked on top of each other. Using the definition, the Schatten-$2t$ norm of this matrix is easily computed to be $\sch{\EE\mF_{2, 0}}{2t}  = n^{c + 4d} n^{t(2d + c)}$.

	\item $\EE\mF_{1, 1}$ has nonzero entries in either row $((i_1, i_2), \{(k, i_1, j_1)\})$ and column $((j_1, j_2), \{(k, i_2, j_2)\})$; or row $((i_1, i_2), \{(k, i_2, j_2)\})$ and column $((j_1, j_2), \{(k, i_1, j_1)\})$ and all these entries are $1$. So we can write $\EE\mF_{1, 1} = \mA + \mB$ corresponding to the 2 sets of entries. Arguing just as in the previous case,  we can obtain $\sch{\mA}{2t} = n^{c + 4d} n^{t(2d + c)}$ where we group the rows on $k, i_2, j_1$ and $\sch{\mB}{2t} = n^{c + 4d} n^{t(2d + c)}$ where we group the rows on $k, i_1, j_2$.
    Therefore,	$\sch{\EE\mF_{1, 1}}{2t} \le 2^{2t} (\sch{\mA}{2t} + \sch{\mB}{2t}) = 2^{2t + 1} n^{c + 4d} n^{t(2d + c)}$.

%    Let $\mF_{1, 1}^{(0)}$ (resp. $\mF_{1, 1}^{(1)}$) be the matrix $\EE\mF_{1, 1}$ with all entries zeroed out except the ones of the first form (resp. of the second form). We again exploit the diagonal block structure. Let $\mF^{(0)}_{1, 1, u, v}$ (resp. $\mF^{(1)}_{1, 1, u, v}$) be the matrix $\mF_{1, 1}^{(0)}$ (resp. $\mF_{1, 1}^{(1)}$) with all entries zeroed out except the ones with $i_2 = u, j_1 = v$ (resp. $i_1 = u, j_2 = v$). Then, $(\mF_{1, 1, u, v}^{(i)})(\mF_{1, 1, u', v'}^{(i)})^\T = (\mF_{1, 1, u, v}^{(i)})^\T(\mF_{1, 1, u', v'}^{(i)}) = 0$ for all $i = 0, 1$, $(u, v) \neq (u', v')$. By the same arguments, each of these matrices have squared Frobenius norm at most $n^{2d + c}$ and so,
%	\begin{align*}
%		\sch{\EE\mF_{1, 1}}{2t} &\le 2^{2t} (\sch{\mF_{1, 1}^{(0)}}{2t} + \sch{\mF_{1, 1}^{(1)}}{2t})\\
%		&= 2^{2t}(\sum_{u, v \le n^c} (\sch{\mF_{1, 1, u, v}^{(0)}}{2t} + \sch{\mF_{1, 1, u, v}^{(1)}}{2t})\\
%		&\le 2^{2t}\sum_{u, v \le n^c} ((\sch{\mF_{1, 1, u, v}^{(0)}}{2})^t + (\sch{\mF_{1, 1, u, v}^{(1)}}{2})^t)\\
%		&\le 2^{2t + 1} n^{2c} n^{t(2d + c)}
%	\end{align*}

	\item The case $\EE\mF_{0, 2}$ is identical to $\EE\mF_{2, 0}$.
	\end{enumerate}

	Putting them together, $\Esch{\mF}{2t} \le (C't)^{2t} n^{c + 4d}n^{t(2d + c)}$
%	\begin{align*}
%		\Esch{\mF - \EE \mF}{2t} &\le \sum_{1 \le a + b \le 2} (32t)^{(a + b)t}\sch{\EE\mF_{a, b}}{2t}\\
%		&\le (Ct)^{2t} n^{2c}n^{t(2d + c)}
%	\end{align*}
	for an absolute constant $C' > 0$. Now, we apply Markov's inequality to get
	\begin{align*}
		Pr[\norm{\mF - \EE \mF} \ge \theta] ~\le~ Pr[\sch{\mF - \EE \mF}{2t} \ge \theta^{2t}] ~\le~ \theta^{-2t} \EE\sch{\mF - \EE \mF}{2t} ~\le~ \theta^{-2t}(C't)^{2t} n^{c + 4d}n^{t(2d + c)}
	\end{align*}
	We now set $\theta = \eps^{-1/(2t)} (C't)n^{(c+4d)/t} n^{(2d + c)/2}$ to make this expression at most $\eps$. Plug in $\eps = n^{-100}$ and set $t = \log n$ to obtain that $\norm{\mF - \EE \mF} \le Cn^{(2d + c) / 2} \log n$ holds with probability $1 - n^{-100}$, where $C > 0$ is an absolute constant.
\end{proof}

%!TEX root=main-efron-stein.tex

%%% Local Variables:
%%% mode: latex
%%% TeX-master: "main-efron-stein"
%%% End:

\subsection{Graph matrices}\label{sec: dense_graph_matrices}

In this section, we first define graph matrices and then show how to obtain norm bounds for \textit{dense graph matrices}, i.e. the case when $G \sim \cG_{n, 1/2}$, using our framework. Handling \textit{sparse graph matrices}, i.e. the case when $G \sim \cG_{n, p}$ for $p = o(1)$, may not work well with our basic framework as we will explain in \cref{sec: failure_of_basic}. Instead, our general framework in \cref{sec: general_recursion} will handle this case well and we obtain sparse graph matrix norm bounds in \cref{sec: sparse_graph_matrices}.

%Given an input graph $G$, graph matrices are large matrices whose entries are polynomials of the input. For the sake of intuition, graph matrices can be thought of the matrix analogue of the Fourier basis for polynomials. Understanding concentration of such matrices when $G$ is sampled from $\cG_{n, p}$ when the graph is sampled from a random distribution, is an important task with many applications. Their norm bounds have been studied in \cite{medarametla2016bounds, ahn2016graph}. We will first give the definitions and then show how to obtain norm bounds for dense graph matrices using our framework.

\subsubsection{Definitions}

Define by $\cG_{n, p}$ the \Erdos-\Renyi random graph on the vertex set $[n]$ with $n$ vertices, where each edge is present independently with probability $p$. Let the graph be encoded by variables $G_{i, j} \in \Omega = \{-\sqrt{\frac{1 - p}{p}}, \sqrt{\frac{p}{1 - p}}\}$ where $-\sqrt{\frac{1 - p}{p}}$ indicates the presence of the edge $\{i, j\}$ and $\sqrt{\frac{p}{1 - p}}$ indicates absence, for all $1 \le i, j \le n$.

So, each $G_{i, j}$ for $i < j$ is sampled from $\Omega$ where $G_{i, j}$ takes the value $-\sqrt{\frac{1 - p}{p}}$ with probability $p$ and takes the value $\sqrt{\frac{p}{1 - p}}$ otherwise. Here, $\Omega$ has been normalized so that $\EE_{x \sim \Omega}[x] = 0, \EE_{x \sim \Omega}[x^2] = 1$. as is standard in $p$-biased Fourier analysis.

When $p = \nicefrac{1}{2}$, we are in the setting of \textit{dense graph matrices}. Then, $\cG_{n, 1/2}$ can be thought of as a sampling of the $G_{i, j}, i < j$ independently and uniformly from $\Omega = \{-1, 1\}$.
For a set of edges $E \subseteq \binom{[n]}{2}$, define $G_E := \prod_{e \in E} G_e$. When $p = \nicefrac{1}{2}$, the $G_E$ correspond to the Fourier basis for functions of the graph.

Define $\cI$ to be the set of sub-tuples of $[n]$, including the empty tuple. Graph matrices will have rows and columns indexed by $\cI$. Each graph matrix has a succinct representation as a graph with some extra information, that is called a \textit{shape}.

\begin{definition}[Shape]
	A shape is a tuple $\tau = (V(\tau), E(\tau), U_{\tau}, V_{\tau})$ where $(V(\tau), E(\tau))$ is a graph and $U_{\tau}, V_{\tau}$ are ordered subsets of the vertices.
\end{definition}

\begin{definition}[Realization]
	Given a shape $\tau$, a realization of $\tau$ is an injective map $\varphi: V(\tau) \to [n].$
\end{definition}

\begin{definition}[Graph matrices]
	Let $\tau$ be a shape.
	The graph matrix $\graphmat{\tau} \, : \, \{ \pm 1\}^{n \choose 2} \rightarrow \R^{\cI\times \cI}$ is defined to be the matrix-valued function with $I, J$-th entry defined as follows.
	\[
	\mM_{\tau}[I, J] := \sum_{\substack{\text{Realization }\phi\\ \phi(U_{\tau}) = I, \phi(V_{\tau}) = J}}{G_{\phi(E(\tau))}} = \sum_{\substack{\text{Realization }\phi\\ \phi(U_{\tau}) = I, \phi(V_{\tau}) = J}}\prod_{(u, v) \in E(\tau)} G_{\phi(u), \phi(v)}
	\]
	In other words, we sum over all realizations of $\tau$ that map $U_{\tau}, V_{\tau}$ to $I, J$ respectively and for each such realization, we have a term corresponding to the Fourier character that the realization gives.
	%Up to scaling, $M_\tau$ is the result of averaging each $M_R$ over the orbit of $R$.
\end{definition}

\begin{figure}[!h]
	\centering
	\includegraphics[trim={2cm 21cm 0 1cm}, clip, scale=1]{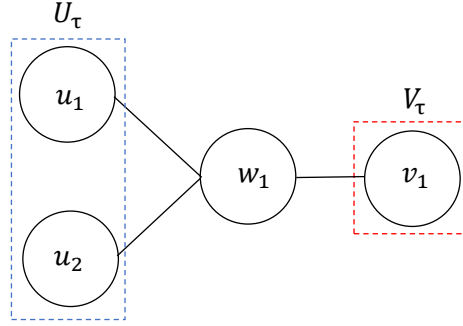}
	\caption{Left: Shape corresponding to adjacency matrix, Right: Example of a more complicated shape}
	\label{fig: shape}
\end{figure}

The following examples illustrate some simple graph matrices.

\begin{example}[Adjacency matrix]
	Let $\tau$ be the shape on the left in \cref{fig: shape}, with two vertices $V(\tau) = \{u,v\}$ and a single edge $E(\tau) = \{\{ u,v\}\}$. $U_\tau, V_\tau$ are $(u), (v)$ respectively where we use tuples to indicate ordering.
	Then $\mM_\tau$ has nonzero entries $\mM_\tau[(i), (j)](G) = G_{i, j}$ for all $i \neq j$.
	If $G \in \{ \pm 1\}^{n \choose 2}$ is thought of as a graph, then $\mM_\tau$ has as principal submatrix the $\pm 1$ adjacency matrix of $G$ with zeros on the diagonal, and the other entries are $0$.
\end{example}

\begin{example}
	In \cref{fig: shape}, consider the shape $\tau$ on the right. We have $U_{\tau} = (u_1, u_2), V_{\tau} = (v_1), V(\tau) = \{u_1, u_2, v_1, w_1\}$ and $E(\tau) = \{\{u_1, w_1\}, \{u_2, w_1\}, \{w_1, v_1\}\}$. $\mM_{\tau}$ is a matrix with rows and columns indexed by sub-tuples of $[n]$. Its nonzero entries are in rows $I$ and columns $J$ with $|I| = |U_{\tau}| = 2$ and $|J| = |V_{\tau}| = 1$ respectively. Specifically, for all distinct $a_1, a_2, b_1$, the entry corresponding to row $(a_1, a_2)$ and column $(b_1)$ is $\sum_{c_1 \in [n] \setminus \{a_1, a_2, b_1\}} G_{a_1, c_1}G_{a_2, c_1}G_{c_1, b_1}$.
	Here, each term is obtained via the realization $\phi$ that maps $u_1, u_2, w_1, v_1$ to $a_1, a_2, c_1, b_1$ respectively. Succinctly, \[\mM_{\tau} =
	\begin{blockarray}{rl@{}c@{}r}
		& & \makebox[0pt]{column $(b_1)$} \\[-0.5ex]
		& & \,\downarrow \\[-0.5ex]
		\begin{block}{r(l@{}c@{}r)}
			&  & \vdots & \\[-0.2ex]
			\text{row }(a_1, a_2) \to \mkern-9mu & \raisebox{0.5ex}{\makebox[3.2em][l]{\dotfill}} & \sum_{c_1 \in [n] \setminus \{a_1, a_2, b_1\}} G_{a_1, c_1}G_{a_2, c_1}G_{c_1, b_1} & \raisebox{0.5ex}{\makebox[4.2em][r]{\dotfill}} \\[+.5ex]
			&  & \vdots & \\
		\end{block}
	\end{blockarray}\]
\end{example}

%\begin{example}[Identity matrix]
%	If $V(\tau) = \{u\}$ is a singleton, $E(\tau) = \emptyset$, and $U_{\tau} = V_{\tau} = (u)$, then $M_\tau(G)$ is identically equal to the $n \times n$ identity matrix, independent of $G$.
%\end{example}

Intuitively, graph matrices are symmetrizations of the Fourier basis, where the symmetry is incorporated by summing over all realizations of ``free'' vertices $V(\tau) \setminus U_{\tau} \setminus V_{\tau}$ of the shape $\tau$.
For more examples of graph matrices and why they can be a useful tool to work with, see \cite{ahn2016graph}.

\subsubsection{Norm bounds for dense graph matrices}\label{sec: norm_bounds_for_dense_graph_matrices}

In this section, we study the concentration of the so-called ``dense graph matrices'' which is a term that refers to graph matrices $M_{\tau}$ in the setting $p = \nicefrac{1}{2}$.
Since the edges of a random graph sampled from $\cG_{n,1/2}$ can be viewed as independent Rademacher random variables, we can apply our framework in this setting.
%
%This also works well the way we set the framework in terms of Rademacher random variables.

%By modifying the framework slightly to handle $p$-biased variables while essentially applying the same techniques, we can get norm bounds for all $p$. For constant $p$, similar results hold, with some minor differences in the factors. The setting when $p = o(1)$ is more subtle. While the modified framework will indeed yield some bounds, they will not be tight. We discuss this in \cref{sec: failure_of_basic_for_sparse_graph_matrices}.

%In this section, we will obtain high probabilistic norm bounds on the $\mM_{\tau}$ by applying our recursion framework.
In particular, we will obtain bounds on $\Esch{\mM_{\tau} - \EE\mM_{\tau}}{2t}$.
%Note that if $|E(\tau)| = 0$, this term is identically $0$, so we assume $E(\tau) \neq \emptyset$, that is, the shape has at least one edge. In this case, $\EE\mM_{\tau} = 0$.
The $G_{i, j} \in \{-1, 1\}$ correspond to the $Z_i$s in \cref{sec: basic_recursion} and for a fixed shape $\tau$, $\mM_{\tau}$ will be the matrix $\mF$ we are interested in analyzing. For $I, J \in \cI$, $\mM_{\tau}[I, J]$ is a nonzero polynomial only when there exists at least one realization of $\tau$ that maps $U_{\tau}, V_{\tau}$ to $I, J$ respectively. In particular, we must have $|I| = |U_{\tau}|$ and $|J| = |V_{\tau}|$. In this case, $\mM_{\tau}[I, J]$ is a homogenous polynomial of degree $|E(\tau)|$.
By \cref{thm: main_rademacher}, we have
\[\Esch{\mM_{\tau} - \EE\mM_{\tau}}{2t} ~\le~ \sum_{a + b \ge 1\atop a, b \ge 0}(16t|E(\tau)|)^{(a + b)t}\sch{\EE\mM_{\tau, a, b}}{2t}\]
where for integers $a, b \ge 0$, $\mM_{\tau, a, b}$ is defined to be the matrix with rows and columns each indexed by $\cI \times \{0, 1\}^{\binom{n}{2}}$ such that for all $I, J \in \cI$, we have
\[\mM_{\tau, a, b}[(I, \al), (J, \beta)] ~=~ \begin{dcases}
	\grad_{\al + \beta} \mM_{\tau}[I, J] & \text{ if $|\al|_0 = a, |\beta|_0 = b, \al \cdot \beta = 0$}\\
	0 & \text{o.w.}
\end{dcases}
\]

For any multilinear homogenous polynomial $f$ of degree $d$, since $\EE[G_{i, j}] = 0$ for all $i, j$, we have $\grad_{\al}f = 0$ whenever $|\al|_0 < d$. Therefore, $\EE\mM_{\tau, a, b} = 0$ for all $a + b < |E(\tau)|$. Moreover, $\EE\mM_{\tau, a, b} = 0$ whenever $a + b \neq |E(G)|$ otherwise $\EE\mM_{\tau, a, b} = \mM_{\tau, a, b}$. So, we can further simplify the above expression to
\[\Esch{\mM_{\tau} - \EE\mM_{\tau}}{2t} ~\le~ \sum_{a + b = |E(\tau)|\atop a, b \ge 0}(16t|E(\tau)|)^{|E(\tau)|t}\sch{\mM_{\tau, a, b}}{2t}\]

It remains to analyze $\sch{\mM_{\tau, a, b}}{2t}$ for $a + b = |E(\tau)|$. We will see that analyzing these matrices is much simpler since they are deterministic matrices and simple computations using the Frobenius norm bound will work well. To state our final bounds, we need to define the notion of vertex separators of shapes.

\begin{remark}
	As we will see, when analyzing the Frobenius norms for these deterministic matrices, the notion of the minimum vertex separator arises naturally. In prior trace method calculations (e.g. \cite{medarametla2016bounds}, \cite{ahn2016graph}), this required ingenious combinatorial observations.
\end{remark}

\begin{restatable}[Vertex separator]{definition}{vertexseparator}
	For a shape $\tau$, define a vertex separator to be a subset of vertices $S \subseteq V(\tau)$ such that there is no path from $U_{\tau}$ to $V_{\tau}$ in $\tau \setminus S$, which is the shape obtained by deleting all the vertices of $S$ (including all edges they're incident on).
\end{restatable}

For a shape $\tau$, denote by $S_{\tau}$ a vertex separator of the smallest size. Also, let $I_{\tau}$ be the set of isolated vertices (vertices with degree $0$) in $V(\tau) \setminus U_{\tau} \setminus V_{\tau}$, so the presence of these vertices essentially scale the matrix by a scalar factor.

\begin{theorem}\label{thm: dense_graph_matrix_norm_bounds}
	For a shape $\tau$ and any integer $t \ge 1$,
	\[\EE\sch{\mM_{\tau} - \EE \mM_{\tau}}{2t} \le \bigg(C^{t|E(\tau)|}n^{|V(\tau)|} t^{t|E(\tau)|}|E(\tau)|^{2t|E(\tau)|}\bigg)n^{t(|V(\tau)| - |S_{\tau}| + |I_{\tau}|)}\]
	for an absolute constant $C > 0$.
\end{theorem}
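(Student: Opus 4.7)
The plan is to instantiate the Rademacher recursion \cref{thm: main_rademacher} with $\mF = \mM_\tau$, and then bound the Schatten norms of the resulting partial-derivative matrices $\mM_{\tau,a,b}$ by exploiting the fact that they become deterministic whenever $a+b = |E(\tau)|$. Since $\mM_\tau[I,J]$ is a homogeneous multilinear polynomial of degree $|E(\tau)|$ in the mean-zero variables $G_{i,j}$, every partial derivative of strictly smaller order has zero expectation, so $\EE \mM_{\tau,a,b} = 0$ whenever $a+b < |E(\tau)|$. When $a+b = |E(\tau)|$ all edge variables have been differentiated out and $\mM_{\tau,a,b}$ is a deterministic nonnegative-integer matrix equal to its own expectation; the recursion therefore reduces the task to bounding $\sch{\mM_{\tau,a,b}}{2t}$ for $O(|E(\tau)|)$ deterministic matrices.

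For each such $\mM_{\tau,a,b}$ I would apply the elementary inequality $\sch{\mA}{2t} = \sum_i \sigma_i(\mA)^{2t} \le \norm{\mA}_F^{2}\, \norm{\mA}^{2t-2}$ (where $\norm{\mA}$ denotes operator norm), so it suffices to bound the Frobenius and operator norms of $\mM_{\tau,a,b}$ separately. The Frobenius norm admits a direct combinatorial expansion: the entry $\mM_{\tau,a,b}[(I,\alpha),(J,\beta)]$ counts injective realizations $\phi$ of $\tau$ with $\phi(U_\tau) = I$, $\phi(V_\tau) = J$, and $\phi(E(\tau)) = \mathrm{supp}(\alpha+\beta)$. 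Writing $\norm{\mM_{\tau,a,b}}_F^2$ as a sum over ordered pairs of realizations that agree on $U_\tau$, $V_\tau$, and on the image of $E(\tau)$, and noting that the free summation over $(\alpha,\beta)$ contributes a factor $\binom{|E(\tau)|}{a}$, yields a bound of roughly $(|E(\tau)|!)^2 \binom{|E(\tau)|}{a}\, n^{|V(\tau)|}$ after a crude count of automorphisms of $\tau$ fixing $U_\tau \cup V_\tau$.

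The operator norm of $\mM_{\tau,a,b}$ is where the minimum vertex separator $S_\tau$ enters. I would fix a minimum separator $S = S_\tau$ and write $V(\tau) \setminus I_\tau = A \cup S \cup B$, so that every edge of $\tau$ lies inside $A\cup S$ or $S\cup B$; each isolated vertex in $I_\tau$ contributes a trivial multiplicative factor of at most $n$ to every entry (since it can be placed on any unused element of $[n]$). The remaining matrix factors, after grouping rows by the image of $S$ and summing over how the $a+b$ differentiated edges distribute across the two sides, as a product $\mL\mR$; bounding each factor by its Frobenius norm then gives $\norm{\mM_{\tau,a,b}} \le C^{|E(\tau)|}\, n^{(|V(\tau)| - |S_\tau| + |I_\tau|)/2}$. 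Substituting this and the Frobenius bound into the elementary inequality, multiplying by the $(16t|E(\tau)|)^{|E(\tau)|t}$ prefactor from the recursion, and summing over the valid pairs $(a,b)$ delivers the claimed bound after collecting constants. The main obstacle is precisely this operator-norm step: unlike the standard graph matrix $\mM_\tau$, whose Fourier-character entries factorize across $S_\tau$ essentially tautologically, the entries of $\mM_{\tau,a,b}$ are integer counts whose auxiliary edge labels $\alpha,\beta$ may distribute the $|E(\tau)|$ differentiated edges arbitrarily across the two sides of $S_\tau$. Handling this partition together with the injectivity coupling across $S_\tau$ is the one bookkeeping-heavy portion of the argument, but is far more mechanical than the direct trace-method calculations of \cite{medarametla2016bounds, ahn2016graph}, where $S_\tau$ arose from intricate combinatorial case analysis rather than as the natural parameter controlling the norm of a concrete deterministic matrix.
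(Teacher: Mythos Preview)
Your reduction is the same as the paper's: instantiate \cref{thm: main_rademacher}, observe that homogeneity and mean-zero variables kill $\EE\mM_{\tau,a,b}$ unless $a+b=|E(\tau)|$, and then bound the Schatten norms of finitely many deterministic matrices. The divergence is in how you control $\sch{\mM_{\tau,a,b}}{2t}$, and this is where your proposal has a genuine gap.

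You propose to fix the minimum separator $S_\tau$, split $V(\tau)=A\cup S_\tau\cup B$, and factor $\mM_{\tau,a,b}$ as $\mL\mR$ through the images of $S_\tau$. This does not work: the row index $(I,\alpha)$ carries, via $\alpha$, the images of $a$ edges of $\tau$, and nothing forces those edges to lie in $A\cup S_\tau$. If some edge in $\alpha$ lies in $S_\tau\cup B$, the row index already pins down vertices on the $B$-side, so the matrix has no block structure indexed by images of $S_\tau$ and no factorization $\mL\mR$ with inner dimension $n^{|S_\tau|}$. You flag exactly this as ``the main obstacle,'' but the resolution you sketch (grouping rows by the image of $S$ and summing over edge distributions) does not remove it: even after fixing which edges go to $\alpha$ versus $\beta$, a row may still determine $B$-vertices.

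The paper's fix is to let the separator be dictated by the edge partition rather than fixed in advance. It writes $\mM_{\tau,a,b}=\sum_{P}\mM_{\tau,a,b,P}$ where $P=(E_1,E_2)$ records which edges land in $\alpha$ and which in $\beta$. For each $P$, the row index determines $\phi$ on $U_{\tau_P}:=U_\tau\cup V(E_1)$ and the column on $V_{\tau_P}:=V_\tau\cup V(E_2)$, so $\mM_{\tau,a,b,P}$ is block-diagonal over images of $S_P:=U_{\tau_P}\cap V_{\tau_P}$. The key combinatorial observation is then that $S_P$ is automatically a vertex separator of $\tau$ (any $U_\tau$--$V_\tau$ path must use an edge with one endpoint in $U_{\tau_P}$ and one in $V_{\tau_P}$, and that edge, being in $E_1$ or $E_2$, forces one endpoint into $S_P$), so $|S_P|\ge |S_\tau|$. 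A Frobenius bound on each block then gives $\sch{\mM_{\tau,a,b,P}}{2t}\le n^{|S_P|}\cdot n^{t(|V(\tau)|-|S_P|)}\le n^{|V(\tau)|}\cdot n^{t(|V(\tau)|-|S_\tau|)}$. In short, $S_\tau$ enters not as the set you factor through, but as a lower bound on the size of the $P$-dependent set you actually factor through. Once you make this change your argument goes through, and you do not even need the separate operator-norm step or the inequality $\sch{\mA}{2t}\le\norm{\mA}_F^2\norm{\mA}^{2t-2}$.
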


Up to lower order terms, the same result has been shown before in \cite{medarametla2016bounds, ahn2016graph}. To interpret this bound, assume that $\tau$ has a constant number of vertices. By setting $t \approx \polylog(n)$, we get \[\norm{\mM_{\tau}} = \widetilde{\bigoh}\left(\sqrt{n}^{|V(\tau)| - |S_{\tau}| + |I_{\tau}|}\right)\] with high probability, where $\widetilde{\bigoh}$ hides logarithmic factors.
This is obtained by applying Markov's inequality on the bound on $\Esch{\mM_{\tau}}{2t}$. If $\tau$ has at least one edge, then $\EE \mM_{\tau} = 0$ and \cref{thm: dense_graph_matrix_norm_bounds} yields such bounds. If $\tau$ has no edges, then it's quite simple to obtain such a bound and we include it in \cref{lem: empty_shape} for the sake of completeness.

\cref{cor: dense_graph_matrix_norm_bounds} makes precise the high probability bound above. Therefore, this power of $n$ is essentially what controls the norm bound and this is utilized heavily in applications (e.g. \cite{barak2019nearly, ghosh2020sum, potechin2020machinery}).

\begin{proof}[Proof of \cref{thm: dense_graph_matrix_norm_bounds}]
    We first argue that we can assume $I_{\tau} = \emptyset$. This is because of the following reason. Each distinct vertex in $\tau$ of degree $0$ essentially scales the matrix by a factor of at most $n$. And in the right hand side of the inequality, each vertex in $I_{\tau}$ contributes a factor of $n^{2t}$ accordingly, from $n^{t|V(\tau)|}$ and from $n^{t|I_{\tau}|}$, and the other changes only weaken the inequality.

	Now, fix $a, b \ge 0$ such that $a + b = |E(\tau)|$ and consider $\mM_{\tau, a, b}$. For $I, J \in \cI, \al, \beta \in \{0, 1\}^{\binom{n}{2}}$ such that $|\al|_0 = a, |\beta|_0 = b, \al \cdot \beta = 0$, by definition,
    \begin{align*}
        \mM_{\tau, a, b}[(I, \al), (J, \beta)] &~=~ \grad_{\al + \beta} \left(\sum_{\phi: \phi(U_{\tau}) = I, \phi(V_{\tau})= J} \prod_{u, v \in E(\tau)} G_{\phi(u), \phi(v)}\right)\\
        &~=~ |\{\phi ~|~ \phi(U_{\tau}) = I, \phi(V_{\tau})= J, \phi(E(\tau)) = \supp(\al + \beta)\}|
    \end{align*}
    where $\supp(.)$ denotes the support. We will now obtain norm bounds on these deterministic matrices by reinterpreting them as graph matrices for different shapes.
%    the entry in row $(I, \al)$ and column $(J, \beta)$ is the number of realizations $\phi$ of $\tau$ such that
%	\begin{itemize}
%		\item $U_{\tau}, V_{\tau}$ map to $I, J$ respectively under $\phi$, and
%		\item Under $\phi$, the edges of $\tau$ map to the edges in $\al$ and $\beta$ viewed as a set.
%	\end{itemize}
%	There are multiple ways to obtain tight norm bounds for these deterministic matrices. We will take the approach of reinterpreting them as graph matrices for different simpler shapes with no edges and then use Frobenius norm bounds.
    Let $P = (E_1, E_2)$ denote the partition of $E(\tau) = E_1 \sqcup E_2$ into two ordered sets $E_1, E_2$, where $\sqcup$ denotes disjoint union. Let the set of ordered partitions $P$ be $\cP$. Then, we can write $\mM_{\tau, a, b} = \sum_{P \in \cP} \mM_{\tau, a, b, P}$ where
    \[\mM_{\tau, a, b, P}[(I, \al), (J, \beta)] ~=~ |\{\phi ~|~ \phi(U_{\tau}) = I, \phi(V_{\tau})= J, \phi(E_1) = \supp(\al), \phi(E_2) = \supp(\beta)\}|\]

    Also, $|\cP| \le (4|E(\tau)|)^{|E(\tau)|}$ and so, by \cref{fact: holder},
    \[\sch{\mM_{\tau, a, b}}{2t} \le (4|E(\tau)|)^{t|E(\tau)|} \sum_{P \in \cP}\sch{\mM_{\tau, a, b, P}}{2t}\]

%	Firstly, we fix a canonical ordering of the vertices (and hence, the edges) of $\tau$. From the above characterization, for each $\phi, \al, \beta$, we can partition $E(\tau) = E_1 \sqcup E_2$ into two sets of edges that map to $\al, \beta$ respectively under $\phi$. There are at most $2^{|E(\tau)|}$ partitions. Moreover, for each partition, we can further specify which order the edges of $E_1$ (resp. $E_2$) map to the edges indicated by $\al$ (resp. $\beta$), upto relative ordering. We denote by $P$ both the information about the partition and the orderings. Let the set of this information be $\cP$, then $|\cP| \le 2^{|E(\tau)|} (2^aa!) (2^bb!) \le (4|E(\tau)|)^{|E(\tau)|}$. Then,
%	\[\sch{\mM_{\tau, a, b}}{2t} \le (4|E(\tau)|)^{t|E(\tau)|} \sum_{P \in \cP}\sch{\mM_{\tau, a, b, P}}{2t}\]
%	where $\mM_{\tau, a, b, P}$ is defined similar to $\mM_{\tau, a, b}$ with the extra condition that $\phi, \al, \beta$ must respect $P$. Here, we used the fact $\mM_{\tau, a, b} = \sum_{P \in \cP} \mM_{\tau, a, b, P}$ and invoked \cref{fact: holder}.

    Each $\mM_{\tau, a, b, P}$ can be interpreted as a graph matrix for a different shape $\tau_P$, with the same vertex set and no edges. Let $V(\tau_P) = V(\tau), E(\tau_P) = \emptyset$ and set $U_{\tau_P} = U_{\tau} \cup V(E_1), V(\tau_P) = V_{\tau} \cup V(E_2)$ using a canonical ordering. Then, $\mM_{\tau, a, b}$ is equal to $\mM_{\tau_P}$ up to renaming of the rows and columns. For an illustration, see \cref{fig: evolution_new}.

%	Now, each $\mM_{\tau, a, b, P}$ can be naturally reinterpreted as graph matrices themselves. To do this, for fixed $P$, we isolate the nonzero block of $\mM_{\tau, a, b, P}$ and replace any row or column index $(I, \al)$ by the tuple of values in $I \cup V(\al)$ coming from distinct vertices, ordered according to our canonical ordering. This renaming of rows and columns makes $\mM_{\tau, a, b, P}$ equal to a graph matrix $\mM_{\tau_{P}}$ where $\tau_P$ is the shape, obtained by setting $U_{\tau_P}, V_{\tau_P}$ to be the vertices that map to $I \cup V(\al)$ and $J \cup V(\beta)$ respectively and the ordering is defined by our canonical order. There are no edges and no other vertices in $\tau_P$. For an illustration, see \cref{fig: evolution_new}.

	\begin{figure}[!h]
		\centering
		\includegraphics[trim={2cm 20cm 0 2cm}, clip, scale=0.9]{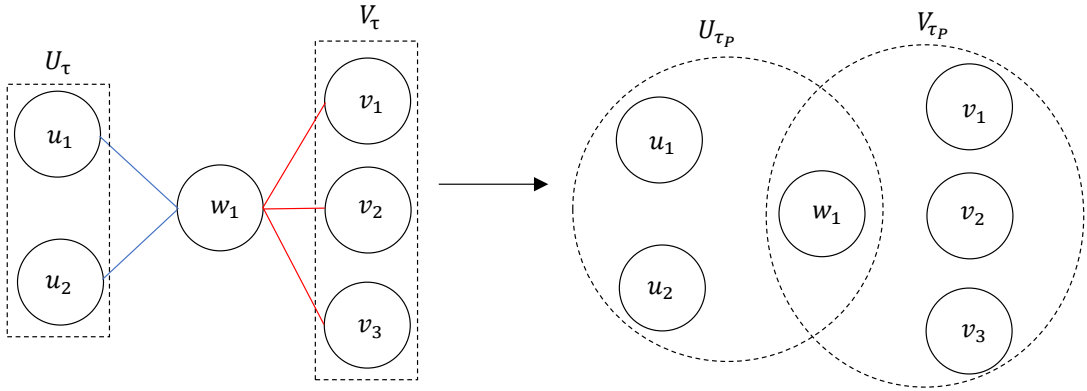}
		\caption{An example illustrating how $\tau_P$ is defined. In this example, $P$ constraints the blue and red edges to go to $\al$ and $\beta$ respectively. $U_{\tau_P}, V_{\tau_P}$ have an ordering on the vertices (not shown here).}
		\label{fig: evolution_new}
	\end{figure}

	This graph matrix has a block diagonal structure indexed by the realizations of the set of common vertices $S = U_{\tau_P} \cap V_{\tau_P}$. Indeed, for $K \in [n]^S$, let $\mM_{\tau_P, K}$ be the block of $\mM_{\tau_P}$ with $\phi(S) = K$. Then, $\mM_{\tau_P, K}\mM_{\tau_P, K'}^\T = \mM_{\tau_P, K}^\T\mM_{\tau_P, K'} = 0$ for $K \neq K'$ and so,
	\begin{align*}
		\Esch{\mM_{\tau, a, b}}{2t} \le (4|E(\tau)|)^{t|E(\tau)|} \sum_{P \in \cP}\sch{\mM_{\tau_P}}{2t}
		&= (4|E(\tau)|)^{t|E(\tau)|} \sum_{P \in \cP} \sum_{T \in [n]^S}\sch{\mM_{\tau_P, T}}{2t}\\
		&\le (4|E(\tau)|)^{t|E(\tau)|} \sum_{P \in \cP} \sum_{T \in [n]^S}\left(\sch{\mM_{\tau_P, T}}{2}\right)^t
	\end{align*}
	where we bounded the Schatten norm by the appropriate power of the Frobenius norm.
	For any fixed $K \in [n]^S$, the entries of $\mM_{\tau_P, K}$ take values in $\{0, 1\}$ and the number of nonzero entries is at most $n^{|V(\tau)| - |S|}$ because the realizations of vertices in $S$ are fixed and the other vertices have at most $n$ choices each. Therefore, $\sch{\mM_{\tau_P, K}}{2} \le n^{|V(\tau)| - |S|}$.

	Finally, we bound $|S|$ to estimate how large this term can be over all possibilities of $P$.
    We argue that $S$ blocks all paths from $U_{\tau}$ to $V_{\tau}$. To see this, consider any path from $U_{\tau}$ to $V_{\tau}$, it must contain an edge $(u, v) \in E(\tau)$ such that $u \in U_{\tau_P}, v \in V_{\tau_P}$. We must either have $(u, v) \in E_1$, in which case $u,  v \in U_{\tau_P}$ and $v \in S$, or $(u, v) \in E_2$, in which case $u, v \in V_{\tau_P}$ and $u \in S$. In either case, $S$ must contain either $u$ or $v$. This argument implies $S$ must be a vertex separator of $\tau$, giving $|S| \ge |S_{\tau}|$.
    %Since $S$ captures all the common vertices taking values in $I \cup V(\al)$ and $J \cup V(\beta)$ and also, $\al, \beta$ exhaust all the edges, we can immediately observe that $S$ must be a vertex separator of $\tau$, which will give $|S| \ge |S_{\tau}|$.
    For a proof by picture, see \cref{fig: proof_by_picture}.

%	\begin{figure}[!h]
%		\centering
%		\includegraphics[trim={2cm 20cm 2cm 2cm}, clip, scale=0.9]{images/}
%		\caption{Proof by picture that $|S| \ge |S_{\tau}|$. Blue edges can occur in $\tau$, red edges cannot, so $S$ blocks all paths from $U_{\tau}$ to $V_{\tau}$.}
%		\label{fig: proof_by_picture}
%	\end{figure}

    \begin{figure}[!h]
        \centering
        \includegraphics[trim={2cm 20cm 2cm 2cm}, clip, scale=0.9]{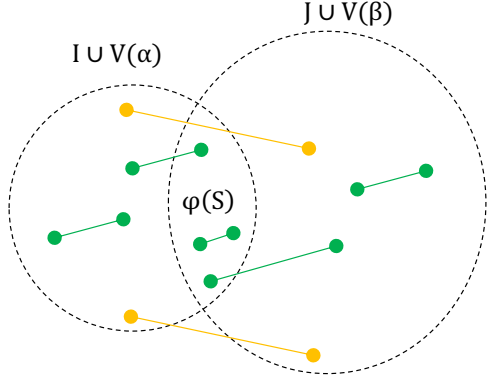}
        \caption{Proof by picture that $|S| \ge |S_{\tau}|$. Green edges can occur in $\tau$, orange edges cannot, so $S$ blocks all paths from $U_{\tau}$ to $V_{\tau}$.}
        \label{fig: proof_by_picture}
    \end{figure}

	We also have the trivial upper bound $|S| \le |V(\tau)|$. Ultimately, this gives
	\begin{align*}
		\sch{\mM_{\tau, a, b}}{2t} &\le (4|E(\tau)|)^{t|E(\tau)|} \sum_{P \in \cP} \sum_{T \in [n]^S}n^{t(|V(\tau)| - |S_{\tau}|)}
		&\le (4|E(\tau)|)^{t|E(\tau)|}(4|E(\tau)|)^{|E(\tau)|}n^{|V(\tau)|}n^{t(|V(\tau)| - |S_{\tau}|)}
	\end{align*}
	Along with our prior discussion, we get
	\begin{align*}
		\Esch{\mM_{\tau} - \EE\mM_{\tau}}{2t} &\le 	\sum_{a + b = |E(\tau)|}(16t|E(\tau)|)^{|E(\tau)|t}\sch{\mM_{\tau, a, b}}{2t}\\
		&\le \sum_{a + b = 	|E(\tau)|}(16t|E(\tau)|)^{|E(\tau)|t}(4|E(\tau)|)^{t|E(\tau)|}(4|E(\tau)|)^{|E(\tau)|}n^{|V(\tau)|}n^{t(|V(\tau)| - |S_{\tau}|)}\\
		&\le \bigg(C^{t|E(\tau)|}n^{|V(\tau)|} t^{t|E(\tau)|}|E(\tau)|^{2t|E(\tau)|}\bigg)n^{t(|V(\tau)| - |S_{\tau}|)}
	\end{align*}
	for an absolute constant $C > 0$.
\end{proof}

\begin{remark}
Note that while the proof of the norm bound above still requires some combinatorial analysis, this
arises mostly from a mechanical application of the general result \cref{thm: main_rademacher}. Also,
one only needs the simpler combinatorics of the fixed-size shapes obtained from the given shape $\tau$, rather
than increasingly large shapes formed by combining copies of $\tau$, as in the application of trace
method~\cite{ahn2016graph}.
\end{remark}

In the proof above, our analysis of the shape $\tau_P$ which has no edges, applies in general to any shape $\tau$ with no edges. For the sake of completeness, we state it explicity in the following lemma.

\begin{lemma}\label{lem: empty_shape}
	For a shape $\tau$ with no edges and any integer $t \ge 1$,
	$\Esch{\mM_{\tau}}{2t} \le n^{|U_{\tau} \cap V_{\tau}|}n^{t(V(\tau) - |U_{\tau} \cap V_{\tau}| + |I_{\tau}|)}$.
\end{lemma}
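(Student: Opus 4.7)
The key observation is that when $\tau$ has no edges, the product $\prod_{(u,v)\in E(\tau)} G_{\phi(u),\phi(v)}$ is the empty product equal to $1$, so each entry $\mM_{\tau}[I,J]$ is simply a count of injective realizations, making $\mM_\tau$ a deterministic matrix. So $\Esch{\mM_\tau}{2t} = \sch{\mM_\tau}{2t}$, and the task reduces to a direct combinatorial calculation.

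Let $S = U_\tau \cap V_\tau$. The plan is to exploit a block-diagonal decomposition of $\mM_\tau$ indexed by the injective realizations of $S$. For a nonzero entry $\mM_\tau[I,J]$ we need $|I| = |U_\tau|$, $|J| = |V_\tau|$, and the restrictions of $I$ and $J$ to the positions corresponding to $S$ must agree (otherwise no injective realization of $\tau$ exists). Thus for each injective tuple $K \in [n]^S$, the submatrix $\mM_{\tau,K}$ with row indices $I$ extending $K$ and column indices $J$ extending $K$ forms a block, and blocks for different $K$ have disjoint supports of rows and columns. In particular $\sch{\mM_\tau}{2t} = \sum_K \sch{\mM_{\tau,K}}{2t}$.

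Next I would observe that each block $\mM_{\tau,K}$ is a \emph{constant} matrix on its valid (injective) row/column indices. Indeed, once $I$ and $J$ are chosen (extending $K$ injectively), the value $\mM_\tau[I,J]$ counts injective extensions of the partial map to the free vertices $V(\tau)\setminus(U_\tau\cup V_\tau) = I_\tau$; since $\tau$ has no edges and the free vertices are isolated, this count depends only on $|V(\tau)|$, $|U_\tau \cup V_\tau|$ and $n$, not on the particular choice of $I,J,K$. Hence each block is rank-$1$ (a scalar multiple of an all-ones matrix restricted to valid indices), and therefore $\sch{\mM_{\tau,K}}{2t} = \|\mM_{\tau,K}\|_F^{2t}$.

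The Frobenius norm squared of each block is at most (max entry)$^2 \times$(number of valid rows)$\times$(number of valid columns) $\le n^{2|I_\tau|} \cdot n^{|U_\tau|-|S|} \cdot n^{|V_\tau|-|S|}$, since each entry is at most $n^{|I_\tau|}$ and the number of injective extensions of $K$ to $U_\tau$ (resp.\ $V_\tau$) is at most $n^{|U_\tau|-|S|}$ (resp.\ $n^{|V_\tau|-|S|}$). Summing over at most $n^{|S|}$ choices of $K$ yields
\[
\sch{\mM_\tau}{2t} \;\le\; n^{|S|}\cdot n^{t(|U_\tau| + |V_\tau| - 2|S| + 2|I_\tau|)}.
\]
Finally, using $|V(\tau)| = |U_\tau| + |V_\tau| - |S| + |I_\tau|$, the exponent of the second factor rewrites as $t(|V(\tau)| - |S| + |I_\tau|)$, matching the stated bound. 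There is no serious obstacle here; the only care needed is to verify that entries really are constant on each block (which rests on the shape having no edges), and to consistently handle the injectivity constraint when counting valid row/column indices.
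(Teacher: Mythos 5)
Your overall approach matches the paper's: decompose $\mM_\tau$ into blocks indexed by injective realizations $K$ of $S = U_\tau \cap V_\tau$, then bound the Schatten norm of each block via its Frobenius norm, then sum over the at most $n^{|S|}$ choices of $K$. The arithmetic also checks out, since $|U_\tau| + |V_\tau| - 2|S| + 2|I_\tau| = |V(\tau)| - |S| + |I_\tau|$ when all free vertices are isolated.

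There is, however, one incorrect claim in the middle that happens not to break the final bound. You assert that each block $\mM_{\tau,K}$ is rank-1 (a scalar multiple of an all-ones matrix) and therefore $\sch{\mM_{\tau,K}}{2t} = \|\mM_{\tau,K}\|_F^{2t}$. This is false: within a block, the entry $\mM_{\tau,K}[I,J]$ is zero whenever the tuples $I$ and $J$ overlap anywhere outside the positions corresponding to $S$ (otherwise no injective realization exists). Those zeros are scattered in a pattern determined by set-intersection, which is not a rectangular support, so the block is generally of high rank. For instance, if $U_\tau = (u_1, u_2)$, $V_\tau = (u_1, v_1)$, and $I_\tau = \emptyset$, the block is the all-ones matrix minus the identity on an $(n-1)\times(n-1)$ grid, which has rank $n-1$. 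The correct statement is the inequality $\sch{\mM_{\tau,K}}{2t} \le \|\mM_{\tau,K}\|_F^{2t}$, which holds for \emph{any} matrix (it is just $\sum_i \sigma_i^{2t} \le (\sum_i \sigma_i^2)^t$ applied to the singular values), and which is exactly what the paper uses. Similarly, the entries are ``constant where nonzero'' rather than constant on the block. Since you only use the upper-bound direction and then bound the max entry by $n^{|I_\tau|}$ and count rows/columns, the rest of your argument goes through; just drop the rank-1 claim and replace the equality with the standard inequality.
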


Note that this has the same form as \cref{thm: dense_graph_matrix_norm_bounds} because for a shape $\tau$ with no edges, the minimum vertex separator $S_{\tau}$ is just $U_{\tau} \cap V_{\tau}$.
The following corollary obtains high probability norm bounds for norms of graph matrices via Markov's inequality.

\begin{corollary}\label{cor: dense_graph_matrix_norm_bounds}
	For a shape $\tau$, for any constant $\eps > 0$, with probability $1 - \eps$,
	\[\norm{\mM_{\tau}} \le (C|E(\tau)| \log(n^{|V(\tau)|}/\eps))^{|E(\tau)|}\cdot\sqrt{n}^{|V(\tau)| - |S_{\tau}| + |I_{\tau}|}\]
	for an absolute constant $C > 0$.
\end{corollary}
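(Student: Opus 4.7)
The plan is to apply Markov's inequality to the Schatten-$2t$ moment bound of \cref{thm: dense_graph_matrix_norm_bounds} and then optimize the parameter $t$ in terms of the desired failure probability $\eps$. First I would split into two cases based on whether $E(\tau)$ is empty.

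In the main case $E(\tau) \neq \emptyset$, every nonzero entry of $\mM_{\tau}$ is a homogeneous polynomial of positive degree in the independent Rademacher variables $G_{i,j}$, so $\EE \mM_{\tau} = 0$ and $\norm{\mM_{\tau}} = \norm{\mM_{\tau} - \EE \mM_{\tau}}$. Using $\norm{\mM_{\tau}}^{2t} \le \sch{\mM_{\tau} - \EE \mM_{\tau}}{2t}$ together with Markov, I obtain
\[
\Pr\insquare{\norm{\mM_{\tau}} \ge \theta} ~\le~ \theta^{-2t} \cdot \EE\sch{\mM_{\tau} - \EE \mM_{\tau}}{2t}.
\]
Setting the right-hand side equal to $\eps$ and substituting the bound from \cref{thm: dense_graph_matrix_norm_bounds}, taking $2t$-th roots yields
\[
\theta ~\le~ \inparen{\frac{n^{|V(\tau)|}}{\eps}}^{1/(2t)} \cdot C^{|E(\tau)|/2} \cdot t^{|E(\tau)|/2} \cdot |E(\tau)|^{|E(\tau)|} \cdot \sqrt{n}^{|V(\tau)| - |S_{\tau}| + |I_{\tau}|}.
\]

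The key step is the choice $t = \lceil \log(n^{|V(\tau)|}/\eps) \rceil$, which makes the prefactor $(n^{|V(\tau)|}/\eps)^{1/(2t)}$ an absolute constant (at most $\sqrt{e}$). After this choice, the remaining factor $C^{|E(\tau)|/2} t^{|E(\tau)|/2} |E(\tau)|^{|E(\tau)|}$ can be upper bounded by $(C' |E(\tau)| t)^{|E(\tau)|}$ for a slightly larger absolute constant $C'$, since $t^{|E(\tau)|/2} \le t^{|E(\tau)|}$ whenever $t \ge 1$. Substituting the value of $t$ produces exactly the bound $(C' |E(\tau)| \log(n^{|V(\tau)|}/\eps))^{|E(\tau)|} \cdot \sqrt{n}^{|V(\tau)| - |S_{\tau}| + |I_{\tau}|}$ claimed in the corollary.

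In the degenerate case $E(\tau) = \emptyset$, the matrix $\mM_{\tau}$ is deterministic and \cref{lem: empty_shape} directly gives the Schatten-$2t$ bound. Since $S_{\tau} = U_{\tau} \cap V_{\tau}$ in this case, taking $2t$-th roots and letting $t \to \infty$ yields the deterministic bound $\norm{\mM_{\tau}} \le \sqrt{n}^{|V(\tau)| - |S_{\tau}| + |I_{\tau}|}$, which matches the claim since $(C|E(\tau)|\log(\cdot))^{|E(\tau)|} = 1$ when $|E(\tau)|=0$. There is no substantive obstacle here; the only bookkeeping care is in choosing $t$ so that the polynomial-in-$n$ prefactor from $\eps^{-1} n^{|V(\tau)|}$ is tamed and gets absorbed into the logarithmic factor, which is a standard trick when converting moment bounds to high-probability statements.
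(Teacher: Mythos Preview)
Your proposal is correct and follows essentially the same approach as the paper: both apply Markov's inequality to the Schatten-$2t$ moment bound from \cref{thm: dense_graph_matrix_norm_bounds} (or \cref{lem: empty_shape} when $E(\tau)=\emptyset$), solve for $\theta$, and then set $t$ on the order of $\log(n^{|V(\tau)|}/\eps)$ so that the $(n^{|V(\tau)|}/\eps)^{1/(2t)}$ prefactor becomes $O(1)$. The only cosmetic differences are that the paper takes $t = \tfrac{1}{2}\log(n^{|V(\tau)|}/\eps)$ rather than the ceiling, and handles the $E(\tau)=\emptyset$ case by plugging into the same Markov template rather than your (equally valid) $t\to\infty$ argument.
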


\begin{proof}
	If $E(\tau) = \emptyset$, we invoke \cref{lem: empty_shape}. Otherwise, $\EE\mM_{\tau} = 0$ and we invoke \cref{thm: dense_graph_matrix_norm_bounds}. By an application of Markov's inequality,
	\begin{align*}
		Pr[\norm{\mM_{\tau}} \ge \theta] \le Pr[\sch{\mM_{\tau}}{2t} \ge \theta^{2t}]
		&\le \theta^{-2t} \EE\sch{\mM_{\tau}}{2t}\\
		&\le \theta^{-2t}\bigg((C')^{t|E(\tau)|}n^{|V(\tau)|} t^{t|E(\tau)|}|E(\tau)|^{2t|E(\tau)|}\bigg)n^{t(|V(\tau)| - |S_{\tau}| + |I_{\tau}|)}
	\end{align*}
	for an absolute constant $C' > 0$. To make this expression at most $\eps$, we simply set
	$\theta = \bigg(\eps^{-1/(2t)} (C'')^{|E(\tau)|} n^{|V(\tau)|/(2t)}
    t^{|E(\tau)|/2}|E(\tau)|^{|E(\tau)|}\bigg)
    \sqrt{n}^{|V(\tau)| - |S_{\tau}| + |I_{\tau}|}$ for an absolute constant $C'' > 0$. Finally, set $t = \frac{1}{2} \log(n^{|V(\tau)|}/\eps)$ to complete the proof.
\end{proof}

%!TEX root=main-efron-stein.tex

%%% Local Variables:
%%% mode: latex
%%% TeX-master: "main-efron-stein"
%%% End:

\section{Why a na\"ive application of \cite{paulin2016} may fail for general product distributions} \label{sec: failure_of_basic}

In this section, we elaborate on the difficulties that arise when working with random variables that are not necessarily Rademacher. In this case, note that we cannot assume that the polynomial entries are multilinear as well.

To recall the setting, we are given a random matrix $\mF$ whose entries are low degree polynomials in random variables $Z_1, \ldots, Z_n$ which are independently sampled from arbitrary distributions. And we wish to obtain concentration bounds on how much $\mF$ can deviate from its mean, by way of controlling $\Esch{\mF- \EE\mF}{2t}$.

Building on the ideas from \cref{sec: basic_recursion}, we could attempt to use matrix Efron-Stein, \cref{thm: main_efron_stein} and hope to obtain a similar recursion framework. We now discuss what happens if we do this. Assume $\EE[Z_i] = 0, \EE[Z_i^2] = 1$. We can proceed similar to the proof of \cref{thm: main_rademacher}. So, we consider $\mX$ as a principal submatrix of $\mX_{0, 0}$ and follow through \cref{lem: main_rademacher}. The main change will happen in \cref{claim: reduction}. In particular, the equation $\EE[(Z_i - \resamp{Z_i})^2|Z] = 2$ is no longer true. Instead, we will have $\EE[(Z_i - \resamp{Z_i})^2|Z] = 1 + Z_i^2$. So, we get the expression
\[\sum_{i = 1}^n (1 + Z_i^2)\mF_{a, b, i}\mF_{a, b, i}^\T = \sum_{i = 1}^n \mF_{a, b, i}\mF_{a, b, i}^\T + \sum_{i = 1}^n Z_i^2\mF_{a, b, i}\mF_{a, b, i}^\T\]

The first term can been handled just as in the basic framework. Unfortunately, the second term will be a source of difficulty. To get around this difficulty, we could attempt to apply the matrix Efron-Stein inequality again on an appropriately constructed matrix.
To do this, we can interpret the second term as having been obtained after differentiating with respect to the variable $Z_i$ and then \textit{putting the variable back}. In contrast, we didn't need to put it back when working with Rademacher random variables.
%Therefore, we can hope to continue the recursion with these extra matrices. But the catch is that, after the first step of the recursion, this second term will contain the left hand side as a sub-term, thereby giving a trivial inequality.
But after we do this, when we recurse on these extra matrices, the new second term will contain the left hand side as a sub-term, thereby giving a trivial inequality and stalling the recursion.

To see this more clearly, consider the simplest case $a = b = 0$. Then, the first term $\sum_{i = 1}^n \mF_{a, b, i}\mF_{a, b, i}^\T$ will be equal to $\mF_{0, 1}\mF_{0, 1}^\T$ as we saw earlier. To evaluate the second term $\sum_{i = 1}^n Z_i^2\mF_{a, b, i}\mF_{a, b, i}^\T$ in a similar manner, we define the matrix $\mH$ to be the same as $\mF_{0, 1}$ except that each entry is now multiplied by $Z_i$ where $i$ is the differentiated variable in the column. That is, $\mH[I, (J, \mat{e}_i)] = Z_i\mF_{0, 1}[I, (J, \mat{e}_i)]$. Observe that in the definition of $\mH$, $Z_i$ has been put back after differentiating with respect to it. Then, the second term will be $\mH\mH^\T$ and we can hope to use Efron-Stein again on this matrix $\mH$ recursively.

We could do that and proceed similarly to the proof of \cref{lem: main_rademacher} with appropriate modifications as above. But since $\beta_i = 1$ already, differentiating with respect to $Z_i$ and putting it back, will return the same matrix $\mH$! So, we end up with an inequality of the form
\[\Esch{\mH}{2t} \le \bigoh(t)^{t}(\Esch{\mH}{2t} + \text{ other nonnegative terms})\]
Indeed, this is a tautology and will not be useful to us.

For a quick and dirty bound, suppose we had a parameter $L$ such that $1 + Z_i^2 \le L$ for our distributions, then we will be able to obtain a similar framework while incurring a loss of $\sqrt{L}$ at each step of the recursion. But unfortunately, this bound will be lossy. For example, if we do this computation for the centered normalized adjacency matrix of $G \sim \cG_{n, p}$, we will obtain a norm bound of $\widetilde{\bigoh}(\frac{\sqrt{n(1 - p)}}{\sqrt{p}})$ where $\widetilde{\bigoh}$ hides logarithmic factors.. This bound is tight for constant or even inverse polylogarithmic $p$. But for $p = n^{-\theta}$ for some constant $0 < \theta < 1$, this is not tight because in this regime, the true norm bound is known to be $\widetilde{\bigoh}(\sqrt{n})$ (see the early works of \cite{furedi1981eigenvalues, vu2005spectral} and for tighter bounds, see \cite{benaych2020spectral} and references therein).

If we dig into the details of what happened, this example illustrates that the matrix Efron-Stein inequality \cref{thm: main_efron_stein} becomes a tautology for certain kinds of matrices, that yield $\mV= \bigoh(1) \mX\mX^\T + \text{ other positive semidefinite matrices}$.

But in our framework in general, the aforementioned bad matrices occur when we differentiate with respect to variables that have already been differentiated on. In other words, the current definition of the variance proxy $\mV$ doesn't take into account whether we have already differentiated with respect to some variable $Z_i$. So, for the general recursion, we dive into the proof due to \cite{paulin2016} and modify it using structural properties of the intermediate matrices we obtain in our framework.

\section{The general recursion framework}\label{sec: general_recursion}

We now assume $Z_1, \ldots, Z_n$ are i.i.d. random variables sampled from a distribution $\Omega$ with finite moments.
We assume that they are identically distributed for simplicity but our technique easily extends even when they are not identically distributed, as long as they are independent.
For each $i \le n$, define $\resamp{Z_i}$ to be an independent copy of $Z_i$ and define the vector $Z^{(i)} := (Z_1, \ldots, Z_{i - 1}, \resamp{Z_i}, Z_{i + 1}, \ldots, Z_n)$. Define $Z'$ to be the random vector defined by sampling $i$ from $[n]$ uniformly at random and then setting $Z' = Z^{(i)}$.

Let $\mF \in \RR[Z]^{\cI \times \cJ}$ be a matrix with rows and columns indexed by arbitrary sets $\cI, \cJ$ respectively such that for all $I \in \cI, J \in \cJ$, $\mF[I, J]$ are polynomials of $Z_1, \ldots, Z_n$. Let $\dpoly$ be the maximum degree of $\mF[I, J]$ over all entries $I, J$ and let $d$ be the maximum degree of $Z_i$ over all entries $\mF[I, J]$ and $i \le n$.

%Note in particular that unlike the Rademacher case, we have restricted ourself to the setting where the matrix is Hermitian. We do this in order to simplify notation and make the framework cleaner.
%Indeed, this is not a serious restriction since given any non-Hermitian matrix $\mF$, to analyze $\Esch{\mF}{2t}$, it suffices to analyze the Schatten norm of the Hermitian dilation, that is, $\Esch{\mF \oplus \mF}{2t}$ and we have that $\mF \oplus \mF$ is Hermitian so we can apply our framework. In fact, note that we can do this even when the rows and columns of $\mF$ are indexed by different sets!

Similar to the Rademacher case, let $\mX := \mF- \EE\mF$. When the input is $Z$, we denote the matrices as $\mF, \mX$, etc and when the input is $Z^{(i)}$, denote the corresponding matrices as $\mF^{(i)}, \mX^{(i)}$, etc. In this section, we will give a general framework using which we can obtain bounds on $\Esch{\mF - \EE\mF}{2t}$ for any integer $t \ge 1$.
We set up a few preliminaries in order to state the main theorem.

\begin{definition}[Space $\cS$]
	Let $\cS$ be the space of mean-zero polynomials in $Z_1, \ldots, Z_n$ of degree at most $\dpoly$.
\end{definition}

For $\al \neq 0$, we also define the centered monomials
$\chi_{\al}(Z) = \prod_{\al_i > 0} (Z_i^{\al_i} - \EE[Z_i^{\al_i}])$.
By definition, $\chi_{\al} \in \cS$ for all $\al \neq 0, |\al|_1 \le \dpoly$. The following proposition is straightforward.

\begin{propn}\label{propn: basis}
	The set $\{\chi_{\al}(Z) | 1 \le |\al|_1 \le \dpoly\}$ forms a basis for $\cS$.
\end{propn}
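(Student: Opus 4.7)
The plan is to verify three things: each $\chi_\al$ with $\al \ne 0$ and $|\al|_1 \le \dpoly$ lies in $\cS$; the family $\{\chi_\al\}$ is linearly independent (as formal polynomials); and the cardinality matches $\dim \cS$, forcing it to be a basis.

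For membership, the independence of the $Z_i$'s lets $\EE[\chi_\al(Z)]$ factor as a product of mean-zero terms, so $\EE[\chi_\al(Z)] = 0$; and by inspection $\deg(\chi_\al) = |\al|_1 \le \dpoly$, hence $\chi_\al \in \cS$.

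For linear independence, I would expand
\[
\chi_\al(Z) ~=~ \sum_{T \subseteq \supp(\al)} \prod_{i \in T} Z_i^{\al_i} \prod_{i \in \supp(\al) \setminus T} \bigl(-\EE[Z_i^{\al_i}]\bigr) ~=~ Z^\al ~+~ \sum_{\beta \unlhd \al,\, \beta \ne \al} c_{\al, \beta}\, Z^\beta,
\]
so the leading (maximum-degree) monomial of $\chi_\al$ is $Z^\al$ and every other monomial $Z^\beta$ appearing in it satisfies $\beta \unlhd \al$ with $|\beta|_1 < |\al|_1$. Given a vanishing combination $\sum_\al c_\al \chi_\al = 0$ with some nonzero coefficient, I would pick $\al^\ast$ with $c_{\al^\ast} \ne 0$ and $|\al^\ast|_1$ maximal, and observe that any $\chi_\al$ contributing the monomial $Z^{\al^\ast}$ needs $\al^\ast \unlhd \al$, which together with $|\al|_1 \le |\al^\ast|_1$ forces $\al = \al^\ast$. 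Hence the coefficient of $Z^{\al^\ast}$ in the sum is precisely $c_{\al^\ast} \ne 0$, a contradiction.

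Finally, I would match dimensions. The ambient space of polynomials in $\RR[Z_1, \ldots, Z_n]$ of degree at most $\dpoly$ has dimension $\binom{n + \dpoly}{\dpoly}$, and the evaluation functional $f \mapsto \EE[f(Z)]$ is nonzero (it maps the constant $1$ to $1$), so $\cS = \ker(\EE)$ has dimension $\binom{n + \dpoly}{\dpoly} - 1$. This equals $|\{\al \in \NN^n : 1 \le |\al|_1 \le \dpoly\}|$, so the linearly independent family $\{\chi_\al\}$ is already of the right cardinality to form a basis of $\cS$. The only subtlety I anticipate is the implicit treatment of polynomials as formal elements of $\RR[Z]$ rather than as functions; this matches the paper's convention $\mF \in \RR[Z]^{\cI \times \cJ}$ and is what ensures the argument goes through uniformly regardless of whether $Z$ has finite or continuous support.
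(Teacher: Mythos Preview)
Your proof is correct. The paper itself does not prove this proposition, stating only that it is ``straightforward,'' so there is nothing to compare against; your triangularity argument for linear independence together with the dimension count is a clean and complete justification, and your remark about treating polynomials as formal elements of $\RR[Z]$ (consistent with the paper's convention) correctly handles the only possible subtlety.
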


For the general framework, we work over this basis because as we will see in \cref{sec: proof_of_general}, the ``inner kernel matrix'' is convenient to state in this basis.
The $\grad$ operator also works nicely with our polynomials $\chi_{\beta}$. Indeed, observe that $\grad_{\al}(\chi_{\beta}) = \begin{dcases}
	\chi_{\beta - \al} & \text{ if $\al \unlhd \beta$}\\
	0 & \text{ o.w.}
\end{dcases}$.

For a polynomial $f(Z)$ in $\cS$, denote by $\coef{f}{\al}$ the coefficient of $\chi_{\al}(Z)$ in the expansion of $f$, that is, $f(Z) = \displaystyle\sum_{0 \neq \al \in \NN^n} \coef{f}{\al}\chi_{\al}(Z)$.
We can naturally extend this notation to matrices that have mean $0$. So, we can write $\mX = \sum_{\al \neq 0} \coef{\mX}{\al} \chi_{\al}(Z)$ where $\coef{\mX}{\al}$ are deterministic matrices. In order to apply our recursion framework, we group this sum into terms based on $|\al|_0$. For $k \ge 1$, define $\mX_k = \sum_{|\al|_0 = k} \coef{\mX}{\al} \chi_{\al}(Z)$. Then,
$\mX = \sum_{k \ge 1} \mX_k$.
Note that when $k > \dpoly$, $\mX_k = 0$.

\begin{definition}[Indexing set $\cK$]
	We define $\cK \subseteq \NN^n \times \{0, 1\}^n$ to be the set of pairs $(\al, \gam)$ such that $|\al|_1 \le \dpoly, \al \in \NN^n$ and $\gam \le \al$ with $\gam \in \{0, 1\}^n$.
\end{definition}

\begin{remark}
    If we assume that the maximum degree of our polynomials $d_p$ is constant, then the size of $\cK$ is polynomially large, not exponentially large. Hence, the matrices we will consider below will also be of polynomial size when $d_p$ is constant.
\end{remark}

Define the diagonal matrices $\mD_1 \in \RR[Z]^{\cI \times \cK} \times \RR[Z]^{\cI \times \cK}$ and $\mD_2 \in \RR[Z]^{\cJ \times \cK} \times \RR[Z]^{\cJ \times \cK}$ with nonzero entries
\vspace{.1em}
\[\mD_1[(I, \al, \gam), (I, \al, \gam)] = \sqrt{\EE[Z^{2\al\cdot (1 - \gam)}]} Z^{\al\cdot \gam},\qquad \mD_2[(J, \al, \gam), (J, \al, \gam)] = \sqrt{\EE[Z^{2\al\cdot (1 - \gam)}]}Z^{\al\cdot \gam}\]

%Define the diagonal matrix $\mD_1 \in \RR[Z]^{\cI \times \cK} \times \RR[Z]^{\cI \times \cK}$ with nonzero entries
%\[\mD_1[(I, \al, \gam), (I, \al, \gam)] = \sqrt{\EE[Z^{2\al\cdot (1 - \gam)}]}Z^{\al\cdot \gam}\]
%
%Similarly, define the diagonal matrix $\mD_2 \in \RR[Z]^{\cJ \times \cK} \times \RR[Z]^{\cJ \times \cK}$ with nonzero entries
%\[\mD_2[(J, \al, \gam), (J, \al, \gam)] = \sqrt{\EE[Z^{2\al\cdot (1 - \gam)}]}Z^{\al\cdot \gam}\]

\begin{definition}[Matrices $\mG_{k, a, b}, \mF_{k, a, b}$]
	For integers $k, a, b$ such that $k \ge 1, a, b \ge 0$, define the matrix $\mG_{k, a, b}$ to have rows and columns indexed by $\cI \times\cK$ and $\cJ \times \cK$ respectively such that for all $(I, \al_1, \gam_1) \in \cI \times \cK$, $(J, \al_2, \gam_2) \in \cJ \times \cK$,
	\[\mG_{k, a, b}[(I, \al_1, \gam_1), (J, \al_2, \gam_2)] = \begin{dcases}
		\grad_{\al_1 + \al_2} \mX_k[I, J] & \text{ if $|\al_1|_0 = a, |\al_2|_0 = b, \al_1 \cdot \al_2 = 0$}\\
		0 & \text{o.w.}
	\end{dcases}
	\]
	Also, define $\mF_{k, a, b} := \mD_1 \mG_{k, a, b}\mD_2$.
\end{definition}

Note that when $k > \dpoly$, $\mF_{k, a, b} = 0$.

\begin{propn}
	For integers $k, a, b$ such that $k \ge 1, a, b \ge 0$, suppose $a + b < k$. Then each nonzero entry $f$ of $\mG_{k, a, b}$ has the property that $\coef{f}{\al}$ is nonzero only when $|\al|_0 = k - a - b$
\end{propn}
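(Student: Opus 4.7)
The plan is to trace the action of the differential operator $\grad_{\al_1+\al_2}$ on the explicit expansion of $\mX_k[I,J]$ in the $\chi$-basis, and read off the possible supports that can appear.

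First, I would unpack the definitions. By hypothesis, a nonzero entry of $\mG_{k,a,b}$ has the form $f = \grad_{\al_1+\al_2}\mX_k[I,J]$ for some $(\al_1,\gamma_1), (\al_2,\gamma_2) \in \cK$ with $|\al_1|_0 = a$, $|\al_2|_0 = b$, and $\al_1 \cdot \al_2 = 0$. Since $\mX_k = \sum_{|\beta|_0 = k} \coef{\mX}{\beta}\chi_\beta(Z)$, the $(I,J)$ entry expands as $\mX_k[I,J] = \sum_{|\beta|_0 = k} \coef{\mX[I,J]}{\beta}\chi_\beta(Z)$. I would then apply $\grad_{\al_1 + \al_2}$ termwise, using linearity together with the identity stated right after \cref{propn: basis}, namely $\grad_\al(\chi_\beta) = \chi_{\beta - \al}$ if $\al \unlhd \beta$ and $0$ otherwise.

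The key combinatorial observation is about the supports. Because $\al_1 \cdot \al_2 = 0$, the vectors $\al_1$ and $\al_2$ have disjoint supports, so $|\al_1 + \al_2|_0 = a + b$. For the term corresponding to $\beta$ to survive, we need $\al_1 + \al_2 \unlhd \beta$, which forces $\supp(\al_1) \cup \supp(\al_2) \subseteq \supp(\beta)$ and moreover that $\al_1 + \al_2$ agrees with $\beta$ on that joint support. Hence $\gamma := \beta - \al_1 - \al_2$ has support exactly $\supp(\beta) \setminus (\supp(\al_1) \cup \supp(\al_2))$, giving $|\gamma|_0 = k - (a+b)$. Note that the assumption $a+b < k$ ensures $|\gamma|_0 \ge 1$, so $\chi_\gamma$ is a bona fide basis element of $\cS$.

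Putting these together, $f = \sum_{\gamma : |\gamma|_0 = k - a - b} c_\gamma\, \chi_\gamma(Z)$ for scalars $c_\gamma$ determined by the surviving $\coef{\mX[I,J]}{\beta}$'s. By \cref{propn: basis}, the $\chi_\al$ form a basis of $\cS$, so this expansion is unique and $\coef{f}{\al} = c_\al$. In particular, $\coef{f}{\al}$ can be nonzero only when $|\al|_0 = k - a - b$, as claimed. I do not anticipate any real obstacle: the argument is essentially a bookkeeping calculation on supports, and the only thing one has to be careful about is using the disjointness of $\supp(\al_1)$ and $\supp(\al_2)$ (guaranteed by $\al_1 \cdot \al_2 = 0$) to correctly compute $|\gamma|_0$.
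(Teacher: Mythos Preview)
Your proof is correct and follows essentially the same approach as the paper's: both arguments observe that $\mX_k$ only involves basis elements $\chi_\beta$ with $|\beta|_0 = k$, and that $\grad_{\al_1+\al_2}$ either kills such a term or removes exactly $|\al_1+\al_2|_0 = a+b$ variables from its support. The paper states this in one sentence, while you spell out the bookkeeping in the $\chi$-basis; the content is the same.
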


\begin{proof}
	The nonzero entries of $\mX_k$ only has terms containing exactly $k$ variables and $\grad_{\al_1 + \al_2}$ either zeroes out the term, or it truncates exactly $|\al_1 + \al_2|_0 = |\al_1|_0 + |\al_2|_0 = a + b$ variables.
\end{proof}

This also immediately implies that $\EE[\mG_{k, a, b}] = 0$ whenever $a + b < k$. Finally, when $k = a + b$, we have that $\mG_{k, a, b}$ is a deterministic matrix independent of the $Z_i$. These give rise to the matrices $\mF_{a + b, a, b}$ that appears in our main theorem.
We are now ready to state the main theorem.

\begin{theorem}[General recursion]
\label{thm: main_general}
Let the tuple of random variables $Z$ and the function $\mF$ be as above. Then, for all integers $t \ge 1$,
\[\Esch{\mF - \EE \mF}{2t} \le \sum_{a, b \ge 0, a + b \ge 1}(Ct^2d\dpoly^4)^{(a + b)t}\Esch{\mF_{a + b, a, b}}{2t}\]
	for an absolute constant $C > 0$.
\end{theorem}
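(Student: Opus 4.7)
The plan is to decompose $\mX = \mF - \EE\mF$ by character support size and recurse one variable at a time.  By \cref{propn: basis}, write $\mX = \sum_{k=1}^{\dpoly} \mX_k$ where $\mX_k := \sum_{|\al|_0 = k}\coef{\mX}{\al}\,\chi_\al(Z)$, and reduce the theorem, via the Schatten triangle inequality on the Hermitian dilation together with \cref{fact: holder}, to proving for each $k \ge 1$ an inequality of the form
\[
\Esch{\mX_k}{2t} ~\le~ \sum_{\substack{a,b \ge 0 \\ a + b = k}} (Ct^2 d\dpoly^3)^{kt} \, \Esch{\mF_{k,a,b}}{2t} \mper
\]
For a fixed $k$, I would set up a $k$-step recursion on matrices of the form $\mD_1 \mG \mD_2$, initialized with $\mD_1 = \mD_2 = \mI$ and $\mG = \mX_k$.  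A single step should transform $\mD_1 \mG \mD_2$ into a sum of matrices $\mD_1' \mG' \mD_2'$ in which the character support of the entries of $\mG'$ has shrunk by one, and the removed variable $Z_i$ has been absorbed into either $\mD_1'$ or $\mD_2'$ as the factor $Z_i^{\al_i \gam_i}\sqrt{\EE[Z_i^{2\al_i(1-\gam_i)}]}$ appearing in the definition of $\mD_1,\mD_2$.  After $k$ steps, $\mG'$ is fully differentiated hence deterministic, and the result is exactly $\mF_{k,a,b}$ for some $(a,b)$ with $a+b=k$.

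The single recursion step is the content of \cref{lem: main_pmt_bound}, applied with an explicit polynomial kernel constructed for the inner matrix $\mG$ alone (via \cref{thm: explicit_kernel_for_poly}) rather than for the whole product $\mD_1\mG\mD_2$.  Plugging this inner kernel into the exchangeable-pairs machinery of \cite{paulin2016} yields a bound on $\Etr{(\mD_1\mG\mD_2)^{2t}}$ by $\Etr{\mV^t}$ in which the variance proxy $\mV$ splits naturally into a ``kernel'' piece that already has the desired form $\mD_1' \mG' \mD_2'$ with smaller $\mG'$, and an ``outer'' piece that collects the fluctuations picked up when resampling variables sitting inside the random diagonals.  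Per step this contributes a factor of size $O(t^2\,d\,\dpoly^{O(1)})^t$: $t^2$ from applying the $(4t-2)^t$ matrix Efron-Stein prefactor twice (once for the inner-kernel bound and once after Jensen), $d$ from bounding moments $\EE[Z_i^{2j}]$ up to $j=d$, and polynomial-in-$\dpoly$ factors from the combinatorics of which variable is peeled off.

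The main obstacle is the outer piece of the variance proxy: if left untouched it reintroduces copies of $\mG$ of undiminished polynomial degree, which is precisely the failure mode analyzed in \cref{sec: failure_of_basic}.  The resolution, refining the argument of \cite{paulin2016}, is to apply Jensen's operator trace inequality (\cref{lem: jensen_trace}) to push the random diagonals out of the trace; writing the outer contribution in the form $\sum_i \mA_i^\T \mB_i \mA_i$ with $\sum_i \mA_i^\T \mA_i \preceq \mI$ fits the hypothesis of the lemma and allows the random fluctuations of $\mD_1,\mD_2$ to be absorbed into moment factors without increasing the degree of the remaining $\mG$.  Once this single-step estimate is in hand, iterating it $k$ times and summing over the $O(k)$ pairs $(a,b)$ with $a+b=k$ yields the desired bound, with the extra combinatorial factors absorbed into the $(Ct^2d\dpoly^4)^{(a+b)t}$ prefactor.
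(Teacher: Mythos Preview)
Your overall architecture matches the paper's: decompose $\mX$ into $\mX_k$ by character-support size, identify $\mX_k$ with $\mF_{k,0,0}$, and iterate a single-step lemma that moves from $\mF_{k,a,b}$ to $\mF_{k,a+1,b}$ and $\mF_{k,a,b+1}$ via an inner kernel for $\mG_{k,a,b}$.  You also correctly locate Jensen's operator trace inequality as the tool used on the ``outer'' fluctuations coming from resampling variables that already live in the diagonals.

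Where the proposal has a genuine gap is in what happens \emph{after} Jensen.  You describe Jensen as letting the outer fluctuations be ``absorbed into moment factors without increasing the degree of the remaining $\mG$'', and you budget for this with a second $(4t-2)^t$ prefactor.  That is not how the argument closes.  In the paper, the outer pieces $\mDel_1^{k,a,b}$ and $\mDel_3^{k,a,b}$ are bounded (via Jensen and some permutation-matrix bookkeeping) by $\tfrac{(C d\dpoly)^t}{n^t}\,\Esch{\herm{\mF}_{k,a,b}}{2t}$ --- that is, they feed back the very quantity you are trying to bound, with the same inner matrix $\mG_{k,a,b}$.  There is no second Efron--Stein application and no way to make these terms ``smaller degree''.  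What saves the day is the free parameter $s$ in \cref{lem: main_pmt_bound}: the outer pieces enter with weight $s$, the kernel piece $\mV_{k,a,b}$ with weight $s^{-1}$, and since $\mV_{k,a,b} \preceq n^2 \mDel_2^{k,a,b}$ while the outer pieces carry a $1/n^t$, one chooses $s \sim n/(t d \dpoly)$ so that the outer contribution becomes $\tfrac{1}{2}\,\Esch{\herm{\mF}_{k,a,b}}{2t}$.  Rearranging then gives the single-step lemma.  This self-bounding step, and the interplay between $s$ and the $1/n$ coming from averaging over the resampled coordinate, is the missing idea in your outline; it is also the true source of the $t^2$ (one $t$ from the polynomial mean-value trace inequality, one $t$ from $s^{-1} \sim t d\dpoly/n$), not two Efron--Stein prefactors.
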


Note that $\mF_{a + b, a, b} = \mD_1 \mG_{a + b, a, b} \mD_2$ where $\mD_1, \mD_2$ are diagonal matrices and $\mG_{a + b, a, b}$ is a deterministic matrix that's independent of $Z$. To analyze the expected Schatten norm of such matrices, we can resort to far simpler techniques. For instance, we can obtain a simple bound using an appropriate power of the Frobenius norm, and apply standard scalar concentration tools. We will see an example of this in \cref{sec: sparse_graph_matrices}.

\begin{remk}
	We have made no attempts to optimize the factors in front of the expectation in \cref{thm: main_general}, which we suspect can be improved. %The $C$ that arises from our calculations is $C = 288$.
\end{remk}

We prove the main theorem by repeatedly applying the following technical lemma, the proof of which we defer to the next section.

\begin{restatable}{lemma}{maingeneral}\label{lem: main_general}
	For all integers $t \ge 1$, integers $k \ge 1, a, b \ge 0$ such that $a + b < k$,
	\[\Esch{\herm{\mF}_{k, a, b}}{2t} \le (Ct^2d\dpoly^2)^t (\Esch{\herm{\mF}_{k, a, b + 1}}{2t} + \Esch{\herm{\mF}_{k, a + 1, b}}{2t})\]
	for an absolute constant $C > 0$.
\end{restatable}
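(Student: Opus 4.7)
The plan is to work with the Hermitian dilation $\herm{\mF}_{k,a,b}$ and use a ``polynomial'' version of the method of exchangeable pairs from \cite{paulin2016}, but applied to an \emph{inner kernel} for $\mG_{k,a,b}$ rather than to $\mF_{k,a,b}$ itself. The reason for this is exactly the obstruction from \cref{sec: failure_of_basic}: the outer diagonals $\mD_1,\mD_2$ contain exactly those variables $Z_i$ whose indices are already marked by the $\gamma$-bits, and if we resample one of these variables, the recursion stalls. By isolating the inner matrix $\mG_{k,a,b}$, whose entries as polynomials in the $Z_i$ only involve ``fresh'' variables (those not in $\supp(\al_1+\al_2)$, in the notation of the definition), we make sure that each resampling step genuinely reduces the degree.

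First, I would construct an explicit antisymmetric polynomial kernel $\mK(Z,Z')$ for $\mG_{k,a,b}$ in the centered basis $\{\chi_\al\}$: writing $\mG_{k,a,b} = \sum_{|\mu|_0 = k-a-b}\coef{\mG_{k,a,b}}{\mu}\chi_\mu(Z)$, one may set $\mK(Z,Z') = \sum_\mu c_\mu\, \coef{\mG_{k,a,b}}{\mu}\bigl(\chi_\mu(Z)-\chi_\mu(Z')\bigr)$ for a suitable rescaling $c_\mu$ that makes the reproducing property $\Ex{\mK(Z,Z')\mid Z} = \mG_{k,a,b}(Z)$ hold when $(Z,Z')$ is the exchangeable pair obtained by resampling a uniformly random coordinate. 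This is a direct computation using that $\Ex{\chi_\mu(Z')\mid Z}$ equals $\chi_\mu(Z)$ with a single term removed, the removed term being proportional to a $\grad$ of $\chi_\mu$. Conjugating by the diagonals gives the kernel $\mD_1\mK(Z,Z')\mD_2$, and after Hermitian dilation we obtain a kernel Stein pair for $\herm{\mF}_{k,a,b}$.

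Next, I would follow the kernel exchangeable-pair argument of \cite{paulin2016} to get the moment bound
\[
\Etr{\herm{\mF}_{k,a,b}^{2t}} \;\le\; (2t-1)^t\cdot\Etr{(s\,\mV_{\herm{\mF}_{k,a,b}} + s^{-1}\mV^{\mK})^{t}}
\]
for any $s>0$, where $\mV_{\herm{\mF}_{k,a,b}}$ is the usual conditional variance and $\mV^{\mK}$ is the kernel conditional variance built from $\mD_1\mK\mD_2$. The delicate step here is that, unlike in \cite{paulin2016}, the two terms cannot be balanced into a single clean variance proxy because of the $\mD_i$ factors. Instead, I would expand the resampling difference $\mD_1(Z)\mG(Z)\mD_2(Z) - \mD_1(Z^{(i)})\mG(Z^{(i)})\mD_2(Z^{(i)})$ via a three-term telescoping, which produces (i) a ``genuine'' inner-kernel piece, where only $\mG$ changes and the $\mD_i$ stay fixed, and (ii) two ``outer'' pieces where one $\mD_i$ changes. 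The genuine inner piece, after summing over $i$ and taking conditional expectations, yields exactly $\mD_1\mG_{k,a,b+1}\mG_{k,a,b+1}^\T\mD_1$ and $\mD_2\mG_{k,a+1,b}^\T\mG_{k,a+1,b}\mD_2$ up to $O(d\dpoly)$ combinatorial factors, by the same derivative identity used in \cref{claim: reduction}.

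The main obstacle is the two outer pieces, which contain random diagonal factors of the form $Z^{\al\cdot\gamma}-(Z^{(i)})^{\al\cdot\gamma}$ sandwiching a deterministic piece of $\mG$. Here I would invoke Jensen's operator trace inequality (\cref{lem: jensen_trace}) in exactly the manner the introduction advertises: writing each outer contribution as $\sum_i \mat A_i^\T \mB_i \mat A_i$ for $\mB_i$ a bounded function of one variable and $\sum_i \mat A_i^\T \mat A_i \preceq \mI$ after a normalization by $d\dpoly$, one moves the $(\cdot)^t$ inside the sum and controls it by scalar moment estimates on $Z_i^{2d}$ (losing only a factor $(d\dpoly)^{O(t)}$ absorbed into the constant $C$). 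Choosing $s$ to balance the inner-kernel and conditional variance contributions gives a bound of the form $(Ct^2 d \dpoly^2)^t\bigl(\Etr{(\mD_1 \mG_{k,a,b+1}\mG_{k,a,b+1}^\T \mD_1)^t} + \Etr{(\mD_2 \mG_{k,a+1,b}^\T\mG_{k,a+1,b}\mD_2)^t}\bigr)$, which is exactly $(Ct^2 d\dpoly^2)^t\bigl(\Esch{\herm{\mF}_{k,a,b+1}}{2t} + \Esch{\herm{\mF}_{k,a+1,b}}{2t}\bigr)$ after recognizing the Hermitian dilation, completing the proof.
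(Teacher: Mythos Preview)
Your overall architecture matches the paper's: an explicit inner kernel for $\mG_{k,a,b}$, the exchangeable-pair moment bound with parameter $s$, a three-term telescoping of $\herm{\mF}_{k,a,b}(Z)-\herm{\mF}_{k,a,b}(Z')$ into an ``inner'' piece (where $\mG$ changes) and two ``outer'' pieces (where a $\mD$ changes), and Jensen's operator trace inequality for the latter. However, there is a genuine gap in how you dispose of the outer pieces and choose $s$.

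The two outer pieces do \emph{not} produce $\mF_{k,a+1,b}$ or $\mF_{k,a,b+1}$: since $\mG_{k,a,b}$ is unchanged in those terms, what you recover after Jensen's trace inequality is $\Esch{\herm{\mF}_{k,a,b}}{2t}$ itself, with a prefactor $\frac{(O(d\dpoly))^t}{n^t}$ coming from the average over the resampled coordinate (this is the content of \cref{lem: bound_on_Del1} and \cref{lem: bound_on_Del3}). Thus the right-hand side of the moment inequality contains a self-referencing term $s^t\cdot\frac{(O(td\dpoly))^t}{n^t}\,\Esch{\herm{\mF}_{k,a,b}}{2t}$. Consequently, $s$ is \emph{not} chosen to ``balance the inner-kernel and conditional variance contributions'' as in \cite{paulin2016}; it is chosen so that this self-term has coefficient $\tfrac12$, after which one subtracts it and rearranges. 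Concretely, with $\rho=s/n$ one takes $\rho\approx (td\dpoly)^{-1}$; the self-term is absorbed, while the inner piece and the kernel variance (both of which \emph{do} yield $\mF_{k,a,b+1}$ and $\mF_{k,a+1,b}$, with the kernel variance carrying an extra $n^{2t}$ via \cref{lem: bounding_V_loewner}) combine to give the stated bound. This absorption step is precisely where the second factor of $t$ in $(Ct^2d\dpoly^2)^t$ comes from; a pure balancing argument would only produce one power of~$t$.

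A smaller but related point: the inner piece yields $\mF_{k,a,b+1}\mF_{k,a,b+1}^\T=\mD_1\mG_{k,a,b+1}\mD_2^2\mG_{k,a,b+1}^\T\mD_1$, not $\mD_1\mG_{k,a,b+1}\mG_{k,a,b+1}^\T\mD_1$ as you wrote. The inner $\mD_2^2$ is what absorbs the factor $(Z_i^{2l}+\EE[Z_i^{2l}])$ arising from $\EE[(Z_i^l-\resamp{Z_i}^l)^2\mid Z]$, via the $\gamma$-bits in $\cK$ (see \cref{claim: reduction_general}); this is exactly the device that the Rademacher argument of \cref{claim: reduction} lacks, and it cannot be skipped.
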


Using this lemma, we can complete the proof of the main theorem.

\begin{proof}[Proof of \cref{thm: main_general}]
	Using \cref{fact: holder}, we have $\Esch{\mX}{2t} \le \dpoly^{2t} \sum_{k = 1}^{\dpoly}\Esch{\mX_k}{2t}$. Note that for any $k \ge 1$, the matrix $\mX_k$ is a principal submatrix of $\mF_{k, 0, 0}$ with all other entries being $0$, so $\Esch{\mX_k}{2t} = \Esch{\mF_{k, 0, 0}}{2t} = \frac{1}{2} \Esch{\herm{\mF}_{k, 0, 0}}{2t}$. Therefore,
	$\Esch{\mX}{2t} \le \frac{1}{2}\dpoly^{2t} \displaystyle\sum_{k = 1}^{\dpoly}\Esch{\herm{\mF}_{k, 0, 0}}{2t}$.
	We now apply \cref{lem: main_general} repeatedly to all our terms until $k = a + b$, ultimately giving
	\[\Esch{\mX}{2t} \le \frac{1}{2}\dpoly^{2t}(Ct^2d\dpoly^2)^{(a + b)t} \sum_{a, b \ge 0, a + b \ge 1}\Esch{\herm{\mF}_{a + b, a, b}}{2t}\]
	Observing that $\Esch{\herm{\mF}_{a + b, a, b}}{2t} = 2\Esch{\mF_{a + b, a, b}}{2t}$ completes the proof.
\end{proof}

%!TEX root=main-efron-stein.tex

%%% Local Variables:
%%% mode: latex
%%% TeX-master: "main-efron-stein"
%%% End:

\section{A generalization of \cite{paulin2016} and proof of \cref{lem: main_general}}\label{sec: proof_of_general}

In this section, we will prove \cref{lem: main_general} using the high level strategy described in \cref{sec: intro}. This requires generalizing the results in \cite{paulin2016}, and the proof techniques may be of independent interest.

\subsection{Generalizing \cite{paulin2016} via explicit inner kernels}\label{sec: explicit_inner_kernels}

In our setting, observe that $(Z, Z')$ has the same distribution as $(Z', Z)$. This is what is known as an \textit{exchangeable pair} of variables, that will be extremely useful for our analysis. In particular, $Z, Z'$ have the same distribution and $\EE f(Z, Z') = \EE f(Z', Z)$ for every integrable function $f$.

\begin{definition}[Laplacian operator $\cL$]
	Define the operator $\cL$ on the space $\cS$ as
	$\cL(f)(Z) = \EE[f(Z) - f(Z') | Z]$
	for all polynomials $f \in \cS$.
\end{definition}

Note that this operator is well-defined since for any $f \in \cS$, $\EE[L(f)] = \EE[\EE[f(Z) - f(Z') | Z]] = \EE[f(Z) - f(Z')] = 0$ and hence, $L(f) \in \cS$.

\begin{lemma}\label{lem: eigenvector}
	For all $\al \in \NN^n$, $\chi_{\al}$ is an eigenvector of $\cL$ with eigenvalue $\frac{|\al|_0}{n}$.
\end{lemma}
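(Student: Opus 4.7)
The plan is to compute $\cL(\chi_\alpha)$ directly from the definitions by averaging over the random coordinate that gets resampled. Recall that $Z'$ is obtained by picking an index $i$ uniformly from $[n]$ and replacing $Z_i$ by an independent copy $\resamp{Z_i}$, and that $\chi_\alpha(Z) = \prod_{\alpha_i > 0}(Z_i^{\alpha_i} - \EE[Z_i^{\alpha_i}])$ factors as a product over coordinates.

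The first step is to expand
\[
\cL(\chi_\alpha)(Z) \;=\; \EE[\chi_\alpha(Z) - \chi_\alpha(Z') \mid Z] \;=\; \frac{1}{n}\sum_{i=1}^{n}\EE\!\left[\chi_\alpha(Z) - \chi_\alpha(Z^{(i)}) \,\middle|\, Z\right],
\]
using the fact that $i$ is chosen uniformly and then $Z' = Z^{(i)}$.

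The next step is to evaluate each inner conditional expectation by splitting into cases. If $\alpha_i = 0$ then $\chi_\alpha$ does not depend on the $i$-th coordinate at all, so $\chi_\alpha(Z^{(i)}) = \chi_\alpha(Z)$ and the contribution is zero. If $\alpha_i > 0$, then using independence of $\resamp{Z_i}$ from $Z$ and factoring the product, the conditional expectation of the $i$-th factor in $\chi_\alpha(Z^{(i)})$ is $\EE[\resamp{Z_i}^{\alpha_i} - \EE[Z_i^{\alpha_i}]] = 0$, which kills the whole product, giving $\EE[\chi_\alpha(Z^{(i)}) \mid Z] = 0$. Hence each $i$ with $\alpha_i > 0$ contributes $\chi_\alpha(Z)$ to the sum.

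Summing these contributions yields
\[
\cL(\chi_\alpha)(Z) \;=\; \frac{1}{n}\sum_{i : \alpha_i > 0} \chi_\alpha(Z) \;=\; \frac{|\alpha|_0}{n}\,\chi_\alpha(Z),
\]
which is precisely the claimed eigenvalue relation. There is no real obstacle here; the only thing to be careful about is the factoring step, which relies crucially on the independence of the coordinates of $Z$ and on the mean-zero normalization built into the definition of $\chi_\alpha$.
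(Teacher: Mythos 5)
Your proof is correct and takes essentially the same approach as the paper: expand over the uniformly chosen resampled index, split into the cases $\alpha_i = 0$ and $\alpha_i > 0$, and use independence plus the mean-zero normalization of the factors to show the latter contribute $\chi_\alpha(Z)$ each. You spell out the factoring step slightly more explicitly than the paper does, but the argument is the same.
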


\begin{proof}
	Recall that $Z'$ is obtained by choosing $i \in [n]$ uniformly at random and then setting $Z' = Z^{(i)}$. Therefore,
    $
		\cL(\chi_{\al})(Z) = \EE[\chi_{\al}(Z) - \chi_{\al}(Z') | Z]
		= \frac{1}{n}\displaystyle\sum_{i \le n} \EE[\chi_{\al}(Z) - \chi_{\al}(Z^{(i)}) | Z]
        $
	When $\al_i = 0$, $\chi_{\al}(Z) - \chi_{\al}(Z^{(i)}) = 0$. Otherwise, $\EE[\chi_{\al}(Z) - \chi_{\al}(Z^{(i)})|Z] = \chi_{\al}(Z)$. Therefore, the above expression simplifies to $\frac{|\al|_0}{n} \chi_{\al}(Z)$.
\end{proof}

\begin{theorem}[Explicit Kernel]\label{thm: explicit_kernel_for_poly}
	For any mean-centered polynomial $f \in \cS$, there exists a polynomial $K_f$ on $2n$ variables $z_1, \ldots, z_n, z_1', \ldots, z_n'$, denoted collectively as $(z, z')$, with the following properties
	\begin{enumerate}
		\item $K_f(z', z) = -K_f(z, z')$
		\item $\EE[K_f(Z, Z') | Z] = f(Z)$ where $(Z, Z')$ is the exchangeable pair we consider above.
	\end{enumerate}
\end{theorem}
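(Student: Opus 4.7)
The plan is to exploit the spectral decomposition of the Laplacian operator $\cL$ provided by Lemma 7.2, which immediately yields invertibility of $\cL$ on $\cS$. Since $\{\chi_\al \,:\, 1 \le |\al|_1 \le \dpoly\}$ forms a basis of $\cS$ by Proposition 7.1, and each $\chi_\al$ is an eigenvector of $\cL$ with \emph{nonzero} eigenvalue $|\al|_0/n \ge 1/n$, the operator $\cL$ is a diagonalizable automorphism of $\cS$, and its inverse $\cL^{-1}$ is given on the basis by $\cL^{-1}(\chi_\al) = \frac{n}{|\al|_0}\chi_\al$. In particular, $\cL^{-1}$ maps polynomials in $\cS$ to polynomials in $\cS$.

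With this in hand, for any $f \in \cS$ I would define the candidate kernel
\[
K_f(z, z') \;:=\; (\cL^{-1} f)(z) \;-\; (\cL^{-1} f)(z')\,.
\]
The construction makes antisymmetry immediate: swapping $z \leftrightarrow z'$ flips the sign. For the reproducing property, by definition of $\cL$,
\[
\Ex{K_f(Z, Z') \mid Z} \;=\; \Ex{(\cL^{-1} f)(Z) - (\cL^{-1} f)(Z') \mid Z} \;=\; \cL(\cL^{-1} f)(Z) \;=\; f(Z)\,.
\]
Finally, $K_f$ is a polynomial in the $2n$ variables $(z, z')$ because $\cL^{-1} f \in \cS$ is a polynomial in $z$ (obtained by expanding $f = \sum_\al \widehat{f}(\al)\chi_\al$ and scaling each coefficient by $n/|\al|_0$), hence $(\cL^{-1} f)(z) - (\cL^{-1} f)(z')$ is a polynomial in $(z, z')$.

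Concretely, it is instructive to record the explicit form on the basis: for each nonzero $\al$ with $|\al|_1 \le \dpoly$,
\[
K_{\chi_\al}(z, z') \;=\; \frac{n}{|\al|_0}\bigl(\chi_\al(z) - \chi_\al(z')\bigr)\,,
\]
and for $f = \sum_\al \widehat{f}(\al)\chi_\al$ one extends by linearity $K_f = \sum_\al \widehat{f}(\al) K_{\chi_\al}$. This explicit form will also be needed in subsequent sections (e.g.\ when studying how the kernel interacts with the diagonal factors $\mD_1, \mD_2$), so it is worth writing down alongside the abstract construction.

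There is no real obstacle here; the only subtle point is verifying that the operator $\cL$ is indeed invertible on the entire space $\cS$ of mean-zero polynomials of degree at most $\dpoly$, which is where the mean-zero restriction (ensuring $|\al|_0 \ge 1$, and therefore a strictly positive eigenvalue) is used crucially. Any attempt to build $K_f$ for constants would fail since $\cL$ annihilates them, so the statement is restricted to $f \in \cS$ for precisely this reason.
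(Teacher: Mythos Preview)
Your proposal is correct and follows essentially the same approach as the paper: both invert $\cL$ on $\cS$ via its diagonal action on the basis $\{\chi_\al\}$ and set $K_f(z,z') = (\cL^{-1}f)(z) - (\cL^{-1}f)(z')$, with antisymmetry immediate and the reproducing property following from $\cL(\cL^{-1}f) = f$. Your additional remarks on the explicit basis expansion and the necessity of the mean-zero restriction are accurate and mirror what the paper records elsewhere.
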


\begin{proof}
	%	We have
	%	\[\cL(f)(Z) = \sum_{\al} \coef{f}{\al} \cL(\chi_{\al})(Z) = \sum_{\al} \frac{|\al|_0}{n}\coef{f}{\al} \chi_{\al}(Z)\]
	Using \cref{propn: basis} and \cref{lem: eigenvector}, under the basis of polynomials $\chi_{\al}$, the operator $\cL$ is a diagonal matrix with nonzero diagonal entries and therefore, $\cL^{-1}$ exists and is explicitly given by
	$\icL(f)(Z) = \displaystyle\sum_{\al} \frac{n}{|\al|_0}\coef{f}{\al} \chi_{\al}(Z)$.
	We then take $K_f(z, z') = \icL(f)(z) - \icL(f)(z')$. The first condition is obvious and for the second condition, we have
	\[\EE[K_f(Z, Z')|Z] = \EE[\icL(f)(Z) - \icL(f)(Z') | Z] = \cL(\icL(f)) = f\]
\end{proof}

As seen in the proof of \cref{thm: explicit_kernel_for_poly}, $\cL$ has a well-defined inverse $\icL$. We now define the matrix $\mK_{k, a, b}$ that we call the \textit{inner kernel}.

\begin{definition}[The inner kernel matrix $\mK_{k, a, b}$]
	For integers $k \ge 1, a, b \ge 0$ such that $a + b < k$, define the matrix $\mK_{k, a, b} \in \RR[Z]^{\cI \times \cK} \times \RR[Z]^{\cJ \times \cK}$ taking $2n$ variables $(z, z') = (z_1, \ldots, z_n, z_1', \ldots, z_n')$ as input as
	$\mK_{k, a, b}(z, z') = \icL(\mG_{k, a, b})(z) - \icL(\mG_{k, a, b})(z')$.
\end{definition}

In the rest of this section except where explicitly stated, fix integers $k \ge 1, a, b \ge 0$ such that $a + b < k$. Then, the inner kernel $\mK_{k, a, b}$ is well-defined.

\begin{lemma}\label{lem: explicit_kernel_for_matrices}
	$\mK_{k, a, b}(Z, Z') = \frac{n}{k - a - b}(\mG_{k, a, b}(Z) - \mG_{k, a, b}(Z'))$
\end{lemma}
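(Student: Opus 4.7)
The plan is to leverage the key structural observation about $\mG_{k,a,b}$ that was established in the proposition immediately preceding this lemma: whenever $a + b < k$, every nonzero entry $f$ of $\mG_{k,a,b}$ has a Fourier expansion $f = \sum_\alpha \coef{f}{\alpha}\chi_\alpha$ supported entirely on multi-indices $\alpha$ with $|\alpha|_0 = k - a - b$. This means $f$ lies in a single eigenspace of $\cL$ when decomposed in the $\chi_\alpha$ basis.

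From \cref{lem: eigenvector}, $\cL$ acts on $\chi_\alpha$ by the scalar $|\alpha|_0 / n$, so its inverse $\icL$ (which exists by the argument in the proof of \cref{thm: explicit_kernel_for_poly}) acts by $n / |\alpha|_0$. Applied entry-wise to $\mG_{k,a,b}$, where every contributing $\chi_\alpha$ satisfies $|\alpha|_0 = k - a - b$, this gives
\[
\icL(\mG_{k,a,b}) \;=\; \frac{n}{k-a-b}\,\mG_{k,a,b}.
\]

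Substituting this identity into the definition $\mK_{k,a,b}(z,z') = \icL(\mG_{k,a,b})(z) - \icL(\mG_{k,a,b})(z')$ immediately yields the claimed formula. The only mild subtlety is to notice that the eigenspace argument applies entry-wise to the matrix-valued object: since $\icL$ acts on each polynomial entry independently, and every such entry is a $\frac{n}{k-a-b}$-eigenvector of $\cL$, the scalar factor pulls out uniformly across all entries.

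There is no real obstacle here; the step that does the work is the prior proposition guaranteeing that $\mG_{k,a,b}$ has homogeneous support (in the $|\alpha|_0$-grading) precisely when $a+b < k$. Without that hypothesis the eigenvalues $|\alpha|_0/n$ would vary across the support and $\icL$ would not reduce to a single scalar multiplication, so I would make sure to explicitly invoke the $a+b<k$ assumption when applying the proposition.
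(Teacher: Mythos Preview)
Your proposal is correct and follows essentially the same approach as the paper: expand each entry of $\mG_{k,a,b}$ in the $\chi_\alpha$ basis, use the proposition that the support lies entirely on $|\alpha|_0 = k - a - b$, and observe that $\icL$ therefore acts as multiplication by the single scalar $\frac{n}{k-a-b}$ before substituting into the definition of $\mK_{k,a,b}$. Your added remark that the hypothesis $a+b<k$ is what guarantees the homogeneous support is well-placed.
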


\begin{proof}
	\begin{align*}
		\mK_{k, a, b}(Z, Z') &= \cL^{-1}(\mG_{k, a, b})(Z) - \cL^{-1}(\mG_{k, a, b})(Z')\\
		&= \sum_{|\al|_0 = k - a - b} \coef{\mG_{k, a, b}}{\al} (\cL^{-1}(\chi_{\al})(Z) - \cL^{-1}(\chi_{\al})(Z'))\\
		&= \frac{n}{k - a - b}\sum_{|\al|_0 = k - a - b} \coef{\mG_{k, a, b}}{\al} (\chi_{\al}(Z) - \chi_{\al}(Z'))\\
		&= \frac{n}{k - a - b}(\mG_{k, a, b}(Z) - \mG_{k, a, b}(Z'))
	\end{align*}
\end{proof}

The following lemma postulates important properties of the the inner kernel, including how it interacts with $\mD_1$ and $\mD_2$.

\begin{lemma}\label{lem: props_of_exp_kernel_mat}
	$\mK_{k, a, b}$ satisfies the following properties
	\begin{enumerate}
		\item $\mK_{k, a, b}(z', z) = -\mK_{k, a, b}(z, z')$
		\item $\EE[\mK_{k, a, b}(Z, Z') | Z] = \mG_{k, a, b}(Z)$
		\item $(\mD_1(Z) - \mD_1(Z'))\mK_{k, a, b}(Z, Z') = \mK_{k, a, b}(Z, Z')(\mD_2(Z) - \mD_2(Z')) = 0$.
	\end{enumerate}
\end{lemma}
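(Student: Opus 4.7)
The plan is to verify the three properties separately, addressing them in order of increasing difficulty. Property 1 will follow immediately from the defining formula $\mK_{k,a,b}(z,z') = \icL(\mG_{k,a,b})(z) - \icL(\mG_{k,a,b})(z')$, since swapping $z$ and $z'$ just negates the expression. Property 2 will follow by applying the reproducing half of \cref{thm: explicit_kernel_for_poly} entrywise: each entry $\mG_{k,a,b}[(I,\al_1,\gam_1),(J,\al_2,\gam_2)]$ is a polynomial in $\cS$, because it is either zero or equal to $\grad_{\al_1+\al_2}\mX_k[I,J]$, and the preceding proposition guarantees that under the hypothesis $a+b<k$ these entries have mean zero and degree at most $\dpoly$. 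Hence the scalar identity $\EE[K_f(Z,Z')\mid Z]=f$ lifts entrywise to $\EE[\mK_{k,a,b}(Z,Z') \mid Z] = \mG_{k,a,b}(Z)$.

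Property 3 is where the substance of the argument lies; it is precisely the identity that makes the ``inner kernel'' idea go through. I would use \cref{lem: explicit_kernel_for_matrices} to rewrite $\mK_{k,a,b}(Z,Z') = \tfrac{n}{k-a-b}\bigl(\mG_{k,a,b}(Z) - \mG_{k,a,b}(Z')\bigr)$ and then analyze the product entrywise. The $((I,\al_1,\gam_1),(J,\al_2,\gam_2))$ entry of $(\mD_1(Z)-\mD_1(Z'))\mK_{k,a,b}(Z,Z')$ factors as the diagonal difference
\[
\sqrt{\EE[Z^{2\al_1\cdot(1-\gam_1)}]}\,\bigl(Z^{\al_1\cdot\gam_1}-(Z')^{\al_1\cdot\gam_1}\bigr)
\]
times $\tfrac{n}{k-a-b}\bigl(\grad_{\al_1+\al_2}\mX_k[I,J](Z)-\grad_{\al_1+\al_2}\mX_k[I,J](Z')\bigr)$. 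Since $Z'=Z^{(i)}$ differs from $Z$ only at a single coordinate $i$, the first factor can be nonzero only when $\gam_{1,i}=1$; but $\gam_1\le\al_1$ then forces $\al_{1,i}\ge 1$, which in turn means the polynomial $\grad_{\al_1+\al_2}\mX_k[I,J]$ does not depend on $Z_i$, so the second factor vanishes. The entrywise product is therefore zero, and the symmetric argument with $\mD_2$ acting on the right gives the second identity.

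The only real obstacle will be keeping track of the index bookkeeping: the set $\cK$ is designed so that $\gam$ marks a subset of the support of $\al$, and the content of Property~3 is precisely that this ``already-differentiated'' marking is compatible with the support of $\grad_{\al_1+\al_2}$ applied to entries of $\mX_k$. Once this compatibility is spelled out, the vanishing in Property~3 is just a restatement of the basic observation that differentiating a multivariate polynomial removes its dependence on the differentiated variables, and no further calculation is needed.
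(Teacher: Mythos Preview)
Your proposal is correct and matches the paper's proof essentially line for line: Property~1 is immediate from the definition, Property~2 follows entrywise from \cref{thm: explicit_kernel_for_poly} once one notes $\EE[\mG_{k,a,b}]=0$, and for Property~3 the paper also invokes \cref{lem: explicit_kernel_for_matrices}, writes out the $[(I,\al_1,\gam_1),(J,\al_2,\gam_2)]$ entry, and performs the same case split on whether $(\al_1\cdot\gam_1)_i=0$ for the resampled coordinate $i$.
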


\begin{proof}
	The first equality is obvious from the definition. For the second equality, note that $\EE[\mG_{k, a, b}] = 0$ and $\mK_{k, a, b}$ is defined by replacing each entry $f$ of $\mG_{k, a, b}$ by the kernel polynomial $K_f$ as exhibited in \cref{thm: explicit_kernel_for_poly}. Now, we prove the third equality.

	Consider the matrix $(\mD_1(Z) - \mD_1(Z'))\mK_{k, a, b}(Z, Z')$ whose $[(I, \al_1, \gam_1), (J, \al_2, \gam_2)]$ entry is given by
	\[\frac{n}{k - a - b}\sqrt{\EE[Z^{2\al_1\cdot (1 - \gam_1)}]}(Z^{\al_1 \cdot \gam_1} - (Z')^{\al_1\cdot \gam_1})(\grad_{\al_1 + \al_2}\mX_k[I, J](Z) - \grad_{\al_1 + \al_2}\mX_k[I, J](Z'))\]
	where we have used \cref{lem: explicit_kernel_for_matrices}.
	We will argue that this term is identically $0$.
	We must have $Z' = Z^{(i)}$ for some $i \le n$. If $(\al_1 \cdot \gam_1)_i = 0$, then $Z^{\al_1 \cdot \gam_1} = (Z')^{\al_1\cdot \gam_1}$ and the above term is $0$.
	Otherwise, $(\al_1 + \al_2)_i \neq 0$ and so $\grad_{\al_1 +\al_2}$ on any polynomial $f$ will only contain the terms independent of $Z_i$, in which case $\grad_{\al_1 + \al_2}\mX_k[I, J](Z) = \grad_{\al_1 + \al_2}\mX_k[I, J](Z')$. In this case was well, the above term is $0$. The proof of the other equality is analogous.
\end{proof}

The reason we call $\mK_{k, a, b}$ the inner kernel is because, as seen above, it serves as a kernel for the inner matrix $\mG$ in the decomposition $\mF = \mD\mG\mD$.
Since we will need to work with Hermitian dilations, we define $\mD = \begin{bmatrix}
	\mD_1 & 0\\
	0 & \mD_2
\end{bmatrix}$.
We will use the following basic fact extensively in our manipulations.

\begin{fact}
	For any matrix $\mA \in \RR[Z]^{\cI\times \cK} \times \RR[Z]^{\cJ\times \cK}$, $\mD \herm{\mA}\mD = \herm{\mD_1\mA\mD_2}$.
\end{fact}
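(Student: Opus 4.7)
The proof of this fact is a direct block-matrix computation, and there is no substantive obstacle. The plan is simply to expand both sides using their block structures and match them. By definition of the Hermitian dilation, $\herm{\mA} = \begin{bmatrix} 0 & \mA \\ \mA^\T & 0 \end{bmatrix}$, and by construction, $\mD = \begin{bmatrix} \mD_1 & 0 \\ 0 & \mD_2 \end{bmatrix}$. Multiplying out the triple product, the left factor $\mD$ turns the top-right block $\mA$ into $\mD_1 \mA$ and the bottom-left block $\mA^\T$ into $\mD_2 \mA^\T$, while the right factor $\mD$ then produces $\mD_1 \mA \mD_2$ in the top-right position and $\mD_2 \mA^\T \mD_1$ in the bottom-left position, with both diagonal blocks remaining zero.

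On the other hand, $\herm{\mD_1 \mA \mD_2}$ has top-right block $\mD_1 \mA \mD_2$ and bottom-left block $(\mD_1 \mA \mD_2)^\T$. The only remaining step is to observe that $\mD_1$ and $\mD_2$ are diagonal, hence symmetric, so $(\mD_1 \mA \mD_2)^\T = \mD_2^\T \mA^\T \mD_1^\T = \mD_2 \mA^\T \mD_1$. This matches the bottom-left block computed above, yielding the claimed identity. Conceptually, the fact just asserts that conjugation by the block-diagonal $\mD$ commutes with the Hermitian dilation operation; this is precisely what lets us freely pass between $\herm{\mD_1 \mG \mD_2}$ and $\mD \herm{\mG} \mD$ in the Schatten-norm manipulations of $\mF = \mD_1 \mG \mD_2$ carried out in the remainder of the section.
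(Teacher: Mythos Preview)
Your proof is correct and follows essentially the same approach as the paper: a direct block-matrix expansion of $\mD\,\herm{\mA}\,\mD$ followed by matching against $\herm{\mD_1\mA\mD_2}$. You are in fact slightly more explicit than the paper in noting that the diagonality (hence symmetry) of $\mD_1,\mD_2$ is what gives $(\mD_1\mA\mD_2)^\T = \mD_2\mA^\T\mD_1$ for the bottom-left block.
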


\begin{proof}
	We have
	\begin{align*}
		\mD \herm{\mA}\mD =
		\begin{bmatrix}
			\mD_1 & 0\\
			0 & \mD_2
		\end{bmatrix}
		\begin{bmatrix}
			0 & \mA \\
			\mA^\T & 0
		\end{bmatrix}
		\begin{bmatrix}
			\mD_1 & 0\\
			0 & \mD_2
		\end{bmatrix}
		&=
		\begin{bmatrix}
			0 & \mD_1\mA\\
			\mD_2\mA^\T  & 0
		\end{bmatrix}
		\begin{bmatrix}
			\mD_1 & 0\\
			0 & \mD_2
		\end{bmatrix}\\
		&=
		\begin{bmatrix}
			0 & \mD_1\mA\mD_2\\
			\mD_2\mA^\T\mD_1  & 0
		\end{bmatrix}\\
		&= \herm{\mD_1\mA\mD_2}
	\end{align*}
\end{proof}

We start with a generalized version of a result from \cite{paulin2016}.

\begin{lemma}\label{lem: deviation_bound}
	Let $\mK = \herm{\mK}_{k, a, b}$. For any symmetric matrix valued function $\mR$ on the variables $Z$ of the same dimensions as $\mK$, such that $\EE\norm{\mK(Z, Z')\mR(Z)}  < \infty$, we have
	\[\EE[\herm{\mF}_{k, a, b}(Z)\mR(Z)] = \frac{1}{2}\EE[\mD(Z)\mK(Z, Z')\mD(Z) (\mR(Z) - \mR(Z'))]\]
\end{lemma}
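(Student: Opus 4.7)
The plan is to mimic the standard exchangeable pair identity, but carefully track the extra diagonal factors $\mD$ that surround the inner kernel. First I will rewrite the left-hand side using the factorization $\herm{\mF}_{k,a,b}(Z) = \mD(Z)\,\herm{\mG}_{k,a,b}(Z)\,\mD(Z)$, which follows from the block-identity $\mD\herm{\mA}\mD = \herm{\mD_1\mA\mD_2}$ stated just before the lemma. Passing the Hermitian dilation through property 2 of \cref{lem: props_of_exp_kernel_mat} gives the reproducing identity $\EE[\mK(Z,Z')\mid Z] = \herm{\mG}_{k,a,b}(Z)$. Since $\mD(Z)$ and $\mR(Z)$ are $Z$-measurable, the tower property yields
\[
\EE[\herm{\mF}_{k,a,b}(Z)\mR(Z)] \;=\; \EE\bigl[\mD(Z)\,\mK(Z,Z')\,\mD(Z)\,\mR(Z)\bigr].
\]

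Next I will symmetrize using the exchangeable pair. Because $(Z,Z')$ has the same joint distribution as $(Z',Z)$,
\[
\EE\bigl[\mD(Z)\,\mK(Z,Z')\,\mD(Z)\,\mR(Z)\bigr] \;=\; \EE\bigl[\mD(Z')\,\mK(Z',Z)\,\mD(Z')\,\mR(Z')\bigr],
\]
and antisymmetry of $\mK$ (property 1 of \cref{lem: props_of_exp_kernel_mat}, lifted to the dilation) turns the right side into $-\EE[\mD(Z')\,\mK(Z,Z')\,\mD(Z')\,\mR(Z')]$. Averaging these two equal expressions gives
\[
\EE[\herm{\mF}_{k,a,b}(Z)\mR(Z)] \;=\; \tfrac{1}{2}\,\EE\bigl[\mD(Z)\,\mK(Z,Z')\,\mD(Z)\,\mR(Z) \;-\; \mD(Z')\,\mK(Z,Z')\,\mD(Z')\,\mR(Z')\bigr].
\]

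The key step—and the main place where our setting differs from the standard \cite{paulin2016} argument—is to replace $\mD(Z')\,\mK(Z,Z')\,\mD(Z')$ by $\mD(Z)\,\mK(Z,Z')\,\mD(Z)$ in the second term. This is exactly where property 3 of \cref{lem: props_of_exp_kernel_mat} is used: the identities $(\mD_1(Z)-\mD_1(Z'))\mK_{k,a,b}(Z,Z') = 0$ and $\mK_{k,a,b}(Z,Z')(\mD_2(Z)-\mD_2(Z')) = 0$ extend to the block-diagonal $\mD$ and the dilation $\mK$ to give $\mD(Z)\mK(Z,Z') = \mD(Z')\mK(Z,Z')$ and $\mK(Z,Z')\mD(Z) = \mK(Z,Z')\mD(Z')$. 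Substituting both sides of these relations into $\mD(Z')\mK(Z,Z')\mD(Z')$ shows it equals $\mD(Z)\mK(Z,Z')\mD(Z)$. Plugging this into the symmetrized identity yields
\[
\EE[\herm{\mF}_{k,a,b}(Z)\mR(Z)] \;=\; \tfrac{1}{2}\,\EE\bigl[\mD(Z)\,\mK(Z,Z')\,\mD(Z)\,(\mR(Z) - \mR(Z'))\bigr],
\]
which is the claimed identity.

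I do not expect any serious obstacle; the integrability hypothesis $\EE\norm{\mK(Z,Z')\mR(Z)} < \infty$ (together with $\mD$ being a polynomial of $Z$ and hence having all moments finite) justifies all the interchanges of expectation, conditional expectation, and the use of exchangeability. The only subtlety worth emphasizing in the write-up is that the commutation relations between $\mD$ and $\mK$ are one-sided in general but become an exact equality once sandwiched as $\mD\mK\mD$, which is precisely why the identity closes without a residual term.
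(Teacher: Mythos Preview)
Your proposal is correct and follows essentially the same approach as the paper's proof: use the reproducing property (property~2) and the tower rule to insert the inner kernel, swap $(Z,Z')$ via exchangeability, apply antisymmetry (property~1), then use property~3 to move the diagonal factors from $\mD(Z')$ back to $\mD(Z)$ before averaging the two expressions. The only cosmetic difference is that the paper applies property~3 inline within the chain of equalities and then adds the two displays, whereas you average first and then invoke property~3 on the combined expression; the content is identical.
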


\begin{proof}
	By \cref{lem: props_of_exp_kernel_mat}, we have
	\begin{align*}
		\EE[\herm{\mF}_{k, a, b}(Z)\mR(Z)] &= \EE[\mD(Z)\herm{\mG}_{k, a, b}(Z)\mD(Z) \mR(Z)]\\
		&= \EE[\mD(Z)\EE[\mK(Z, Z') | Z]\mD(Z) \mR(Z)]\\
		&= \EE[\mD(Z)\mK(Z, Z')\mD(Z) \mR(Z)]
	\end{align*}
	where the first equality follow from condition $2$ of \cref{lem: props_of_exp_kernel_mat} and the second follows from the pull-through property of expectations. Continuing,
	\begin{align*}
		\EE[\herm{\mF}_{k, a, b}(Z)\mR(Z)] &= \EE[\mD(Z)\mK(Z, Z')\mD(Z)\mR(Z)]\\
		&= \EE[\mD(Z')\mK(Z', Z)\mD(Z') \mR(Z')]\\
		&= -\EE[\mD(Z')\mK(Z, Z')\mD(Z') \mR(Z')]\\
		&= -\EE[\mD(Z)\mK(Z, Z')\mD(Z') \mR(Z')]\\
		&= -\EE[\mD(Z)\mK(Z, Z')\mD(Z) \mR(Z')]
	\end{align*}
	Here, the second equality follows from the fact that $(Z, Z')$ has the same distribution as $(Z', Z)$, so we can exchange them. The third, fourth and fifth equalities follow from conditions $1, 3, 3$ of \cref{lem: props_of_exp_kernel_mat} respectively. Adding the two displays, we get the result.
\end{proof}

\begin{definition}[Matrices $\mU_{k, a, b}, \mV_{k, a, b}$]
	We define the following matrices
	\[\mU_{k, a, b} = \EE[(\herm{\mF}_{k, a, b}(Z) - \herm{\mF}_{k, a, b}(Z'))^2|Z]\]
	\[\mV_{k, a, b} = \EE[(\mD(Z)\herm{\mK}_{k, a, b}(Z, Z')\mD(Z))^2|Z]\]
\end{definition}

The definition of $\mU_{k, a, b}$ is essentially unchanged from \cite{paulin2016}, where it is called the \textit{conditional variance}. The definition of $\mV_{k, a, b}$ is slightly different in our setting. This lets us exploit the specific product structure exhibited by $\herm{\mF}_{k, a, b}$ and the special properties of the inner kernel from \cref{lem: props_of_exp_kernel_mat}.
We will now prove a lemma which is similar to a lemma shown in \cite{paulin2016}.

\begin{lemma}\label{lem: main_pmt_bound}
	For any $s > 0$ and for any integer $t \ge 1$,
	\begin{align*}
		\Esch{\herm{\mF}_{k, a, b}}{2t}
		&\le \left(\frac{2t - 1}{4}\right)^t\Esch{s\mU_{k, a, b} + s^{-1}\mV_{k, a, b}}{t}
	\end{align*}
\end{lemma}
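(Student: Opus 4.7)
The strategy is to adapt the kernel Stein pair argument of \cite{paulin2016} to our inner-kernel setting, with \cref{lem: deviation_bound} playing the role of the standard Stein identity. Write $\mX := \herm{\mF}_{k,a,b}$, $\mA := \mD(Z)\herm{\mK}_{k,a,b}(Z,Z')\mD(Z)$, and $\mW := \mX(Z)-\mX(Z')$; all three are Hermitian in the matrix sense.

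The first step is to express $\Etr{\mX^{2t}} = \Etr{\mX\cdot\mX^{2t-1}}$ and invoke \cref{lem: deviation_bound} with the symmetric choice $\mR := \mX^{2t-1}$ (symmetric because $\mX$ is Hermitian), obtaining
\[
\Etr{\mX^{2t}} \;=\; \tfrac{1}{2}\,\Etr{\mA\bigl(\mX^{2t-1} - (\mX')^{2t-1}\bigr)}.
\]
The second step is to expand the matrix power difference via the telescoping identity $\mX^{2t-1}-(\mX')^{2t-1} \;=\; \sum_{j=0}^{2t-2}\mX^{2t-2-j}\,\mW\,(\mX')^{j}$, producing $2t-1$ traces of the form $\Etr{\mA\mX^{2t-2-j}\mW(\mX')^{j}}$. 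The third and most delicate step is to bound each of these by combinations of $\mA^2$ and $\mW^2$. This uses the trace Cauchy--Schwarz inequality after rearranging via cyclicity, followed by AM--GM with parameter $s>0$ to decouple the $\mA$ factor from the $\mW$ factor; the tower property together with the exchangeability $(Z,Z')\stackrel{d}{=}(Z',Z)$ then identifies the conditional averages of $\mA^2$ and $\mW^2$ as $\mV_{k,a,b}$ and $\mU_{k,a,b}$, respectively. After carefully symmetrizing across $j$ and $2t-2-j$, this yields the intermediate bound
\[
\Etr{\mX^{2t}} \;\le\; \tfrac{2t-1}{4}\,\Etr{\bigl(s\mU_{k,a,b}+s^{-1}\mV_{k,a,b}\bigr)\mX^{2t-2}}.
\]
The fourth and final step applies H\"older's inequality for Schatten norms (with conjugate exponents $t$ and $t/(t-1)$ on the trace, then H\"older on the expectation) to factor the right-hand side as $\tfrac{2t-1}{4}\,\Esch{s\mU+s^{-1}\mV}{t}^{1/t}\cdot\Esch{\mX}{2t}^{(t-1)/t}$; solving this self-bounding inequality for $\Esch{\mX}{2t}$ gives the stated estimate.

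The main technical obstacle is the third step. The telescoping identity produces trace terms with asymmetric, non-commuting powers of $\mX$ and $\mX'$, so a vanilla Cauchy--Schwarz does not directly produce matching $\mX^{2t-2}$ factors on both sides of the inequality. Overcoming this requires combining the antisymmetry $\mA(Z',Z) = -\mA(Z,Z')$ with the crucial commutation relations in \cref{lem: props_of_exp_kernel_mat}: specifically, $(\mD(Z)-\mD(Z'))\mK = \mK(\mD(Z)-\mD(Z'))=0$ implies the ``exchange-invariance'' $\mD(Z)\mK\mD(Z) = \mD(Z')\mK\mD(Z')$, which is what allows the $j$-th and $(2t-2-j)$-th telescoping terms to be paired and merged into a clean bound. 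This exchange-invariance, absent in the original PMT setup, is precisely the feature that makes the inner-kernel strategy compatible with the product structure $\mF = \mD\mG\mD$, and is the key departure from \cite{paulin2016}.
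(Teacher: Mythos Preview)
Your proposal is correct and follows essentially the same route as the paper. The only difference is packaging: the paper invokes the \emph{polynomial mean value trace inequality} of \cite{paulin2016} (stated here as \cref{lem: mean_value_trace_inequality}) as a black box to pass directly from $\tfrac{1}{2}\Etr{\mA(\mX^{2t-1}-(\mX')^{2t-1})}$ to $\tfrac{2t-1}{8}\Etr{(s\mW^2+s^{-1}\mA^2)(\mX^{2t-2}+(\mX')^{2t-2})}$, whereas your steps~2--3 (telescoping, Cauchy--Schwarz, AM--GM, pairing $j$ with $2t-2-j$) are exactly how that lemma is proved. The use of exchangeability together with the commutation relation from \cref{lem: props_of_exp_kernel_mat} to collapse $\mX^{2t-2}+(\mX')^{2t-2}$ into $2\mX^{2t-2}$, and the final H\"older step, are identical in both arguments.
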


To prove this, we will use the following inequality.

\begin{lemma}[Polynomial mean value trace inequality, \cite{paulin2016}]\label{lem: mean_value_trace_inequality}
	For all matrices $\mA, \mB, \mC \in \HH^d$, all integers $q \ge 1$ and all $s > 0$,
	\begin{align*}
		\tr [\mC(\mA^q - \mB^q)]| \le \frac{q}{4} \tr[(s(\mA - \mB)^2 + s^{-1}\mC^2)(\mA^{q - 1} + \mB^{q - 1})]
	\end{align*}
\end{lemma}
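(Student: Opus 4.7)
The plan is to reduce the non-commutative inequality to a scalar AM--GM step via a telescoping identity for $\mA^q-\mB^q$ followed by a trace Cauchy--Schwarz (Young) inequality. First I would expand
\[
\mA^q-\mB^q \;=\; \sum_{j=0}^{q-1}\mA^{q-1-j}(\mA-\mB)\mB^{j},
\]
and symmetrize by averaging with the mirror telescoping $\mA^q-\mB^q=\sum_{j}\mB^{q-1-j}(\mA-\mB)\mA^{j}$. Taking the trace against $\mC$ and using cyclicity, this represents $\tr[\mC(\mA^q-\mB^q)]$ as a sum of $q$ symmetric Frobenius inner products of the form $\langle \mA-\mB,\,\mC_j\rangle_{F}$, where $\mC_j$ is $\mC$ sandwiched between matched products of powers of $\mA$ and $\mB$.

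Next, I would apply the trace Young inequality $|\tr[\mX\mY^*]|\le \tfrac{s}{2}\tr[\mX\mX^*]+\tfrac{1}{2s}\tr[\mY\mY^*]$ to each summand, splitting the flanking factor $\mA^{q-1-j}\mB^{j}$ (and its reverse $\mB^{j}\mA^{q-1-j}$) into ``halves'' that decorate $\mA-\mB$ on one side and $\mC$ on the other. After symmetrizing over the two telescoping orders, this yields a per-$j$ bound of the shape
\[
\tfrac12\bigl|\tr[\mC\mA^{q-1-j}(\mA-\mB)\mB^{j}]+\tr[\mC\mB^{j}(\mA-\mB)\mA^{q-1-j}]\bigr|
\]
\[
\le\;\tfrac{s}{4}\tr\bigl[(\mA-\mB)^2\,\mK_j\bigr]+\tfrac{1}{4s}\tr\bigl[\mC^{2}\,\mK_j\bigr],\qquad \mK_j=\mA^{q-1-j}\mB^{j}+\mB^{j}\mA^{q-1-j}.
\]
Summing over $j=0,\ldots,q-1$ and bounding $\sum_{j=0}^{q-1}\mK_j$ by $\tfrac{q}{2}(\mA^{q-1}+\mB^{q-1})$ against psd test matrices delivers the claimed factor $\tfrac{q}{4}\tr[(s(\mA-\mB)^2+s^{-1}\mC^2)(\mA^{q-1}+\mB^{q-1})]$. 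The psd-test-against step uses the scalar weighted AM--GM $\lambda^{q-1-j}\mu^{j}+\mu^{q-1-j}\lambda^{j}\le \lambda^{q-1}+\mu^{q-1}$ for $\lambda,\mu\ge 0$, summed to $\tfrac{q}{2}(\lambda^{q-1}+\mu^{q-1})$.

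\textbf{Main obstacle.} The delicate point is both the apportionment of the half-powers in Step~2 and the summation bound on $\sum_{j}\mK_j$, neither of which is immediate when $\mA,\mB$ are indefinite and do not commute (fractional powers are not well defined by the functional calculus alone, and $\mA^{q-1}+\mB^{q-1}$ need not be psd when $q-1$ is odd). I would bypass these via the $2d\times 2d$ Hermitian dilation $\mA\mapsto \begin{pmatrix}0&\mA\\\mA&0\end{pmatrix}$, whose even powers are psd and whose trace-structure lifts the identities above verbatim, at the cost of a harmless factor absorbed into the constant; alternatively, a direct diagonal reduction applies Lieb's trick $|\tr[\mW\mP\mX\mQ]|^2\le \tr[\mW\mP\mW^*\mP]\tr[\mX\mQ\mX^*\mQ]$ (valid for $\mP,\mQ\succeq 0$), which converts the Cauchy--Schwarz step into a clean operator inequality once the even-power reduction places $\mA^{q-1-j}$ and $\mB^{j}$ in psd form. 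Once this technical apportionment is carried out, the scalar AM--GM summation pins down the constant $q/4$ and the proof is complete.
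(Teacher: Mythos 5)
The paper cites this lemma from Paulin--Mackey--Tropp and does not reproduce its proof, so I'm evaluating your attempt against the standard argument rather than against an in-paper one. Your high-level plan (telescope $\mA^q-\mB^q$, apply a trace Young inequality, combine with an AM--GM) resembles the right strategy, but the AM--GM step is carried out at the matrix level, and there it breaks. The claimed Loewner-order bound $\sum_{j}\mK_j \preceq \tfrac{q}{2}(\mA^{q-1}+\mB^{q-1})$ with $\mK_j=\mA^{q-1-j}\mB^j+\mB^j\mA^{q-1-j}$ is false: the scalar inequality $\lambda^{q-1-j}\mu^{j}+\mu^{q-1-j}\lambda^{j}\le\lambda^{q-1}+\mu^{q-1}$ does not lift to $\mA^{q-1-j}\mB^{j}+\mB^{j}\mA^{q-1-j}\preceq\mA^{q-1}+\mB^{q-1}$ when $\mA,\mB$ do not commute. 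A concrete psd counterexample: $q=4$, $j=1$, $\mA=\diag(10,0)$, $\mB=\tfrac12\begin{pmatrix}1&1\\1&1\end{pmatrix}$; then $\mA^{3}+\mB^{3}-(\mA^{2}\mB+\mB\mA^{2})$ has determinant $-2000<0$ so is not psd, and the full sum over $j$ fails as well. (Separately, even in the commuting case the factor must be $q$, not $q/2$: for $\mA=\mB$ one has $\sum_j\mK_j=2q\mA^{q-1}$.) The Hermitian dilation does not rescue this --- the dilation of a nonzero $\mA$ is never psd, its spectrum is symmetric about zero, so neither the half-power split nor the failed matrix AM--GM is repaired. There is also a secondary mismatch in the Young step itself: after cyclic shifts and half-power splits (which already require $\mA,\mB\succeq 0$), the trace Young inequality produces sandwiched terms such as $\tr[(\mA-\mB)\mA^{q-1-j}(\mA-\mB)\mB^{j}]$ and $\tr[\mC\mA^{q-1-j}\mC\mB^{j}]$, not the adjacent forms $\tr[(\mA-\mB)^2\mK_j]$ and $\tr[\mC^2\mK_j]$ you wrote; these are distinct because $\mA-\mB$ and $\mC$ do not commute with the decorations, and the Lieb-trick variant yields still other sandwiches whose $j$-sum does not produce the power $q-1$.

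The argument that works (and the one, as far as I can tell, that Paulin--Mackey--Tropp give) performs the AM--GM on eigenvalues, where commutativity is automatic. Write spectral resolutions $\mA=\sum_i\lambda_i\mP_i$, $\mB=\sum_\ell\mu_\ell\mQ_\ell$, set $\mD=\mA-\mB$, and note that the telescoping gives
\begin{align*}
\tr[\mC(\mA^q-\mB^q)] \;=\; \sum_{i,\ell}\Bigl(\textstyle\sum_{k=0}^{q-1}\lambda_i^{k}\mu_\ell^{q-1-k}\Bigr)\operatorname{Re}\tr[\mC\mP_i\mD\mQ_\ell].
\end{align*}
Now apply the scalar bound $\bigl|\sum_{k=0}^{q-1}\lambda^{k}\mu^{q-1-k}\bigr|\le\tfrac{q}{2}\bigl(\lambda^{q-1}+\mu^{q-1}\bigr)$ to each pair $(\lambda_i,\mu_\ell)$ (valid for all real $\lambda,\mu$ when $q-1$ is even, as in the paper's only use $q=2t-1$; the stated ``all integers $q\ge1$'' is in fact too strong, e.g.\ the scalar case $\mA=1,\mB=-3,\mC=1,q=2,s=1$ fails), together with the rank-one Young bound $|\operatorname{Re}\tr[\mC\mP_i\mD\mQ_\ell]|\le\tfrac{s}{2}\tr[\mD\mP_i\mD\mQ_\ell]+\tfrac{1}{2s}\tr[\mC\mP_i\mC\mQ_\ell]$. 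Resumming over $i,\ell$ reproduces $\tfrac{q}{4}\tr[(s(\mA-\mB)^2+s^{-1}\mC^2)(\mA^{q-1}+\mB^{q-1})]$ exactly. The essential change from your proposal is that the AM--GM is executed on commuting scalar eigenvalue products, not on the matrices $\mA,\mB$ themselves.
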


\begin{proof}[Proof of \cref{lem: main_pmt_bound}]
	We start by invoking \cref{lem: deviation_bound} by setting $\mR(Z) = \herm{\mF}_{k, a, b}^{2t - 1}(Z)$ to get
	\begin{align*}
		\Esch{\herm{\mF}_{k, a, b}}{2t} &= \EE\tr [\herm{\mF}_{k, a, b}\cdot\herm{\mF}_{k, a, b}^{2t - 1}]
		%		&= \EE\tr[\mD \herm{\mG}_{k, a, b}\mD \cdot \herm{\mF}_{k, a, b}^{2t - 1}]\\
		= \frac{1}{2}\EE[\mD(Z)\herm{\mK}_{k, a, b}(Z, Z')\mD(Z) (\herm{\mF}_{k, a, b}^{2t - 1}(Z) - \herm{\mF}_{k, a, b}^{2t - 1}(Z'))]
	\end{align*}

	Applying \cref{lem: mean_value_trace_inequality},
	\begin{align*}
		&\Esch{\herm{\mF}_{k, a, b}}{2t}\\
		&\le (\frac{2t - 1}{8})\EE\tr[(s(\herm{\mF}_{k, a, b}(Z) - \herm{\mF}_{k, a, b}(Z'))^2 + s^{-1}(\mD(Z)\herm{\mK}_{k, a, b}(Z, Z')\mD(Z))^2)(\herm{\mF}_{k, a, b}^{2t - 2}(Z) + \herm{\mF}_{k, a, b}^{2t - 2}(Z'))]\\
		&= (\frac{2t - 1}{4})\EE\tr[(s(\herm{\mF}_{k, a, b}(Z) - \herm{\mF}_{k, a, b}(Z'))^2 + s^{-1}(\mD(Z)\herm{\mK}_{k, a, b}(Z, Z')\mD(Z))^2)\herm{\mF}_{k, a, b}^{2t - 2}(Z)]
	\end{align*}
	where the last line used the fact that $(Z, Z')$ has the same distribution as $(Z', Z)$ and applied condition $3$ of \cref{lem: props_of_exp_kernel_mat}. Using the definitions of $\mU_{k, a, b}$ and $\mV_{k, a, b}$, we get
	\begin{align*}
		\Esch{\herm{\mF}_{k, a, b}}{2t} &\le \frac{2t - 1}{4}\EE\tr[(s\mU_{k, a, b} + s^{-1}\mV_{k, a, b})\herm{\mF}_{k, a, b}^{2t - 2}]\\
		&\le \frac{2t - 1}{4}\left(\Esch{s\mU_{k, a, b} + s^{-1}\mV_{k, a, b}}{t}\right)^{1/t}(\Esch{\herm{\mF}_{k, a, b}}{2t})^{(t - 1)/t}
	\end{align*}
	where we used H\"{o}lder's inequality for the trace and H\"{o}lder's inequality for the expectation. Rearranging gives the result.
\end{proof}

\subsection{Proof of \cref{lem: main_general}}

\cref{lem: main_pmt_bound} suggests that in order to bound $\Esch{\herm{\mF}_{k, a, b}}{2t}$, it suffices to bound $\Esch{\mU_{k, a, b}}{t}$ and $\Esch{\mV_{k, a, b}}{t}$. Indeed, this will be our strategy. To bound $\Esch{\mU_{k, a, b}}{t}$, we will bound it via the matrices that we define below.

\begin{definition}[Matrices $\mDel_1^{k, a, b}, \mDel_2^{k, a, b}, \mDel_3^{k, a, b}$]
	Define the matrices
	\[\mDel_1^{k, a, b} = \EE[((\mD(Z) - \mD(Z'))\herm{\mG}_{k, a, b}(Z)\mD(Z))^2|Z]\]
	\[\mDel_2^{k, a, b} = \EE[(\mD(Z)(\herm{\mG}_{k, a, b}(Z) - \herm{\mG}_{k, a, b}(Z'))\mD(Z))^2|Z]\]
	\[\mDel_3^{k, a, b} = \EE[(\mD(Z)\herm{\mG}_{k, a, b}(Z)(\mD(Z) - \mD(Z')))^2|Z]\]
\end{definition}

\begin{lemma}\label{lem: bound_U_by_Deltas}
	$\mU_{k, a, b} \preceq 3(\mDel_1^{k, a, b} + \mDel_2^{k, a, b} + \mDel_3^{k, a, b})$.
\end{lemma}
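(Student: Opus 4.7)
The plan is a three-step argument: (i) telescopically decompose $\herm{\mF}_{k,a,b}(Z) - \herm{\mF}_{k,a,b}(Z')$ into three pieces, one for each of the three factors in the product $\mD \herm{\mG}_{k,a,b} \mD$; (ii) apply a matrix Cauchy--Schwarz inequality of the form $DD^\T \preceq 3 \sum_i M_i M_i^\T$ after noting that the difference is Hermitian so $D^2 = DD^\T$; and (iii) use the commutation identities satisfied by the inner kernel (from \cref{lem: props_of_exp_kernel_mat}) to match each $M_i M_i^\T$ to the integrand of the corresponding $\Delta_i$.

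To set up the telescoping, I write $A = \mD(Z)$, $B = \mD(Z')$, $X = \herm{\mG}_{k,a,b}(Z)$, $Y = \herm{\mG}_{k,a,b}(Z')$. Then
\[
\herm{\mF}_{k,a,b}(Z) - \herm{\mF}_{k,a,b}(Z') \;=\; AXA - BYB \;=\; (A-B)XA + B(X-Y)A + BY(A-B).
\]
The crucial tool is condition 3 of \cref{lem: props_of_exp_kernel_mat} combined with \cref{lem: explicit_kernel_for_matrices}, which yields the identities $(A-B)(X-Y) = 0$ and $(X-Y)(A-B) = 0$. The first identity lets me replace $B(X-Y)$ by $A(X-Y)$, so the middle piece becomes $A(X-Y)A$, the integrand of $\Delta_2$. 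The second identity replaces $Y(A-B)$ by $X(A-B)$, and then writing $B = A - (A-B)$ decomposes the third piece as $AX(A-B) - (A-B)X(A-B)$, where $AX(A-B)$ is the $\Delta_3$ integrand. Applying matrix Cauchy--Schwarz, the first piece squared yields $(A-B)XA^2X(A-B)$ (the $\Delta_1$ integrand), and similarly for the other two pieces, producing a PSD bound of the form $D^2 \preceq 3\,(\Delta_1\text{-integrand} + \Delta_2\text{-integrand} + \Delta_3\text{-integrand})$ plus residuals.

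The main obstacle is the bookkeeping in step (iii): the naive telescoping produces an ``extra'' Hermitian residual of the form $(A-B)X(A-B)$ (carrying two $(A-B)$ factors), so that $MM^\T$ for the third piece is $BX(A-B)^2XB$ rather than the $\Delta_3$ integrand $AX(A-B)^2XA$. To close the gap with the advertised constant $3$, I would symmetrize by also expanding $D^2 = D^\T D$ using the right-sided telescoping (or, equivalently, averaging the two), which swaps outer $B$-factors for outer $A$-factors; the remaining residual $(A-B)X(A-B)$ can then be absorbed via one more Cauchy--Schwarz after using $X(A-B) = Y(A-B)$ to exploit the second commutation identity. Taking $\EE[\cdot \mid Z]$ at the end converts the pointwise operator inequality into the claimed bound $\mU_{k,a,b} \preceq 3(\mDel_1^{k,a,b} + \mDel_2^{k,a,b} + \mDel_3^{k,a,b})$.
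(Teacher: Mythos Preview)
Your overall strategy---telescope, apply the orthogonality identities, then use $(M_1+M_2+M_3)(M_1+M_2+M_3)^\T \preceq 3\sum_i M_iM_i^\T$---is exactly what the paper does. The gap is in step (iii): the identities you extract from \cref{lem: props_of_exp_kernel_mat} and \cref{lem: explicit_kernel_for_matrices}, namely $(A-B)(X-Y)=0$ and $(X-Y)(A-B)=0$, are not by themselves enough to eliminate the residual $(A-B)X(A-B)$, and your proposed ``symmetrize and absorb'' workaround is not clearly defined and would at best give a constant larger than $3$.

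What you are missing is one more orthogonality relation: $(A-B)X(A-B)=0$. This holds because when $Z'=Z^{(i)}$, a diagonal entry of $A-B$ at index $(\cdot,\alpha,\gamma)$ is nonzero only if $\alpha_i\neq 0$; but an entry of $\herm{\mG}_{k,a,b}$ at $((\cdot,\alpha_1,\gamma_1),(\cdot,\alpha_2,\gamma_2))$ is nonzero only if $\alpha_1\cdot\alpha_2=0$, which forbids $(\alpha_1)_i$ and $(\alpha_2)_i$ from both being nonzero. With this extra identity your residual vanishes and your decomposition collapses to exactly three terms $(A-B)XA + A(X-Y)A + AX(A-B)$, after which Cauchy--Schwarz gives the constant $3$ directly.

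The paper packages this as \cref{lem: orthogonality}, whose first relation $(A-B)(XA-YB)=0$ is equivalent (given $(A-B)(X-Y)=0$) to $(A-B)X(A-B)=0$. It then telescopes as $AXA-BYB = AX(A-B) + A(X-Y)B + (A-B)YB$ and uses both relations of \cref{lem: orthogonality} to convert the second and third pieces into $A(X-Y)A$ and $(A-B)XA$ respectively, arriving at the same clean three-term sum without ever producing a residual.
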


To prove this lemma, we will use the following lemma.

\begin{lemma}\label{lem: orthogonality}
	We have the relations
	\[(\mD(Z) - \mD(Z'))(\herm{\mG}_{k, a, b}(Z)\mD(Z) - \herm{\mG}_{k, a, b}(Z')\mD(Z')) = 0\]
	\[(\herm{\mG}_{k, a, b}(Z) - \herm{\mG}_{k, a, b}(Z'))(\mD(Z) - \mD(Z')) = 0\]
\end{lemma}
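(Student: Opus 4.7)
The plan is to reduce everything to the two scalar orthogonality relations
\[
(\mD_1(Z) - \mD_1(Z'))(\mG_{k,a,b}(Z) - \mG_{k,a,b}(Z')) = 0, \qquad (\mG_{k,a,b}(Z) - \mG_{k,a,b}(Z'))(\mD_2(Z) - \mD_2(Z')) = 0,
\]
which follow immediately from condition 3 of \cref{lem: props_of_exp_kernel_mat} together with the explicit formula $\mK_{k,a,b}(Z,Z') = \frac{n}{k-a-b}(\mG_{k,a,b}(Z) - \mG_{k,a,b}(Z'))$ in \cref{lem: explicit_kernel_for_matrices}. Once these are in hand, the identities on $\herm{\mG}$ follow by passing to $2\times 2$ block form, since $\Delta\herm{\mG} \cdot \Delta\mD$ has the off-diagonal block $\Delta\mG\,\Delta\mD_2$, and $\Delta\mD\cdot\Delta\herm{\mG}$ has the off-diagonal block $\Delta\mD_1\,\Delta\mG$, both of which vanish. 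Similarly $\Delta\mG^\T \Delta\mD_1$ and $\Delta\mD_2\Delta\mG^\T$ vanish by transposition.

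To prove the first equality of the lemma, I will telescope:
\[
\herm{\mG}_{k,a,b}(Z)\mD(Z) - \herm{\mG}_{k,a,b}(Z')\mD(Z') \;=\; \bigl(\herm{\mG}_{k,a,b}(Z) - \herm{\mG}_{k,a,b}(Z')\bigr)\mD(Z) + \herm{\mG}_{k,a,b}(Z')\bigl(\mD(Z) - \mD(Z')\bigr).
\]
Left-multiplying by $\mD(Z) - \mD(Z')$ gives two terms. The first is $(\Delta\mD)(\Delta\herm{\mG})\mD(Z)$, which vanishes by the block calculation above. The second is $(\Delta\mD)\,\herm{\mG}_{k,a,b}(Z')\,(\Delta\mD)$, and this is the only step that requires real work.

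For that second term, the key observation is entry-level: recall $Z' = Z^{(i)}$ for a single random $i$, and from the definition of $\mD_1, \mD_2$ the $i$-th diagonal entry of $\Delta\mD_1$ at row $(I,\al_1,\gam_1)$ is nonzero only when $(\al_1\cdot\gam_1)_i \neq 0$, hence in particular $\al_1(i) > 0$; similarly $\Delta\mD_2$ at $(J,\al_2,\gam_2)$ forces $\al_2(i) > 0$. But an entry $\mG_{k,a,b}[(I,\al_1,\gam_1),(J,\al_2,\gam_2)]$ is nonzero only when $\al_1 \cdot \al_2 = 0$, which contradicts $\al_1(i), \al_2(i) > 0$. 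So the sandwiched block $\Delta\mD_1\,\mG_{k,a,b}(Z')\,\Delta\mD_2$ is zero entry-wise, and by transposition so is $\Delta\mD_2\,\mG_{k,a,b}(Z')^\T\,\Delta\mD_1$; hence $(\Delta\mD)\herm{\mG}_{k,a,b}(Z')(\Delta\mD) = 0$.

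The second equality is proved symmetrically, telescoping as $\herm{\mG}_{k,a,b}(Z) - \herm{\mG}_{k,a,b}(Z') = \Delta\herm{\mG}$ and right-multiplying by $\Delta\mD$; the block form reduces it directly to $\Delta\mG\,\Delta\mD_2 = 0$ and $\Delta\mG^\T \Delta\mD_1 = 0$, with no second term to control. The main obstacle is thus isolating the cross term $(\Delta\mD)\herm{\mG}(Z')(\Delta\mD)$ in the first identity; after the preceding analysis it is handled by the disjoint-support constraint $\al_1\cdot\al_2 = 0$ built into $\mG_{k,a,b}$.
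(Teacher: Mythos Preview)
Your argument is correct. The paper's own proof is only a sketch: it argues directly that whenever a diagonal entry of $\mD(Z)-\mD(Z')$ at index $(I,\al,\gam)$ is nonzero (forcing $(\al\cdot\gam)_i\neq 0$, hence $\al_i\neq 0$), the entire corresponding row of $\herm{\mG}_{k,a,b}(Z)\mD(Z) - \herm{\mG}_{k,a,b}(Z')\mD(Z')$ vanishes, since both the $\grad_{\al+\al'}$ entries of $\mG$ and the diagonal entries of $\mD$ at column index $\al'$ (with $\al\cdot\al'=0$ forcing $\al'_i=0$) are independent of $Z_i$.

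You reach the same conclusion by a slightly different organization: you telescope the second factor, kill the $(\Delta\mD)(\Delta\herm{\mG})\mD(Z)$ piece by invoking \cref{lem: props_of_exp_kernel_mat} and \cref{lem: explicit_kernel_for_matrices}, and then handle the remaining sandwich $(\Delta\mD)\herm{\mG}_{k,a,b}(Z')(\Delta\mD)$ by the disjoint-support constraint $\al_1\cdot\al_2=0$. This is a cleaner and more modular write-up than the paper's sketch, recycling earlier lemmas where possible, but the underlying entry-level observations are identical.
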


\begin{proof}[Proof sketch]
	The proof is similar to the proof of third equality in \cref{lem: props_of_exp_kernel_mat}. When $Z'$ is set to $Z^{(i)}$ for some $i \le n$, when a diagonal entry of $\mD(Z) - \mD(Z')$ is nonzero, then the corresponding row of $\herm{\mG}_{k, a, b}(Z)\mD(Z) - \herm{\mG}_{k, a, b}(Z')\mD(Z')$ will be $0$. The second equality is analogous.
\end{proof}

\begin{proof}[Proof of \cref{lem: bound_U_by_Deltas}]
	We have
	{\footnotesize
	\begin{align*}
		&(\herm{\mF}_{k, a, b}(Z) - \herm{\mF}_{k, a, b}(Z'))^2\\
		&= (\mD(Z)\herm{\mG}_{k, a, b}(Z)\mD(Z) - \mD(Z')\herm{\mG}_{k, a, b}(Z')\mD(Z'))^2\\
		&= \bigg(\mD(Z)\herm{\mG}_{k, a, b}(Z)(\mD(Z) - \mD(Z')) + \mD(Z)(\herm{\mG}_{k, a, b}(Z) - \herm{\mG}_{k, a, b}(Z'))\mD(Z') + (\mD(Z) - \mD(Z'))\herm{\mG}_{k, a, b}(Z')\mD(Z')\bigg)^2\\
		&= \bigg(\mD(Z)\herm{\mG}_{k, a, b}(Z)(\mD(Z) - \mD(Z')) + \mD(Z)(\herm{\mG}_{k, a, b}(Z) - \herm{\mG}_{k, a, b}(Z'))\mD(Z) + (\mD(Z) - \mD(Z'))\herm{\mG}_{k, a, b}(Z)\mD(Z)\bigg)^2
	\end{align*}
	}
	where the last equality follows from \cref{lem: orthogonality}. Taking expectations conditioned on $Z$ and applying \cref{fact: cs}, we immediately get $\mU_{k, a, b} \preceq 3(\mDel_1^{k, a, b} + \mDel_2^{k, a, b} + \mDel_3^{k, a, b})$.
\end{proof}

In subsequent sections, we will prove the following technical bounds on the matrices we have considered so far.

\begin{restatable}{lemma}{boundDelTwo}\label{lem: bound_on_Del2}
	For all integers $t \ge 1$, $\Esch{\mDel_2^{k, a, b}}{t} \le \frac{(2\dpoly)^t}{n^t} (\Esch{\herm{\mF}_{k, a, b + 1}}{2t} + \Esch{\herm{\mF}_{k, a + 1, b}}{2t})$.
\end{restatable}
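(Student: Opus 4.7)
The plan is to reduce the bound on $\Esch{\mDel_2^{k,a,b}}{t}$ to bounds involving the simpler polynomial matrices $\mF_{k,a,b+1}$ and $\mF_{k,a+1,b}$, with the $n^{-t}$ scaling ultimately tracing back to the $\tfrac{n}{k-a-b}$ factor in the explicit kernel of \cref{lem: explicit_kernel_for_matrices}. First I would unfold $Z' = Z^{(i)}$ with $i$ uniform in $[n]$ to express $\mDel_2^{k,a,b} = \tfrac{1}{n}\sum_{i=1}^{n}\EE_{\tilde Z_i}[\mA_i^2 \mid Z]$, where $\mA_i = \herm{\mB_i}$ is the Hermitian dilation of $\mB_i := \mD_1(Z)(\mG_{k,a,b}(Z) - \mG_{k,a,b}(Z^{(i)}))\mD_2(Z)$. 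Combining \cref{fact: holder} on the $n$ PSD summands with Jensen's inequality for the convex trace function $X \mapsto \tr X^t$ on PSD matrices yields $\Esch{\mDel_2^{k,a,b}}{t} \le \tfrac{1}{n}\sum_i \Esch{\mA_i}{2t}$, reducing the problem to controlling each $\mB_i$.

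Next I would derive a clean structural identity expressing $\mB_i$ as a column-restricted difference of $\mF_{k,a,b+1}$. Expanding in the $\chi$-basis and using the fact that $\chi_\beta(Z) - \chi_\beta(Z^{(i)}) = (Z_i^{\beta_i} - \tilde Z_i^{\beta_i})\chi_{\beta - \beta_i e_i}(Z)$ whenever $\beta_i > 0$, a direct computation shows that entries with $(\al_1 + \al_2)_i = 0$ satisfy
\begin{equation*}
(\mG_{k,a,b}(Z) - \mG_{k,a,b}(Z^{(i)}))[(I,\al_1,\gam_1),(J,\al_2,\gam_2)] = \sum_{v=1}^{d}(Z_i^v - \tilde Z_i^v)\,\grad_{\al_1+\al_2+ve_i}\mX_k[I,J](Z).
\end{equation*}
Sandwiching with $\mD_1(Z)$ and $\mD_2(Z)$ and identifying each summand on the right with the $(J,\al_2+ve_i,\gam_2+e_i)$-column of $\mF_{k,a,b+1}(Z) - \mF_{k,a,b+1}(Z^{(i)})$ gives a matrix identity $\mB_i = (\mF_{k,a,b+1}(Z) - \mF_{k,a,b+1}(Z^{(i)}))\mS_i^{\mathsf{col}}$ for a deterministic selection-and-summation matrix $\mS_i^{\mathsf{col}}$ satisfying $\mS_i^{\mathsf{col}}(\mS_i^{\mathsf{col}})^\T \preceq d_p\,\mI$ (each column of $\mB_i$ aggregates at most $d_p$ columns of the difference). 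By symmetry, placing the index $ve_i$ on the row rather than the column produces the parallel identity $\mB_i = \mS_i^{\mathsf{row}}(\mF_{k,a+1,b}(Z) - \mF_{k,a+1,b}(Z^{(i)}))$, which is the origin of the second term in the lemma statement. The block structure $\mA_i^2 = \mB_i\mB_i^\T \oplus \mB_i^\T\mB_i$ then permits using the column identity for the first block and the row identity for the second, each absorbing a factor of $d_p^t$ from the selection-matrix operator norm.

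The main obstacle is extracting the $n^{-t}$ factor in the final bound: the naive per-$i$ triangle inequality bounding $\Esch{\mF_{k,a,b+1}(Z) - \mF_{k,a,b+1}(Z^{(i)})}{2t}$ by a constant times $\Esch{\mF_{k,a,b+1}}{2t}$ would be too weak by a full factor of $n^t$. The sharper route is to recognize via \cref{lem: explicit_kernel_for_matrices} that $\mDel_2^{k,a,b} = \tfrac{(k-a-b)^2}{n^2}\mV_{k,a,b}$, which built-in provides two of the needed $n$ factors; the remaining $n^{t-1}$ savings come from an Efron--Stein-style aggregation exploiting the degree-$d_p$ polynomial structure of $\mF_{k,a,b+1}$ and the near-orthogonality of the column blocks selected by $\mS_i^{\mathsf{col}}$ for different $i$. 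Combining the column and row contributions then yields the stated $\tfrac{(2d_p)^t}{n^t}$ bound.
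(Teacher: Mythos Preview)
Your proposal has a genuine gap in extracting the crucial $n^{-t}$ factor, and the fix you sketch does not close it.

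The first step---applying \cref{fact: holder} to the $n$ PSD summands and then Jensen---is correct but costly: it gives only
\[
\Esch{\mDel_2^{k,a,b}}{t} ~\le~ \frac{1}{n}\sum_{i=1}^n \Esch{\mA_i}{2t},
\]
with a prefactor $1/n$, not $1/n^t$. Your structural identity $\mB_i = (\mF_{k,a,b+1}(Z)-\mF_{k,a,b+1}(Z^{(i)}))\,\mS_i^{\mathrm{col}}$ is correct, but it only yields $\Esch{\mA_i}{2t} \le (Cd_p)^t\,\Esch{\mF_{k,a,b+1}}{2t}$ for each $i$, after which summing over $i$ produces a bound of order $(Cd_p)^t\,\Esch{\mF_{k,a,b+1}}{2t}$---off by exactly $n^t$. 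The proposed rescue via $\mDel_2^{k,a,b} = \tfrac{(k-a-b)^2}{n^2}\mV_{k,a,b}$ is circular: in the paper $\mV_{k,a,b}$ is bounded \emph{using} $\mDel_2^{k,a,b}$ (\cref{lem: bounding_V_loewner} followed by \cref{lem: bound_on_Del2}), so invoking it here assumes what you are trying to prove. The remaining appeal to an ``Efron--Stein-style aggregation'' and ``near-orthogonality of the column blocks'' is not a proof; there is no mechanism in your outline that recovers the missing $n^{t-1}$.

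The paper avoids this loss entirely by never splitting over $i$ at the Schatten-norm level. Instead, it establishes a \emph{pointwise PSD bound} on $\mDel_2^{k,a,b}$ itself. Writing $\mDel_2^{k,a,b}$ as a block-diagonal with blocks $\EE[\mM\mM^\T \mid Z]$ and $\EE[\mM^\T\mM \mid Z]$, it expands
\[
\EE[\mM\mM^\T \mid Z] ~=~ \frac{1}{n}\sum_{i=1}^n\sum_{l=1}^d \EE[(Z_i^l - \tilde Z_i^l)^2 \mid Z]\cdot \mN_{i,l}\mN_{i,l}^\T
~\preceq~ \frac{2}{n}\sum_{i,l}(Z_i^{2l}+\EE Z_i^{2l})\,\mN_{i,l}\mN_{i,l}^\T,
\]
where $\mN_{i,l} = \mD_1(\grad_{\mathbf e_{i,l}}\mG_{k,a,b})\mD_2$. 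The key step is an \emph{exact combinatorial identity} (\cref{claim: reduction_general}):
\[
\sum_{i=1}^n\sum_{l=1}^d (Z_i^{2l}+\EE Z_i^{2l})\,\mN_{i,l}\mN_{i,l}^\T ~=~ (b+1)\,\mF_{k,a,b+1}\mF_{k,a,b+1}^\T.
\]
Here the sum over $i,l$ does not blow up but exactly \emph{reconstructs} $\mF_{k,a,b+1}\mF_{k,a,b+1}^\T$, because the two choices $\gam_i \in \{0,1\}$ in the extra column index $(\al_2 + l e_i,\gam_2 + \gam_i e_i)$ correspond precisely to the two terms $Z_i^{2l}$ and $\EE[Z_i^{2l}]$ in the diagonal entry of $\mD_2^2$. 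This yields $\EE[\mM\mM^\T \mid Z] \preceq \tfrac{2d_p}{n}\,\mF_{k,a,b+1}\mF_{k,a,b+1}^\T$ directly, so the $1/n$ survives when raised to the $t$-th power. Your approach lacks any analogue of this identity, which is where the argument actually lives.
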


\begin{restatable}{lemma}{boundV}\label{lem: bound_on_V}
	For all integers $t \ge 1$, $\Esch{\mV_{k, a, b}}{t} \le (2\dpoly)^tn^t (\Esch{\herm{\mF}_{k, a, b + 1}}{2t} + \Esch{\herm{\mF}_{k, a + 1, b}}{2t})$.
\end{restatable}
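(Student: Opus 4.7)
The plan is to reduce the bound on $\mV_{k,a,b}$ directly to the bound on $\mDel_2^{k,a,b}$ from Lemma \ref{lem: bound_on_Del2} via the explicit formula for the inner kernel.

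First, recall from Lemma \ref{lem: explicit_kernel_for_matrices} that
\[
\mK_{k,a,b}(Z,Z') ~=~ \frac{n}{k-a-b}\inparen{\mG_{k,a,b}(Z) - \mG_{k,a,b}(Z')},
\]
and this relation passes through the Hermitian dilation, so
\[
\herm{\mK}_{k,a,b}(Z,Z') ~=~ \frac{n}{k-a-b}\inparen{\herm{\mG}_{k,a,b}(Z) - \herm{\mG}_{k,a,b}(Z')}.
\]
Substituting into the definition of $\mV_{k,a,b}$, I obtain
\[
\mV_{k,a,b} ~=~ \frac{n^2}{(k-a-b)^2}\cdot\EE\insquare{\inparen{\mD(Z)\inparen{\herm{\mG}_{k,a,b}(Z) - \herm{\mG}_{k,a,b}(Z')}\mD(Z)}^2 \given Z} ~=~ \frac{n^2}{(k-a-b)^2}\cdot \mDel_2^{k,a,b}.
\]

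Taking Schatten $t$-norms (which scale by the $t$-th power of a scalar factor) and applying Lemma \ref{lem: bound_on_Del2} gives
\[
\Esch{\mV_{k,a,b}}{t} ~=~ \frac{n^{2t}}{(k-a-b)^{2t}}\cdot \Esch{\mDel_2^{k,a,b}}{t}
~\le~ \frac{n^{2t}}{(k-a-b)^{2t}}\cdot \frac{(2\dpoly)^t}{n^t}\inparen{\Esch{\herm{\mF}_{k, a, b + 1}}{2t} + \Esch{\herm{\mF}_{k, a + 1, b}}{2t}}.
\]
Since $k - a - b \ge 1$ by the hypothesis $a + b < k$, the factor $(k-a-b)^{-2t}$ is at most $1$, and the bound simplifies to the claimed
\[
\Esch{\mV_{k,a,b}}{t} ~\le~ (2\dpoly)^t n^t \inparen{\Esch{\herm{\mF}_{k, a, b + 1}}{2t} + \Esch{\herm{\mF}_{k, a + 1, b}}{2t}}.
\]

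There is no real obstacle here: once the explicit inner kernel formula is invoked, the bound on $\mV_{k,a,b}$ is just $\frac{n^2}{(k-a-b)^2}$ times the bound on $\mDel_2^{k,a,b}$, and the factor of $n^{2t}$ combines with the $n^{-t}$ savings already present in Lemma \ref{lem: bound_on_Del2} to produce exactly the $n^t$ factor appearing in the statement. Thus the entire work of the lemma was already carried out in proving Lemma \ref{lem: bound_on_Del2}; this lemma is its direct corollary via the identity $\mV_{k,a,b} = \frac{n^2}{(k-a-b)^2}\mDel_2^{k,a,b}$.
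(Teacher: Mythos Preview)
Your proof is correct and follows essentially the same approach as the paper: the paper first isolates the step $\mV_{k,a,b} \preceq n^2\,\mDel_2^{k,a,b}$ as a separate lemma (using the same explicit-kernel identity you invoke, dropping the $(k-a-b)^{-2}$ factor via $k-a-b \ge 1$) and then applies Lemma~\ref{lem: bound_on_Del2}. You do the same computation inline, even noting the exact equality $\mV_{k,a,b} = \frac{n^2}{(k-a-b)^2}\mDel_2^{k,a,b}$ before discarding the denominator, which is a slightly sharper intermediate statement but leads to the identical final bound.
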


\begin{restatable}{lemma}{boundDelOne}\label{lem: bound_on_Del1}
	For all integers $t \ge 1$, $\Esch{\mDel_1^{k, a, b}}{t} \le \frac{(8d\dpoly)^t}{n^t}\Esch{\herm{\mF}_{k, a, b}}{2t}$.
\end{restatable}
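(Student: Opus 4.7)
The plan is to decompose $\mDel_1^{k,a,b}$ over resampled coordinates and relate each piece to $\herm{\mF}_{k,a,b}^{2} = \mD\mR\mD$, where $\mR := \herm{\mG}_{k,a,b}(Z)\mD(Z)^{2}\herm{\mG}_{k,a,b}(Z)$ is PSD. Writing $Z' = Z^{(i)}$ for $i$ uniform in $[n]$ with resampled coordinate $\widetilde{Z_i}$, one has $\mDel_1^{k,a,b} = \frac{1}{n}\sum_{i=1}^{n}\EE_{\widetilde{Z_i}}[\mD^{(i)}\mR\mD^{(i)}]$, where $\mD^{(i)} := \mD(Z) - \mD(Z^{(i)})$ is a sparse diagonal matrix that is nonzero only on indices $(I,\alpha,\gamma)\in(\cI\cup\cJ)\times\cK$ with $\alpha_i\gamma_i>0$. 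This sparsity is the source of the $n^{-t}$ gain.

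Next, I would apply the convexity of $\tr(\cdot)^{t}$ on the PSD cone (via \cref{fact: holder}) and Jensen's inequality to move the average over $i$ and the expectation $\EE_{\widetilde{Z_i}}$ outside the $t$-th power, obtaining
\begin{align*}
\Esch{\mDel_1^{k,a,b}}{t} \le \frac{1}{n}\sum_{i=1}^{n}\EE\,\tr\bigl[(\mD^{(i)}\mR\mD^{(i)})^{t}\bigr].
\end{align*}
Since $\mD^{(i)}\mR\mD^{(i)}$ is PSD (it equals $|\mD^{(i)}|(\mat{S}_i\mR\mat{S}_i)|\mD^{(i)}|$ for the diagonal sign matrix $\mat{S}_i$), the Araki--Lieb--Thirring inequality gives $\tr[(\mD^{(i)}\mR\mD^{(i)})^{t}] \le \tr[\mR^{t}(\mD^{(i)})^{2t}]$. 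The diagonal structure of $(\mD^{(i)})^{2t}$ then reduces everything to the entrywise bound $\EE_{\widetilde{Z_i}}[(\mD^{(i)}[a,a])^{2t}]$.

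For the entrywise moment, $\mD^{(i)}[a,a] = \sqrt{\EE[Z^{2\alpha(1-\gamma)}]}\,Y_a(Z_i^{c}-\widetilde{Z_i}^{c})$ with $c := \alpha_{a,i}\gamma_{a,i}\le d$ and $Y_a := \prod_{j\neq i}Z_j^{\alpha_j\gamma_j}$, so the elementary inequality $(u-v)^{2t}\le 2^{2t-1}(u^{2t}+v^{2t})$ yields
\begin{align*}
\EE_{\widetilde{Z_i}}\bigl[(\mD^{(i)}[a,a])^{2t}\bigr] \le 2^{2t}\bigl(\mD(Z)^{2t}[a,a] + (\EE[Z^{2\alpha(1-\gamma)}])^{t}Y_a^{2t}\EE[Z_i^{2ct}]\bigr).
\end{align*}
The first term contributes $\mD(Z)^{2t}[a,a]$ directly; the second term, using moment bounds of the form $\EE[Z_i^{2c}]\le d^{c}$ for the distribution class under consideration (absorbing sub-Gaussian constants into the factor $d$), is majorized by a $d^{t}$ multiple of the diagonal entry of $\mD(Z)^{2t}$ at the shifted index $a' := (I,\alpha,\gamma-\mat{e}_i)$. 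Summing over $i$ picks up at most $|\supp(\alpha\gamma)|\le\dpoly$ nonzero contributions.

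Finally, I would close the loop by invoking Jensen's operator trace inequality (\cref{lem: jensen_trace}), with the $\mat{A}_i$'s chosen to respect the block decomposition of $\mD$ over $\cK$, to convert $\tr[\mR^{t}\mD(Z)^{2t}]$ back into $\tr[(\mD\mR\mD)^{t}] = \tr[\herm{\mF}^{2t}]$ while absorbing an extra $\dpoly^{t}$ factor from the polynomial structure of the indices. Assembling the pieces yields the claim with constant $8$. The \textbf{main obstacle} is precisely this last reverse step: the Araki--Lieb--Thirring inequality runs the wrong way, giving $\tr[(\mD\mR\mD)^{t}]\le\tr[\mR^{t}\mD^{2t}]$, so one cannot invert it naively. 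Overcoming this requires a careful application of Jensen's operator trace inequality together with the moment bookkeeping on $Z_i$, exploiting the product structure $\cK = \{(\alpha,\gamma):|\alpha|_1\le\dpoly,\,\gamma\le\alpha\}$ so that the constants remain polynomial in $d$ and $\dpoly$ and do not reintroduce any $n$-dependence.
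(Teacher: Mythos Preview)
Your proposal has a genuine gap at the very first analytic step, and it cannot be repaired within the outline you give.

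When you pull the average over $i\in[n]$ (and over $\widetilde{Z_i}$) outside the $t$-th power via convexity of $\tr[(\cdot)^t]$, you obtain
\[
\Esch{\mDel_1^{k,a,b}}{t} \;\le\; \frac{1}{n}\sum_{i=1}^{n}\EE\,\tr\bigl[(\mD^{(i)}\mR\mD^{(i)})^{t}\bigr],
\]
with only a single factor of $1/n$, not $1/n^t$. The sparsity you invoke---that each diagonal entry of $\mD^{(i)}$ is nonzero for at most $\dpoly$ values of $i$---can at best yield $\sum_i(\mD^{(i)})^{2t} \preceq \dpoly\cdot(\text{something of size }\mD^{2t})$, so after Araki--Lieb--Thirring and summing you end up with a bound of the form $\frac{C^t\dpoly}{n}\cdot\tr[\mR^{t}\mD^{2t}]$. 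That is off from the target by a factor of $n^{t-1}$, which nothing downstream can recover. The ``reverse'' step you flag as the main obstacle is also genuinely blocked: both Araki--Lieb--Thirring and \cref{lem: jensen_trace} give $\tr[(\mD\mR\mD)^t]\le\tr[\mR^t\mD^{2t}]$, never the other direction, so there is no way to return to $\Esch{\herm{\mF}_{k,a,b}}{2t}$.

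The paper avoids this loss by never splitting the sum over $i$. It first inserts diagonal projectors $\mPi_i$ (onto indices with $(\alpha\cdot\gamma)_i\neq 0$) via $\mD(Z)-\mD(Z^{(i)})=\mPi_i(\mD(Z)-\mD(Z^{(i)}))$, then uses $(A-B)^2\preceq 2(A^2+B^2)$ to separate the $\mD(Z)$ and $\mD(Z^{(i)})$ contributions, yielding $\mDel_1^{k,a,b}\preceq 2(\mDel_{10}+\mDel_{11})$ with $\mDel_{10}=\frac{1}{n}\sum_i\mPi_i\herm{\mF}_{k,a,b}^2\mPi_i$. The crucial point is that $\tr[\mDel_{10}^t]=\frac{1}{n^t}\tr[(\sum_i\mPi_i\herm{\mF}^2\mPi_i)^t]$ keeps the full $n^{-t}$, and then \cref{lem: jensen_trace} applied to the \emph{entire} sum (with $\mA_i=\dpoly^{-1/2}\mPi_i$, using $\sum_i\mPi_i^2\preceq\dpoly\mI$) costs only $\dpoly^t$, not $n^t$. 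For $\mDel_{11}$, where $\mD(Z^{(i)})$ appears, a permutation $\mSig_i$ swapping $\gamma_i\leftrightarrow 1-\gamma_i$ converts $\mPi_{i,l}\mD(Z^{(i)})^2\mPi_{i,l}$ into $\mPi_{i,l}\mSig_i\mD(Z)^2\mSig_i^\T\mPi_{i,l}$, and the same Jensen argument applies with $\mA_{i,l}=\dpoly^{-1/2}\mPi_{i,l}\mSig_i$. Your ``shifted index $a'=(I,\alpha,\gamma-\mat{e}_i)$'' idea is in spirit close to this permutation trick, but it has to be deployed \emph{before} any Jensen splitting over $i$, not after.
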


\begin{restatable}{lemma}{boundDelThree}\label{lem: bound_on_Del3}
	For all integers $t \ge 1$, $\Esch{\mDel_3^{k, a, b}}{t} \le \frac{(4\dpoly)^t}{n^t}\Esch{\herm{\mF}_{k, a, b}}{2t}$.
\end{restatable}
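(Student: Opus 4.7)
The plan is to write $\mDel_3^{k,a,b} = \mD\herm{\mG}_{k,a,b}\mE\herm{\mG}_{k,a,b}\mD$, where $\mE := \EE[(\mD(Z) - \mD(Z'))^2 \mid Z]$ is a diagonal positive semidefinite matrix (since $\mD$ itself is diagonal). The first observation is that $\mD(Z) - \mD(Z^{(i)})$ is supported only on diagonal entries $(I,\al,\gam)$ with $\gam_i = 1$, because the entries of $\mD$ at positions with $\gam_i = 0$ do not depend on $Z_i$. Consequently, only $|\gam|_0$ out of $n$ resampling coordinates can possibly contribute at each index, which is the source of the crucial factor $|\gam|_0/n \le \dpoly/n$.

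Applying the elementary estimate $(a-b)^2 \le 2a^2 + 2b^2$ entrywise together with the identity $\EE_{\tilde Z_i}[\mD(Z^{(i)})^2][I,\al,\gam] = \mD^2[I,\al,\gam - \mathbf{e}_i]$ (valid for $\gam_i = 1$), I obtain the Loewner decomposition $\mE \preceq \mE_1 + \mE_2$. Here $\mE_1[I,\al,\gam] = \frac{2|\gam|_0}{n}\mD^2[I,\al,\gam]$ satisfies $\mE_1 \preceq \frac{2\dpoly}{n}\mD^2$, while the ``shifted'' part $\mE_2[I,\al,\gam] = \frac{2}{n}\sum_{i:\gam_i=1}\mD^2[I,\al,\gam-\mathbf{e}_i]$ involves $\mD^2$ evaluated at the perturbed index $\gam - \mathbf{e}_i$. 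For the $\mE_1$ contribution, conjugation by $\herm{\mG}\mD$ immediately yields $\mD\herm{\mG}\mE_1\herm{\mG}\mD \preceq \frac{2\dpoly}{n}\herm{\mF}^2$, and therefore $\tr[(\mD\herm{\mG}\mE_1\herm{\mG}\mD)^t] \le (\frac{2\dpoly}{n})^t\tr[\herm{\mF}^{2t}]$.

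For the $\mE_2$ contribution, I introduce the ``shift'' operator $\mathbf{S}_i$ defined by $(\mathbf{S}_i v)[I,\al,\gam] = v[I,\al,\gam+\mathbf{e}_i]$ whenever $\gam_i = 0$ and $\al_i \ge 1$ (and zero otherwise). A direct computation shows $\mE_2 = \frac{2}{n}\sum_i \mathbf{S}_i^T\mD^2\mathbf{S}_i$, and moreover $\sum_i \mathbf{S}_i^T\mathbf{S}_i = \sum_i \mathbf{T}_i^{(1)} \preceq \dpoly\mathbf{I}$, where $\mathbf{T}_i^{(1)}$ is the diagonal projection onto indices with $\gam_i = 1$. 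I then plan to invoke Jensen's operator trace inequality (\cref{lem: jensen_trace}) with $\mathbf{A}_i$ proportional to $\mathbf{S}_i$ (after appropriate rescaling and composition with $\herm{\mG}\mD$, so that $\sum \mathbf{A}_i^T\mathbf{A}_i \preceq \mathbf{I}$) and $\mathbf{B}_i$ proportional to $\mD^2$, with $f(x) = x^t$. This should deliver the matching bound $\tr[(\mD\herm{\mG}\mE_2\herm{\mG}\mD)^t] \le (\frac{2\dpoly}{n})^t\tr[\herm{\mF}^{2t}]$. Combining the two parts via \cref{fact: holder} with $n=2$ gives $\tr[\mDel_3^t] \le 2^{t-1} \cdot 2 \cdot (\tfrac{2\dpoly}{n})^t \tr[\herm{\mF}^{2t}] = (\tfrac{4\dpoly}{n})^t \tr[\herm{\mF}^{2t}]$, after which taking expectations yields the claimed bound.

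The main obstacle lies in the treatment of $\mE_2$. Pointwise the ratio $\mD^2[I,\al,\gam-\mathbf{e}_i]/\mD^2[I,\al,\gam] = \EE[Z_i^{2\al_i}]/Z_i^{2\al_i}$ is unbounded, so no naive Loewner inequality $\mE_2 \preceq c\mD^2$ is available, and the seemingly clean cancellation argument via $\mathbf{L}_i = \herm{\mF}\mathbf{Q}_i$ with $\|\mathbf{Q}_i\| \le 1$ breaks down because the position of the diagonal factors on the two sides of the shift does not cooperate. Jensen's operator trace inequality is precisely the tool that allows one to absorb this pointwise blow-up inside the trace: the shift operators $\mathbf{S}_i$ satisfy the partial-isometry condition $\sum \mathbf{S}_i^T\mathbf{S}_i \preceq \dpoly\mathbf{I}$ globally even though their individual shifted contributions are not pointwise dominated by $\mD^2$. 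Verifying that the Jensen setup actually produces a bound in terms of $\tr[\herm{\mF}^{2t}]$ (as opposed to some other Schatten norm of $\mD^2$ or $\herm{\mG}$) is the delicate step that must be executed carefully.
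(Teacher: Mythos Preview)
Your setup $\mDel_3 = \mD\,\herm{\mG}\,\mE\,\herm{\mG}\,\mD$ and the treatment of $\mE_1$ are correct and match the paper's handling of $\mDel_{30}$. The gap is in the $\mE_2$ step: the Jensen application you sketch cannot close. With $\mA_i \propto \mathbf{S}_i\herm{\mG}\mD$ and $\mB_i \propto \mD^2$, the contraction condition fails because
\[
\sum_i \mA_i^T\mA_i \;\propto\; \mD\,\herm{\mG}\Bigl(\sum_i \mathbf{S}_i^T\mathbf{S}_i\Bigr)\herm{\mG}\,\mD \;\preceq\; \dpoly\,\mD\,\herm{\mG}^2\,\mD,
\]
which is not $\preceq \mI$; and even if it were, the Jensen output $\tr\bigl[\sum_i \mA_i^T \mD^{2t}\mA_i\bigr]$ is not $\tr[\herm{\mF}^{2t}]$. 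No rearrangement with $\mB_i = \herm{\mF}^2$ fares better, since that would require $\herm{\mF}\mA_i \propto \mD\,\mathbf{S}_i\,\herm{\mG}\,\mD$, and $\herm{\mF}$ is neither invertible nor operator-bounded below.

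What you are missing is that Jensen is unnecessary here: a direct Loewner bound is available \emph{after} sandwiching by $\herm{\mG}$, because the entries of $\mG_{k,a,b}$ depend only on $(I,\al)$ and not on $\gam$. Concretely, let $\mSig_i$ be the permutation swapping $\gam \leftrightarrow \gam \oplus \mathbf{e}_i$ on the block $\al_i \neq 0$; then $\herm{\mG}\,\mSig_i = \herm{\mG}$. Your $\mathbf{S}_i^T\mD^2\mathbf{S}_i$ equals $\mPi_i\,\mSig_i\,\mD^2\,\mSig_i\,\mPi_i$ (with $\mPi_i$ the projection onto $\gam_i=1$). Enlarging $\mPi_i \preceq \mPi'_i$ (projection onto $\al_i\neq 0$, arbitrary $\gam$), commuting $\mPi'_i$ through $\mSig_i$, and absorbing $\mSig_i$ into $\herm{\mG}$ leaves $\mD\,\herm{\mG}\,\mPi'_i\,\mD^2\,\mPi'_i\,\herm{\mG}\,\mD = \herm{\mF}\,(\mPi'_i)^2\,\herm{\mF}$ (since $\mPi'_i$ commutes with the diagonal $\mD$). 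Summing and using $\sum_i (\mPi'_i)^2 \preceq \dpoly\,\mI$ gives $\mD\,\herm{\mG}\,\mE_2\,\herm{\mG}\,\mD \preceq \frac{2\dpoly}{n}\herm{\mF}^2$ outright, so $\mDel_3 \preceq \frac{4\dpoly}{n}\herm{\mF}^2$ and the trace bound follows by monotonicity---no operator-Jensen needed. Your diagnosis that ``no naive Loewner inequality $\mE_2 \preceq c\,\mD^2$ is available'' is correct at the level of $\mE_2$ alone, but the $\gam$-invariance of $\herm{\mG}$ is exactly what rescues the Loewner route once you conjugate.
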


Assuming the above lemmas, we can complete the proof of \cref{lem: main_general}, which we restate for convenience.

\maingeneral*

\begin{proof}[Proof of \cref{lem: main_general}]
	Using \cref{lem: main_pmt_bound}, \cref{lem: bound_U_by_Deltas}, we get that for any $s > 0$,
	\begin{align*}
		\Esch{\herm{\mF}_{k, a, b}}{2t}
		&\le (\frac{2t - 1}{4})^t\Esch{s\mU_{k, a, b} + s^{-1}\mV_{k, a, b}}{t}\\
		&\le t^t(s^t\Esch{\mU_{k, a, b}}{t} + s^{-t}\Esch{\mV_{k, a, b}}{t})\\
		&\le (9st)^t(\Esch{\mDel_1^{k, a, b}}{t} + \Esch{\mDel_2^{k, a, b}}{t} + \Esch{\mDel_3^{k, a, b}}{t}) + t^ts^{-t}\Esch{\mV_{k, a, b}}{t}
	\end{align*}
	Let $\rho = s / n$. Since the inequality is true for any choice of $s > 0$, it is true for any choice of $\rho > 0$.
	Now, using \cref{lem: bound_on_Del1}, \cref{lem: bound_on_Del3},
	\begin{align*}
		(9st)^t(\Esch{\mDel_1^{k, a, b}}{t} + \Esch{\mDel_3^{k, a, b}}{t}) &\le (9st)^t\bigg(\frac{(8d\dpoly)^t}{n^t} + \frac{(4\dpoly)^t}{n^t}\bigg)\Esch{\herm{\mF}_{k, a, b}}{2t}\\
		&= \rho^t (C_1td\dpoly)^t\Esch{\herm{\mF}_{k, a, b}}{2t}
	\end{align*}
	for an absolute constant $C_1 > 0$. Using \cref{lem: bound_on_Del2}, \cref{lem: bound_on_V},
	\begin{align*}
		(9st)^t\Esch{\mDel_2^{k, a, b}}{t} + t^ts^{-t}\Esch{\mV_{k, a, b}}{t} & \le\bigg((9st)^t\frac{(2\dpoly)^t}{n^t} + t^ts^{-t}(2\dpoly)^tn^t\bigg)(\Esch{\herm{\mF}_{k, a, b + 1}}{2t} + \Esch{\herm{\mF}_{k, a + 1, b}}{2t})\\
		&\le (\rho^tC_2^t + \rho^{-t}C_3^t) (t\dpoly)^t (\Esch{\herm{\mF}_{k, a, b + 1}}{2t} + \Esch{\herm{\mF}_{k, a + 1, b}}{2t})
	\end{align*}
	for absolute constants $C_2, C_3 > 0$.
	Therefore,
	\begin{align*}
		\Esch{\herm{\mF}_{k, a, b}}{2t} &\le \rho^t (C_1td\dpoly)^t\Esch{\herm{\mF}_{k, a, b}}{2t} + (\rho^tC_2^t + \rho^{-t}C_3^t) (t\dpoly)^t(\Esch{\herm{\mF}_{k, a, b + 1}}{2t} + \Esch{\herm{\mF}_{k, a + 1, b}}{2t})
	\end{align*}
	We choose $\rho > 0$ so that $\rho^t (C_1td\dpoly)^t = \frac{1}{2}$ to get
	\begin{align*}
		\Esch{\herm{\mF}_{k, a, b}}{2t} &\le \frac{1}{2}\Esch{\herm{\mF}_{k, a, b}}{2t} + \frac{1}{2}(Ct^2d\dpoly^2)^t (\Esch{\herm{\mF}_{k, a, b + 1}}{2t} + \Esch{\herm{\mF}_{k, a + 1, b}}{2t})
	\end{align*}
	for an absolute constant $C > 0$.
	Rearranging yields the result.
\end{proof}

\subsection{Bounding $\mDel_2^{k, a, b}$ and $\mV_{k, a, b}$}

The next lemma relates $\mV_{k, a, b}$ to $\mDel_2^{k, a, b}$ upto a factor of $n^2$ which will be enough for us. We can then focus on bounding $\mDel_2^{k, a, b}$.

\begin{lemma}\label{lem: bounding_V_loewner}
	$\mV_{k, a, b} \preceq n^2 \mDel_2^{k, a, b}$
\end{lemma}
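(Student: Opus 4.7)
The plan is to show that $\mV_{k,a,b}$ and $\mDel_2^{k,a,b}$ differ only by a scalar factor that is at most $n^2$, via the explicit formula for the inner kernel provided by \cref{lem: explicit_kernel_for_matrices}.

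First, I would observe that the Hermitian dilation is a linear operation on matrices, so \cref{lem: explicit_kernel_for_matrices} immediately lifts to the dilated kernel:
\[
\herm{\mK}_{k,a,b}(Z,Z') \;=\; \frac{n}{k-a-b}\bigl(\herm{\mG}_{k,a,b}(Z) - \herm{\mG}_{k,a,b}(Z')\bigr).
\]
Multiplying by $\mD(Z)$ on the left and right then gives
\[
\mD(Z)\,\herm{\mK}_{k,a,b}(Z,Z')\,\mD(Z) \;=\; \frac{n}{k-a-b}\,\mD(Z)\bigl(\herm{\mG}_{k,a,b}(Z) - \herm{\mG}_{k,a,b}(Z')\bigr)\mD(Z).
\]

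Next, I would square this identity and take the conditional expectation given $Z$. Since the scalar factor $n/(k-a-b)$ pulls out of the square and out of the expectation, this directly yields
\[
\mV_{k,a,b} \;=\; \frac{n^2}{(k-a-b)^2}\,\mDel_2^{k,a,b}.
\]
Finally, using the standing hypothesis $a+b<k$, which guarantees $k-a-b\geq 1$, the prefactor satisfies $n^2/(k-a-b)^2 \leq n^2$, and because $\mDel_2^{k,a,b}$ is manifestly positive semidefinite (it is a conditional expectation of a square of a Hermitian matrix), we conclude $\mV_{k,a,b} \preceq n^2\,\mDel_2^{k,a,b}$, as desired.

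There is no real obstacle here; the lemma is essentially a bookkeeping consequence of the explicit polynomial kernel, and its role is to reduce the subsequent analysis of $\mV_{k,a,b}$ to that of $\mDel_2^{k,a,b}$ up to the harmless factor $n^2$ that will be absorbed when $s$ is later balanced in \cref{lem: main_pmt_bound}.
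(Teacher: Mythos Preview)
Your proof is correct and follows essentially the same approach as the paper: both apply \cref{lem: explicit_kernel_for_matrices} to replace the inner kernel by $\frac{n}{k-a-b}(\herm{\mG}_{k,a,b}(Z) - \herm{\mG}_{k,a,b}(Z'))$, square, take the conditional expectation, and then use $k-a-b \ge 1$ to bound the scalar prefactor by $n^2$. Your version is slightly more explicit in noting the intermediate equality $\mV_{k,a,b} = \frac{n^2}{(k-a-b)^2}\mDel_2^{k,a,b}$ and the positive semidefiniteness of $\mDel_2^{k,a,b}$, but the argument is identical.
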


\begin{proof}
	Using \cref{lem: explicit_kernel_for_matrices},
	\begin{align*}
		\mV_{k, a, b} &= \EE[(\mD(Z)\herm{\mK}_{k, a, b}(Z, Z')\mD(Z))^2|Z]\\
		&= \EE[(\mD(Z)\bigg(\frac{n}{k - a - b}(\herm{\mG}_{k, a, b}(Z) - \herm{\mG}_{k, a, b}(Z'))\bigg)\mD(Z))^2|Z]\\
		&\preceq n^2\EE[(\mD(Z)(\herm{\mG}_{k, a, b}(Z) - \herm{\mG}_{k, a, b}(Z'))\mD(Z))^2|Z]\\
		&= n^2 \mDel_2^{k, a, b}
	\end{align*}
\end{proof}

For $1 \le i \le n$ and $1 \le l \le d$, let $\mat{e}_{i, l} \in \NN^n$ denote the vector $\al$ with $\al_i = l$ and $\al_j = 0$ for $j \neq i$.
We note the following simple proposition.

\begin{propn}\label{propn: difference_equality}
	For any polynomial $f$ such that the degree of $Z_i$ is at most $d$, $f(Z) - f(Z^{(i)}) = \displaystyle\sum_{1 \le l \le d} (Z_i^l - \resamp{Z_i}^l)\grad_{\mat{e}_{i, l}}(f)$
\end{propn}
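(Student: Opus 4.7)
The plan is to work by linearity and view $f$ as a univariate polynomial in the single variable $Z_i$, with coefficients being polynomials in the remaining variables $Z_{-i} = (Z_j)_{j \neq i}$. Writing
\[
f(Z) \;=\; \sum_{l=0}^{d} Z_i^{\,l} \cdot g_l(Z_{-i}),
\]
where each $g_l(Z_{-i})$ is independent of $Z_i$, and substituting $\resamp{Z_i}$ in place of $Z_i$, the difference telescopes as
\[
f(Z) - f(Z^{(i)}) \;=\; \sum_{l=0}^{d} \bigl(Z_i^{\,l} - \resamp{Z_i}^{\,l}\bigr)\, g_l(Z_{-i}) \;=\; \sum_{l=1}^{d} \bigl(Z_i^{\,l} - \resamp{Z_i}^{\,l}\bigr)\, g_l(Z_{-i}),
\]
since the $l=0$ term cancels. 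This reduces the identity to identifying $g_l(Z_{-i})$ with $\grad_{\mat{e}_{i,l}}(f)$.

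For the identification, I would unwind the definition of $\grad$ on the standard monomial basis. For $\al = \mat{e}_{i,l}$ with $l \geq 1$, the relation $\mat{e}_{i,l} \unlhd \beta$ requires $\beta_i$ to be either $0$ or $l$; since $(\mat{e}_{i,l})_i = l \neq 0$, this forces $\beta_i = l$ (with no constraint on the other coordinates). Thus $\grad_{\mat{e}_{i,l}}(Z^\beta)$ equals $\prod_{j \neq i} Z_j^{\beta_j}$ when $\beta_i = l$ and vanishes otherwise. Extending linearly over the standard basis expansion of $f$, we see that $\grad_{\mat{e}_{i,l}}(f)$ extracts precisely the coefficient of $Z_i^{\,l}$ in $f$ viewed as a polynomial in $Z_i$, namely $g_l(Z_{-i})$. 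Substituting this into the telescoped sum gives the claimed identity. There is no real obstacle here beyond carefully applying the (non-standard) definition of $\grad$; the result is essentially a restatement of the fact that $\grad_{\mat{e}_{i,l}}$ picks off the $Z_i^{\,l}$-coefficient in a one-variable expansion.
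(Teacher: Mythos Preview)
Your argument is correct. The paper does not actually give a proof of this proposition; it merely flags it as a ``simple proposition'' and moves on, so your write-up is a faithful unpacking of exactly the computation the authors leave implicit.
Your careful check that $\mat{e}_{i,l} \unlhd \beta$ forces $\beta_i = l$ (and imposes no constraint on the other coordinates) is the only nontrivial step, and you handle it correctly given the paper's nonstandard definition of $\grad$.
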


We now restate and prove \cref{lem: bound_on_Del2}.

\boundDelTwo*

\begin{proof}
	Consider
	\begin{align*}
		\mDel_2^{k, a, b} &= \EE[(\mD(Z)(\herm{\mG}_{k, a, b}(Z) - \herm{\mG}_{k, a, b}(Z'))\mD(Z))^2|Z]\\
		&= \EE\bigg[ \begin{bmatrix}
			\mM\mM^\T & 0\\
			0 & \mM^\T\mM
		\end{bmatrix} | Z\bigg]\\
		&= \begin{bmatrix}
			\EE[\mM\mM^\T|Z] & 0\\
			0 & \EE[\mM^\T\mM|Z]
		\end{bmatrix}
	\end{align*}
	where $\mM = \mD_1(Z)(\mG_{k, a, b}(Z) - \mG_{k, a, b}(Z'))\mD_2(Z)$. Using \cref{propn: difference_equality},
	\begin{align*}
		\EE[\mM\mM^T | Z] &= \EE[\mD_1(Z)(\mG_{k, a, b}(Z) - \mG_{k, a, b}(Z'))\mD_2(Z)\cdot \mD_2(Z) (\mG_{k, a, b}(Z) - \mG_{k, a, b}(Z'))^\T\mD_1(Z)|Z]\\
		&= \frac{1}{n} \sum_{i = 1}^n\EE[\mD_1(Z)(\mG_{k, a, b}(Z) - \mG_{k, a, b}(Z^{(i)}))\mD_2(Z)\cdot \mD_2(Z) (\mG_{k, a, b}(Z) - \mG_{k, a, b}(Z^{(i)}))^\T\mD_1(Z)|Z]\\
		&= \frac{1}{n}\sum_{i = 1}^n \sum_{l = 1}^d\EE[(Z_i^l - \resamp{Z_i}^l)^2|Z]\cdot \mD_1(Z)(\grad_{\mat{e}_{i, l}} \mG_{k, a, b})(Z)\mD_2(Z)\cdot \mD_2(Z) (\grad_{\mat{e}_{i, l}} \mG_{k, a, b})(Z)^\T\mD_1(Z)
	\end{align*}
	Define $\mN_{i, l}(Z) := \mD_1(Z)(\grad_{\mat{e}_{i, l}} \mG_{k, a, b})(Z)\mD_2(Z)$. Then,
	\begin{align*}
		\EE[\mM\mM^T | Z] = \frac{1}{n}\sum_{i = 1}^n \sum_{l = 1}^d\EE[(Z_i^l - \resamp{Z_i}^l)^2|Z]\cdot \mN_{i, l}(Z)\mN_{i, l}(Z)^\T
		\preceq \frac{2}{n}\sum_{i = 1}^n \sum_{l = 1}^d(Z_i^{2l} + \EE[Z_i^{2l}])\cdot \mN_{i, l}(Z)\mN_{i, l}(Z)^\T
	\end{align*}
	Similarly,
    $
		\EE[\mM^\T\mM | Z] \preceq \frac{2}{n}\displaystyle\sum_{i = 1}^n \sum_{l = 1}^d(Z_i^{2l} + \EE[Z_i^{2l}])\cdot \mN_{i, l}(Z)^\T\mN_{i, l}(Z)
        $

	\begin{claim}\label{claim: reduction_general}
		We have the relations
		\[\sum_{i = 1}^n\sum_{l = 1}^d (Z_i^{2l} + \EE[Z_i^{2l}]) \cdot \mN_{i, l}(Z)\mN_{i, l}(Z)^\T = (b + 1)\mF_{k, a, b + 1}\mF_{k, a, b + 1}^\T\]
		\[\sum_{i = 1}^n \sum_{l = 1}^d(Z_i^{2l} + \EE[Z_i^{2l}])\cdot \mN_{i, l}(Z)^\T\mN_{i, l}(Z) = (a + 1)\mF_{k, a + 1, b}^\T\mF_{k, a + 1, b}\]
	\end{claim}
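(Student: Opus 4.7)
My plan is to mimic the matrix-peeling argument used in \cref{claim: reduction} for the Rademacher case, adapted to account for two new features: the derivative operator $\grad_{\mat{e}_{i,l}}$ now ranges over the degree $l$ of $Z_i$ being extracted, and the diagonal matrices $\mD_1,\mD_2$ carry an additional bit $\epsilon \in \{0,1\}$ at each coordinate encoding whether the differentiated variable $Z_i$ has been ``put back.'' The two displayed equalities are symmetric in the row/column roles, so I will focus on the first one and indicate at the end that the second follows analogously.

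The first step is a reparametrization of the column index. Any column $(J,\al_2,\gam_2)$ of $\mF_{k,a,b+1}$ (with $|\al_2|_0 = b+1$) can be decomposed uniquely once we pick a coordinate $i$ in the support of $\al_2$: write $\al_2 = \al_2' + \mat{e}_{i,l}$ with $l = (\al_2)_i$ and $(\al_2')_i = 0$, and $\gam_2 = \gam_2' + \epsilon\mat{e}_i$ with $(\gam_2')_i = 0$, $\epsilon \in \{0,1\}$, $\gam_2' \le \al_2'$. Since $\al_2$ admits exactly $b+1$ such decompositions, reparametrizing the column sum in the entry-wise formula for $\mF_{k,a,b+1}\mF_{k,a,b+1}^\T$ as a sum over $(i,l,J,\al_2',\gam_2',\epsilon)$ introduces an overall factor $1/(b+1)$.

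Next I will expand each factor in the product. The derivative factor satisfies $\grad_{\al_1+\al_2}\mX_k[I,J] = \grad_{\mat{e}_{i,l}}\grad_{\al_1+\al_2'}\mX_k[I,J]$, which is exactly the derivative appearing in $\mN_{i,l}$; the constraint $(\al_1)_i = 0$ is automatic on both sides (from $\al_1 \cdot \al_2 = 0$ on the $\mF$ side, and from nonvanishing of the derivative on the $\mN$ side). The $\mD_2$ diagonal entry at $(J,\al_2,\gam_2)$ factors as $\sqrt{\EE[Z^{2\al_2'(1-\gam_2')}]\cdot\EE[Z_i^{2l(1-\epsilon)}]}\cdot Z^{\al_2'\gam_2'}\cdot Z_i^{l\epsilon}$. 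Squaring this and summing over $\epsilon \in \{0,1\}$ produces exactly $\EE[Z^{2\al_2'(1-\gam_2')}]\cdot Z^{2\al_2'\gam_2'}\cdot (\EE[Z_i^{2l}] + Z_i^{2l})$, which is the squared $\mD_2$ contribution for $\mN_{i,l}$ multiplied precisely by the extra weight $Z_i^{2l} + \EE[Z_i^{2l}]$ appearing in the claim.

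Putting everything together, the entry of $(b+1)\mF_{k,a,b+1}\mF_{k,a,b+1}^\T$ at $(x,y)$ matches term-by-term the corresponding entry of $\sum_{i,l}(Z_i^{2l}+\EE[Z_i^{2l}])\mN_{i,l}\mN_{i,l}^\T$; the second displayed equality follows by the same argument with $\mD_1,\al_1,\gam_1$ playing the roles of $\mD_2,\al_2,\gam_2$ and an analogous peeling on the row side. The main conceptual step, and where this proof departs from its Rademacher analog, is summing over the ``put-back'' bit $\epsilon$, which is exactly what conjures the factor $Z_i^{2l} + \EE[Z_i^{2l}]$; in the Rademacher case this reduces to the constant $2$ and no $\epsilon$ bookkeeping is needed.
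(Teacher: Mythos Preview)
Your proposal is correct and follows essentially the same approach as the paper's proof: both arguments establish the identity by reparametrizing the column sum via the bijection between pairs $((J,\al_3,\gam_3),i)$ with $|\al_3|_0 = b+1$ and $i \in \supp(\al_3)$, and tuples $(i,l,J,\al_3',\gam_3',\epsilon)$ with $|\al_3'|_0 = b$ and $(\al_3')_i = 0$, observing that the sum over the extra $\gam$-bit $\epsilon$ at coordinate $i$ produces the factor $Z_i^{2l}+\EE[Z_i^{2l}]$. The only cosmetic difference is that the paper runs the computation from the $\mN_{i,l}$ side toward $\mF_{k,a,b+1}$ (and is somewhat terse about how the weight is absorbed into the $\gam_3$ sum), whereas you start from $\mF_{k,a,b+1}$ and peel off a coordinate, making the role of $\epsilon$ more explicit.
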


	Using this claim, we have
	\[\EE[\mM\mM^T | Z] \preceq \frac{2(b + 1)}{n}\mF_{k, a, b + 1}\mF_{k, a, b + 1}^\T \preceq \frac{2\dpoly}{n}\mF_{k, a, b + 1}\mF_{k, a, b + 1}^\T\]
	\[\EE[\mM^\T\mM | Z] \preceq \frac{2(a + 1)}{n}\mF_{k, a + 1, b}^\T\mF_{k, a + 1, b} \preceq \frac{2\dpoly}{n}\mF_{k, a + 1, b}^\T\mF_{k, a + 1, b}\]
	Therefore,
	\begin{align*}
		\Esch{\mDel_2^{k, a, b}}{t} = \Esch{\EE[\mM\mM^\T|Z]}{t} + \Esch{\EE[\mM^\T\mM|Z]}{t}
		&\le \frac{(2\dpoly)^t}{n^t} (\Esch{\mF_{k, a, b + 1}}{2t} + \Esch{\mF_{k, a + 1, b}}{2t})\\
		&\le \frac{(2\dpoly)^t}{n^t} (\Esch{\herm{\mF}_{k, a, b + 1}}{2t} + \Esch{\herm{\mF}_{k, a + 1, b}}{2t})
	\end{align*}
\end{proof}

It remains to prove the claim.
\begin{proof}[Proof of~\cref{claim: reduction_general}]
	We will prove the first relation, the second is analogous.
	For a fixed $i \le n, l \le d$, consider any nonzero entry $[(I_1, \al_1, \gam_1), (I_2, \al_2, \gam_2)]$ of $\sum_{i = 1}^n \sum_{l = 1}^d (Z_i^{2l} + \EE[Z_i^{2l}]) \mN_{i, l}(Z) \mN_{i, l}(Z)^\T$, where $I_1, I_2 \in \cI, (\al_1, \gam_1), (\al_2, \gam_2) \in \cK$. We must have $|\al_1|_0 = |\al_2|_0 = a$, in which case the entry  is equal to
	\begin{align*}
		\sum_{\substack{(J, \al_3, \gam_3) \in \cJ\times \cK\\ |\al_3| = b \\ \al_1\al_3 = \al_2\al_3 = 0}} &(Z_i^{2l} + \EE[Z_i^{2l}]) \cdot (\sqrt{\EE[Z^{2\al_1\cdot (1 - \gam_1) + 2\al_3\cdot (1 - \gam_3)}]}Z^{\al_1\cdot \gam_1 + \al_3\cdot\gam_3}\grad_{\mat{e}_{i, l}} \grad_{\al_1 + \al_3} \mX_k[I_1, J])\\
		&\cdot (\sqrt{\EE[Z^{2\al_2\cdot (1 - \gam_2) + 2\al_3\cdot (1 - \gam_3)}]}Z^{\al_2\cdot \gam_2 + \al_3\cdot\gam_3}\grad_{\mat{e}_{i, l}} \grad_{\al_2 + \al_3} \mX_k[I_2, J])
	\end{align*}
	Note that the term inside the summation is nonzero only when $\mat{e}_{i, l}\cdot (\al_1 + \al_3) = \mat{e}_{i, l} \cdot (\al_2 + \al_3) = 0$. Hence, this sum can be written as
	\begin{align*}
		\sum_{\substack{(J, \al_3, \gam_3) \in \cJ\times \cK\\ |\al_3| = b + 1 \\ \mat{e}_{i, l} \unlhd \al_3, \al_1\al_3 = \al_2\al_3 = 0}} &(\sqrt{\EE[Z^{2\al_1\cdot (1 - \gam_1) + 2\al_3\cdot (1 - \gam_3)}]}Z^{\al_1\cdot \gam_1 + \al_3\cdot\gam_3}\grad_{\al_1 + \al_3} \mX_k[I_1, J])\\
		&\cdot (\sqrt{\EE[Z^{2\al_2\cdot (1 - \gam_2) + 2\al_3\cdot (1 - \gam_3)}]}Z^{\al_2\cdot \gam_2 + \al_3\cdot\gam_3}\grad_{\al_2 + \al_3} \mX_k[I_2, J])
	\end{align*}
	When we add this entry over all $i \le n, l \le d$, this simplifies to
	\begin{align*}
		(b + 1) \cdot \sum_{\substack{(J, \al_3, \gam_3) \in \cJ\times \cK\\ |\al_3| = b + 1 \\ \al_1\al_3 = \al_2\al_3 = 0}} &(\sqrt{\EE[Z^{2\al_1\cdot (1 - \gam_1) + 2\al_3\cdot (1 - \gam_3)}]}Z^{\al_1\cdot \gam_1 + \al_3\cdot\gam_3}\grad_{\al_1 + \al_3} \mX_k[I_1, J])\\
		&\cdot (\sqrt{\EE[Z^{2\al_2\cdot (1 - \gam_2) + 2\al_3\cdot (1 - \gam_3)}]}Z^{\al_2\cdot \gam_2 + \al_3\cdot\gam_3}\grad_{\al_2 + \al_3} \mX_k[I_2, J])
	\end{align*}
	The factor of $(b + 1)$ came because the index $i$ could have been chosen from among all the active indices in $\al_3$. But this is precisely the $[(I_1, \al_1, \gam_1), (I_2, \al_2, \gam_2)]$ entry of $(b + 1)\mF_{k, a, b + 1}\mF_{k, a, b + 1}^\T$, proving the claim.
\end{proof}

We restate and prove \cref{lem: bound_on_V}.

\boundV*

\begin{proof}
	Using \cref{lem: bounding_V_loewner} and \cref{lem: bound_on_Del2}, we get
	\begin{align*}
		\Esch{\mV_{k, a, b}}{t} \le n^{2t}\Esch{\mDel_2^{k, a, b}}{t}
		\le (2\dpoly)^tn^t (\Esch{\herm{\mF}_{k, a, b + 1}}{2t} + \Esch{\herm{\mF}_{k, a + 1, b}}{2t})
	\end{align*}
\end{proof}

\subsection{Bounding $\mDel_1^{k, a, b}$ and $\mDel_3^{k, a, b}$}

Define $\sqcup$ to be the disjoint union of sets. For $1 \le i \le n$ and $1 \le l \le d$, define the diagonal matrices $\mPi_{i, l}, \mPi_{i, l}', \mPi_i, \mPi_i' \in \RR^{(\cI \times \cK) \sqcup (\cJ \times \cK)} \times \RR^{(\cI \times \cK) \sqcup (\cJ \times \cK)}$ (the same dimensions as $\mD$) as
\[\mPi_{i, l}[(I, \al, \beta), (I, \al, \beta)] = \begin{dcases}
	1 & \text{ if $(\al \cdot \gam)_i \neq 0$ and $\al_i = l$}\\
	0 & \text{ o.w.}
\end{dcases}\qquad \mPi_i[(I, \al, \beta), (I, \al, \beta)] = \begin{dcases}
	1 & \text{ if $(\al \cdot \gam)_i \neq 0$}\\
	0 & \text{ o.w.}
\end{dcases}\]
\[\mPi'_{i, l}[(I, \al, \beta), (I, \al, \beta)] = \begin{dcases}
	1 & \text{ if $\al_i \neq 0$ and $\al_i = l$}\\
	0 & \text{ o.w.}
\end{dcases}\qquad \mPi_i'[(I, \al, \beta), (I, \al, \beta)] = \begin{dcases}
	1 & \text{ if $\al_i \neq 0$}\\
	0 & \text{ o.w.}
\end{dcases}\]
for all $I \in \cI \sqcup \cJ$.
Note that for all $i \le n$, $\mPi_i = \sum_{l = 1}^d \mPi_{i, l}$.

Also, for all $1 \le i \le n$, we define the permutation matrices $\mSig_i \in\RR^{(\cI \times \cK) \sqcup (\cJ \times \cK)} \times \RR^{(\cI \times \cK) \sqcup (\cJ \times \cK)}$ as follows. Consider the permutation $\sig_1$ on $\cI\times \cK$ that transposes $(I, \al, \gam)$ and $(I, \al, \gam + \mat{e}_i)$ for all $(I, \al, \gam) \in \cI\times \cK$ such that $\al_i \neq 0$. Here, $\mat{e}_i \in \{0, 1\}^n$ has exactly one nonzero entry, which is in the $i$th position, and $\gam + \mat{e}_i$ is the usual addition over $\mathbb{F}_2$. $\sig_1$ leaves other positions fixed. Let $\mSig^{(1)}_i$ be the permutation matrix for $\sig$. Similarly, let $\mSig^{(2)}_i$ be the permutation matrix of the permutation $\sig_2$ on $\cJ \times \cK$ that transposes $(J, \al, \gam)$ and $(J, \al, \gam + \mat{e}_i)$ for all $(J, \al, \gam) \in \cJ\times \cK$ such that $\al_i \neq 0$, and leaves all other positions fixed. Then, we define $\mSig_i = \begin{bmatrix}
	\mSig^{(1)}_i & 0\\
	0 & \mSig^{(2)}_i
\end{bmatrix}$. The following fact is easy to verify.

\begin{fact}\label{fact: commutativity}
	$\mPi'_{i, l}\mSig_i = \mSig_i\mPi'_{i, l}$ and $\mPi_i' \mSig_i = \mSig_i \mPi_i'$.
\end{fact}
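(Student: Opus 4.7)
The plan is to observe that this fact is a direct consequence of a general principle: a diagonal matrix $D$ and a permutation matrix $P$ (for permutation $\pi$) commute if and only if the diagonal entries of $D$ are $\pi$-invariant, i.e.\ $D[x,x] = D[\pi(x), \pi(x)]$ for every index $x$. Indeed a short computation gives $(PD)[x, \pi(x)] = D[\pi(x),\pi(x)]$ while $(DP)[x, \pi(x)] = D[x,x]$, and both matrices vanish elsewhere. I would state this as a one-line preliminary and then verify the invariance condition for our two pairs.

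For $\mPi'_{i,l}\mSig_i = \mSig_i \mPi'_{i,l}$: because $\mSig_i$ is block-diagonal with blocks $\mSig_i^{(1)}$ and $\mSig_i^{(2)}$, and $\mPi'_{i,l}$ is itself block-diagonal with respect to the decomposition $(\cI \times \cK) \sqcup (\cJ \times \cK)$ (since it is diagonal), it suffices to check the commutation on each block separately. On the $\cI \times \cK$ block, the permutation $\sigma_1$ sends $(I, \al, \gam) \mapsto (I, \al, \gam + \mat{e}_i)$ whenever $\al_i \neq 0$ and fixes all other indices. In either case the first two coordinates $(I, \al)$ are preserved. But the diagonal entry $\mPi'_{i,l}[(I,\al,\gam),(I,\al,\gam)]$ depends only on $\al_i$ (specifically, on whether $\al_i = l$), so it is $\sigma_1$-invariant. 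The same argument applies verbatim to the $\cJ \times \cK$ block with $\sigma_2$.

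The proof of $\mPi'_i \mSig_i = \mSig_i \mPi'_i$ is identical, since $\mPi'_i$ also has diagonal entries depending only on $\al_i$ (via the condition $\al_i \neq 0$). Alternatively, one may simply note that $\mPi'_i = \sum_{l=1}^{d} \mPi'_{i,l}$ (these are disjointly supported) and conclude by linearity from the first equality.

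There is no real obstacle here; the only thing to be careful about is making the ``diagonal-entries-depend-only-on-$\al$'' observation explicit, since that is the precise property which makes $\sigma_i$-invariance automatic (the permutation only moves the $\gam$-coordinate, leaving $\al$ untouched).
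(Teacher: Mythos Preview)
Your proof is correct. The paper does not actually provide a proof of this fact (it simply asserts it is ``easy to verify''), and your argument via the general principle that a diagonal matrix commutes with a permutation matrix precisely when its diagonal entries are invariant under the permutation is exactly the natural verification; the key observation that $\sigma_i$ alters only the $\gam$-coordinate while the diagonal entries of $\mPi'_{i,l}$ and $\mPi'_i$ depend only on $\al_i$ is the right one.
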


We are now ready to prove \cref{lem: bound_on_Del1} which we restate for convenience.

\boundDelOne*

\begin{proof}
	Firstly,
	\begin{align*}
		\mDel_1^{k, a, b} &= \EE[((\mD(Z) - \mD(Z'))\herm{\mG}_{k, a, b}(Z)\mD(Z))^2|Z]\\
		&= \EE[(\mD(Z) - \mD(Z'))\herm{\mG}_{k, a, b}(Z)\mD(Z)\cdot \mD(Z)\herm{\mG}_{k, a, b}(Z)(\mD(Z) - \mD(Z'))|Z]\\
		&= \EE[(\mD(Z) - \mD(Z')) \mM(Z) (\mD(Z) - \mD(Z'))|Z]
	\end{align*}
	where we define $\mM(Z) = \herm{\mG}_{k, a, b}(Z)\mD(Z)\cdot \mD(Z)\herm{\mG}_{k, a, b}(Z)$.
	Recall that $Z' = Z^{(i)}$ for some $i$ randomly chosen from $[n]$ uniformly. Observing that $\mD(Z) - \mD(Z^{(i)}) = \mPi_i(\mD(Z) - \mD(Z^{(i)}))$ for all $i$, we get
	\begin{align*}
		\mDel_1^{k, a, b}
		&= \EE[ \EE_{i \in [n]} [(\mD(Z) - \mD(Z^{(i)})) \mM(Z) (\mD(Z) - \mD(Z^{(i)}))]|Z]\\
		&= \EE[\EE_{i \in [n]} [\mPi_i(\mD(Z) - \mD(Z^{(i)})) \mM(Z) (\mD(Z) - \mD(Z^{(i)}))\mPi_i]|Z]\\
		&\preceq 2\bigg(\EE[\EE_{i \in [n]} [\mPi_i\mD(Z)\mM(Z)\mD(Z)\mPi_i]|Z] + \EE[\EE_{i \in [n]} [\mPi_i\mD(Z^{(i)})\mM(Z)\mD(Z^{(i)})\mPi_i]|Z]\bigg)\\
		&\preceq 2\bigg(\EE_{i \in [n]} [\mPi_i\herm{\mF}_{k, a, b}^2\mPi_i] + \EE[\EE_{i \in [n]} [\mPi_i\mD(Z^{(i)})\mM(Z)\mD(Z^{(i)})\mPi_i]|Z]\bigg)\\
		& \preceq 2(\mDel_{10} + \mDel_{11})
	\end{align*}
	where we define
	\[\mDel_{10} = \EE_{i \in [n]} [\mPi_i\herm{\mF}_{k, a, b}^2\mPi_i], \qquad \mDel_{11} = \EE[\EE_{i \in [n]} [\mPi_i\mD(Z^{(i)})\mM(Z)\mD(Z^{(i)})\mPi_i]|Z]\]
	Invoking \cref{lem: jensen_trace} over the interval $[0, \infty)$ with the convex continuous function $f(x) = x^t$, $\mB_i = \herm{\mF}_{k, a, b}^2, \mA_i = \frac{1}{\sqrt{\dpoly}}\mPi_i$ where we observe that $\sum_{i = 1}^n \mA_i \mA_i^T = \frac{1}{\dpoly}\sum_{i = 1}^n \mPi_i^2\preceq \mI$, we get
	\begin{align*}
		\Esch{\mDel_{10}}{t} = \EE\tr[\mDel_{10}^t] = \EE\tr[\bigg(\EE_{i \in [n]} [\mPi_i\herm{\mF}_{k, a, b}^2\mPi_i]\bigg)^t] &= \frac{1}{n^t}\EE\tr[\bigg(\sum_{i = 1}^n\mPi_i\herm{\mF}_{k, a, b}^2\mPi_i\bigg)^t]\\
		&\le \frac{\dpoly^{t - 1}}{n^t}\EE\tr[\bigg(\sum_{i = 1}^n\mPi_i\herm{\mF}_{k, a, b}^{2t}\mPi_i\bigg)]\\
		&\le \frac{\dpoly^{t - 1}}{n^t}\EE\tr[\bigg(\sum_{i = 1}^n\mPi_i^2\bigg)\herm{\mF}_{k, a, b}^{2t}]\\
		&\le \frac{\dpoly^t}{n^t}\EE\tr[\herm{\mF}_{k, a, b}^{2t}]\\
		&= \frac{\dpoly^t}{n^t}\Esch{\herm{\mF}_{k, a, b}}{2t}
	\end{align*}

	Now, consider
	\begin{align*}
		\mDel_{11} &= \EE[\EE_{i \in [n]} [\mPi_i\mD(Z^{(i)})\mM(Z)\mD(Z^{(i)})\mPi_i]|Z]\\
		&= \EE[\EE_{i \in [n]} 	[(\sum_{l = 1}^d\mPi_{i, l})\mD(Z^{(i)})\mM(Z)\mD(Z^{(i)})(\sum_{l = 1}^d\mPi_{i, l})]|Z]\\
		&\preceq d\cdot \EE[\EE_{i \in [n]} [\sum_{l = 1}^d\mPi_{i, l}\mD(Z^{(i)})\mM(Z)\mD(Z^{(i)})\mPi_{i, l}]|Z]\\
		&= d\cdot \EE_{i \in [n]} [\sum_{l = 1}^d \frac{\EE[Z_i^{2l}]}{Z_i^{2l}}\mPi_{i, l}\mD(Z)\mM(Z)\mD(Z)\mPi_{i, l}]\\
		&= \frac{d}{n} \sum_{i = 1}^n\sum_{l = 1}^d \frac{\EE[Z_i^{2l}]}{Z_i^{2l}}\mPi_{i, l}\mD(Z)\mM(Z)\mD(Z)\mPi_{i, l}\\
		&= \frac{d}{n} \sum_{i = 1}^n\sum_{l = 1}^d \mPi_{i, l}\mSig_i\mD(Z)\mM(Z)\mD(Z)\mSig_i^\T\mPi_{i, l}\\
		&= \frac{d}{n} \sum_{i = 1}^n\sum_{l = 1}^d \mPi_{i, l}\mSig_i\herm{\mF}_{k, a, b}^2\mSig_i^\T\mPi_{i, l}
	\end{align*}
	We now invoke \cref{lem: jensen_trace} on $d\dpoly$ terms with $\mB_{i, l} = \herm{\mF}_{k, a, b}^2$ and $\mA_{i, l} = \frac{1}{\sqrt{\dpoly}} \mPi_{i, l}\mSig_i$ where we observe that
	\[\sum_{i = 1}^n\sum_{l = 1}^d \mA_{i, l} \mA_{i, l}^T = \frac{1}{\dpoly}\sum_{i = 1}^n\sum_{l = 1}^d \mPi_{i, l}\mSig_i\mSig_i^\T\mPi_{i, l}^\T = \frac{1}{\dpoly}\sum_{i = 1}^n\sum_{l = 1}^d \mPi_{i, l}^2 \preceq \mI\]
	to get
	\begin{align*}
		\Esch{\mDel_{11}}{t} = \EE \tr[\mDel_{11}^t] &\le \frac{d^t}{n^t} \EE\tr[(\sum_{i = 1}^n\sum_{l = 1}^d \mPi_{i, l}\mSig_i\herm{\mF}_{k, a, b}^2\mSig_i^\T\mPi_{i, l})^t]\\
		&\le \frac{(d\dpoly)^t}{n^t}\EE\tr[\bigg(\frac{1}{\dpoly}\sum_{i = 1}^n\sum_{l = 1}^d \mPi_{i, l}\mSig_i\herm{\mF}_{k, a, b}^{2t}\mSig_i^\T\mPi_{i, l}\bigg)]\\
		&= \frac{(d\dpoly)^t}{n^t}\EE\tr[\bigg(\frac{1}{\dpoly}\sum_{i = 1}^n\sum_{l = 1}^d \mSig_i^\T\mPi_{i, l}\mPi_{i, l}\mSig_i\herm{\mF}_{k, a, b}^{2t}\bigg)]
	\end{align*}
	To simplify this, we use \cref{fact: commutativity} to get
	\[\sum_{i = 1}^n\sum_{l = 1}^d \mSig_i^\T(\mPi_{i, l})^2\mSig_i \preceq \sum_{i = 1}^n\sum_{l = 1}^d \mSig_i^\T(\mPi'_{i, l})^2\mSig_i = \sum_{i = 1}^n\sum_{l = 1}^d \mPi'_{i, l}\mSig_i^\T\mSig_i\mPi'_{i, l} = \sum_{i = 1}^n\sum_{l = 1}^d \mPi'_{i, l}\mPi'_{i, l} \preceq \dpoly \mI\]
	Therefore,
	$\Esch{\mDel_{11}}{t} \le  \frac{(d\dpoly)^t}{n^t}\EE\tr[\herm{\mF}_{k, a, b}^{2t}] = \frac{(d\dpoly)^t}{n^t}\Esch{\herm{\mF}_{k, a, b}}{2t}$.
	Putting them together and using \cref{fact: holder},
	\begin{align*}
		\Esch{\mDel_1^{k, a, b}}{t} \le 4^t(\Esch{\mDel_{10}}{t} + \Esch{\mDel_{11}}{t})
		\le \frac{(8d\dpoly)^t}{n^t}\Esch{\herm{\mF}_{k, a, b}}{2t}
	\end{align*}
\end{proof}

We now restate and prove \cref{lem: bound_on_Del3}.

\boundDelThree*

\begin{proof}
	Recall that $Z' = Z^{(i)}$ for $i$ sampled uniformly from $[n]$. Then,
	\begin{align*}
		\mDel_3^{k, a, b} &= \EE[(\mD(Z)\herm{\mG}_{k, a, b}(Z)(\mD(Z) - \mD(Z')))^2|Z]\\
		&= \EE[\EE_{i \in [n]} [(\mD(Z)\herm{\mG}_{k, a, b}(Z)(\mD(Z) - \mD(Z^{(i)})))^2] | Z]\\
		&= \EE[\EE_{i \in [n]} [(\mD(Z)\herm{\mG}_{k, a, b}(Z)\mPi_i(\mD(Z) - \mD(Z^{(i)})))^2] | Z]
	\end{align*}
	where we use the fact that $\mD(Z) - \mD(Z^{(i)}) = \mPi_i(\mD(Z) - \mD(Z^{(i)}))$ for all $i$. Define $\mM(Z) = \mD(Z)\herm{\mG}_{k, a, b}$ to get
	\begin{align*}
		\mDel_3^{k, a, b} &= \EE[\EE_{i \in [n]} [\mM(Z) \mPi_i(\mD(Z) - \mD(Z^{(i)}))^2\mPi_i\mM(Z)^\T] | Z]\\
		&\preceq 2(\EE[\EE_{i \in [n]} [\mM(Z) \mPi_i\mD(Z)^2\mPi_i\mM(Z)^\T] | Z] + \EE[\EE_{i \in [n]} [\mM(Z) \mPi_i\mD(Z^{(i)})^2\mPi_i\mM(Z)^\T] | Z])\\
		&= 2(\EE_{i \in [n]} [\mM(Z) \mPi_i\mD(Z)^2\mPi_i\mM(Z)^\T] + \EE[\EE_{i \in [n]} [\mM(Z) \mPi_i\mD(Z^{(i)})^2\mPi_i\mM(Z)^\T] | Z])\\
		&= 2(\mDel_{30} + \mDel_{31})
	\end{align*}
	where we define
	\[\mDel_{30} = \EE_{i \in [n]} [\mM(Z) \mPi_i\mD(Z)^2\mPi_i\mM(Z)^\T], \qquad \mDel_{31} = \EE[\EE_{i \in [n]} [\mM(Z) \mPi_i\mD(Z^{(i)})^2\mPi_i\mM(Z)^\T] | Z]\]
	We have
	\begin{align*}
		\mDel_{30} = \EE_{i \in [n]} [\mM(Z) \mPi_i\mD(Z)^2\mPi_i\mM(Z)^\T] &= \EE_{i \in [n]} [\mM(Z) \mD(Z)\mPi_i\mPi_i\mD(Z)\mM(Z)^\T]\\
		&= \mM(Z) \mD(Z)(\frac{1}{n}\sum_{i = 1}^n\mPi_i^2)\mD(Z)\mM(Z)^\T\\
		&\preceq \frac{\dpoly}{n} \mM(Z) \mD(Z)\mD(Z)\mM(Z)^\T\\
		&= \frac{\dpoly}{n} \herm{\mF}_{k, a, b}^2
	\end{align*}
	For the other term, using \cref{fact: commutativity},
	\begin{align*}
		\mDel_{31} = \EE[\EE_{i \in [n]} [\mM(Z) \mPi_i\mD(Z^{(i)})^2\mPi_i\mM(Z)^\T] | Z]
		&= \EE_{i \in [n]} [\mM(Z) \mPi_i\mSig_i\mD(Z)^2\mSig_i\mPi_i\mM(Z)^\T]\\
		&\preceq \EE_{i \in [n]} [\mM(Z) \mPi'_i\mSig_i\mD(Z)^2\mSig_i\mPi'_i\mM(Z)^\T]\\
		&= \EE_{i \in [n]} [\mM(Z) \mSig_i\mPi'_i\mD(Z)^2\mPi'_i\mSig_i\mM(Z)^\T]\\
        &= \EE_{i \in [n]} [\mD(Z)\herm{\mG}_{k, a, b} \mSig_i\mPi'_i\mD(Z)^2\mPi'_i\mSig_i\herm{\mG}_{k, a, b}\mD(Z)]
	\end{align*}
	Observe that $\herm{\mG}_{k, a, b} \mSig_i = \herm{\mG}_{k, a, b}$ because the entries of $\herm{\mG}$ only depend on $\al$ and not on $\gam$, so permuting the $\gamma$s will not have any effect on the matrix. Therefore,
	\begin{align*}
		\mDel_{31} &\preceq \EE_{i \in [n]} [\mD(Z)\herm{\mG}_{k, a, b}\mPi'_i\mD(Z)^2\mPi'_i\herm{\mG}_{k, a, b}\mD(Z)]\\
		&\preceq \EE_{i \in [n]} [\mD(Z)\herm{\mG}_{k, a, b}\mD(Z)\mPi'_i\mPi'_i\mD(Z)\herm{\mG}_{k, a, b}\mD(Z)]\\
		&= \EE_{i \in [n]} \herm{\mF}_{k, a, b}\mPi'_i\mPi'_i\herm{\mF}_{k, a, b}\\
		&= \frac{1}{n} \sum_{i = 1}^n \herm{\mF}_{k, a, b}\mPi'_i\mPi'_i\herm{\mF}_{k, a, b}\\
		&\preceq \frac{\dpoly}{n}\herm{\mF}_{k, a, b}^2
	\end{align*}
	where we used the fact that $\sum_{i = 1}^n\mPi'_i\mPi'_i \preceq \dpoly \mI$. Putting them together,
	\begin{align*}
		\Esch{\mDel_3^{k, a, b}}{t} \le 2^t(\Esch{\mDel_{30}}{t} + \Esch{\mDel_{31}}{t}) \le 2^t \cdot 2 \frac{\dpoly^t}{n^t}\Esch{\herm{\mF}_{k, a, b}}{2t} \le \frac{(4\dpoly)^t}{n^t}\Esch{\herm{\mF}_{k, a, b}}{2t}
	\end{align*}
\end{proof}

%!TEX root=main-efron-stein.tex

%%% Local Variables:
%%% mode: latex
%%% TeX-master: "main-efron-stein"
%%% End:

\section{Application: Sparse graph matrices} \label{sec: sparse_graph_matrices}

%We now consider sparse graph matrices, i.e., the setting $G \sim \cG_{n, p}$ for $p \le \frac{1}{2}$. Let $\graphmat{\tau}$ be the graph matrix corresponding to shape $\tau$ where we use $p$-biased Fourier characters $G_{i, j}$. In this section, we obtain bounds on $\Esch{\mM_{\tau} - \EE \mM_{\tau}}{2t}$ and use it to obtain high probability bounds on $\norm{\mM_{\tau}}$. Since many of the details are similar to \cref{sec: norm_bounds_for_dense_graph_matrices} and the proof of \cref{thm: dense_graph_matrix_norm_bounds}, we will pass lightly over some details. We recommend the reader to read that section first.
We now consider sparse graph matrices, i.e., the setting $G \sim \cG_{n, p}$ for $p \le \frac{1}{2}$.
The main difference from dense graph matrices is the contribution of the edge factors.  Na\"ively bounding the contribution of each edge by it's absolute value, as explained in \cref{sec: failure_of_basic}, each edge in the shape contributes a factor of $\sqrt{\frac{1 - p}{p}}$. But in many cases, these bounds are not tight. In fact, they are not tight even in the basic case of the adjacency matrix. In this section, we obtain tighter bounds using our general recursion. As we will see, the improved bound will contain the edge factors only for edges within the vertex separator.

Let $\graphmat{\tau}$ be the graph matrix corresponding to shape $\tau$ where we use $p$-biased Fourier characters $G_{i, j}$. In this section, we obtain bounds on $\Esch{\mM_{\tau} - \EE \mM_{\tau}}{2t}$ and use it to obtain high probability bounds on $\norm{\mM_{\tau}}$.
Since many of the details are similar to \cref{sec: norm_bounds_for_dense_graph_matrices} and the proof of \cref{thm: dense_graph_matrix_norm_bounds}, we will pass lightly over some details. We recommend the reader to read that section first.

The $G_{i, j}$ correspond to the $Z_i$s in \cref{sec: general_recursion} and $\mF$ corresponds to $\mM_{\tau}$. Let $\cI$ denote the set of sub-tuples of $[n]$. Each nonzero entry of $\mM_{\tau}$ is a homogenous polynomial of degree $|E(\tau)|$. If $E(\tau) = \emptyset$, then, $\mM_{\tau} - \EE \mM_{\tau} = 0$ so we can focus on the case when $\tau$ has at least one edge. Moreover, since degree-$0$ vertices in $V(\tau)\setminus U_{\tau} \setminus V_{\tau}$ simply scale the matrix by a factor of at most $n$, we can handle them separately and for our main analysis, we assume there are no such vertices in $\tau$.

We will use \cref{thm: main_general} but the matrices and the statement can be drastically simplified in our application. Instate the notation of \cref{sec: general_recursion}. Since we are dealing with multilinear polynomials, in the definition of $\cK$, we can restrict our attention to $\al \in \{0, 1\}^{\binom{n}{2}}$ because for any other $\al \in \NN^n$, the corresponding row or column of $\mG_{a + b, a, b}$ and hence $\mF_{a + b, a, b}$, will be $0$. So, we can accordingly redefine $\cK$ to only contain these $(\al, \gam)$, hence $\cK \subseteq \{0, 1\}^n \times \{0, 1\}^n$.

Next, the diagonal matrices $\mD_1, \mD_2$ will both be equal to the diagonal matrix $\mD \in \RR[Z]^{\cI \times \cK} \times \RR[Z]^{\cI \times \cK}$ with nonzero entries
\[\mD[(I, \al, \gam), (I, \al, \gam)] = \sqrt{\EE[ \prod_{i, j} G_{ij}^{2\al_{ij}(1 - \gam)_{ij}}]} \prod_{i, j} G_{i}^{\al_{ij}\gam_{ij}} =  \prod_{i, j} G_{i}^{\al_{ij}\gam_{ij}}\]
where we used the fact that for any $i, j$, $\EE[G_{ij}^2] = 1$.

For integers $a, b \ge 0$ such that $a + b = |E(\tau)|$, define the matrix $\mM_{\tau, a, b}$ to be the matrix $\mG_{a + b, a, b}$. We use this notation in order to be streamlined with \cref{sec: norm_bounds_for_dense_graph_matrices}. That is, $\mM_{\tau, a, b}$ has rows and columns indexed by $\cI\times \cK$ such that for all $(I, \al_1, \gam_1), (J, \al_2, \gam_2) \in \cI \times \cK$,
\[\mM_{\tau, a, b}[(I, \al_1, \gam_1), (J, \al_2, \gam_2)] = \begin{dcases}
	\grad_{\al_1 + \al_2} \mM_{\tau}[I, J] & \text{ if $|\al_1|_0 = a, |\al_2|_0 = b, \al_1 \cdot \al_2 = 0$}\\
	0 & \text{o.w.}
\end{dcases}
\]

This is almost identical to the $\mM_{\tau, a, b}$ matrix defined in \cref{sec: norm_bounds_for_dense_graph_matrices}, with the difference being that the row and column indices now have $\gam$ in them. Therefore, for $I, J \in \cI, (\al_1, \gam_1), (\al_2, \gam_2) \in \cK$ such that $|\al_1|_0 = a, |\al_2|_0 = b, \al_1 \cdot \al_2 = 0$, the entry in row $(I, \al_1, \gam_1)$ and column $(J, \al_2, \gam_2)$ is the number of realizations $\phi$ of $\tau$ such that
\begin{itemize}
	\item $U_{\tau}, V_{\tau}$ map to $I, J$ respectively under $\phi$, and
	\item Under $\phi$, the edges of $\tau$ map to the edges in $\al_1$ and $\al_2$ viewed as a set.
\end{itemize}

By \cref{thm: main_general}, for integers $t \ge 1$,
\begin{align*}
	\Esch{\mM_{\tau} - \EE \mM_{\tau}}{2t} &\le \sum_{a, b \ge 0, a + b \ge 1}(Ct^2d\dpoly^4)^{(a + b)t}\Esch{\mF_{a + b, a, b}}{2t}\\
	&= \sum_{a, b \ge 0, a + b  = |E(\tau)|}(Ct^2|E(\tau)|^4)^{t|E(\tau)|}\Esch{\mD\mM_{\tau, a, b}\mD}{2t}
\end{align*}
for an absolute constant $C > 0$.

Now, we would like to analyze $\Esch{\mD\mM_{\tau, a, b}\mD}{2t}$. Just as in the proof of \cref{thm: dense_graph_matrix_norm_bounds}, let $P$ specify which edges of $E(\tau)$ go to $\al_1, \al_2$ respectively and in what order. Moreover, we now store extra information in $P$ that indicates which entries of $\gam_1, \gam_2$ (relative to $\al_1, \al_2$) are set to $1$. Let the set of such information $P$ be denoted $\cP$, then $|\cP| \le (4|E(\tau)|)^{t|E(\tau)|} 2^{|E(\tau)|}$. Thus,
\begin{align*}
	\Esch{\mD\mM_{\tau, a, b}\mD}{2t} \le (8|E(\tau)|)^{t|E(\tau)|} \sum_{P \in \cP} \Esch{\mD\mM_{\tau, a, b, P}\mD}{2t}
\end{align*}
where we define $\mM_{\tau, a, b, P}$ similar to $\mM_{\tau, a, b}$ with the extra condition that $\phi, \al_1, \al_2, \gam_1, \gam_2$ must respect $P$.

At this point, in contrast to the proof of \cref{thm: dense_graph_matrix_norm_bounds}, note that the matrices $\mM_{\tau, a, b, P}$ here have rows and columns indexed by $\cI\times \cK$. We will again define the shape $\tau_P$ that is equal to the nonzero block of the matrix $\mD\mM_{\tau, a, b, P}\mD$, up to renaming of the rows and columns. $V(\tau_P), U_{\tau_P}, V_{\tau_P}$ are defined the same way as in \cref{sec: norm_bounds_for_dense_graph_matrices} but to incorporate the action of $\mD$ on these entries, we simply keep the edges that are active in $\gam_1$ or $\gam_2$, as prescribed by $P$. For an illustration, see \cref{fig: evolution_sparse}.

\begin{figure}[!h]
	\centering
	\includegraphics[trim={2cm 20cm 0 2cm}, clip, scale=0.9]{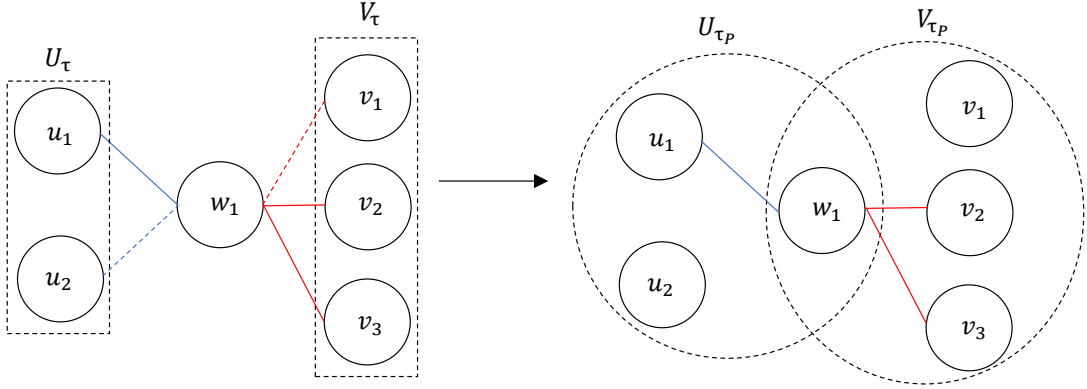}
	\caption{An example illustrating how $\tau_P$ is defined. In this example, $P$ constraints the blue and red edges to go to $\al_1$ and $\al_2$ respectively. Moreover, $P$ indicates that some edges are active in $\gam_1, \gam_2$ (indicated by a solid edge) and some are not active (indicated by a dashed edge) in $\gam_1, \gam_2$. We keep the solid edges in $\tau_P$. $U_{\tau_P}, V_{\tau_P}$ also have an ordering on the vertices (not shown here).}
	\label{fig: evolution_sparse}
\end{figure}

Then, by similar renaming of the rows and columns of $\mD\mM_{\tau, a, b, P}\mD$ and dropping the $\gam$s, we obtain $\mM_{\tau_P}$. We therefore obtain the bound
\begin{align*}
	\Esch{\mD\mM_{\tau, a, b}\mD}{2t} &\le (8|E(\tau)|)^{t|E(\tau)|} \sum_{P \in \cP}\Esch{\mM_{\tau_P}}{2t}
\end{align*}

We would like to analyze norm bounds on the matrices $\mM_{\tau_P}$. Observe that $\tau_P$ are shapes with the properties
\begin{itemize}
	\item there are no vertices in $V(\tau_P) \setminus U_{\tau_P} \setminus V_{\tau_P}$
	\item each edge is either entirely contained in $U_{\tau_P}$ or entirely contained in $V_{\tau_P}$
\end{itemize}
Call such shapes \textit{simple}.

In the following lemma, whose proof is deferred to the next section, we prove norm bounds on simple shapes. Recall that in \cref{lem: empty_shape}, we analyzed the norm bounds of simple shapes with no edges (because in this case, the graph distribution doesn't matter). The analysis for simple shapes is very similar but this time, we use scalar concentration tools to bound the Frobenius norm.

For a set $S$ of vertices, denote by $E(S)$ the set of edges with both endpoints in $S$.

\begin{restatable}{lemma}{simplegraphmatrixnormbounds}\label{lem: graphmatrixnormbound_nomiddlevertices}
	For all even integers $t \ge 2$, if $\tau$ is a simple shape,
	\[\Esch{\graphmat{\tau}}{2t} \le \bigg(n^{|V(\tau)|} (Ct)^{t|E(\tau)|} |V(\tau)|^{t|V(\tau)|}\bigg)\max_{U_{\tau} \cap V_{\tau} \subseteq S \subseteq V(\tau)}\left(\frac{1 - p}{p}\right)^{t|E(S)|}n^{t(|V(\tau)| - |S|)}\]
	for an absolute constant $C > 0$.
\end{restatable}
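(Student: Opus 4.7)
The plan is to exploit the block-diagonal structure of $\mM_\tau$ arising from simplicity of $\tau$ and reduce the operator-norm estimate to a scalar moment bound controlled by a careful counting over realizations. Let $C = U_\tau\cap V_\tau$ and write $E(\tau) = E_U\sqcup E_V$, where $E_U, E_V$ are the edges of $\tau$ entirely inside $U_\tau, V_\tau$ respectively (the disjointness uses simplicity). For each injective realization $T\in[n]^C$, the corresponding block $\mM_{\tau,T}$ is bounded entrywise in absolute value by the rank-one outer product $u_T v_T^\T$ with $u_T[I] = \prod_{e\in E_U}G_{\phi_I(e)}$ and $v_T[J] = \prod_{e\in E_V}G_{\phi_J(e)}$ (the sole discrepancy is the joint-injectivity constraint, which only decreases entries). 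Using $\|A\|_{2t}^{2t}\le\|A\|_F^{2t}$, the orthogonality of distinct blocks, and the independence of $u_T$ and $v_T$ (they involve the disjoint edge sets $E_U, E_V$),
\[
\Esch{\mM_\tau}{2t}\;\le\;\sum_T\,\EE[\|u_T\|^{2t}]\cdot\EE[\|v_T\|^{2t}].
\]

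Next I bound $\EE[\|u_T\|^{2t}]$; the $v$-case is symmetric. Expanding the power and using the scalar moment estimate $\EE[G_f^{2m}]\le 2(\tfrac{1-p}{p})^{m-1}$ for $m\ge1$ (valid when $p\le1/2$),
\[
\EE[\|u_T\|^{2t}]\;\le\;\sum_{I_1,\dots,I_t}2^{|R|}\Bigl(\tfrac{1-p}{p}\Bigr)^{t|E_U|-|R|},
\]
where $R=\bigcup_k\phi_{I_k}(E_U)$ is the set of distinct edges covered. I parameterize each tuple by its ``collapse pattern'' $S_U\subseteq U_\tau$, the set of vertices $v\in U_\tau$ on which $\phi_{I_k}(v)$ is constant across $k$; by construction $C\subseteq S_U\subseteq U_\tau$. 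For each pattern the number of compatible tuples is at most $n^{|S_U\setminus C|+t|U_\tau\setminus S_U|}$ (choose the common image on $S_U\setminus C$, then $t$ free injective choices on $U_\tau\setminus S_U$), while edges in $E_U\cap E(S_U)$ enter $R$ with multiplicity $t$ and free-incident edges contribute $t$ distinct copies by injectivity, yielding the bound $(\tfrac{1-p}{p})^{(t-1)|E_U\cap E(S_U)|}$ times a $(Ct)^{|E_U|}$-style polynomial factor. Summing over the at most $2^{|U_\tau|}$ choices of $S_U\supseteq C$ yields
\[
\EE[\|u_T\|^{2t}]\;\le\;(Ct)^{t|E_U|}|U_\tau|^{t|U_\tau|}\max_{S_U\supseteq C}\,n^{|S_U\setminus C|+t|U_\tau\setminus S_U|}\Bigl(\tfrac{1-p}{p}\Bigr)^{(t-1)|E_U\cap E(S_U)|}.
\]

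Finally, combining the $u$- and $v$-bounds by setting $S=S_U\cup S_V\supseteq C$, so that $|E(S)|=|E_U\cap E(S_U)|+|E_V\cap E(S_V)|$ by simplicity, multiplying by the at most $n^{|C|}$ choices of $T$, collecting polynomial prefactors into the stated $(Ct)^{t|E(\tau)|}|V(\tau)|^{t|V(\tau)|}$ form, and replacing $(\tfrac{1-p}{p})^{t-1}$ by $(\tfrac{1-p}{p})^{t}$ (free since $(1-p)/p\ge1$) gives the stated bound.

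The main obstacle will be the third step: handling the ``degenerate'' tuples whose $|R|$ is strictly less than the typical value $|E_U\cap E(S_U)|+t|E_U\setminus E(S_U)|$, which occur when distinct copies happen to assign coinciding images to free-vertex edges despite the free vertices themselves taking different values. Such configurations introduce extra $(\tfrac{1-p}{p})$ factors from the free-incident edges, but are correspondingly rarer since they require additional coincidences in the independent free choices; an explicit Stirling-number-style accounting of the partition of $[t]$ induced by $\phi_{I_k}(v)$ at each free vertex should absorb these cases into the polynomial prefactor.
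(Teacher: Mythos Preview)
Your overall architecture matches the paper's --- exploit the block-diagonal structure along $C=U_\tau\cap V_\tau$, bound each block's Schatten norm by its Frobenius norm, and reduce to a scalar $t$-th moment --- but the execution diverges and carries two gaps.

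First, the factorization $\EE[\|u_T\|^{2t}]\cdot\EE[\|v_T\|^{2t}]$ is not justified: although $E_U$ and $E_V$ are disjoint as edge sets of the \emph{shape}, the random vectors $u_T$ and $v_T$ are \emph{not} independent. A fixed variable $G_f$ with $f\in\binom{[n]}{2}$ can appear both as some $G_{\phi_I(e)}$ for $e\in E_U$ and as some $G_{\phi_J(e')}$ for $e'\in E_V$, since $I$ and $J$ range over all injections. (Take $U_\tau=\{u_1,u_2\}$ and $V_\tau=\{v_1,v_2\}$ disjoint, one edge on each side, $C=\emptyset$; then the $(1,2)$-coordinate of $u$ and the $(1,2)$-coordinate of $v$ are both $G_{1,2}$.) Cauchy--Schwarz repairs this at the cost of constants, but independence cannot simply be asserted.

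Second, and more seriously, your own last paragraph concedes the counting in step three is incomplete: tuples $(I_1,\dots,I_t)$ whose free vertices take non-constant yet partially-coinciding values produce fewer than $t$ distinct images for free-incident edges and hence a larger $(\tfrac{1-p}{p})$-exponent than your ``$t$ distinct copies'' claim allows. The Stirling-type accounting you sketch is plausible but is precisely the delicate direct moment expansion the paper avoids. The paper does not pass through a rank-one bound at all; it writes $\|\mM_{\tau,T}\|_F^2 \le L^{|E(C)|}\, f(Y)$ directly, with $Y_{ij}=G_{ij}^2$ and $f$ a multilinear polynomial in the $Y$'s, and then applies the black-box Schudy--Sviridenko moment inequality to bound $\EE[f(Y)^t]$. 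All the combinatorics collapses into estimating the parameters $\mu_r(f,Y)$, for which a one-line count (choose which vertices of $\tau$ absorb the new labels forced by the $r$ prescribed edges) gives $L^r\mu_r(f,Y)\le |V(\tau)|^{|V(\tau)|}A$ with $A=\max_{S\supseteq C}L^{|E(S)|}n^{|V(\tau)|-|S|}$, and the lemma follows. This route bypasses both the independence issue and the degenerate-tuple enumeration.
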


For simple shapes, the main difference from norm bounds on corresponding dense graph matrices is that each edge within $S$ contributes a factor of $\sqrt{\frac{1 - p}{p}}$. Edge contributions are unavoidable when handling sparse graph matrices, but we have identified that we need not consider all edges in the shape but only a subset of it.
Using this lemma, we can obtain norm bounds on general graph matrices. We first recall the definition of a vertex separator.

\vertexseparator*

Let $I_{\tau}$ be the set of isolated vertices (vertices of degree $0$) in $V(\tau) \setminus U_{\tau} \setminus V_{\tau}$, so they essentially scale the matrix by a scalar factor. We now state the main theorem of this section.

\begin{theorem}\label{thm: sparse_graph_matrix_norm_bounds}
	For all even integers $t \ge 2$, for any shape $\tau$,
	\[\Esch{\mM_{\tau} - \EE\mM_{\tau}}{2t} \le \bigg(n^{|V(\tau)|} |V(\tau)|^{t|V(\tau)|} (Ct^3|E(\tau)|^5)^{t|E(\tau)|}\bigg) \max_{\text{vertex separator }S}\left(\frac{1 - p}{p}\right)^{t|E(S)|}n^{t(|V(\tau)| - |S| + |I_{\tau}|)}\]
	where the maximum is over all vertex separators $S$.
\end{theorem}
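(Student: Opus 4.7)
The plan is to reduce to the bounds for simple shapes already set up in Lemma~\ref{lem: graphmatrixnormbound_nomiddlevertices}, and then translate those bounds into the desired vertex-separator form for $\tau$. First I would dispose of the trivial cases: if $E(\tau) = \emptyset$ there is nothing to prove, and any degree-$0$ vertex in $V(\tau)\setminus U_\tau \setminus V_\tau$ merely scales the matrix by a factor at most $n$, which is absorbed into the $n^{t|I_\tau|}$ factor. So assume $\tau$ has at least one edge and no isolated middle vertices.

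Since the $p$-biased characters $G_{ij}$ satisfy $\EE G_{ij}=0$ and $\EE G_{ij}^2=1$, and since every nonzero entry of $\mM_\tau$ is homogeneous multilinear of degree $|E(\tau)|$, we have $\EE \mF_{k,a,b} = 0$ unless $a+b = k = |E(\tau)|$, and $\mF_{k,a,b} = 0$ for $k > |E(\tau)|$. Applying Theorem~\ref{thm: main_general} with $\dpoly = |E(\tau)|$ and $d = 1$ therefore collapses the recursion to
\[
\Esch{\mM_\tau - \EE\mM_\tau}{2t} \;\le\; \sum_{\substack{a,b\ge 0\\ a+b = |E(\tau)|}} (C t^2 |E(\tau)|^4)^{\,t|E(\tau)|}\,\Esch{\mD\mM_{\tau,a,b}\mD}{2t}\mper
\]
Next, decompose $\mM_{\tau,a,b}$ into pieces $\mM_{\tau,a,b,P}$ indexed by extra combinatorial data $P$ recording (i) an ordered partition $E(\tau) = E_1 \sqcup E_2$ specifying which edges populate $\alpha_1$ vs.\ $\alpha_2$, together with their orderings, and (ii) which of those edges are ``active'' (set to $1$) in $\gamma_1$ or $\gamma_2$ respectively, so that $|\cP| \le (8|E(\tau)|)^{|E(\tau)|}$. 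Fact~\ref{fact: holder} gives $\Esch{\mD\mM_{\tau,a,b}\mD}{2t} \le (8|E(\tau)|)^{t|E(\tau)|}\sum_{P\in\cP}\Esch{\mD\mM_{\tau,a,b,P}\mD}{2t}$.

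The main structural step is that, for each $P$, the matrix $\mD\mM_{\tau,a,b,P}\mD$ coincides (after relabelling rows and columns) with the graph matrix $\mM_{\tau_P}$ of a \emph{simple} shape $\tau_P$: set $V(\tau_P) = V(\tau)$, $U_{\tau_P} = U_\tau \cup V(E_1)$, $V_{\tau_P} = V_\tau \cup V(E_2)$ under canonical orderings, and let $E(\tau_P)$ be precisely the set of edges that $P$ marks active (the action of $\mD$ re-inserts exactly those edge variables). Each such edge lies either entirely in $U_{\tau_P}$ or entirely in $V_{\tau_P}$, so $\tau_P$ is indeed simple, and Lemma~\ref{lem: graphmatrixnormbound_nomiddlevertices} applies.

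Finally, translate the bound back to $\tau$. Lemma~\ref{lem: graphmatrixnormbound_nomiddlevertices} dominates $\Esch{\mM_{\tau_P}}{2t}$ by combinatorial factors times $\max_{S'}\bigl(\tfrac{1-p}{p}\bigr)^{t|E_{\tau_P}(S')|}n^{t(|V(\tau)|-|S'|)}$, where $U_{\tau_P}\cap V_{\tau_P}\subseteq S'\subseteq V(\tau)$. Exactly as in the proof of Theorem~\ref{thm: dense_graph_matrix_norm_bounds}, $S_0 := U_{\tau_P}\cap V_{\tau_P}$ is a vertex separator of $\tau$: any $U_\tau$–$V_\tau$ path in $\tau$ traverses some edge $e\in E(\tau) = E_1\sqcup E_2$, which forces one endpoint of $e$ into $U_{\tau_P}$ and the other into $V_{\tau_P}$, so both endpoints lie in $S_0\cup V_{\tau_P}\cap U_{\tau_P}$ and the path meets $S_0$. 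Any $S'\supseteq S_0$ is then also a separator of $\tau$; and since $E(\tau_P)\subseteq E(\tau)$ we have $|E_{\tau_P}(S')|\le |E_\tau(S')|$, so each term is at most $\bigl(\tfrac{1-p}{p}\bigr)^{t|E_\tau(S')|}n^{t(|V(\tau)|-|S')|}$. Taking the maximum over separators $S$ of $\tau$ and collecting the prefactors $(Ct^2|E(\tau)|^4)^{t|E(\tau)|}\cdot(8|E(\tau)|)^{t|E(\tau)|}\cdot |\cP|$ together with the bounds from Lemma~\ref{lem: graphmatrixnormbound_nomiddlevertices} yields the claimed exponent $(Ct^3|E(\tau)|^5)^{t|E(\tau)|}$. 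The only nontrivial step is the bookkeeping to verify that the ``active edges'' encoded by $\gamma$ in the general framework correspond precisely to edges of $\tau_P$ inside $S'$, so that the edge exponent $|E_{\tau_P}(S')|$ is genuinely bounded by $|E_\tau(S')|$; everything else is a direct application of results already established.
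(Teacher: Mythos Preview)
Your proposal is correct and follows essentially the same route as the paper: reduce via Theorem~\ref{thm: main_general} to the terms with $a+b=|E(\tau)|$, split $\mD\mM_{\tau,a,b}\mD$ over the combinatorial data $P$ (ordered edge partition plus the $\gamma$-activity bits), identify each piece with the graph matrix of a simple shape $\tau_P$, apply Lemma~\ref{lem: graphmatrixnormbound_nomiddlevertices}, and then observe that $U_{\tau_P}\cap V_{\tau_P}$ is a vertex separator of $\tau$ so every $S'$ in the lemma's maximum is a separator of $\tau$. You are actually slightly more explicit than the paper about one point the paper glosses over: the passage from $|E_{\tau_P}(S')|$ to $|E_\tau(S')|$, which uses $E(\tau_P)\subseteq E(\tau)$ together with $\tfrac{1-p}{p}\ge 1$. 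Your closing sentence mischaracterizes that step (it is not that active edges ``correspond precisely to edges inside $S'$'', just that $E(\tau_P)\subseteq E(\tau)$), but the argument you actually give in the preceding paragraph is the right one.
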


To interpret this bound, if we assume that there are a constant number of vertices in $\tau$, then by choosing $t \approx \polylog(n)$, we get
\[\norm{\mM_{\tau}} = \widetilde{\bigoh}\bigg(\max_{\text{vertex separator }S}\left(\sqrt{\frac{1 - p}{p}}\right)^{|E(S)|}\sqrt{n}^{|V(\tau) - |S| + |I_{\tau}|}\bigg)\] with high probability, where $\widetilde{\bigoh}$ hides logarithmic factors.  This result follows from \cref{thm: sparse_graph_matrix_norm_bounds} if $\tau$ has at least one edge, but also applies if $\tau$ has no edges, in which case we can directly use the far simpler \cref{lem: empty_shape}. A precise form of the above characterization is given in \cref{cor: sparse_graph_matrix_norm_bounds}.

\cref{thm: sparse_graph_matrix_norm_bounds} gives us the right dependence on $p, n$ for norm bounds in the case of sparse graph matrices. The same bound, up to lower order terms, was also obtained in \cite{sparsesos} via the trace power method, where they use these bounds to prove semidefinite-programming lower bounds for the maximum independent set problem on sparse graphs.

\begin{proof}[Proof of \cref{thm: sparse_graph_matrix_norm_bounds}]
	If $E(\tau) = \emptyset$, then $\mM_{\tau} = \EE\mM_{\tau}$ and we are done. So, assume $E(\tau) \neq \emptyset$. Since vertices in $I_{\tau}$ only scale the matrix by a factor of at most $n$, we can handle them separately and our bound has the appropriate power of $n$ coming from these. Therefore, we can assume $I_{\tau} = \emptyset$. Continuing our prior discussions, for an absolute constant $C_1 > 0$,
	\begin{align*}
		\Esch{\mM_{\tau} - \EE\mM_{\tau}}{2t} &\le \sum_{a, b \ge 0, a + b  = |E(\tau)|}(C_1t^2|E(\tau)|^4)^{t|E(\tau)|}\Esch{\mD\mM_{\tau, a, b}\mD}{2t}\\
		&\le \sum_{a, b \ge 0, a + b  = |E(\tau)|}(C_1t^2|E(\tau)|^4)^{t|E(\tau)|}(8|E(\tau)|)^{t|E(\tau)|} \sum_{\psi \in \Gam_{a, b}}\Esch{\mM_{\psi}}{2t}
	\end{align*}
	where $\Gam_{a, b}$ are the set of simple shapes we obtain for $\mD\mM_{\tau, a, b} \mD$, as per our discussion above. Using \cref{lem: graphmatrixnormbound_nomiddlevertices}, for an absolute constant $C_2 > 0$, we have
	\begin{align*}
		&\Esch{\mM_{\tau} - \EE\mM_{\tau}}{2t}\\
		&\le \bigg(n^{|V(\tau)|} |V(\tau)|^{t|V(\tau)|} (C_2t^3|E(\tau)|^5)^{t|E(\tau)|}\bigg)\sum_{a, b \ge 0, a + b  = |E(\tau)|}\sum_{\psi \in \Gam_{a, b}}\max_{U_{\psi} \cap V_{\psi} \subseteq S \subseteq V(\psi)}\left(\frac{1 - p}{p}\right)^{t|E(S)|}n^{t(|V(\psi)| - |S|)}
	\end{align*}
	For any $a, b$, consider any simple shape $\psi \in \Gam_{a, b}$ that can be obtained. As observed in the proof of \cref{thm: dense_graph_matrix_norm_bounds} (see in particular \cref{fig: proof_by_picture}), $U_{\psi} \cap V_{\psi}$ must be a vertex separator of $\tau$. Therefore, any $S \supseteq U_{\psi} \cap V_{\psi}$ must be a vertex separator of $\tau$. It's easy to see that as $S$ ranges over all sets such that $U_{\psi} \cap V_{\psi} \subseteq S \subseteq V(\psi)$, it ranges over all vertex separators of $\tau$.

	Also, the number of different $\psi$ is at most $4^{|E(\tau)|}$ since each edge can go either to $U_{\psi}$ or $V_{\psi}$ and for each such choice, it can either be active in $\gam$ or not. Therefore,
	\begin{align*}
		&\Esch{\mM_{\tau} - \EE\mM_{\tau}}{2t}\\
		&\le \bigg(n^{|V(\tau)|} |V(\tau)|^{t|V(\tau)|} (C_2t^3|E(\tau)|^5)^{t|E(\tau)|}\bigg) 4^{|E(\tau)|}\max_{\text{vertex separator }S}\left(\frac{1 - p}{p}\right)^{t|E(S)|}n^{t(|V(\tau)| - |S|)}\\
		&\le \bigg(n^{|V(\tau)|} |V(\tau)|^{t|V(\tau)|} (Ct^3|E(\tau)|^5)^{t|E(\tau)|}\bigg) \max_{\text{vertex separator }S}\left(\frac{1 - p}{p}\right)^{t|E(S)|}n^{t(|V(\tau)| - |S|)}
	\end{align*}
	for an absolute constant $C > 0$.
\end{proof}

The following corollary obtains high probability norm bounds for norms of graph matrices via Markov's inequality. We assume the graph has at least one edge, otherwise it is deterministic and its norm bound was already analyzed in \cref{lem: empty_shape}, \cref{cor: dense_graph_matrix_norm_bounds}, where we observe that the distinction between sparse and dense graph matrices does not matter if the random matrix is deterministic.

\begin{corollary}\label{cor: sparse_graph_matrix_norm_bounds}
	For a shape $\tau$ with at least one edge, for any constant $\eps > 0$, with probability $1 - \eps$,
	\[\norm{\mM_{\tau}} \le \bigg(|V(\tau)|^{|V(\tau)|/2} (C|E(\tau)|^5\log^3(n^{|V(\tau)|}/\eps))^{|E(\tau)|/2}\bigg)\cdot\max_{\text{vertex separator }S}\left(\sqrt{\frac{1 - p}{p}}\right)^{|E(S)|}\sqrt{n}^{|V(\tau) - |S| + |I_{\tau}|}\]
	for an absolute constant $C > 0$.
\end{corollary}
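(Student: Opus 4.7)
The plan is to mirror the Markov's inequality argument used for the dense case in \cref{cor: dense_graph_matrix_norm_bounds}, but applied to \cref{thm: sparse_graph_matrix_norm_bounds} instead. The first observation is that when $\tau$ has at least one edge, every nonzero entry of $\mM_\tau$ is a sum of degree $|E(\tau)| \ge 1$ monomials in the centered variables $G_{ij}$ (which have mean zero), so $\EE\mM_\tau = 0$ and hence $\norm{\mM_\tau} = \norm{\mM_\tau - \EE\mM_\tau}$. This lets us directly invoke the Schatten-norm bound.

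Next, I would use Markov's inequality in the standard form
\[
\Pr\insquare{\norm{\mM_\tau} \ge \theta} ~\le~ \Pr\insquare{\sch{\mM_\tau - \EE\mM_\tau}{2t} \ge \theta^{2t}} ~\le~ \theta^{-2t}\cdot\Esch{\mM_\tau - \EE\mM_\tau}{2t}
\]
and substitute the bound from \cref{thm: sparse_graph_matrix_norm_bounds}. To make the right-hand side at most $\eps$, set
\[
\theta ~=~ \eps^{-1/(2t)}\inparen{n^{|V(\tau)|}|V(\tau)|^{t|V(\tau)|}(C't^3|E(\tau)|^5)^{t|E(\tau)|}}^{1/(2t)}\cdot\max_{\text{vertex separator }S}\inparen{\sqrt{\frac{1-p}{p}}}^{|E(S)|}\sqrt{n}^{|V(\tau)| - |S| + |I_\tau|}
\]
for the absolute constant $C'$ from \cref{thm: sparse_graph_matrix_norm_bounds}. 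Taking the $(2t)$-th root distributes cleanly across the product structure: the $|V(\tau)|^{t|V(\tau)|}$ term becomes $|V(\tau)|^{|V(\tau)|/2}$, the $(C't^3|E(\tau)|^5)^{t|E(\tau)|}$ term becomes $(C't^3|E(\tau)|^5)^{|E(\tau)|/2}$, and the factors $\inparen{\sqrt{(1-p)/p}}^{|E(S)|}$ and $\sqrt{n}^{|V(\tau)| - |S| + |I_\tau|}$ come out with exponent $1$ as desired.

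Finally, to absorb the stray factors $\eps^{-1/(2t)}$ and $n^{|V(\tau)|/(2t)}$ into constants, I choose $t = \tfrac{1}{2}\log(n^{|V(\tau)|}/\eps)$ (rounding up to an even integer so that \cref{thm: sparse_graph_matrix_norm_bounds} applies), exactly as in the dense case. Under this choice, $\eps^{-1/(2t)} \cdot n^{|V(\tau)|/(2t)} = O(1)$, and $t^3 = O(\log^3(n^{|V(\tau)|}/\eps))$, so the $(C't^3|E(\tau)|^5)^{|E(\tau)|/2}$ factor becomes $(C|E(\tau)|^5 \log^3(n^{|V(\tau)|}/\eps))^{|E(\tau)|/2}$ for a possibly larger absolute constant $C > 0$. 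Collecting terms yields exactly the stated bound.

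There is no genuine obstacle here, since all the combinatorial content has been concentrated into \cref{thm: sparse_graph_matrix_norm_bounds}; the main step is merely bookkeeping of exponents. The one minor subtlety worth double-checking is that \cref{thm: sparse_graph_matrix_norm_bounds} requires $t \ge 2$ to be an even integer, so one should ensure the chosen $t = \lceil \tfrac{1}{2}\log(n^{|V(\tau)|}/\eps)\rceil$ (rounded up to an even integer) satisfies this, which is harmless for $n$ at least a small constant.
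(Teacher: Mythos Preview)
Your proposal is correct and follows essentially the same approach as the paper's own proof: observe $\EE\mM_\tau = 0$ since $\tau$ has an edge, apply Markov's inequality to the Schatten-$2t$ norm, plug in the bound from \cref{thm: sparse_graph_matrix_norm_bounds}, choose $\theta$ to make the right-hand side at most $\eps$, and set $t = \tfrac{1}{2}\log(n^{|V(\tau)|}/\eps)$. Your additional remark about rounding $t$ to an even integer $\ge 2$ is a valid point that the paper glosses over.
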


\begin{proof}
	Since $|E(\tau)| \ge 1$, $\EE\mM_{\tau} = 0$. By an application of Markov's inequality,
	\begin{align*}
		Pr[\norm{\mM_{\tau}} \ge \theta] &\le Pr[\sch{\mM_{\tau}}{2t} \ge \theta^{2t}]\\
		&\le \theta^{-2t} \EE\sch{\mM_{\tau}}{2t}\\
		&\le \theta^{-2t}\bigg(n^{|V(\tau)|} |V(\tau)|^{t|V(\tau)|} (C't^3|E(\tau)|^5)^{t|E(\tau)|}\bigg) \max_{\text{vertex separator }S}\left(\frac{1 - p}{p}\right)^{t|E(S)|}n^{t(|V(\tau)| - |S| + |I_{\tau}|)}
	\end{align*}
	for an absolute constant $C' > 0$. We now set
	\begin{align*}
		\theta = &\bigg(\eps^{-1/(2t)} (C'')^{|E(\tau)|} n^{|V(\tau)|/(2t)}|V(\tau)|^{|V(\tau)|/2} t^{3|E(\tau)|/2}|E(\tau)|^{5|E(\tau)|/2}\bigg)\\
		&\qquad\cdot \max_{\text{vertex separator }S}\left(\sqrt{\frac{1 - p}{p}}\right)^{|E(S)|}\sqrt{n}^{|V(\tau) - |S| + |I_{\tau}|}
	\end{align*}
	for an absolute constant $C'' > 0$, to make this expression at most $\eps$. Set $t = \frac{1}{2}\log(n^{|V(\tau)|}/\eps)$ to complete the proof.
\end{proof}

\subsection{Norm bounds on simple graph matrices}

In this section, we will prove \cref{lem: graphmatrixnormbound_nomiddlevertices}. First, we recall the following scalar concentration result from \cite{schudy2011bernstein}.

\subsubsection{Schudy-Sviridenko moment bound}

The definitions and main bound in this section are from \cite{schudy2011bernstein}.

\begin{definition}
	A random variable $Z$ is central moment bounded with real parameter $L > 0$ if for any integer $i \ge 1$,
	\[\EE[|Z - \EE[Z]|^i] \le i\cdot L\cdot \EE[|Z - \EE[Z]|^{i - 1}]\]
\end{definition}

\begin{propn}
	The $p$-biased Bernoulli random variable $Z$ is central moment bounded with real parameter $L =\sqrt{\frac{1 - p}{p}}$.
\end{propn}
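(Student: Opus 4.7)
The plan is to prove the proposition via a direct pointwise bound: I will show that $|Z - \EE[Z]|$ is almost surely bounded by $L = \sqrt{(1-p)/p}$, which will immediately yield the central moment bound (in fact, a stronger version without the factor of $i$ on the right-hand side).

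First, I would compute $\EE[Z]$. By definition, $Z$ takes the value $-\sqrt{(1-p)/p}$ with probability $p$ and $\sqrt{p/(1-p)}$ with probability $1-p$, so
\[
\EE[Z] ~=~ -p\sqrt{\tfrac{1-p}{p}} + (1-p)\sqrt{\tfrac{p}{1-p}} ~=~ -\sqrt{p(1-p)} + \sqrt{p(1-p)} ~=~ 0.
\]
Hence $|Z - \EE[Z]| = |Z|$, and this variable takes the value $\sqrt{(1-p)/p}$ with probability $p$ and $\sqrt{p/(1-p)}$ with probability $1-p$.

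Next, I would exploit the hypothesis $p \le 1/2$ that is in force throughout this section on sparse graph matrices. Under $p \le 1/2$, we have $(1-p)/p \ge 1 \ge p/(1-p)$, so $\sqrt{(1-p)/p} \ge \sqrt{p/(1-p)}$. Therefore $|Z| \le \sqrt{(1-p)/p} = L$ almost surely.

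Finally, for any integer $i \ge 1$, the pointwise inequality $|Z|^i = |Z| \cdot |Z|^{i-1} \le L \cdot |Z|^{i-1}$ holds. Taking expectations yields $\EE[|Z - \EE[Z]|^i] \le L \cdot \EE[|Z - \EE[Z]|^{i-1}] \le i \cdot L \cdot \EE[|Z - \EE[Z]|^{i-1}]$, where the last step uses $i \ge 1$. This is precisely the central moment boundedness condition with parameter $L = \sqrt{(1-p)/p}$. There is no real obstacle here; the proof is a straightforward almost-sure bound, and the factor of $i$ in the definition is not even needed in this setting.
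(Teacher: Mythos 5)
Your proof is correct and follows essentially the same route as the paper: both observe that $\EE[Z]=0$ and that for $p \le 1/2$ we have $|Z| \le \sqrt{(1-p)/p}$ almost surely, and both use this to pull out a factor of $L$ from the $i$-th central moment; your phrasing via the pointwise inequality $|Z|^i \le L\,|Z|^{i-1}$ is a slightly cleaner way of stating the same bound the paper obtains by writing out the two-point expectation explicitly.
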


\begin{proof}
	We have $\EE[Z] = 0$ and for $p \le \frac{1}{2}$, $|Z| \le \sqrt{\frac{1 - p}{p}}$, therefore,
	\begin{align*}
		\EE[|Z - \EE[Z]|^i] &= p\sqrt{\frac{p}{1 - p}}^i + (1 - p)\sqrt{\frac{1 - p}{p}}^i\\
		&\le \sqrt{\frac{1 - p}{p}}\bigg(p\sqrt{\frac{p}{1 - p}}^{i - 1} + (1 - p)\sqrt{\frac{1 - p}{p}}^{i- 1}\bigg)\\
		&= \sqrt{\frac{1 - p}{p}}\EE[|Z - \EE[Z]|^{i - 1}]
	\end{align*}
	therefore, we can take $L = \sqrt{\frac{1 - p}{p}}$.
\end{proof}

For a given multilinear polynomial $f(x)$ on variables $x_1, \ldots, x_n$, we can naturally associate with it a hypergraph $H$ on vertices $[n]$ and weighted hyperedges $E(H)$ where each $h \in E(H)$ corresponds to a distinct term of $f(x)$. Each hyperedge $h$ is a subset $V(h)$ of vertices and has a real valued weight $w_h$ which is the coefficient of that monomial in $f$. Therefore,
\[f(x) = \sum_{h \in E(H)} w_h \prod_{v \in V(h)} x_v\]

Assume $f$ has degree $\dpoly$, then each hyperedge of $H$ has at most $\dpoly$ vertices.
Now, for a given collection of independent random variables $Y_1, \ldots, Y_n$, a multilinear poynomial $f$ with associated hypergraph $H$ and weights $w$, and an integer $r \ge 0$, define
\[\mu_r(f, Y) = \max_{S \subseteq [n], |S| = r} \bigg(\sum_{h \in E(H), S \subseteq V(h)} |w_h|\prod_{v \in V(h) \setminus S} \EE[|Y_v|]\bigg)\]

\begin{lemma}[\cite{schudy2011bernstein}, Lemma 5.1]\label{lem: schudy_sviridenko}
	Given $n$ independent central moment bounded random variables $Y_1, \ldots, Y_n$ with the same parameter $L > 0$ and a degree $\dpoly$ multilinear polynomial $f(x)$. Let $t \ge 2$ be an even integer, then
	\[\EE[|f(Y) - \EE [f(Y)]|^t] \le \max\bigg\{\bigg(\sqrt{tR_4^{\dpoly}\var{f(Y)}}\bigg)^t, \max_{r \in [\dpoly]}(t^rR_4^{\dpoly}L^r\mu_r(f, Y))^t\bigg\}\]
	where $R_4 \ge 1$ is some absolute constant.
\end{lemma}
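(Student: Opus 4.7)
First I would reduce to the case $\EE[Y_i] = 0$ for every $i$: substituting $Y_i = (Y_i - \EE Y_i) + \EE Y_i$ in each monomial of $f$ and re-expanding produces a new multilinear polynomial $\widetilde f$ of degree at most $\dpoly$, whose coefficients are sums of the original weights tested against products of means. These sums are precisely the quantities controlled by the $\mu_r$ statistics, so a bound for the centered version transfers back to $f$ up to an $R_4^{\dpoly}$ factor. With mean-zero variables in hand, iterating the central-moment-bounded property $\EE[|Y_i|^c] \le cL\cdot \EE[|Y_i|^{c-1}]$ yields $\EE[|Y_i|^c] \le (cL)^c$ for every $c \ge 1$. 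Writing $g(Y) := f(Y) - \EE f(Y) = \sum_{h \ne \emptyset} w_h\, Y^h$ with $Y^h := \prod_{v \in V(h)} Y_v$, the multinomial expansion gives
\[
\EE[g(Y)^t] \;=\; \sum_{(h_1,\ldots,h_t)} \Bigl(\prod_{i=1}^t w_{h_i}\Bigr)\,\EE\Bigl[\prod_{i=1}^t Y^{h_i}\Bigr],
\]
and by independence and mean-zero-ness the inner expectation vanishes unless every vertex appearing in the multiset $h_1 \uplus \cdots \uplus h_t$ has multiplicity at least $2$.

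\textbf{Organizing by overlap patterns.} For surviving tuples, the per-vertex bound $\EE[|Y_v|^{c_v}] \le (c_v L)^{c_v}$ controls the absolute value of the inner expectation. I would then group tuples by their \emph{overlap pattern}: build the intersection graph on $[t]$ with an edge between $i$ and $j$ whenever $V(h_i) \cap V(h_j) \ne \emptyset$, partition $[t]$ into connected components $C_1, \ldots, C_s$, and for each component $C$ record the set $S_C$ of vertices lying in at least two of its hyperedges. The $\mu_r$ statistics are designed precisely to control weight-sums over tuples of hyperedges passing through a common set of $r$ vertices, so a careful hypergraph counting argument bounds $\sum \prod_i |w_{h_i}|$ over all tuples realizing a fixed pattern by a product of $\mu_{|S_{C_j}|}(f,Y)^{|C_j|}$ across components, with the special case of a size-$2$ component whose two hyperedges coincide on their vertex set contributing $\var{f(Y)}$ instead of a $\mu_r$ factor.

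\textbf{Assembling the bound and the main obstacle.} Summing over overlap patterns produces an estimate of the form
\[
\EE[g(Y)^t] \;\le\; (R_4^{\dpoly})^t \sum_{\text{patterns}} \prod_j \bigl(t^{|S_{C_j}|} L^{|S_{C_j}|}\bigr)\cdot \prod_j \mu_{|S_{C_j}|}(f,Y)^{|C_j|},
\]
with variance factors replacing $\mu_0$ for the fully-paired components. Optimizing over partitions of $[t]$ and over component types shows the supremum is attained in one of two extreme regimes: either the all-pairing partition, yielding a contribution of order $(t R_4^{\dpoly} \var{f(Y)})^{t/2}$; or a single pattern in which all $t$ hyperedges share a common core of $r$ vertices, yielding a contribution of order $(t^r R_4^{\dpoly} L^r \mu_r(f,Y))^t$. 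Together these give the stated maximum. The main obstacle is the combinatorial accounting in the previous step: showing that the enumeration of overlap patterns, together with the Stirling-type powers of $t$ coming from per-vertex moment bounds and from partitioning $[t]$, absorbs cleanly into a single prefactor $R_4^{\dpoly}$ per component, and that the optimum over patterns really is realized by one of the two extreme shapes above. This combinatorial reduction is precisely the content of the hypergraph counting lemma proved in \cite{schudy2011bernstein}.
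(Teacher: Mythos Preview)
The paper does not prove this lemma at all: it is quoted verbatim as Lemma~5.1 of \cite{schudy2011bernstein} and used as a black box in the proof of \cref{lem: graphmatrixnormbound_nomiddlevertices}. There is therefore no ``paper's own proof'' to compare against.

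Your sketch is a reasonable high-level outline of the moment method behind the Schudy--Sviridenko bound (center the variables, expand the $t$-th moment, retain only tuples where every vertex has multiplicity $\ge 2$, group by overlap pattern, and reduce to the two extreme regimes). However, you yourself identify the substantive step --- the combinatorial accounting that collapses the sum over overlap patterns into the single $R_4^{\dpoly}$ prefactor and shows the maximum is attained at one of the two extremes --- and then defer it back to \cite{schudy2011bernstein}. So your proposal does not actually supply an independent proof; it is a plausibility argument that ends by citing the same source the paper cites. If your goal was to reprove the lemma, the genuine work (the hypergraph counting) is still missing; if your goal was merely to match the paper, note that the paper makes no attempt at a proof here.
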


In our setting, we can also bound the variance in terms of the $\mu_r$ as was shown in \cite{schudy2011bernstein}, which will simplify our calculations.

\begin{lemma}[\cite{schudy2011bernstein}, Lemma 1.5]\label{lem: var_bound}
	For the same setting as in \cref{lem: schudy_sviridenko},
	\[\var{f(Y)} \le 2\dpoly 4^{\dpoly}\max_{r \in [\dpoly]} (\mu_0(f, Y) \mu_r(f, Y)4^rL^r)\]
\end{lemma}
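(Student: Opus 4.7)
The plan is to prove the variance bound by expanding $f$ in the ``Fourier-like'' basis of the centered variables $Z_v := Y_v - \EE[Y_v]$. Since $f$ is multilinear, substituting $Y_v = Z_v + \EE[Y_v]$ in the expression $f(Y) = \sum_{h \in E(H)} w_h \prod_{v \in V(h)} Y_v$ and expanding the product yields
\[
f(Y) \;=\; \sum_{T \subseteq [n]} c_T \prod_{v \in T} Z_v,
\qquad
c_T \;:=\; \sum_{h \in E(H) : V(h) \supseteq T} w_h \prod_{v \in V(h) \setminus T} \EE[Y_v].
\]
In particular $c_\emptyset = \EE[f(Y)]$, and because the $Z_v$ are independent and mean-zero the monomials $\{\prod_{v \in T} Z_v\}_T$ are pairwise orthogonal in $L^2$, so
\[
\var{f(Y)} \;=\; \sum_{T \neq \emptyset} c_T^{\,2} \prod_{v \in T} \EE[Z_v^2].
\]

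The next step is to bound each factor using only Jensen's inequality and the central-moment hypothesis. From $|\EE[Y_v]| \le \EE|Y_v|$ and the definition of $\mu_r$, one gets
\[
|c_T| \;\le\; \sum_{h \supseteq T} |w_h| \prod_{v \in V(h) \setminus T} \EE|Y_v| \;\le\; \mu_{|T|}(f, Y).
\]
Applying the central-moment bound with $i = 2$ and then the triangle inequality $\EE|Z_v| \le \EE|Y_v| + |\EE[Y_v]| \le 2\EE|Y_v|$ gives
\[
\EE[Z_v^2] \;\le\; 2L\,\EE|Z_v| \;\le\; 4L\,\EE|Y_v|,
\]
so $\prod_{v \in T} \EE[Z_v^2] \le (4L)^{|T|} \prod_{v \in T} \EE|Y_v|$. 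Writing $c_T^{\,2} \le |c_T|\cdot \mu_{|T|}$ and grouping by $r = |T|$, I would obtain
\[
\var{f(Y)} \;\le\; \sum_{r=1}^{\dpoly} \mu_r(f,Y)\,(4L)^r \sum_{|T|=r} |c_T| \prod_{v \in T} \EE|Y_v|.
\]

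The last step is to collapse the inner sum into $\mu_0$ by a reindexing. Substituting the bound on $|c_T|$ and swapping the order of summation over $T$ and $h$,
\[
\sum_{|T|=r} |c_T| \prod_{v \in T} \EE|Y_v|
\;\le\; \sum_h |w_h| \prod_{v \in V(h)} \EE|Y_v| \cdot \bigl|\{T \subseteq V(h) : |T| = r\}\bigr|
\;\le\; \binom{\dpoly}{r}\,\mu_0(f,Y),
\]
using $|V(h)| \le \dpoly$. Combining the inequalities,
\[
\var{f(Y)} \;\le\; \sum_{r=1}^{\dpoly} \binom{\dpoly}{r} 4^r L^r\,\mu_0(f,Y)\,\mu_r(f,Y)
\;\le\; \dpoly\cdot 2^{\dpoly} \max_{r \in [\dpoly]}\bigl(\mu_0(f,Y)\,\mu_r(f,Y)\,4^r L^r\bigr),
\]
which sits comfortably inside the claimed constant $2\dpoly 4^{\dpoly}$.

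The one step that carries the whole argument is the bound $\EE[Z_v^2] \le 4L\,\EE|Y_v|$: the central-moment hypothesis on its own only relates moments of $Z_v$ to lower moments of $Z_v$, and a triangle-inequality trade-off is needed to rewrite $\EE|Z_v|$ in terms of $\EE|Y_v|$. Without this move, the factors $(4L)^r$ and $\prod_{v\in T}\EE|Y_v|$ would not combine cleanly with the coefficient expansion to produce $\mu_0 \cdot \mu_r$ after reindexing; everything else in the proof is Jensen, $L^2$-orthogonality, and combinatorial bookkeeping.
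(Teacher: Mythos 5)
The paper does not supply its own proof of this lemma --- it is imported verbatim from Schudy and Sviridenko~\cite{schudy2011bernstein} --- so there is no in-paper argument to compare against. Your proof is correct and self-contained. The decomposition of $f$ in the centered basis $Z_v = Y_v - \EE[Y_v]$, the $L^2$-orthogonality identity $\var{f(Y)} = \sum_{T\neq\emptyset} c_T^2 \prod_{v\in T}\EE[Z_v^2]$, the chain $\EE[Z_v^2] \le 2L\,\EE|Z_v| \le 4L\,\EE|Y_v|$ from the $i=2$ central-moment hypothesis plus $|\EE[Y_v]|\le\EE|Y_v|$, the coefficient bound $|c_T|\le \mu_{|T|}(f,Y)$ from the definition of $\mu_r$ as a maximum, and the reindexing $\sum_{|T|=r} |c_T|\prod_{v\in T}\EE|Y_v| \le \sum_h |w_h|\prod_{v\in V(h)}\EE|Y_v|\cdot\binom{|V(h)|}{r} \le \binom{\dpoly}{r}\mu_0(f,Y)$ all check out. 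Your final constant $\dpoly\, 2^{\dpoly}$ is even slightly sharper than the stated $2\dpoly\, 4^{\dpoly}$ (you could also absorb the factor of $\dpoly$ by summing $\binom{\dpoly}{r}\le 2^{\dpoly}$ over all $r$ at once). You correctly identify the pivotal step: the central-moment hypothesis gives $\EE[Z_v^2]$ in terms of $\EE|Z_v|$, and the extra triangle-inequality pass $\EE|Z_v|\le 2\EE|Y_v|$ is what lets the product $\prod_{v\in T}\EE|Y_v|$ reassemble with $|c_T|$ into $\mu_0$ after the swap of summations. This matches the style of argument in Schudy--Sviridenko's original.
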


\subsubsection{Proof of \cref{lem: graphmatrixnormbound_nomiddlevertices}}

We are ready to prove \cref{lem: graphmatrixnormbound_nomiddlevertices} which we restate for convenience.

\simplegraphmatrixnormbounds*

We will prove it the same way as \cref{lem: empty_shape}, by bounding the schatten norm of each diagonal block by an appropriate power of its Frobenius norm. In this case, to bound the expected power of the Frobenius norm, we use the scalar concentration inequality from the previous section.

\begin{proof}[Proof of \cref{lem: graphmatrixnormbound_nomiddlevertices}]
	First, we note that $\mM_{\tau}$ has a block diagonal structure indexed by the realizations of the set of common vertices $S_0 = U_{\tau_P} \cap V_{\tau_P}$. For $T \in [n]^{S_0}$, let $\mM_{\tau, T}$ be the block of $\mM_{\tau}$ with $\phi(S_0) = T$. Then, $\mM_{\tau, T}\mM_{\tau, T'}^\T = \mM_{\tau, T}^\T\mM_{\tau, T'} = 0$ for $T \neq T'$ and so,
	\begin{align*}
		\Esch{\mM_{\tau}}{2t} = \sum_{T \in [n]^{S_0}} \Esch{\mM_{\tau, T}}{2t} \le \sum_{T \in [n]^{S_0}}\EE(\sch{\mM_{\tau, T}}{2})^t
	\end{align*}
	where we bounded the Schatten norm by a power of the Frobenius norm.

	Fix $T \in [n]^{S_0}$ and consider $\Esch{\mM_{\tau, T}}{2}$. Let $\cR$ be the set of realizations $\phi$ of $\tau$ such that $\phi(S_0) = T$. Then, for $\phi \in \cR$ and $e \in E(S_0)$, the value of $\phi(e)$ is fixed. Using this,
	\begin{align*}
		\norm{\mM_{\tau, T}}_2^2 = \sum_{\phi \in \cR} \prod_{e \in E(\tau)} G_{\phi(e)}^2
		&= \prod_{e \in E(S_0)} G_{\phi(e)}^2 \sum_{\phi \in \cR} \prod_{e \in E(\tau) \setminus E(S_0)} G_{\phi(e)}^2\\
		&\le L^{|E(S_0)|} \sum_{\phi \in \cR} \prod_{e \in E(\tau) \setminus E(S_0)} G_{\phi(e)}^2
	\end{align*}
	where $L = \frac{1 - p}{p}$ is an upper bound on $G_{ij}^2$ for $p \le \frac{1}{2}$. For convenience, we define the quantity
	$A = \max_{S_0 \subseteq S \subseteq V(\tau)}L^{|E(S)|}n^{|V(\tau)| - |S|}$.

	\begin{claim}\label{claim: lil_claim}
		$\EE(\norm{\mM_{\tau, T}}_2)^t \le (Ct)^{t|E(\tau)|} |V(\tau)|^{t|V(\tau)|}A^t$ for an absolute constant $C > 0$.
	\end{claim}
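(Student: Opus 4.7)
The plan is to apply the scalar Schudy--Sviridenko moment inequality (\cref{lem: schudy_sviridenko}) to the nonnegative random polynomial
\[
f(G) \;:=\; \sum_{\phi \in \cR}\,\prod_{e \in E(\tau)\setminus E(S_0)} G_{\phi(e)}^2,
\]
whose $t$-th moment controls $\EE(\norm{\mM_{\tau,T}}_2^2)^t$ through the already-derived bound $\norm{\mM_{\tau,T}}_2^2 \le L^{|E(S_0)|} f(G)$, noting that the prefactor $L^{t|E(S_0)|}$ is accounted for by the choice $S^\star = S_0$ in the maximum defining $A$. To cast $f$ in the multilinear, mean-zero form required, substitute $Y_{ij} := G_{ij}^2 - 1$. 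The $Y_{ij}$ are independent and mean zero, and the explicit formula $\EE|Y_{ij}|^i = (1-2p)^i\bigl(p^{1-i}+(1-p)^{1-i}\bigr)$ shows they are central moment bounded with parameter $L_Y = O(1/p) = O(L)$. Expanding $G_{\phi(e)}^2 = 1 + Y_{\phi(e)}$ and grouping over subsets $E' \subseteq E(\tau)\setminus E(S_0)$ rewrites
\[
f(G) \;=\; \sum_{E' \subseteq E(\tau)\setminus E(S_0)} f_{E'}(Y), \qquad f_{E'}(Y) \;:=\; \sum_{\phi\in\cR}\prod_{e\in E'} Y_{\phi(e)},
\]
each $f_{E'}$ being multilinear in $Y$ of degree $|E'|\le |E(\tau)|$ because realizations are injective.

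For each fixed $E'$, the next step is a combinatorial bound on $\mu_r(f_{E'},Y)$ for $0 \le r \le |E'|$. Since $\EE|Y_{ij}| \le 2$, the $\prod_{v\notin S}\EE|Y_v|$ factor contributes only a $2^{|E(\tau)|}$ factor, and the main task reduces to counting realizations. A fixed set $S$ of $r$ edges of $K_n$ can appear as $\phi(E'')$ for at most $|E(\tau)|^{|E(\tau)|}$ choices of paired subset $E''\subseteq E'$ together with a bijective matching to $S$ and orientation choices; each such pairing forces the images of a set $W\subseteq V(\tau)$ of $|W|\le 2r$ vertices, after which the remaining vertices admit at most $n^{|V(\tau)| - |W\cup S_0|}$ completions given the pinned constraint $\phi(S_0)=T$. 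Setting $S^\star := W\cup S_0$, the $r$ paired edges of $E'$ have both endpoints in $W\subseteq S^\star$, hence $|E(S^\star)| \ge r$; combining gives
\[
L_Y^{r}\,\mu_r(f_{E'},Y) \;\le\; C_1^{|E(\tau)|}\,|V(\tau)|^{|V(\tau)|}\cdot L^{|E(S^\star)|}\,n^{|V(\tau)|-|S^\star|} \;\le\; C_1^{|E(\tau)|}\,|V(\tau)|^{|V(\tau)|}\,A.
\]
The same estimate at $r=0$ (with $S^\star = S_0$) also yields $\mu_0(f_{E'},Y) \le C_1^{|E(\tau)|}|V(\tau)|^{|V(\tau)|}\,A$.

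Plugging these estimates into \cref{lem: schudy_sviridenko} and using \cref{lem: var_bound} to handle the variance branch (which is dominated by the same combination of $\mu_0$ and $\mu_r L_Y^r$) yields
\[
\EE\bigl[|f_{E'}(Y) - \EE f_{E'}(Y)|^t\bigr] \;\le\; (C_2 t)^{t|E(\tau)|}\,|V(\tau)|^{t|V(\tau)|}\,A^t.
\]
Adding back $\EE f_{E'}(Y) = \mu_0(f_{E'},Y)$ via the triangle inequality in $L^t$-norm, summing the $\le 2^{|E(\tau)|}$ contributions over $E'$, and multiplying by $L^{t|E(S_0)|}\le A^t$ closes out the claimed bound $(Ct)^{t|E(\tau)|}|V(\tau)|^{t|V(\tau)|} A^t$.

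The main obstacle is the combinatorial bookkeeping around $\mu_r$: one must verify that regardless of which $r$ edges are distinguished, the forced set $W\cup S_0$ always contains at least $r$ edges of $\tau$, so that the $L_Y^r$ penalty produced by Schudy--Sviridenko is absorbed uniformly (in $r$) by the $L^{|E(S^\star)|}$ factor inside $A$. This is precisely what guarantees that the $\max_r$ in \cref{lem: schudy_sviridenko} collapses into a single clean $A^t$ factor rather than contributing an extra $(L/p)^{r}$ blow-up.
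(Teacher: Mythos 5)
Your route — centering to $Y_{ij} := G_{ij}^2 - 1$, expanding $\prod (1+Y_{\phi(e)})$ over subsets $E'$, and applying Schudy--Sviridenko to each $f_{E'}$ separately — is genuinely different from the paper, which applies the lemma directly to the uncentered variables $Y_{ij} := G_{ij}^2$ (the lemma does not require mean-zero inputs) and so avoids the $2^{|E(\tau)|}$-term expansion. Your version is conceptually tidier but bookkeeps the same quantities.

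However, your final accounting has a genuine gap in how the prefactor $L^{|E(S_0)|}$ is absorbed. You bound $L_Y^r\mu_r(f_{E'},Y) \le C_1^{|E(\tau)|}|V(\tau)|^{|V(\tau)|}\,A$ using only $|E(S^\star)|\ge r$, which already consumes the full $L^{|E(S^\star)|}$ factor inside $A$. Then you separately multiply the resulting $A^t$ bound by $L^{t|E(S_0)|}$ and invoke $L^{t|E(S_0)|}\le A^t$ to ``close out'' the claim — but that only shows the product is $\le (Ct)^{t|E(\tau)|}|V(\tau)|^{t|V(\tau)|}A^{2t}$, not $A^t$. The two $L$-powers have to land inside a \emph{single} $A$ factor. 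The fix is to sharpen your inequality on $|E(S^\star)|$: since $E'' \subseteq E(\tau)\setminus E(S_0)$ and $S_0\subseteq S^\star$, the sets $E''$ and $E(S_0)$ are disjoint and both lie in $E(S^\star)$, so in fact $|E(S^\star)| \ge |E(S_0)| + r$. With that, $L^{|E(S_0)|}\cdot L_Y^r\,\mu_r(f_{E'},Y) \le C_1^{|E(\tau)|}|V(\tau)|^{|V(\tau)|}\,A$, and the single factor of $A^t$ emerges after Schudy--Sviridenko as required. This is exactly the observation the paper makes (it carries $L^{|E(S_0)|}$ inside its $f$ and absorbs both $L^{|E(S_0)|}$ and $L^r$ into $L^{|E(W\cup S_0)|}$ simultaneously); once you make it, your decomposition-by-$E'$ argument goes through.
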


	Using this claim, we have
	\begin{align*}
		\Esch{\mM_{\tau}}{2t} &\le \sum_{T \in [n]^{S_0}} \EE(\norm{\mM_{\tau, T}}_2)^t\\
		&\le n^{|S_0|}(Ct)^{t|E(\tau)|} |V(\tau)|^{t|V(\tau)|}A^t\\
		&= n^{|V(\tau)|} (Ct)^{t|E(\tau)|} |V(\tau)|^{t|V(\tau)|}\max_{U_{\tau} \cap V_{\tau} \subseteq S \subseteq V(\tau)}\left(\frac{1 - p}{p}\right)^{t|E(S)|}n^{t(|V(\tau)| - |S|)}
	\end{align*}
	as required.
\end{proof}

It remains to prove the claim.

\begin{proof}[Proof of \cref{claim: lil_claim}]
	For $1\le i, j \le n$, define the variables $Y_{ij} = G_{ij}^2$ with $\EE[|Y_{ij}|] = 1$. Let $f(Y)$ be the polynomial $L^{|E(S_0)|} \sum_{\phi \in \cR} \prod_{e \in E(\tau) \setminus E(S_0)} Y_{\phi(e)}$. It suffices to prove that $\EE[f(Y)^t] \le (Ct)^{t|E_1|}A^t$.
	We will first prove that $\EE[(f(Y) - \EE[f(Y)])^t] \le (C't)^{t|E(\tau)|}|V(\tau)|^{t|V(\tau)|}A^t$ for a sufficiently large constant $C' > 0$.

	$f$ is a homogeneous multilinear polynomial of degree $|E(\tau) \setminus E(S_0)|$. If we had $E(\tau) \setminus E(S_0) = \emptyset$, then $f$ is a constant and so, the inequality is obvious because $f(Y) = \EE[f(Y)]$. Now, assume $E(\tau) \setminus E(S_0) \neq \emptyset$. We invoke \cref{lem: schudy_sviridenko}. Let $f$ have associated hypergraph $H$ and weights $w$. Then,
	\[\EE[|f(Y) - \EE [f(Y)]|^t] \le \max\bigg\{\bigg(\sqrt{tR_4^{|E(\tau) \setminus E(S_0)|}\var{f(Y)}}\bigg)^t, \max_{r \in [|E(\tau) \setminus E(S_0)|]}(t^rR_4^{|E(\tau) \setminus E(S_0)|}L^r\mu_r(f, Y))^t\bigg\}\]
	For all $r \ge 0$, we will prove that $L^r\mu_r(f, Y) \le |V(\tau)|^{|V(\tau)|}A$. By definition,
	\begin{align*}
		\mu_r(f, Y) &= \max_{F \subseteq \binom{[n]}{2}, |F| = r} \sum_{h \in E(H), F \subseteq V(h)} |w_h|
	\end{align*}
	Consider any set of edge labels $F \subseteq \binom{[n]}{2}, |F| = r$.
	Then, $\sum_{h \in E(H), F \subseteq V(h)} |w_h|$ is at most $L^{|E(S_0)|}c$ where $c$ is the number of realizations $\phi \in \cR$ such that $\phi(E(\tau))$ contains $F$.
	Suppose $F$ contains $v$ new labels apart from $\phi(S_0) = T$.
	Then $c \le |V(\tau)|^v n^{|V(\tau)| - |S_0| - v}$ because we can first choose and label the set of vertices that get these $v$ labels and then label the remaining vertices freely, each of which has at most $n$ choices.

	Observe that $L^{|E(S_0)|} L^r n^{|V(\tau)| - |S_0| - v} \le A$ because in the definition of $S$, we can set $S$ to be the union of $S$ and any valid choice of these $v$ vertices. Putting this together, we get
	\begin{align*}
		L^r\mu_r(f, Y) \le L^r\max_{F \subseteq \binom{[n]}{2}, |F| = r} \sum_{h \in E(H), F \subseteq V(h)} |w_h|
		\le |V(\tau)|^{|V(\tau)|} A
	\end{align*}
	which implies
	\[\max_{r \in [|E(\tau) \setminus E(S_0)|]}(t^rR_4^{|E(\tau) \setminus E(S_0)|}L^r\mu_r(f, Y))^t \le |V(\tau)|^{t|V(\tau)|}(R_4t)^{t|E(\tau)|}A^t\]
	and using \cref{lem: var_bound},
	\begin{align*}
		\var{f(Y)} &\le 2|E(\tau)|4^{|E(\tau)|}\max_{r \in [|E(\tau) \setminus E(S_0)|]} (\mu_0(f, Y) \mu_r(f, Y)4^rL^r)\\
		&\le 2|E(\tau)|16^{|E(\tau)|} |V(\tau)|^{2|V(\tau)|}A^2
	\end{align*}
	Putting them together, we get
	\begin{align*}
		\EE[(f(Y) - \EE[f(Y)])^t] &\le \max\bigg\{\bigg(\sqrt{2tR_4^{|E(\tau)|}|E(\tau)|16^{|E(\tau)|} |V(\tau)|^{2|V(\tau)|}A^2}\bigg)^t, |V(\tau)|^{t|V(\tau)|}(R_4t)^{t|E(\tau)|}A^t\bigg\}\\
		&\le (C't)^{t|E(\tau)|}|V(\tau)|^{t|V(\tau)|}A^t
	\end{align*}
	for an absolute constant $C' > 0$.
	Finally, $\EE[f(Y)] \le L^{|E(S_0)|} |\cR| \le L^{|E(S_0)|}n^{|V(\tau) \setminus S_0|} \le A$ which gives
	\begin{align*}
		\EE[f(Y)^t] \le 2^t(\EE[(f(Y) - \EE[f(Y)])^t] + \EE[f(Y)]^t)
		&\le 2^t((C't)^{t|E(\tau)|}|V(\tau)|^{t|V(\tau)|}A^t + A^t)\\
		&\le (Ct)^{t|E(\tau)|}|V(\tau)|^{t|V(\tau)|}A^t
	\end{align*}
	for an absolute constant $C > 0$.
\end{proof}

\section*{Acknowledgements}
We thank Aaron Potechin, Chris Jones and Jeff Xu for helpful discussions regarding graph matrices.
We thank Pierre Youssef for helpful pointers to references.
We thank anonymous reviewers for their suggestions and thank Chris Jones and Aaron Potechin for proofreading an earlier version of this manuscript.

%%%% It is better to use a references cited according to a uniform style.
%%%% The macros.bib file defines styles for citing STOC, FOCS, SODA, JACM, SICOMP etc.
\bibliographystyle{alpha}
\bibliography{macros,madhur}

\newcommand{\etalchar}[1]{$^{#1}$}
\begin{thebibliography}{BARDW08}

\bibitem[ABY20]{ABY20}
Richard Aoun, Marwa Banna, and Pierre Youssef.
\newblock Matrix poincar{\'e} inequalities and concentration.
\newblock {\em Advances in Mathematics}, 371:107251, 2020.

\bibitem[AD21]{AD21:hypercontractivity}
Srinivasan Arunachalam and Jo{\~a}o~F Doriguello.
\newblock Matrix hypercontractivity, streaming algorithms and ldcs: the large
  alphabet case.
\newblock {\em arXiv preprint arXiv:2109.02600}, 2021.

\bibitem[AMP16]{ahn2016graph}
Kwangjun Ahn, Dhruv Medarametla, and Aaron Potechin.
\newblock Graph matrices: Norm bounds and applications.
\newblock {\em arXiv}, pages arXiv--1604, 2016.

\bibitem[AW15]{AW15}
Rados{\l}aw Adamczak and Pawe{\l} Wolff.
\newblock Concentration inequalities for non-lipschitz functions with bounded
  derivatives of higher order.
\newblock {\em Probability Theory and Related Fields}, 162(3):531--586, 2015.

\bibitem[BARDW08]{BARDW08}
Avraham Ben-Aroya, Oded Regev, and Ronald De~Wolf.
\newblock A hypercontractive inequality for matrix-valued functions with
  applications to quantum computing and ldcs.
\newblock In {\em Proceedings of the 49th IEEE Symposium on Foundations of
  Computer Science}, pages 477--486. IEEE, 2008.

\bibitem[BBH{\etalchar{+}}12]{BarakBHKSZ12}
Boaz Barak, Fernando G. S.~L. Brand{\~a}o, Aram~Wettroth Harrow, Jonathan~A.
  Kelner, David Steurer, and Yuan Zhou.
\newblock Hypercontractivity, sum-of-squares proofs, and their applications.
\newblock {\em CoRR}, abs/1205.4484, 2012.

\bibitem[BBLM05]{BBLM05}
St{\'e}phane Boucheron, Olivier Bousquet, G{\'a}bor Lugosi, and Pascal Massart.
\newblock {Moment inequalities for functions of independent random variables}.
\newblock {\em The Annals of Probability}, 33(2):514 -- 560, 2005.

\bibitem[BGBK20]{benaych2020spectral}
Florent Benaych-Georges, Charles Bordenave, and Antti Knowles.
\newblock Spectral radii of sparse random matrices.
\newblock In {\em Annales de l'Institut Henri Poincar{\'e}, Probabilit{\'e}s et
  Statistiques}, volume~56, pages 2141--2161. Institut Henri Poincar{\'e},
  2020.

\bibitem[BGS19]{BGS19}
Sergey~G Bobkov, Friedrich G{\"o}tze, and Holger Sambale.
\newblock Higher order concentration of measure.
\newblock {\em Communications in Contemporary Mathematics}, 21(03):1850043,
  2019.

\bibitem[BHK{\etalchar{+}}19]{barak2019nearly}
Boaz Barak, Samuel Hopkins, Jonathan Kelner, Pravesh~K Kothari, Ankur Moitra,
  and Aaron Potechin.
\newblock A nearly tight sum-of-squares lower bound for the planted clique
  problem.
\newblock {\em SIAM Journal on Computing}, 48(2):687--735, 2019.

\bibitem[BHKX22]{bafna2022polynomial}
Mitali Bafna, Jun-Ting Hsieh, Pravesh~K Kothari, and Jeff Xu.
\newblock Polynomial-time power-sum decomposition of polynomials.
\newblock {\em arXiv preprint arXiv:2208.00122}, 2022.

\bibitem[CH19]{CH19}
Hao-Chung Cheng and Min-Hsiu Hsieh.
\newblock Matrix {P}oincar{\'e}, $\phi$-{S}obolev inequalities, and quantum
  ensembles.
\newblock {\em Journal of Mathematical Physics}, 60(3):032201, 2019.

\bibitem[Cha05]{chatterjee2005concentration}
Sourav Chatterjee.
\newblock {\em Concentration inequalities with exchangeable pairs}.
\newblock Stanford University, 2005.

\bibitem[Cha06]{chatterjee2006stein}
Sourav Chatterjee.
\newblock Stein's method for concentration inequalities.
\newblock {\em arXiv preprint math/0604352}, 2006.

\bibitem[CHT17]{CHT17}
Hao-Chung Cheng, Min-Hsiu Hsieh, and Marco Tomamichel.
\newblock Exponential decay of matrix $\phi$-entropies on markov semigroups
  with applications to dynamical evolutions of quantum ensembles.
\newblock {\em Journal of Mathematical Physics}, 58(9):092202, 2017.

\bibitem[DM15]{deshpande2015improved}
Yash Deshpande and Andrea Montanari.
\newblock Improved sum-of-squares lower bounds for hidden clique and hidden
  submatrix problems.
\newblock In {\em Conference on Learning Theory}, pages 523--562. PMLR, 2015.

\bibitem[FK81]{furedi1981eigenvalues}
Zolt{\'a}n F{\"u}redi and J{\'a}nos Koml{\'o}s.
\newblock The eigenvalues of random symmetric matrices.
\newblock {\em Combinatorica}, 1(3):233--241, 1981.

\bibitem[GJJ{\etalchar{+}}20]{ghosh2020sum}
Mrinalkanti Ghosh, Fernando~Granha Jeronimo, Chris Jones, Aaron Potechin, and
  Goutham Rajendran.
\newblock Sum-of-squares lower bounds for sherrington-kirkpatrick via planted
  affine planes.
\newblock In {\em 2020 IEEE 61st Annual Symposium on Foundations of Computer
  Science (FOCS)}, pages 954--965. IEEE, 2020.

\bibitem[GKS21]{GKS21}
Ankit Garg, Tarun Kathuria, and Nikhil Srivastava.
\newblock Scalar poincar{\'e} implies matrix poincar{\'e}.
\newblock {\em Electronic Communications in Probability}, 26:1--4, 2021.

\bibitem[GM15]{ge2015decomposing}
Rong Ge and Tengyu Ma.
\newblock Decomposing overcomplete 3rd order tensors using sum-of-squares
  algorithms.
\newblock {\em arXiv preprint arXiv:1504.05287}, 2015.

\bibitem[HKP15]{hopkins2015sos}
Samuel~B Hopkins, Pravesh~K Kothari, and Aaron Potechin.
\newblock {SoS} and planted clique: Tight analysis of mpw moments at all
  degrees and an optimal lower bound at degree four.
\newblock {\em arXiv preprint arXiv:1507.05230}, 2015.

\bibitem[Hop18]{hopkins2018statistical}
Samuel Hopkins.
\newblock {\em Statistical inference and the sum of squares method}.
\newblock PhD thesis, Cornell University, 2018.

\bibitem[HP03]{hansen2003jensen}
Frank Hansen and Gert~K Pedersen.
\newblock Jensen's operator inequality.
\newblock {\em Bulletin of the London Mathematical Society}, 35(4):553--564,
  2003.

\bibitem[HSS15]{hopkins2015tensor}
Samuel~B Hopkins, Jonathan Shi, and David Steurer.
\newblock Tensor principal component analysis via sum-of-square proofs.
\newblock In {\em Conference on Learning Theory}, pages 956--1006. PMLR, 2015.

\bibitem[HSS19]{hopkins2019robust}
Samuel~B Hopkins, Tselil Schramm, and Jonathan Shi.
\newblock A robust spectral algorithm for overcomplete tensor decomposition.
\newblock In {\em Conference on Learning Theory}, pages 1683--1722. PMLR, 2019.

\bibitem[HSSS16]{hopkins2016fast}
Samuel~B Hopkins, Tselil Schramm, Jonathan Shi, and David Steurer.
\newblock Fast spectral algorithms from sum-of-squares proofs: tensor
  decomposition and planted sparse vectors.
\newblock In {\em Proceedings of the forty-eighth annual ACM symposium on
  Theory of Computing}, pages 178--191, 2016.

\bibitem[HT21a]{HT21:poincare}
De~Huang and Joel~A. Tropp.
\newblock {From Poincaré inequalities to nonlinear matrix concentration}.
\newblock {\em Bernoulli}, 27(3):1724 -- 1744, 2021.

\bibitem[HT21b]{HT21:semigroup}
De~Huang and Joel~A Tropp.
\newblock Nonlinear matrix concentration via semigroup methods.
\newblock {\em Electronic Journal of Probability}, 26:1--31, 2021.

\bibitem[Jon22]{Jones:thesis}
Chris Jones.
\newblock {\em Symmetrized Fourier Analysis of Convex Relaxations for
  Combinatorial Optimization Problems}.
\newblock PhD thesis, The University of Chicago, 2022.

\bibitem[JPR{\etalchar{+}}21]{sparsesos}
Chris Jones, Aaron Potechin, Goutham Rajendran, Madhur Tulsiani, and Jeff Xu.
\newblock Sum-of-squares lower bounds for sparse independent set.
\newblock {\em IEEE 62nd Annual Symposium on Foundations of Computer Science
  (FOCS)}, 2021.

\bibitem[KP20]{kivva2020exact}
Bohdan Kivva and Aaron Potechin.
\newblock Exact nuclear norm, completion and decomposition for random
  overcomplete tensors via degree-4 sos.
\newblock {\em arXiv preprint arXiv:2011.09416}, 2020.

\bibitem[Lat06]{Latala06}
Rafa{\l} Lata{\l}a.
\newblock Estimates of moments and tails of gaussian chaoses.
\newblock {\em The Annals of Probability}, 34(6):2315--2331, 2006.

\bibitem[MP16]{medarametla2016bounds}
Dhruv Medarametla and Aaron Potechin.
\newblock Bounds on the norms of uniform low degree graph matrices.
\newblock In {\em Approximation, Randomization, and Combinatorial Optimization.
  Algorithms and Techniques (APPROX/RANDOM 2016)}. Schloss
  Dagstuhl-Leibniz-Zentrum fuer Informatik, 2016.

\bibitem[MPW15]{meka2015sum}
Raghu Meka, Aaron Potechin, and Avi Wigderson.
\newblock Sum-of-squares lower bounds for planted clique.
\newblock In {\em Proceedings of the forty-seventh annual ACM symposium on
  Theory of computing}, pages 87--96, 2015.

\bibitem[MRX20]{mohanty2020lifting}
Sidhanth Mohanty, Prasad Raghavendra, and Jeff Xu.
\newblock Lifting sum-of-squares lower bounds: degree-2 to degree-4.
\newblock In {\em Proceedings of the 52nd Annual ACM SIGACT Symposium on Theory
  of Computing}, pages 840--853, 2020.

\bibitem[MW19]{moitra2019spectral}
Ankur Moitra and Alexander~S Wein.
\newblock Spectral methods from tensor networks.
\newblock In {\em Proceedings of the 51st Annual ACM SIGACT Symposium on Theory
  of Computing}, pages 926--937, 2019.

\bibitem[O'D08]{ODonnell08}
R.~O'Donnell.
\newblock Some topics in analysis of {B}oolean functions.
\newblock In {\em STOC}, pages 569--578, 2008.

\bibitem[O'D14]{o2014analysis}
Ryan O'Donnell.
\newblock {\em Analysis of boolean functions}.
\newblock Cambridge University Press, 2014.

\bibitem[Pet94]{petz1994survey}
D{\'e}nes Petz.
\newblock A survey of certain trace inequalities.
\newblock {\em Banach Center Publications}, 30(1):287--298, 1994.

\bibitem[PMT16]{paulin2016}
Daniel Paulin, Lester Mackey, and Joel~A. Tropp.
\newblock Efron–stein inequalities for random matrices.
\newblock {\em Ann. Probab.}, 44(5):3431--3473, 09 2016.

\bibitem[PR20]{potechin2020machinery}
Aaron Potechin and Goutham Rajendran.
\newblock Machinery for proving sum-of-squares lower bounds on certification
  problems.
\newblock {\em arXiv preprint arXiv:2011.04253}, 2020.

\bibitem[PR22]{potechin2022subexponential}
Aaron Potechin and Goutham Rajendran.
\newblock Sub-exponential time sum-of-squares lower bounds for principal
  components analysis.
\newblock {\em Advances in Neural Information Processing Systems}, 2022.

\bibitem[Raj22a]{rajendran2022combinatorial}
Goutham Rajendran.
\newblock Combinatorial {O}ptimization via the {S}um of {S}quares {H}ierarchy.
\newblock {\em arXiv preprint arXiv:2208.04374}, 2022.
\newblock Master's thesis, University of Chicago.

\bibitem[Raj22b]{rajendran2022nonlinear}
Goutham Rajendran.
\newblock {\em Nonlinear {R}andom {M}atrices and {A}pplications to the {S}um of
  {S}quares {H}ierarchy}.
\newblock PhD thesis, University of Chicago, 2022.

\bibitem[RS15]{raghavendra2015tight}
Prasad Raghavendra and Tselil Schramm.
\newblock Tight lower bounds for planted clique in the degree-4 {SOS} program.
\newblock {\em arXiv preprint arXiv:1507.05136}, 2015.

\bibitem[SS11]{schudy2011bernstein}
Warren Schudy and Maxim Sviridenko.
\newblock Bernstein-like concentration and moment inequalities for polynomials
  of independent random variables: multilinear case.
\newblock {\em arXiv preprint arXiv:1109.5193}, 2011.

\bibitem[SS17]{schramm2017fast}
Tselil Schramm and David Steurer.
\newblock Fast and robust tensor decomposition with applications to dictionary
  learning.
\newblock In {\em Conference on Learning Theory}, pages 1760--1793. PMLR, 2017.

\bibitem[Ste72]{stein1972bound}
Charles Stein.
\newblock A bound for the error in the normal approximation to the distribution
  of a sum of dependent random variables.
\newblock In {\em Proceedings of the sixth Berkeley symposium on mathematical
  statistics and probability, volume 2: Probability theory}, pages 583--602.
  University of California Press, 1972.

\bibitem[Ste86]{stein1986approximate}
Charles Stein.
\newblock Approximate computation of expectations.
\newblock IMS, 1986.

\bibitem[Tro15]{tropp2015:book}
Joel~A Tropp.
\newblock An introduction to matrix concentration inequalities.
\newblock {\em Foundations and Trends{\textregistered} in Machine Learning},
  8(1-2):1--230, 2015.

\bibitem[Vu05]{vu2005spectral}
Van~H Vu.
\newblock Spectral norm of random matrices.
\newblock In {\em Proceedings of the thirty-seventh annual ACM symposium on
  Theory of computing}, pages 423--430, 2005.

\end{thebibliography}

\end{document}